\providecommand{\tabularnewline}{\\}
\theoremstyle{plain}
\newtheorem{thm}{\protect\theoremname}
\theoremstyle{definition}
\theoremstyle{plain}
\newtheorem{prop}[thm]{\protect\propositionname}
\theoremstyle{remark}
\newtheorem{rem}[thm]{\protect\remarkname}
\theoremstyle{plain}
\newtheorem{lem}[thm]{\protect\lemmaname}
\providecommand{\definitionname}{Definition}
\providecommand{\propositionname}{Proposition}
\providecommand{\remarkname}{Remark}
\providecommand{\theoremname}{Theorem}
\providecommand{\lemmaname}{Lemma}
\theoremstyle{plain}
\newtheorem{theorem}{Theorem}[section]
\numberwithin{equation}{section}
\newtheorem{lemma}[thm]{Lemma}
\newtheorem{proposition}[thm]{Proposition}
\newtheorem{corollary}[thm]{Corollary}
\newtheorem{assumption}[thm]{Assumption}
\theoremstyle{definition}
\theoremstyle{remark}
\newtheorem{remark}{Remark}
\def\sqw{\hbox{\rlap{\leavevmode\raise.3ex\hbox{$\sqcap$}}$%
\sqcup$}}
\def\cqfd{\ifmmode\sqw\else{\ifhmode\unskip\fi\nobreak\hfil
\penalty50\hskip1em\null\nobreak\hfil\sqw
\parfillskip=0pt\finalhyphendemerits=0\endgraf}\fi}
\renewcommand{\eqref}[1]{(\ref{#1})}
\newcommand{\1}{\mathbf{1}}
\newcommand{\Error}{\mathrm{Error}}
\newcommand{\Hol}{\mathrm{Hol}}
\newcommand{\R}{\mathbb{R}}
\newcommand{\dd}{\mathrm d}
\newcommand{\cA}{\mathcal{A}}
\newcommand{\cC}{\mathcal{C}}
\newcommand{\cE}{\mathcal{E}}
\newcommand{\cN}{\mathcal{N}}
\newcommand{\cO}{\mathcal{O}}
\newcommand{\bD}{{\mathbb D}}
\newcommand{\bE}{{\mathbb E}}
\newcommand{\bN}{{\mathbb N}}
\newcommand{\bP}{{\mathbb P}}
\newcommand{\bR}{{\mathbb R}}
\newcommand{\be}{\begin{equation}}
\newcommand{\ee}{\end{equation}}
\newcommand{\mino}{<}
\newcommand{\call}{\mathrm{call}}
\newcommand{\VIX}{\mathrm{VIX}}
\newcommand{\ve}{\varepsilon}
\newcommand{\mup}{\mu_{\mathrm{P}}}
\newcommand{\sigmap}{\sigma_{\mathrm{P}}}
\newcommand{\BS}{\mathrm{BS}}
\newcommand{\Pcall}{P^{\mathrm{call}}}
\newcommand{\Pput}{P^{\mathrm{put}}}
\newcommand{\PF}{P^{\mathrm{F}}}
\newcommand{\Deltacall}{\Delta^{\rm{call}}}
\newcommand{\Deltaput}{\Delta^{\rm{put}}}
\newcommand{\Kdot}{K^{\cdot}}
\newcommand{\Ydot}[1]{\ensuremath{ 
		Y_{#1}^{\cdot}
}}
\newcommand{\xidot}[1]{\ensuremath{ 
		\xi_{#1}^{\cdot}
}}
\newcommand{\SP}{\rm{S\&P}500}
\newcommand{\egl}{\overset{d}{=}}
\DeclareMathAlphabet\mathbfcal{OMS}{cmsy}{b}{n}
\newcommand{\Var}{\mathrm {Var}}
\newcommand\wavydecor{%
	\draw[decoration={coil,aspect=0.1,segment length=5pt,amplitude=1.0pt},decorate,line width=1.5pt,black]
	(O|-P) -- (O);
}
\newmdenv[
hidealllines=true,
innerleftmargin=10pt,
innerrightmargin=0pt,
innertopmargin=0pt,
innerbottommargin=0pt,
leftmargin=-10pt,
skipabove=.5\baselineskip,
skipbelow=.5\baselineskip,
singleextra={\wavydecor},
firstextra={\wavydecor},
secondextra={\wavydecor},
middleextra={\wavydecor}
]{done}
\begin{document}
\title{\textbf{Weak approximations and VIX option price expansions in
forward variance curve models}}
\author[1,2]{F.\ Bourgey\thanks{(Corresponding author) florian.bourgey@polytechnique.edu, bourgeyflorian@gmail.com}}
\author[1]{S.\ De Marco\thanks{stefano.de-marco@polytechnique.edu}}
\author[1]{E.\ Gobet\thanks{emmanuel.gobet@polytechnique.edu}}
\affil[1]{Centre de Math\'ematiques Appliqu\'ees (CMAP), CNRS, Ecole Polytechnique, Institut Polytechnique de Paris. Route de Saclay, 91128 Palaiseau Cedex, France. }
\affil[2]{Bloomberg L.P., Quantitative Research, 3 Queen Victoria St, London
	EC4N 4TQ, UK.}
\maketitle
\begin{abstract}
\noindent	We provide explicit approximation formulas for VIX futures and options in forward variance models, with particular emphasis 
	on the family of so-called Bergomi models: the one-factor Bergomi model [Bergomi, Smile dynamics II, Risk, 2005], the rough Bergomi 
	model [Bayer, Friz, and Gatheral, Pricing under rough volatility, Quantitative Finance, 16(6):887-904, 2016], and an enhanced version of the rough model that can generate realistic positive skew for VIX smiles -- introduced simultaneously by De Marco [Bachelier World Congress, 2018] and Guyon 
	[Bachelier World Congress, 2018] on the lines of [Bergomi, Smile dynamics III, Risk, 2008], that we refer to as ``mixed rough Bergomi model''. 
	Following the methodology set up in [Gobet and Miri, Weak approximation of averaged diffusion processes. Stochastic Process.\ Appl., 124(1):475-504, 2014],  we derive weak approximations for the law of the VIX, leading to option price approximations under the form of explicit combinations of Black-Scholes prices and greeks.
As new contributions, we cope with the fractional integration kernel appearing in rough models and treat the case of non-smooth payoffs, so to encompass
VIX futures, call and put options. We stress  that our approach does not rely on small-time asymptotics nor small-parameter (such as small volatility-of-volatility) asymptotics, and can therefore be applied to any option maturity and a wide range of parameter configurations. Our results are illustrated by several numerical experiments and calibration tests to VIX market data.
\end{abstract}
Keywords: $\VIX$ options, weak approximation, forward variance curve,
rough volatility.\\
 MSC2010: 60G15, 60G22, 91G20, 60H07, 34E10.

\bigskip

\section{Introduction}

\label{sec:vix:expansion:intro}

The $\VIX$ volatility index at a given time $T$ is, by definition, 
the implied volatility of a $30$-day log-contract on the $\SP$ index starting at $T$. 
Introduced in $1993$, the $\VIX$ is quoted
by the Chicago Board Options Exchange \cite{exchange2009cboe} and it is computed in practice by static replication 
of the log-contract from market prices of listed vanilla options on the $\SP$. In $2004$, futures contracts on the $\VIX$ started trading, and later
on, in $2006$, options on the $\VIX$ appeared. 
Futures and options on the $\VIX$ can be used as risk-management tools to hedge
the volatility exposure of more complex options portfolios and have become extremely popular volatility derivatives.

An efficient yet parsimonious way of modeling the joint dynamics of an asset  price $S_t$ (here the $\SP$) and its implied volatility is precisely to 
target the implied variances of log-contracts (usually simply referred to as forward variances), which have the appealing feature of being driftless under 
the pricing measure, see \cite{dupire1993model}, \cite{bergomi2004smile}, \cite{buehler2006}.
A well-established practice, in the spirit of forward rate modeling, is to model instantaneous forward variances $(\xi_{t}^{u})_{t \le u}$, defined by 
$\xi_{t}^{u}=\frac{\dd}{\dd u}((u-t)\hat \sigma_t(u)^2)$, where $\hat \sigma_t(u)^2$ denotes the implied variance of a log-contract with maturity $u$, observed at time $t$.

\paragraph{A class of exponential models for instantaneous forward variances.}
An effective class of models, encompassing the celebrated model of Bergomi \cite{bergomi2005smile} and the so-called rough 
Bergomi model \cite{bayer2016pricing}, is obtained assuming that the process $\left(\xi_{t}^{u}\right)_{t \le u}$ solves the following
stochastic differential equation 
\begin{equation}
\xi_{t}^{u}
= \xi_{0}^{u} + \int_0^t \xi_{s}^{u} \, K^{u}\left(s\right)\dd W_{s},
\qquad t \le u\,,
\label{eq:sde:forward:variance}
\end{equation}
where (postponing precise assumptions to section \ref{sec:vix:expansion:}) $K$ is a deterministic $L^2$ kernel and $\xi_{0}^{u}$ a given initial variance curve.
In practice, the curve $u\mapsto\xi_{0}^{u}$ can be computed from observed option prices.
The unique solution to \eqref{eq:sde:forward:variance} is of course the log-normal process $\xi_{t}^{u}
= \xi_{0}^{u} \, e^{\int_0^t K^{u}\left(s\right)\dd W_{s} - \frac 12 \int_0^t K^{u} \left(s\right)^2 \dd s}$.
Equivalently, the instantaneous log-forward variance $X_{t}^{u}:=\ln\left(\xi_{t}^{u}\right)$
solves 
\begin{equation}
\dd X_{t}^{u}=-\frac{1}{2}K^{u}\left(t\right)^{2}\dd t+K^{u}\left(t\right)\dd W_{t}, 
\qquad t \le u\,.
\label{eq:sde:log:forward:variance}
\end{equation}

\emph{Examples of deterministic kernels.} 
The one-factor Bergomi model \cite{bergomi2005smile} corresponds
to an exponential kernel of the form 
\begin{equation} 
K^{u}(t)=\omega \, e^{-k(u-t)},
\qquad\omega>0, \ k \geq 0 \,m
\label{eq:def:one:factor:bergomi}
\end{equation}
where the parameter $k$ corresponds to a mean reversion speed, and $\omega$ to the volatility of forward variances.
It is well known, see again \cite{bergomi2005smile}, that the choice of the exponential kernel leads to a one-dimensional Markovian representation for the forward variance curve:  $\xi_{t}^{u}=\xi_{0}^{u}f^{u}(t,Z_{t})$ where $Z$ is the Ornstein--Uhlenbeck (OU) process $\dd Z_{t}=-kZ_{t}\dd t+\dd W_{t}$
and $f^{u}$ an explicit deterministic function (see section \ref{sec:vix:expansion:numerical:tests}
for details). From a pricing and calibration point of view, such a representation is extremely convenient as it only involves a single Gaussian random variable
($n$-factors extensions of the model \eqref{eq:sde:forward:variance}-\eqref{eq:def:one:factor:bergomi} are of course possible and have been considered in \cite{bergomi2005smile}, leading to low-dimensional Markovian representations of the variance curve in terms of $n$ OU processes).

Introduced recently in \cite{bayer2016pricing}, the rough Bergomi model corresponds to a power kernel of the form 
\begin{equation}
K^{u}(t)=\eta(u-t)^{H-\frac{1}{2}},
\qquad\eta>0, \ H\in(0,1/2) \,.
\label{eq:def:rough:bergomi:model}
\end{equation}
The parameter $H$ encodes the kernel decay, while $\eta$ tunes the volatility of forward variances.
As opposed to \eqref{eq:def:one:factor:bergomi}, the kernel now explodes as $u-t\to0$, and the curve $u\mapsto\xi_{t}^{u}$ 
does not admit a finite-dimensional Markovian representation anymore; on the other side, the model \eqref{eq:sde:forward:variance}-\eqref{eq:def:one:factor:bergomi} is able to provide a parsimonious fit to the term structures of implied volatilities and implied volatility skews observed on the $\SP$ market \cite{alos2007short, fukasawa2011asymptotic, bayer2016pricing}. 

\paragraph{Objectives.}
In the instantaneous forward variance framework above, the VIX index	 at time $T$ is given by
\begin{equation}
\VIX_{T}^{2}
= \frac{1}{\Delta}\int_{T}^{T+\Delta}\xi_{T}^{u} \, \dd u
= \frac{1}{\Delta}\int_{T}^{T+\Delta}e^{X_{T}^{u}} \, \dd u \,.
\label{eq:def:VIX2}
\end{equation}
In this work, we are concerned with the pricing of VIX options $\varphi(\VIX_{T}^{2})$. For example, a call (resp.\ put)
option on the $\VIX$ with strike $\kappa>0$ corresponds to $\varphi(x)=(\sqrt{x}-\kappa)_{+}$
(resp.\ to $\varphi(x)=(\kappa-\sqrt{x})_{+}$), while VIX futures correspond to $\varphi(x)=\sqrt{x}$.
Under \eqref{eq:sde:forward:variance}, the price at time $t=0$ of such an option or futures contract is given by 
\begin{align}
\bE\left[\varphi\left(\VIX_{T}^{2}\right)\right]
=
\bE\left[\varphi\left(\frac{1}{\Delta}\int_{T}^{T+\Delta}\xi_{0}^{u} \, 
e^{-\frac{1}{2}\int_{0}^{T}K^{u}\left(t\right)^{2}\dd t+\int_{0}^{T}K^{u} \left(t\right)\dd W_{t}}
\dd u\right)\right].
\label{eq:price:vix:option}
\end{align}
Despite the simplicity of the forward variance model \eqref{eq:sde:forward:variance}, the option pricing problem \eqref{eq:price:vix:option} is not trivial.
The random variable $\VIX_{T}^{2}$ is given by a continuous sum of correlated log-normal random variables, and as such, its distribution is not explicit and cannot be simulated exactly.
Of course, we see an analogy with Asian option pricing, even if the problem here is structurally different, for the integration in \eqref{eq:def:VIX2}  takes place with respect to the maturity dimension of the forward variance curve, as opposed to the running time variable of a Markov process  as in Asian option payoffs $\varphi(\int_0^T S_t \, \dd t)$.
The expectation \eqref{eq:price:vix:option} can eventually be approximated by coupling a discretization scheme with Monte Carlo simulation, see \cite{jacquier_martini_muguruza_2018, horvath2018volatility} or \cite{bourgey2021multilevel} for the implementation of a multilevel scheme, and  asymptotic formulas for short maturity $T$ can also be derived, as in \cite{alos2022smile, lacombe2021asymptotics}.
Here, we explore an alternative approach based on analytical approximations 
of the form
\be \label{e:formal_expansion}
\bE\left[\varphi\left(\VIX_{T}^{2}\right)\right]={\rm Main\,term}+{\rm Correction\,terms}+\mathrm{Error}.
\ee
Both the main and the correction terms will be easily computable using simple log-normal distributions. In addition, we aim at providing error bounds in terms of the kernel's characteristics and the length of the time-window $\Delta$, covering the case of non-smooth payoffs $\varphi$.

\paragraph{Comparison with the literature on option price expansions.}

\textit{Instantaneous volatility (or first-generation stochastic volatility) models.} Using Fourier-based techniques, semi-analytical formulas are derived in \cite{zhu2012analytical}
for VIX futures in a Heston-type stochastic volatility model with jumps, and in \cite{goutte2017} for option prices in a regime-switching
Heston model. Short-time expansions are provided in \cite{zhao2018} for $\VIX$ futures and options in stochastic volatility models
including Heston, mean-reverting CEV, and $3/2$ models, and in \cite{barletta2019}
under the assumption that $\VIX_{t}=\phi(t,Y_{t})$ for some Markov process $Y$. Assuming multi-scale volatility modeling, some expansions for
$\VIX$ and $\SP$ derivatives are provided in \cite{fouque_saporito2018}, but their accuracy seems to deteriorate for short maturities (below four months), for which $\VIX$ derivatives are the most liquid. In all these works,
the $\VIX$ model is based on a finite-dimensional Markov process, which does not encompass all models of the form \eqref{eq:sde:forward:variance}.

\noindent \textit{Forward variance (or second-generation stochastic volatility) models.}
A recent work close in spirit to ours is \cite{guyon2020vix}, where formal expansions in terms of powers of the volatility-of-volatility parameter are established 
for $\VIX$ futures and power payoffs
in the one-factor and two-factor Bergomi models, along with their extensions to the mixed Bergomi model \cite{bergomi2008smile}.
These expansions are shown to provide accurate approximations for a wide range of model parameters, covering typical values of calibrated parameters in the Equity and FX markets, 
and also beyond -- in particular, even for large values of the volatility-of-volatility parameter.
With respect to \cite{guyon2020vix}, we also deal with the case of $\VIX$ call and put options, 
we cover more general kernel functions, and provide error estimates for our expansions.
Concerning specifically the rough Bergomi model \eqref{eq:def:rough:bergomi:model}, several recent papers have tackled the problem of VIX derivatives pricing in this framework.
In \cite{jacquier_martini_muguruza_2018}, instead of expansion formulas, upper and lower bounds for $\VIX$ futures are provided, and in \cite{lacombe2021asymptotics}, large deviation theory is applied to derive small-maturity asymptotic formulas (covering one- and multi-factor mixed rough Bergomi models).
In \cite{alos2022smile}, the authors specifically focus on the short-maturity at-the-money implied volatility level and skew. 
Exploiting representations from Malliavin calculus, the authors theoretically confirm the capability of  mixed log-normal models to generate a positive VIX skew \cite[section 3.2]{alos2022smile} (as previously announced by the numerical tests in \cite{demarco2018pres} and \cite{guyon2018pres}), and derive short-term asymptotic formulas for at-the-money values.
For instance, in \cite[Example 21]{alos2022smile}  the short-term limit of the at-the-money VIX implied volatility skew is provided for the mixed rough Bergomi model,
which we will consider in section \ref{sec:mixed_model}.


\paragraph{More details on our contributions.}
Our approach
follows a different path with respect to small-maturity or small-parameter asymptotics, consisting in taking advantage of a structural property of the VIX: the relatively short time-window $\Delta$ over which forward variances are integrated in \eqref{eq:def:VIX2} (recall that $\Delta \approx \frac{1}{12}$, when measuring time in years).
The first step is to replace the arithmetic average of exponentials in \eqref{eq:price:vix:option} with their geometric average, 
in the spirit of the work in \cite{kemna1990pricing} for Asian options, which then serves as a central point 
for deriving asymptotic expansions.
The mathematical analysis, although close to \cite{gobet2014weak}
about averaged diffusion processes, is significantly different: first,
we deal with forward curve processes that do not have a Markovian representation; second, we cover the case of payoffs 
$\varphi$ that can fail to be smooth and are only $\frac{1}{2}$-H\"{o}lder;
third, we also deal with model mixtures, that is, mixed Bergomi and rough Bergomi models.
The terms in the resulting expansion \eqref{e:formal_expansion} will be given by a Black--Scholes price along with explicit Black--Scholes Greeks, see Theorem \ref{thm:expansion:plain:model} and Theorem \ref{thm:expansion:mixed:model} for more details.
Thanks to the integration-by-parts formula of Malliavin calculus, we can prove that the error term of our approximation formula is of order $\Delta^{3(d_{1}\wedge\frac{d_{2}}{2})}$
where the constants $d_{1}$ and $d_{2}$ depend on deterministic $L^{p}$ estimates related to the
kernel $K$ (see \eqref{eq:assu:drift} and \eqref{eq:assu:diffusion}).
As main examples, we cover the one-factor standard Bergomi model \eqref{eq:def:one:factor:bergomi} and the rough Bergomi model \eqref{eq:def:rough:bergomi:model} and show that in such cases the error is $\mathcal{O}(\Delta^{3})$, resp.\ $\mathcal{O}(\Delta^{3H})$.
We illustrate these results with several numerical tests on option prices and implied volatilities, showing that the approximation formulas provide very accurate results (relative errors are smaller than $2\%$ in all our tests) for a wide
range of model parameters, see section \ref{sec:vix:expansion:numerical:tests}.
Given the documented inability of exponential models of the form \eqref{eq:sde:forward:variance} to generate realistic market $\VIX$ smiles, we establish a similar expansion formula in the so-called mixed Bergomi and mixed rough Bergomi models, proving that the error term is still $\mathcal{O}(\Delta^{3})$ (resp.\ $\mathcal{O}(\Delta^{3H})$)
for smooth payoffs, see section \ref{sec:mixed_model}.
Finally, some numerical tests on market data confirm that our approximations can be used for fast and efficient calibration of the mixed models to  $\VIX$ smiles (see section \ref{sec:vix:expansion:numerical:tests:mixed:rough:bergomi}).


\paragraph{Notations.}
In most of our explicit formulas and proofs, we find it convenient to factor out the dependence with respect to the initial forward variance curve $u \mapsto \xi_0^u$.
In order to do so while still keeping a compact formulation, we introduce the following probability measures on the interval $\cA:=[T,T+\Delta]$
\begin{equation}
\nu\left(\dd u\right):=\frac{\dd u}{\Delta},
\quad
\nu_{0}\left(\dd u\right):= 
\frac{\xi_{0}^{u}}{ \frac 1 \Delta \int_T^{T+\Delta} \xi_{0}^{u} \dd u} \, \frac{\dd u} \Delta\,.
\label{eq:def:probability:measures}
\end{equation}
We will denote 
\be \label{eq:def:integral:operators}
\begin{aligned}
\nu(f) =  \int_T^{T+\Delta} f(u) \frac{\dd u} \Delta =
\int_\cA f(u) \, \nu(\dd u)
\\
\nu_0(f) =  \int_T^{T+\Delta} f(u) \nu_0(\dd u) = \int_\cA f(u) \, \nu_0(\dd u)
\end{aligned}
\ee
the means of integrable functions $f$ with respect to the measures $\nu$ and $\nu_0$.
Note that, using the notation above, we have 
\[
\bE\left[ \VIX_{T}^{2} \right]
= \frac 1 \Delta \int_T^{T+\Delta} \xi_{0}^{u} \dd u
= \nu(\xi_0^{\cdot})
\]
and yet $\nu_0(\dd u) =  \frac{\xi_0^u}{\bE\left[ \VIX_{T}^{2} \right]} \nu(\dd u)$.
Of course, when the initial forward variance $u \mapsto \xi_0^u$ is constant, we have $\nu=\nu_{0}$.
Finally, we set
\begin{equation}
Y_{T}^{u}
:= X_{T}^{u}-X_{0}^{u}
= -\frac{1}{2}\int_{0}^{T}K^{u}\left(t\right)^{2}\dd t+\int_{0}^{T}K^{u}\left(t\right)\dd W_{t}.\label{eq:def:YT}
\end{equation}

We denote $\Vert \cdot \Vert_p$ the $L^p$ norm for random variables.
In our error estimates and proofs, we will denote $C$ as a generic positive constant that 
may change from line to line and may depend on the model and option parameters, but which is in any case independent of $\Delta$, of the curve $u\mapsto\xi_{0}^{u}$, and the payoff $\varphi$. For two
non-negative real numbers $x$ and $y$, $x\leq_{c}y$ stands for $x \leq Cy$.
We denote $\Phi$ the cumulative distribution function of the standard normal distribution. 

\paragraph{Acknowledgments,}The authors gratefully acknowledge financial support from the research projects Chaire Risques Financiers (\'{E}cole Polytechnique, Fondation du Risque and Soci\'{e}t\'{e} G\'{e}n\'{e}rale) and Chaire Stress Test, Risk Management and Financial Steering (\'{E}cole Polytechnique, Fondation de l'\'{E}cole Polytechnique and BNP Paribas).
We thank Julien Guyon, Martino Grasselli, and Mathieu Rosenbaum for feedback and stimulating discussions on the subject of this article.

\section{Exponential forward variance models\label{sec:vix:expansion:}}

\begin{assumption}
\label{assu:xi0}
The initial instantaneous forward variance curve $u \mapsto\xi_{0}^{u}$ is positive, bounded, and bounded away from zero.
\end{assumption}

\begin{assumption} \label{ass:2}
	The kernel $K^{\cdot}(\cdot)$ in \eqref{eq:sde:forward:variance} is such that $\int_0^T K^u(t)^2 \dd t \mino \infty$  for every $u \in [T, T + \overline \Delta]$, for some $\overline \Delta \ge 1$.
Moreover, for any $p>0$, there exists a positive constant $C_{p}$ such that
	
		\begin{equation}\label{eq:simple:assu}
			\frac 1{\Delta}\int_T^{T+\Delta} 
			e^{p\int_{0}^{T}K^{u}\left(t\right)^{2}\dd t} 
			\, \dd u
			\leq C_{p}
		\end{equation}
for all $\Delta \le \overline \Delta$.
\end{assumption}

Assumption \ref{ass:2} is a mild technical condition, which essentially means that the moments of $\xi_{T}^{u}$ are integrable over $[T, T+\Delta]$.
As a consequence of Assumption \ref{ass:2}, all the moments of the random variable $\VIX_{T}^{2}$ are also finite, by Jensen's inequality.
Using the notation we introduced in \eqref{eq:def:probability:measures} for the measure $\nu_0$, under Assumption \ref{assu:xi0} the condition \eqref{eq:simple:assu} is equivalent to 
\begin{equation}
\int_T^{T+\Delta} e^{p\int_{0}^{T}K^{u}\left(t\right)^{2}\dd t}\, \nu_{0}\left(\dd u\right)\leq C'_{p}
\label{eq:assu:kernel:exp:integral:u}
\end{equation}
for some constant $C'_{p}$ and  all $\Delta \le \overline \Delta$;  we will apply Assumption \ref{ass:2} under the  form \eqref{eq:assu:kernel:exp:integral:u} in our estimates and proofs.
It is easy to check that Assumption \ref{ass:2} is satisfied by the one-factor Bergomi model \eqref{eq:def:one:factor:bergomi} and the rough Bergomi model \eqref{eq:def:rough:bergomi:model}.

\subsection{Proxy for the mean of exponentials\label{subsec:proxy}}

Recalling that $Y_{T}^{u}=X_{T}^{u}-X_{0}^{u}$ from \eqref{eq:def:YT}, the random variable $\VIX_{T}^{2}$ in \eqref{eq:def:VIX2} can be rewritten as 
\begin{equation}
\begin{aligned}
\VIX_{T}^{2} &= 
\frac 1 \Delta \int_T^{T+\Delta} \xi_{0}^{u} \, e^{Y_T^u} \dd u
=  \bE[\VIX_{T}^{2}] 
\int_T^{T+\Delta} \frac{\xi_{0}^{u}}{\bE[\VIX_{T}^{2}]} \, e^{Y_T^u} \, \frac{\dd u} \Delta
\\
&= \bE[\VIX_{T}^{2}]  \, \nu_{0}\bigl(e^{\Ydot T}\bigr) 
= \nu\left(\xidot0\right)\nu_{0}\bigl(e^{\Ydot T}\bigr) \,,
\label{eq:def:true:quantity}
\end{aligned}
\end{equation}
where we have used the definitions  \eqref{eq:def:probability:measures} and \eqref{eq:def:integral:operators} for the measures $\nu, \nu_0$ and their integral means in the last identity.
As addressed in the Introduction, the starting point of our analysis is to approximate the arithmetic mean of exponentials $\nu_{0}\bigl(e^{\Ydot T}\bigr) = \int_T^{T+\Delta} e^{Y^u_T} \nu_0(\dd u)$ with their geometric mean $e^{\nu_{0}\left(\Ydot T\right)}$, which has the appealing property of being log-normal (with explicit mean and variance parameters, given in Proposition \ref{prop:proxies:characteristics} below).
More precisely, we set 
\begin{equation}
\VIX_{T,{\rm P}}^{2}
:= \bE[\VIX_{T}^{2}]  \, 
e^{ \frac 1 \Delta \int_T^{T+\Delta} \frac{\xi_{0}^{u}}{\bE[\VIX_{T}^{2}]} \, e^{Y_T^u} \dd u } 
= \nu\left(\xidot0\right) e^{\nu_{0}\left(\Ydot T\right)},
\label{eq:def:proxy}
\end{equation}
where the subscript ${\rm P}$ stands for proxy.
Log-normal approximations of the VIX random variable in exponential forward variance models  have already been exploited by several authors to derive coarse approximations of VIX 
futures and options prices, as in \cite{bayer2016pricing, jacquier_martini_muguruza_2018, horvath2018volatility}.
In this work, we are precisely going to quantify the difference between option prices on the true VIX \eqref{eq:def:true:quantity} and the corresponding prices computed on the log-normal approximation \eqref{eq:def:proxy}.

In order to work out a representation of the difference $\VIX_{T}^{2} - \VIX_{T,{\rm P}}^{2}$, we introduce the interpolation
\begin{align}
I\left(\ve\right) & :=\nu\left(\xidot0\right)\int_{\cA}e^{\nu_{0}\left(\Ydot T\right)+\ve\left(Y_{T}^{u}-\nu_{0}\left(\Ydot T\right)\right)}\nu_{0}\left(\dd u\right),\quad\ve\in[0,1],\label{eq:def:I:eps}
\end{align}
which is such that $I\left(0\right)=\VIX_{T,{\rm P}}^{2}$ and $I\left(1\right)=\VIX_{T}^{2}$.
Under Assumptions \ref{assu:xi0} and \ref{ass:2}, it is easy to see that the map $\ve\mapsto I\left(\ve\right)$ is smooth almost surely, with $n$th
derivative given by 
\begin{align}
I^{\left(n\right)}\left(\ve\right) & =\nu\left(\xidot0\right)\int_{\cA}\left(Y_{T}^{u}-\nu_{0}\left(\Ydot T\right)\right)^{n}e^{\nu_{0}\left(\Ydot T\right)+\ve\left(Y_{T}^{u}-\nu_{0}\left(\Ydot T\right)\right)}\nu_{0}\left(\dd u\right).\label{eq:I:nth:derivative}
\end{align}
Noticing that $I^{\left(1\right)}\left(0\right)=0$, an application of Taylor's theorem
with integral remainder yields 
\begin{equation}
\begin{aligned}
\VIX_{T}^{2}-\VIX_{T,{\rm P}}^{2}
&= I\left(1\right)-I\left(0\right)
\\
&=\int_{0}^{1}\left(1-\ve\right)I^{\left(2\right)}\left(\ve\right)\dd\ve
=\frac{I^{(2)}\left(0\right)}{2}+\int_{0}^{1}\frac{\left(1-\ve\right)^{2}}{2}I^{(3)}\left(\ve\right)\dd\ve.\label{eq:taylor:I1:I0}
\end{aligned}
\end{equation}
The representations of $ I\left(1\right)-I\left(0\right)$ in the second line of \eqref{eq:taylor:I1:I0} will allow us to quantify the difference
between $\VIX_{T}^{2}$ and $\VIX_{T,{\rm P}}^{2}$, and
to derive our expansions for the expectations of  functions of $\VIX_{T}^{2}$.

As mentioned above, the keystone of our approach is the lognormal
property of $\VIX_{T,{\rm P}}^{2}$. 

\begin{proposition} \label{prop:proxies:characteristics} The proxy
$\VIX_{T,{\rm P}}^{2}$ is lognormal, that is
\[
\ln\left(\VIX_{T,{\rm P}}^{2}\right)\egl\cN\left(\mup,\sigmap^{2}\right),
\]
where the mean and variance parameters are given by
\begin{equation}
\sigmap^{2} := \int_{0}^{T}\nu_{0}\bigl(\Kdot\left(t\right)\bigr)^{2}\dd t \,,
\quad
\mup := 
\ln(\nu(\xidot0)) -\frac{1}{2}\int_{0}^{T}\nu_{0}\bigl(\Kdot\left(t\right)^{2}\bigr)\dd t \,.
\label{eq:mean:variance:proxy}
\end{equation}
\end{proposition}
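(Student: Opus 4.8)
The plan is to work directly from the definition \eqref{eq:def:proxy}, which yields
\[
\ln\bigl(\VIX_{T,{\rm P}}^{2}\bigr) = \ln\bigl(\nu(\xidot0)\bigr) + \nu_{0}\bigl(\Ydot T\bigr),
\]
and to show that the only random term, $\nu_{0}(\Ydot T)$, is a Wiener integral of a deterministic integrand, hence Gaussian; the log-normality of $\VIX_{T,{\rm P}}^{2}$ then follows immediately since $\ln(\nu(\xidot0))$ is deterministic.

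First I would substitute the explicit expression for $Y_{T}^{u}$ from \eqref{eq:def:YT} into $\nu_{0}(\Ydot T) = \int_{\cA} Y_{T}^{u}\,\nu_{0}(\dd u)$, splitting it into the $\nu_{0}$-average of the drift part $-\tfrac12\int_{0}^{T}K^{u}(t)^{2}\dd t$ and of the martingale part $\int_{0}^{T}K^{u}(t)\dd W_{t}$. The key manipulation is to interchange the $\nu_{0}$-integration in $u$ with the time integrals: for the drift this is ordinary Fubini, while for the stochastic integral it is the stochastic Fubini theorem. After the interchange one obtains
\[
\nu_{0}\bigl(\Ydot T\bigr) = -\frac{1}{2}\int_{0}^{T}\nu_{0}\bigl(\Kdot(t)^{2}\bigr)\dd t + \int_{0}^{T}\nu_{0}\bigl(\Kdot(t)\bigr)\dd W_{t},
\]
where $\nu_{0}(\Kdot(t)) = \int_{\cA}K^{u}(t)\,\nu_{0}(\dd u)$ and $\nu_{0}(\Kdot(t)^{2}) = \int_{\cA}K^{u}(t)^{2}\,\nu_{0}(\dd u)$ are deterministic functions of $t$.

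The right-hand side is a deterministic constant plus a Wiener integral of the deterministic integrand $t\mapsto\nu_{0}(\Kdot(t))$, so it is Gaussian. Its mean is the deterministic part (the Wiener integral being centred), which gives $\mup = \ln(\nu(\xidot0)) - \tfrac12\int_{0}^{T}\nu_{0}(\Kdot(t)^{2})\dd t$; by the It\^o isometry its variance equals $\int_{0}^{T}\nu_{0}(\Kdot(t))^{2}\dd t = \sigmap^{2}$. I would stress that $\sigmap^{2}$ involves the square of the average $\nu_{0}(\Kdot(t))^{2}$, whereas the correction term in $\mup$ involves the average of the square $\nu_{0}(\Kdot(t)^{2})$; the gap between the two is precisely the Jensen defect produced by replacing the arithmetic mean of exponentials by their geometric mean, and it is exactly this structure that the later expansion will exploit.

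The only step requiring care is the application of the stochastic Fubini theorem, whose standard hypothesis here reads $\int_{\cA}\bigl(\int_{0}^{T}K^{u}(t)^{2}\dd t\bigr)^{1/2}\nu_{0}(\dd u) < \infty$. This follows from Assumption \ref{ass:2}: the elementary bound $x\le\frac1p e^{px}$ combined with \eqref{eq:assu:kernel:exp:integral:u} controls $\int_{\cA}\int_{0}^{T}K^{u}(t)^{2}\dd t\,\nu_{0}(\dd u)$, which dominates the required integral by the Cauchy--Schwarz inequality (recall that $\nu_{0}$ is a probability measure). The same bound ensures that $t\mapsto\nu_{0}(\Kdot(t))$ and $t\mapsto\nu_{0}(\Kdot(t)^{2})$ are integrable on $[0,T]$, so that every object above is well defined.
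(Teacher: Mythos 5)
Your proposal is correct and follows essentially the same route as the paper's own proof: rewrite $\ln(\VIX_{T,{\rm P}}^{2})-\ln(\nu(\xidot0))$ as the $\nu_{0}$-average of the drift and stochastic-integral parts of $Y_{T}^{\cdot}$ and apply the stochastic Fubini theorem to identify a Wiener integral of the deterministic integrand $t\mapsto\nu_{0}(\Kdot(t))$. Your additional verification of the Fubini hypothesis via Assumption \ref{ass:2} is a detail the paper leaves implicit, but it does not change the argument.
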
 
\begin{proof}
According to \eqref{eq:def:proxy}, $\ln\left(\VIX_{T,{\rm P}}^{2}\right)-\ln\left(\nu\left(\xidot0\right)\right)$ is equal to
\[
-\frac{1}{2}\int_{\cA}\biggl(\int_{0}^{T}K^{u}\left(t\right)^{2}\dd t\biggr)\nu_{0}(\dd u)+\int_{\cA}\biggl(\int_{0}^{T}K^{u}\left(t\right)\dd W_{t}\biggr)\nu_{0}(\dd u),
\]
and we conclude using stochastic Fubini's theorem (see e.g.\ \cite[Lemma 1.1]{gobet2014weak}). 
\end{proof}

Incidentally, in light of Proposition 3, we note that $\bE[\VIX_{T,{\rm P}}^{2}] \neq \bE[ \VIX_{T}^{2}]$.

\subsection{Strong error estimates between $\VIX_{T}^{2}$ and its proxy $\VIX_{T,{\rm P}}^{2}$}

To estimate the $L^{p}$ norm of the difference $\VIX_{T}^{2}-\VIX_{T,{\rm P}}^{2}$, we need some estimates for the deterministic $L^{p}$ norm (over $\mathcal A = [T, T+\Delta]$) of the difference between the diffusion coefficient $K^u(t)$ (resp.\ the drift coefficient $(K^u(t))^2$) of the log-forward variance $\log (\xi^u_t)$ and its integral average $\nu_0(K^\cdot(t))$ (resp.\ $\nu_0(K^\cdot(t)^2)$).
We require that these deterministic $L^p$ norms go to zero as $\Delta$ goes to zero with certain rates $d_1$ and $d_2$, see conditions \eqref{eq:assu:drift} and \eqref{eq:assu:diffusion} in Assumption \ref{assu:kernel}.
In specific examples -- the Bergomi model \eqref{eq:def:one:factor:bergomi} and the rough Bergomi model \eqref{eq:def:rough:bergomi:model} -- we are able to check this assumption and to estimate the corresponding rates $d_1$ and $d_2$, see section \ref{sec:flat:forward:variance}.

We further require some conditions such that the proxy random variable is not degenerate, in the sense of \eqref{eq:assu:limit:mean:proxy} and \eqref{eq:assu:limit:variance:proxy}.

\begin{assumption}\label{assu:kernel} For any $p>0$, there exist
positive constants $d_{1},d_{2},C$ such that 
\begin{align}
\Gamma_{\Delta,T,p}:=\Bigl(\int_{\cA}\left|\int_{0}^{T}\left[K^{u}\left(t\right)^{2}-\nu_{0}\left(K^{\cdot}\left(t\right)^{2}\right)\right]\dd t\right|^{p}\nu_{0}\left(\dd u\right)\Bigr)^{\frac{1}{p}} & \leq C\Delta^{d_{1}},\label{eq:assu:drift}
\\
\Lambda_{\Delta,T,p}:=\Bigl(\int_{\cA}\left|\int_{0}^{T}\left[K^{u}\left(t\right)-\nu_{0}\left(K^{\cdot}\left(t\right)\right)\right]^{2}\dd t\right|^{p}\nu_{0}\left(\dd u\right)\Bigr)^{\frac{1}{p}} & \leq C\Delta^{d_{2}},\label{eq:assu:diffusion}
\\
\sup_{\Delta} |\mup| & \leq C,
\label{eq:assu:limit:mean:proxy}
\\
\frac{1}{C}\leq\inf_{\Delta}\sigmap\leq\sup_{\Delta}\sigmap & \leq C.\label{eq:assu:limit:variance:proxy}
\end{align}
\end{assumption}

\begin{proposition} \label{prop:estimate:YT:meanYT}
Under Assumptions  \ref{assu:xi0}, \ref{ass:2}, and \ref{assu:kernel}, for any $p,q\geq1$ we have 
\begin{equation}
\int_{\cA}\left\Vert Y_{T}^{u}-\nu_{0}\left(\Ydot T\right)\right\Vert _{p}^{q}\nu_{0}\left(\dd u\right)\leq_{c}\Delta^{(d_{1}\wedge\frac{d_{2}}{2})q}.\label{eq:Lpq:Y}
\end{equation}
\end{proposition}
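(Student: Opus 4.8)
The plan is to write the centered quantity $\Ydot T-\nu_{0}(\Ydot T)$ in closed form and estimate its $L^{p}$ norm pointwise in $u$, before integrating against $\nu_{0}$. Starting from the definition \eqref{eq:def:YT} of $Y_{T}^{u}$ and commuting $\nu_0$ with the It\^o integral by the stochastic Fubini theorem exactly as in the proof of Proposition \ref{prop:proxies:characteristics}, I would obtain
\[
Y_{T}^{u}-\nu_{0}\bigl(\Ydot T\bigr)
= -\tfrac12\int_{0}^{T}\!\Bigl[K^{u}(t)^{2}-\nu_{0}\bigl(\Kdot(t)^{2}\bigr)\Bigr]\dt
\;+\int_{0}^{T}\!\Bigl[K^{u}(t)-\nu_{0}\bigl(\Kdot(t)\bigr)\Bigr]\dd W_{t},
\]
where the first summand $A^{u}$ is deterministic and the second summand $M^{u}$ is a Wiener integral of the deterministic integrand $t\mapsto K^{u}(t)-\nu_{0}(\Kdot(t))$, hence a centered Gaussian variable.

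Second, I would bound $\Vert Y_{T}^{u}-\nu_{0}(\Ydot T)\Vert_{p}\leq|A^{u}|+\Vert M^{u}\Vert_{p}$ by the triangle inequality. The drift term being deterministic, $|A^{u}|=\tfrac12\bigl|\int_{0}^{T}[K^{u}(t)^{2}-\nu_{0}(\Kdot(t)^{2})]\dt\bigr|$ is exactly the integrand defining $\Gamma_{\Delta,T,q}$ in \eqref{eq:assu:drift}. For the Gaussian term I would use that, for $Z\sim\cN(0,1)$, $\Vert M^{u}\Vert_{p}=c_{p}\,(\Var M^{u})^{1/2}$ with $c_{p}:=\Vert Z\Vert_{p}$ depending only on $p$, while $\Var M^{u}=\int_{0}^{T}[K^{u}(t)-\nu_{0}(\Kdot(t))]^{2}\dt$ is precisely the integrand defining $\Lambda_{\Delta,T,\cdot}$ in \eqref{eq:assu:diffusion}. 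This is the one genuinely structural point of the argument: because the integrand is deterministic, the stochastic integral is \emph{exactly} Gaussian, so all its $L^{p}$ moments collapse to a multiple of its standard deviation, and the $L^{2}$-in-time control \eqref{eq:assu:diffusion} therefore enters through a square root -- which is what ultimately produces the halved rate $\tfrac{d_{2}}{2}$.

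Third, I would raise the bound to the power $q$ (using $(a+b)^{q}\leq 2^{q-1}(a^{q}+b^{q})$) and integrate against $\nu_{0}(\dd u)$ over $\cA$, which gives
\[
\int_{\cA}\bigl\Vert Y_{T}^{u}-\nu_{0}(\Ydot T)\bigr\Vert_{p}^{q}\,\nu_{0}(\dd u)
\leq_c \Gamma_{\Delta,T,q}^{\,q}+c_{p}^{\,q}\,\Lambda_{\Delta,T,q/2}^{\,q/2},
\]
after recognizing the drift integral as $\Gamma_{\Delta,T,q}^{q}$ (take $p=q$ in \eqref{eq:assu:drift}) and the Gaussian integral as $\Lambda_{\Delta,T,q/2}^{q/2}$ (take $p=q/2$ in \eqref{eq:assu:diffusion}; note $q/2>0$, so the assumption applies, and the integrand is nonnegative so the absolute value is harmless). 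Assumption \ref{assu:kernel} then yields $\Gamma_{\Delta,T,q}^{q}\leq_c\Delta^{d_{1}q}$ and $\Lambda_{\Delta,T,q/2}^{q/2}\leq_c\Delta^{d_{2}q/2}$.

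Finally, I would combine the two exponents. Since $\Delta\leq\overline\Delta$ is bounded, a short case split on $\Delta\leq1$ versus $1<\Delta\leq\overline\Delta$ shows $\Delta^{d_{1}q}+\Delta^{d_{2}q/2}\leq_c\Delta^{(d_{1}\wedge\frac{d_{2}}{2})q}$: when $\Delta\leq1$ each term is dominated by the one carrying the smaller exponent, while when $1<\Delta\leq\overline\Delta$ both sides are bounded above and below by positive constants. This delivers \eqref{eq:Lpq:Y}. I expect no real obstacle beyond the Gaussian-moment observation in the second step; the only care needed is to keep the two distinct rates $d_{1}$ and $\tfrac{d_{2}}{2}$ separate until this final combination.
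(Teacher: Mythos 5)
Your proof is correct and follows essentially the same route as the paper: the same decomposition \eqref{eq:Y-nu0Y} into a deterministic drift part and a Wiener integral, the same identification of the two pieces with $\Gamma_{\Delta,T,q}$ and $\Lambda_{\Delta,T,q/2}$ via \eqref{eq:assu:drift}--\eqref{eq:assu:diffusion}, and the same combination of the rates $d_1$ and $\tfrac{d_2}{2}$. The only cosmetic difference is that you exploit the exact Gaussianity of the Wiener integral to write $\Vert M^u\Vert_p = c_p\,(\Var M^u)^{1/2}$, where the paper invokes the Burkholder--Davis--Gundy inequality; since the quadratic variation is deterministic, both yield the same bound.
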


\begin{theorem} \label{thm:estimate:nue:nueproxy}
Under Assumptions \ref{assu:xi0}, \ref{ass:2}, and \ref{assu:kernel}, for any $p\geq1$ and $n\in\bN$ we have
\begin{equation}
\sup_{\ve\in[0,1]}\left\Vert I^{\left(n\right)}\left(\ve\right)\right\Vert _{p}\leq_{c}\nu\left(\xidot0\right)\Delta^{(d_{1}\wedge\frac{d_{2}}{2})n} \,.
\label{eq:Lp:I:nth:derivative}
\end{equation}
Consequently, in view of \eqref{eq:taylor:I1:I0},
\begin{equation}
\left\Vert
\VIX_{T}^{2}-\VIX_{T,{\rm P}}^{2}
\right\Vert _{p}\leq_{c}\nu\left(\xidot0\right)\Delta^{2d_{1}\wedge d_{2}} \,.
\label{eq:Lp:exp:exp:proxy}
\end{equation}
\end{theorem}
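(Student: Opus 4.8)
The plan is to first establish the $L^{p}$ bound \eqref{eq:Lp:I:nth:derivative} on the derivatives $I^{(n)}(\ve)$, and then deduce the consequence \eqref{eq:Lp:exp:exp:proxy} directly from the Taylor representation \eqref{eq:taylor:I1:I0}. Starting from the explicit formula \eqref{eq:I:nth:derivative} and rewriting the exponent as $\nu_{0}(\Ydot T)+\ve(Y_{T}^{u}-\nu_{0}(\Ydot T))=(1-\ve)\nu_{0}(\Ydot T)+\ve Y_{T}^{u}$, I would apply Minkowski's integral inequality to bring the $L^{p}$ norm inside the integral against $\nu_{0}$:
\[
\left\Vert I^{(n)}(\ve)\right\Vert_{p}\leq\nu\left(\xidot0\right)\int_{\cA}\left\Vert\left(Y_{T}^{u}-\nu_{0}\bigl(\Ydot T\bigr)\right)^{n}e^{(1-\ve)\nu_{0}(\Ydot T)+\ve Y_{T}^{u}}\right\Vert_{p}\nu_{0}(\dd u).
\]

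Next, for each fixed $u$ I would separate the polynomial factor from the exponential factor by Hölder's inequality in $\omega$, using conjugate exponents $(r,r')$, to bound the integrand by $\Vert Y_{T}^{u}-\nu_{0}(\Ydot T)\Vert_{pnr}^{n}\cdot\Vert e^{(1-\ve)\nu_{0}(\Ydot T)+\ve Y_{T}^{u}}\Vert_{pr'}$. Since $Y_{T}^{u}$ and $\nu_{0}(\Ydot T)$ are Gaussian (by \eqref{eq:def:YT} and stochastic Fubini, as in Proposition \ref{prop:proxies:characteristics}), the exponent $Z:=(1-\ve)\nu_{0}(\Ydot T)+\ve Y_{T}^{u}$ is Gaussian with mean $m=-\frac{1}{2}B$ and variance $\sigma^{2}\leq B$, where $B:=(1-\ve)\nu_{0}\bigl(\int_{0}^{T}\Kdot(t)^{2}\dd t\bigr)+\ve\int_{0}^{T}K^{u}(t)^{2}\dd t$; the variance bound uses convexity of $x\mapsto x^{2}$ together with Jensen's inequality $\nu_{0}(\Kdot(t))^{2}\leq\nu_{0}(\Kdot(t)^{2})$. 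The Gaussian moment formula then gives $\Vert e^{Z}\Vert_{pr'}=e^{m+\frac{pr'}{2}\sigma^{2}}\leq e^{\frac{pr'-1}{2}B}$, using $pr'\geq1$.

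The crucial step is to integrate over $u$ against $\nu_{0}$ while handling the $u$-dependence of this exponential moment, which cannot be pulled out uniformly for rough kernels. I would therefore apply Hölder's inequality a second time, now in the variable $u$, with conjugate exponents $(s,s')$. The polynomial part yields $\bigl(\int_{\cA}\Vert Y_{T}^{u}-\nu_{0}(\Ydot T)\Vert_{pnr}^{ns}\nu_{0}(\dd u)\bigr)^{1/s}\leq_{c}\Delta^{(d_{1}\wedge\frac{d_{2}}{2})n}$ by Proposition \ref{prop:estimate:YT:meanYT} (applied with norm index $pnr$ and power $ns$). For the exponential part, raising $e^{\frac{pr'-1}{2}B}$ to the power $s'$ and factoring off the $(1-\ve)$-term (which is bounded uniformly in $\Delta$ via Jensen and \eqref{eq:assu:kernel:exp:integral:u}), the remaining integrand is dominated, using $\ve\leq1$, by $e^{\frac{s'(pr'-1)}{2}\int_{0}^{T}K^{u}(t)^{2}\dd t}$ up to a constant; the integral $\int_{\cA}e^{\frac{s'(pr'-1)}{2}\int_{0}^{T}K^{u}(t)^{2}\dd t}\nu_{0}(\dd u)$ is exactly of the form controlled by Assumption \ref{ass:2} under \eqref{eq:assu:kernel:exp:integral:u}, hence bounded by a constant uniformly in $\ve\in[0,1]$ and $\Delta\leq\overline\Delta$. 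Combining the two factors and taking the supremum over $\ve$ proves \eqref{eq:Lp:I:nth:derivative}.

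Finally, \eqref{eq:Lp:exp:exp:proxy} follows immediately: from \eqref{eq:taylor:I1:I0} we have $\VIX_{T}^{2}-\VIX_{T,{\rm P}}^{2}=\int_{0}^{1}(1-\ve)I^{(2)}(\ve)\dd\ve$, so Minkowski's inequality together with \eqref{eq:Lp:I:nth:derivative} for $n=2$ gives $\Vert\VIX_{T}^{2}-\VIX_{T,{\rm P}}^{2}\Vert_{p}\leq_{c}\nu(\xidot0)\Delta^{2(d_{1}\wedge\frac{d_{2}}{2})}=\nu(\xidot0)\Delta^{2d_{1}\wedge d_{2}}$. I expect the main obstacle to lie not in any single estimate but in the bookkeeping of the two nested Hölder applications and, above all, in recognizing that the $u$-dependence of the Gaussian exponential moment must be retained inside the integral and absorbed through Assumption \ref{ass:2} rather than bounded uniformly in $u$, which would fail for singular kernels such as \eqref{eq:def:rough:bergomi:model}.
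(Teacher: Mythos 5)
Your proposal is correct and follows essentially the same route as the paper's proof: generalized Minkowski to pass the $L^p$ norm inside the $\nu_0$-integral, H\"older to separate the polynomial factor (controlled via Proposition \ref{prop:estimate:YT:meanYT}) from the Gaussian exponential factor (controlled via the exponential moment condition \eqref{eq:assu:kernel:exp:integral:u} and the variance bounds in Assumption \ref{assu:kernel}), then Minkowski on \eqref{eq:taylor:I1:I0} with $n=2$ for the consequence. The only cosmetic difference is that you keep the exponent $(1-\ve)\nu_{0}(\Ydot T)+\ve Y_{T}^{u}$ as a single Gaussian and use generic conjugate exponents, whereas the paper splits the exponential into $e^{\ve(Y_{T}^{u}-\nu_{0}(\Ydot T))}\cdot e^{\nu_{0}(\Ydot T)}$ with a three-fold H\"older and a Cauchy--Schwarz in $u$; the estimates invoked are the same.
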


\subsection{General price expansion\label{subsec:price:expansion}}

In light of our discussion in section \ref{subsec:proxy}, the leading order term in the approximation of the VIX option price $\bE[ \varphi(\VIX_{T}^{2})]$ will be given by the price on the proxy $\bE\left[\varphi\left(\VIX_{T,{\rm P}}^{2}\right)\right]$.
Assuming for a moment that the payoff function $\varphi$ is smooth, a Taylor expansion around the point $\VIX_{T,{\rm P}}^{2}$
yields
\be \label{e:Taylor_smooth_varphi}
\bE\bigl[\varphi\left(\VIX_{T}^{2}\right)\bigr]
=
\bE\bigl[\varphi\left(\VIX_{T,{\rm P}}^{2}\right)\bigr]
+ \bE\Bigl[\varphi^{\prime}\left(\VIX_{T,{\rm P}}^{2}\right)\left(\VIX_{T}^{2}-\VIX_{T,{\rm P}}^{2}\right)\Bigr] 
+ E_0
\ee
where the remainder $E_0 = \int_{0}^{1}\left(1-\lambda\right)\bE\bigl[\varphi^{\prime \prime}\left(\lambda\VIX_{T}^{2}+\left(1-\lambda\right)\VIX_{T,{\rm P}}^{2}\right)\left(\VIX_{T}^{2}-\VIX_{T,{\rm P}}^{2}\right)^{2}\bigr]\dd\lambda$ will be treated as an error term.
The difference $\VIX_{T}^{2}-\VIX_{T,{\rm P}}^{2}$ can be expanded using \eqref{eq:taylor:I1:I0}:
recalling from \eqref{eq:I:nth:derivative} the expression of the derivative $I^{(2)}$, we get
\[
\begin{aligned}
\VIX_{T}^{2}-\VIX_{T,{\rm P}}^{2}
&=
\frac 12 I^{(2)}\left(0\right)
+\int_{0}^{1}\frac{\left(1-\ve\right)^{2}}{2}I^{(3)}\left(\ve\right)\dd\ve
\\
&= \frac{1}{2} \VIX_{T,{\rm P}}^{2} \int_{\cA}\left(Y_{T}^{u}-\nu_{0}\left(\Ydot T\right)\right)^{2}\nu_{0}\left(\dd u\right)
+ \int_{0}^{1}\frac{\left(1-\ve\right)^{2}}{2}I^{(3)}\left(\ve\right)\dd\ve \,,
\end{aligned}
\]
so that the second expectation on the right-hand side of \eqref{e:Taylor_smooth_varphi} can eventually be written as
\be \label{e:first_term_representation}
\bE\Bigl[\varphi^{\prime}\left(\VIX_{T,{\rm P}}^{2}\right)\left(\VIX_{T}^{2}-\VIX_{T,{\rm P}}^{2}\right)\Bigr] 
= 
\bE\Bigl[\varphi^{\prime}\left(\VIX_{T,{\rm P}}^{2}\right)
\frac{1}{2} \VIX_{T,{\rm P}}^{2} \int_{\cA}\left(Y_{T}^{u}-\nu_{0}\left(\Ydot T\right)\right)^{2}\nu_{0}\left(\dd u\right) \Bigr]
+ E_1 
\ee
where $E_1 = \bE\bigl[\varphi^{\prime}\left(\VIX_{T,{\rm P}}^{2}\right)  \int_{0}^{1}\frac{\left(1-\ve\right)^{2}}{2}I^{(3)}\left(\ve\right)\dd\ve \bigr] $ will be treated as a second error term.
The random variable multiplying $\varphi^{\prime}\left(\VIX_{T,{\rm P}}^{2}\right)$ inside \eqref{e:first_term_representation} can be interpreted as the random weight appearing after the application of an integration-by-parts formula to higher-order derivatives of $\varphi$ -- which means that, in its turn, the expectation on the right-hand side of \eqref{e:first_term_representation} can be rewritten in terms of a combination of derivatives of the form $\partial^i_{\ve} \, \bE\bigl[\varphi^{\prime}\left(\VIX_{T,{\rm P}}^{2} \, e^{\ve} \right) \bigr]|_{\ve = 0}$.
The important property of such higher-order derivatives of the expectation $\bE\bigl[\varphi \left(\VIX_{T,{\rm P}}^{2} \right) \bigr]$ is to be explicit -- they are Black-Scholes Greeks.
The final expression of the expansion \eqref{e:Taylor_smooth_varphi} will therefore contain a combination of a Black--Scholes price and some of its partial derivatives.

The details of the approach sketched above will be given in section \ref{sec:proof_main_thm}; 
here we state the final expression we obtain for the expansion \eqref{e:Taylor_smooth_varphi} after the integration-by-parts procedure, 
see Theorem \ref{thm:expansion:plain:model} below, which is our main result in this section.

We will make use of the following coefficients
$\left(\gamma_{i}\right)_{i=1,2,3}$ :
\be \label{eq:def:gamma:coeffs}
\begin{aligned}
\gamma_{1} & :=\frac{1}{8}\int_{\cA}\Bigl(\int_{0}^{T}\left[K^{u}\left(t\right)^{2}-\nu_{0}(\Kdot\left(t\right)^{2})\right]\dd t\Bigr)^{2}\nu_{0}\left(\dd u\right)+\frac{1}{2}\int_{\cA}\Bigl(\int_{0}^{T}\left[K^{u}\left(t\right)-\nu_{0}\left(\Kdot\left(t\right)\right)\right]^{2}\dd t\Bigr)\nu_{0}\left(\dd u\right),
\\
\gamma_{2} & :=-\frac{1}{2}\int_{\cA}\Bigl(\int_{0}^{T}\nu_{0}\left(\Kdot\left(t\right)\right)\left[K^{u}\left(t\right)-\nu_{0}\left(\Kdot\left(t\right)\right)\right]\dd t\Bigr)\Bigl(\int_{0}^{T}\left[K^{u}\left(t\right)^{2}-\nu_{0}(\Kdot\left(t\right)^{2})\right]\dd t\Bigr)\nu_{0}\left(\dd u\right),
\\
\gamma_{3} & :=\frac{1}{2}\int_{\cA}\Bigl(\int_{0}^{T}\nu_{0}\left(\Kdot\left(t\right)\right)\left[K^{u}\left(t\right)-\nu_{0}\left(\Kdot\left(t\right)\right)\right]\dd t\Bigr)^{2}\nu_{0}\left(\dd u\right).
\end{aligned}
\ee

\begin{theorem}[Option price approximation] \label{thm:expansion:plain:model}
Let Assumptions \ref{assu:xi0}, \ref{ass:2}, and \ref{assu:kernel} be in force, and let	$\varphi: \R \to \R$ be a $\theta$--H\"{o}lder continuous function for some $\theta\in(0,1]$.
The price
of the $\VIX$ option with payoff $\varphi(\VIX_{T}^{2})$ is given by
\begin{equation}
\bE\left[\varphi\left(\VIX_{T}^{2}\right)\right]
=
\bE\left[\varphi\left(\VIX_{T,{\rm P}}^{2}\right)\right]
+ \sum_{i=1}^{3}\gamma_{i} \left. \partial_{\ve}^{i} \, \bE\left[\varphi\left(\VIX_{T,{\rm P}}^{2} \, e^{\ve}\right)\right]\right|_{\ve=0}
+ {\mathscr{E}_{\varphi}},
\label{eq:expansion:plain:model}
\end{equation}
where $\mathscr{E}_{\varphi}$ is an error term such that $\left|\mathscr{E}_{\varphi}\right|\leq_{c}\Delta^{3(d_{1}\wedge\frac{d_{2}}{2})}$.
\end{theorem}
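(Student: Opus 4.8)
The plan is to run a single Taylor expansion along the interpolation $I(\ve)$ from \eqref{eq:def:I:eps}, which joins $I(0)=\VIX_{T,{\rm P}}^{2}$ to $I(1)=\VIX_{T}^{2}$, rather than expanding $\varphi$ and the VIX difference separately. Setting $J(\ve):=\bE[\varphi(I(\ve))]$, so that $J(0)=\bE[\varphi(\VIX_{T,{\rm P}}^{2})]$ and $J(1)=\bE[\varphi(\VIX_{T}^{2})]$, I would (in the smooth case) write $J(1)-J(0)=\tfrac12 J''(0)+\int_{0}^{1}\tfrac{(1-\ve)^{2}}{2}J'''(\ve)\dd\ve$. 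Since $I^{(1)}(0)=0$ (as noted after \eqref{eq:I:nth:derivative}, because $\int_{\cA}(Y_T^u-\nu_{0}(\Ydot T))\nu_{0}(\dd u)=0$), the first-order term drops and $J'(0)=\bE[\varphi'(\VIX_{T,{\rm P}}^{2})\,I^{(1)}(0)]=0$, while $\tfrac12 J''(0)=\tfrac12\bE\bigl[\varphi'(\VIX_{T,{\rm P}}^{2})\,\VIX_{T,{\rm P}}^{2}\int_{\cA}(Y_T^u-\nu_{0}(\Ydot T))^2\nu_{0}(\dd u)\bigr]$ using \eqref{eq:I:nth:derivative} for $I^{(2)}(0)=\VIX_{T,{\rm P}}^{2}\int_{\cA}(Y_T^u-\nu_{0}(\Ydot T))^2\nu_{0}(\dd u)$. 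This is exactly the main correction term, and $\mathscr{E}_{\varphi}=\int_{0}^{1}\tfrac{(1-\ve)^{2}}{2}J'''(\ve)\dd\ve$ is the error.

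The core algebraic step is to turn $\tfrac12 J''(0)$ into $\sum_{i=1}^3\gamma_i\,\partial_\ve^i\bE[\varphi(\VIX_{T,{\rm P}}^{2}e^{\ve})]|_{\ve=0}$. I would decompose $Y_T^u-\nu_{0}(\Ydot T)=\alpha^u+\beta^u$ into its deterministic part $\alpha^u=-\tfrac12\int_0^T[K^u(t)^2-\nu_{0}(\Kdot(t)^2)]\dd t$ and its centered Gaussian part $\beta^u=\int_0^T[K^u(t)-\nu_{0}(\Kdot(t))]\dd W_t$. By Proposition \ref{prop:proxies:characteristics}, $\VIX_{T,{\rm P}}^{2}=e^{\mup+Z}$ is a deterministic function of the single Gaussian $Z:=\int_0^T\nu_{0}(\Kdot(t))\dd W_t\sim\cN(0,\sigmap^2)$. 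Conditioning on $Z$ reduces everything to $\Cov(\beta^u,Z)=\int_0^T\nu_{0}(\Kdot(t))[K^u(t)-\nu_{0}(\Kdot(t))]\dd t$ and to $\Var(\beta^u)=\int_0^T[K^u(t)-\nu_{0}(\Kdot(t))]^2\dd t$; writing $g(Z):=\varphi'(\VIX_{T,{\rm P}}^{2})\VIX_{T,{\rm P}}^{2}$ and applying the Hermite integration-by-parts identities $\bE[g(Z)Z]=\sigmap^2\bE[g'(Z)]$ and $\bE[g(Z)(Z^2-\sigmap^2)]=\sigmap^4\bE[g''(Z)]$, together with $\partial_\ve^{i}\bE[\varphi(\VIX_{T,{\rm P}}^{2}e^{\ve})]|_{\ve=0}=\bE[g^{(i-1)}(Z)]$, collects the deterministic contribution and the trace term of the quadratic part into $\gamma_1\,\partial_\ve\bE[\cdots]$, the linear$\times$quadratic cross term into $\gamma_2\,\partial_\ve^2\bE[\cdots]$, and the square of the $Z$-correlated part into $\gamma_3\,\partial_\ve^3\bE[\cdots]$, matching the coefficients defined in \eqref{eq:def:gamma:coeffs} exactly.

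For the smooth-case error bound, $J'''(\ve)=\bE[\varphi'''(I(\ve))I^{(1)}(\ve)^3+3\varphi''(I(\ve))I^{(1)}(\ve)I^{(2)}(\ve)+\varphi'(I(\ve))I^{(3)}(\ve)]$; applying Cauchy--Schwarz/H\"older and the uniform estimates \eqref{eq:Lp:I:nth:derivative} of Theorem \ref{thm:estimate:nue:nueproxy}, each of the three terms is controlled by a product of $\|I^{(k)}(\ve)\|_{p}$ whose exponents of $\Delta$ sum to $3(d_1\wedge\tfrac{d_2}{2})$ (namely $1+1+1$, $1+2$, and $3$), the factor $\nu(\xidot0)$ being $\mathcal O(1)$ by Assumption \ref{assu:xi0}; this yields $|\mathscr{E}_{\varphi}|\leq_{c}\Delta^{3(d_1\wedge\frac{d_2}{2})}$ for smooth $\varphi$.

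The main obstacle is the extension to merely $\theta$--H\"older $\varphi$, for which $\varphi',\varphi'',\varphi'''$ do not exist and the crude bound $\bE[|\VIX_T^2-\VIX_{T,{\rm P}}^2|^{\theta}]\leq_{c}\Delta^{\theta(2d_1\wedge d_2)}$ is of insufficient order. The resolution is to keep the main terms as written (the Greeks $\partial_\ve^{i}\bE[\varphi(\VIX_{T,{\rm P}}^{2}e^{\ve})]|_{\ve=0}$ are well defined for H\"older $\varphi$ of polynomial growth, since $\VIX_{T,{\rm P}}^{2}e^{\ve}$ is log-normal with smooth density and $\sigmap$ bounded below by \eqref{eq:assu:limit:variance:proxy}) and to apply the Malliavin integration-by-parts formula to each term of $J'''(\ve)$, transferring the derivatives off $\varphi$ onto explicit weights so that only $\varphi(I(\ve))$ itself appears. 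I expect the delicate point to be establishing the requisite uniform-in-$\ve$ non-degeneracy of the interpolated functional $I(\ve)$ on $[0,1]$ --- i.e.\ negative-moment bounds for its Malliavin covariance $\langle DI(\ve),DI(\ve)\rangle$ that are uniform in $\ve$ and in $\Delta\le\overline\Delta$ (guaranteed precisely by the lower bound on $\sigmap$ in \eqref{eq:assu:limit:variance:proxy}) --- and in checking that, after integration by parts, the $\Delta$-rate is still carried by the $I^{(k)}(\ve)$ factors controlled in Theorem \ref{thm:estimate:nue:nueproxy} and Proposition \ref{prop:estimate:YT:meanYT}, so that the final bound $|\mathscr{E}_{\varphi}|\leq_{c}\Delta^{3(d_1\wedge\frac{d_2}{2})}$ survives with a constant depending only on the H\"older seminorm of $\varphi$.
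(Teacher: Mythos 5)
Your treatment of the smooth case is essentially correct and matches the paper's computation in substance: your single Taylor expansion of $J(\ve)=\bE[\varphi(I(\ve))]$ produces the same main correction term $\tfrac12\bE[\varphi'(I(0))I^{(2)}(0)]$ (the paper instead Taylor-expands $\varphi$ around the proxy and then expands $\VIX_T^2-\VIX_{T,\rm P}^2$ via \eqref{eq:taylor:I1:I0}, which splits the remainder into two pieces $E_0$, $E_1$ involving only $\varphi'$ and $\varphi''$ rather than your $\varphi'''$), and your conditioning on $Z=\int_0^T\nu_0(\Kdot(t))\dd W_t$ with Hermite identities reproduces exactly the coefficients $\gamma_1,\gamma_2,\gamma_3$ that the paper obtains via It\^o's formula and the iterated-Wiener-integral integration by parts of Lemma \ref{lem:integration:by:parts}. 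The rate count $\Delta^{3(d_1\wedge\frac{d_2}{2})}$ for smooth $\varphi$ is also right.

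The gap is in the H\"older case, which is the actual content of the theorem. First, for merely $\theta$--H\"older $\varphi$ the quantities $J''(0)$ and $J'''(\ve)$ are not a priori defined, so you cannot ``keep the main terms as written and apply Malliavin integration by parts to each term of $J'''(\ve)$'': the object you propose to integrate by parts does not yet exist, and proving that $J$ is in fact $C^3$ requires precisely the non-degeneracy machinery you are trying to invoke — the argument is circular unless you first mollify $\varphi$. Second, and more importantly, your claimed uniform negative-moment bounds for the Malliavin covariance of $I(\ve)$ (or of the convex combinations $\lambda I(1)+(1-\lambda)I(0)$) are \emph{not} guaranteed by the lower bound \eqref{eq:assu:limit:variance:proxy} on $\sigmap$: that bound only makes $I(0)=\VIX_{T,\rm P}^2$ non-degenerate, with covariance $\sigmap^2 I(0)^2$. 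For $\ve>0$ the covariance $\gamma_{I(\ve)}$ is only \emph{close} to $\sigmap^2 I(0)^2$ in $L^p$ (cf.\ \eqref{eq:estimate:gammaG:sigmap2}), which does not preclude it from being small or vanishing on an exceptional event, so $\Vert\gamma_{I(\ve)}^{-1}\Vert_p$ cannot be controlled from this alone (the paper states explicitly that the interpolated functional ``may be degenerated in the Malliavin sense''). The paper's resolution — which your plan is missing — is to regularize the payoff by an independent Gaussian, $\varphi_\delta(x)=\bE[\varphi(x+\delta B_T)]$ with the tuned choice $\delta=\Delta^{\frac{3}{\theta}(d_1\wedge\frac{d_2}{2})}$, and to perform the integration by parts with respect to the two-dimensional Brownian motion $(W,B)$: the perturbed functional $G_\delta^\lambda=G^\lambda+\delta B_T$ has Malliavin covariance bounded below by $\delta^2T>0$ almost surely, the bad event where $\gamma_{G^\lambda}$ deviates from $\sigmap^2I(0)^2$ is shown to have probability $\cO(\Delta^{2q(d_1\wedge d_2/2)})$ for arbitrary $q$ so as to absorb the $(\delta^2T)^{-p}$ blow-up, and the de-mollification cost $\cO(\delta^\theta)=\cO(\Delta^{3(d_1\wedge\frac{d_2}{2})})$ exactly matches the target rate. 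Without this (or a genuine a.s.\ lower bound on $\gamma_{I(\ve)}$, which does not follow from the stated assumptions), the error analysis for non-smooth payoffs does not close.
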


\begin{remark}
VIX futures correspond to $\varphi(x) = \sqrt{x}$, VIX put options to $\varphi(x) = (\kappa - \sqrt{x})^+$ and call options to to $\varphi(x) = (\sqrt{x} - \kappa)^+$.
In all these cases, the function $\varphi$ is $\frac 12$--H\"{o}lder ($\varphi$ is Lipschitz in the case of put options with strictly positive strike  $\kappa$).
\end{remark}

\begin{remark} Although the payoff $\varphi$ may fail to be smooth, condition \eqref{eq:assu:limit:variance:proxy} ensures that the lognormal proxy $\VIX_{T,{\rm P}}^{2}$ is not degenerate, with the effect of regularizing the map $\ve\mapsto\bE\left[\varphi\left(\VIX_{T,{\rm P}}^{2}e^{\ve}\right)\right]$, so that the derivatives $\partial_{\ve}^{i} \, \bE\left[\varphi\left(\VIX_{T,{\rm P}}^{2}e^{\ve}\right)\right]$ are well-defined. 
\end{remark}

\begin{remark} \label{rem:coeffs_gamma}
The adimensional coefficients $(\gamma_{i})_{i\in\{1,2,3\}}$ are defined by deterministic integrals with respect to time variables. They depend on the option maturity $T$, on the time window $\Delta$, and on the model parameters $\xi_0^{\cdot}$ and $K$, but not on the option payoff -- which means that, in the case of call and put options, they can be evaluated once for all strikes.
When the initial forward variance curve $u \mapsto \xi_0^u$ is constant over the VIX time window $(T,T+\Delta)$ (which is a standard choice that can usually be made in practice), the $\gamma_i$'s have analytical closed-form expressions in the Bergomi model \eqref{eq:def:one:factor:bergomi}, see Proposition \ref{prop:charact:onefactor:flat}.
In the rough Bergomi model \eqref{eq:def:rough:bergomi:model}, the $\gamma_i$'s  do not seem to admit a closed-form representation even when the initial variance curve $\xi_0^u$ is constant, but in this case, their dependence with respect to the remaining model parameters $\eta$ and $H$ can be simplified, see Remark \ref{rem:gamma_i_rough_Bergomi}  for more details.
For the general case of a non-constant initial curve $\xi_{0}^{u}$, one has to appeal to deterministic quadrature to approximate the integrals with respect to the measure $\nu_{0}(\dd u) = \frac{\xi_{0}^{u}}{ \frac 1 \Delta \int_T^{T+\Delta} \xi_{0}^{u} \dd u} \, \frac{\dd u} \Delta$.
\end{remark}

\begin{remark}(Limiting case: constant kernel)
When the kernel $K$ is constant, which corresponds to $H = \frac{1}{2}$ in the rough Bergomi model and to $k=0$ in the Bergomi model, 
we have $\VIX_T^2 = \VIX_{T, \rm P}^2$. Correspondingly, in this case
$\gamma_i=0$ for every $i \in \{1,2,3\}$, and \eqref{eq:expansion:plain:model} holds  with zero error term $\mathscr{E}_{\varphi} = 0$.
\end{remark}

As a direct corollary of Theorem \ref{thm:expansion:plain:model}, we obtain expansion formulas for the price of calls, puts, and futures on $\VIX_{T}$, for which $\varphi$
is $\frac{1}{2}$-H\"{o}lder continuous. 
The resulting expression will contain a combination
of the following Black--Scholes prices and Greeks :
\begin{align*}
C_{\text{BS}}(x,y,\sigma) & :=\bE_{Z\sim\cN(0,1)}\Bigl[(xe^{-\frac{\sigma^{2}}{2}+\sigma Z}-y)_{+}\Bigr]=x\Phi\Bigl(\frac{1}{\sigma}\ln\Bigl(\frac{x}{y}\Bigr)+\frac{\sigma}{2}\Bigr)-y\Phi\Bigl(\frac{1}{\sigma}\ln\Bigl(\frac{x}{y}\Bigr)-\frac{\sigma}{2}\Bigr),\\
\Deltacall_{\BS}(x,y,\sigma) & :=\partial_{x}C_{\text{BS}}(x,y,\sigma)=\Phi\Bigl(\frac{1}{\sigma}\ln\left(\frac{x}{y}\right)+\frac{\sigma}{2}\Bigr),\\
\Gamma_{{\rm BS}}\left(x,y,\sigma\right) & :=\partial_{x}^{2}C_{\text{BS}}\left(x,y,\sigma\right)=\frac{\Phi^{\prime}\Bigl(\frac{1}{\sigma}\ln\left(\frac{x}{y}\right)+\frac{\sigma}{2}\Bigr)}{x\sigma},
\\
{\rm Speed}_{{\rm BS}}\left(x,y,\sigma\right) & :=\partial_{x}^{3}C_{\text{BS}}\left(x,y,\sigma\right)=-\frac{\Gamma_{{\rm BS}}\left(x,y,\sigma\right)}{x}\Bigl(\frac{1}{\sigma^{2}}\ln\left(\frac{x}{y}\right)+\frac{3}{2}\Bigr),
\end{align*}
where $x,y,\sigma>0$.

We note that the resulting expansion formulas have the appealing property of satisfying put-call parity (see Corollary \ref{cor:plain:model:call:put:future} below); as a consequence, implied volatilities computed either from the expansion \eqref{eq:expansion:plain:model} for call options or from the same expansion for put options will coincide (when using a Black-Scholes formula with forward value equal to the VIX futures price given by the same expansion \eqref{eq:expansion:plain:model}).

\begin{corollary} \label{cor:plain:model:call:put:future} 
Let Assumptions \ref{assu:xi0}, \ref{ass:2} and \ref{assu:kernel} be in force.
For every $i\in\{0,1,2,3\}$, let
$P_{i}^{{\rm {O}}}:=\left.\partial_{\ve}^{i}\bE\left[\varphi\left(\VIX_{T,{\rm P}}^{2}e^{\ve}\right)\right]\right|_{\ve=0}$
where ${\rm {O=call}}$ when $\varphi(x)=(\sqrt{x}-\kappa)_{+},$
${\rm {O=F}}$ when $\varphi(x)=\sqrt{x},$ and ${\rm {O=put}}$ when
$\varphi(x)=(\kappa-\sqrt{x})_{+}$, for some given strike $\kappa>0$.
 Then, the expansion \eqref{eq:expansion:plain:model}
holds for\\
 $\rhd$ VIX call options with 
\begin{align*}
\Pcall_{0} & =C_{\BS}\bigl(e^{\frac{\mup}{2}+\frac{\sigmap^{2}}{8}},\kappa,\frac{\sigmap}{2}\bigr),\quad\Pcall_{1}=\frac{1}{2}e^{\frac{\mup}{2}+\frac{\sigmap^{2}}{8}}\Deltacall_{\BS}\bigl(e^{\frac{\mup}{2}+\frac{\sigmap^{2}}{8}},\kappa,\frac{\sigmap}{2}\bigr),\\
\Pcall_{2} & =\frac{\Pcall_{1}}{2}+\frac{e^{\mup+\frac{\sigmap^{2}}{4}}}{4}\Gamma_{\BS}\bigl(e^{\frac{\mup}{2}+\frac{\sigmap^{2}}{8}},\kappa,\frac{\sigmap}{2}\bigr),\\
\Pcall_{3} & =-\frac{\Pcall_{1}}{2}+\frac{3\Pcall_{2}}{2}+\frac{1}{8}e^{\frac{3\mup}{2}+\frac{3\sigmap^{2}}{8}}{\rm Speed}_{\BS}\bigl(e^{\frac{\mup}{2}+\frac{\sigmap^{2}}{8}},\kappa,\frac{\sigmap}{2}\bigr).
\end{align*}
$\rhd$ VIX futures with 
\[
\PF_{i}=2^{-i}e^{\frac{\mup}{2}+\frac{\sigmap^{2}}{8}}\text{ for }i\in\{0,1,2,3\},
\]
$\rhd$ VIX put options with 
\[
\Pput_{0}:=\Pcall_{0}-\PF_{0}+\kappa,\quad\Pput_{i}=\Pcall_{i}-\PF_{i}\text{ for }i\in\{1,2,3\}.
\]
In particular, 
note that put--call parity holds for the truncated expansion \eqref{eq:expansion:plain:model}, that is
\begin{equation}
\Pcall_{0}-\Pput_{0}
+ \sum_{i=1}^{3}\gamma_{i}\bigl(\Pcall_{i}-\Pput_{i}\bigr)
=
\PF_{0} 
+ \sum_{i=1}^{3} \gamma_{i} \PF_{i}
-\kappa.\label{eq:put:call:parity:expansion}
\end{equation}
\end{corollary}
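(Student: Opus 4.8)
The plan is to compute every quantity $P_{i}^{{\rm O}}=\left.\partial_{\ve}^{i}\bE[\varphi(\VIX_{T,{\rm P}}^{2}e^{\ve})]\right|_{\ve=0}$ in closed form, exploiting the lognormality of the proxy from Proposition \ref{prop:proxies:characteristics}. Writing $\VIX_{T,{\rm P}}^{2}=e^{\mup+\sigmap Z}$ with $Z\sim\cN(0,1)$, the perturbed variable is $\VIX_{T,{\rm P}}^{2}e^{\ve}=e^{(\mup+\ve)+\sigmap Z}$, so the $\ve$-derivatives are just derivatives of a single lognormal whose log-mean is shifted by $\ve$. For the call payoff $\varphi(x)=(\sqrt{x}-\kappa)_{+}$, the first step is to match $\sqrt{\VIX_{T,{\rm P}}^{2}e^{\ve}}=e^{\frac{\mup+\ve}{2}+\frac{\sigmap}{2}Z}$ to the Black--Scholes parametrization $x\,e^{-\sigma^{2}/2+\sigma Z}$. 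Reading off $\sigma=\sigmap/2$ and $x(\ve):=e^{\frac{\mup+\ve}{2}+\frac{\sigmap^{2}}{8}}$ yields $\bE[\varphi(\VIX_{T,{\rm P}}^{2}e^{\ve})]=C_{\BS}(x(\ve),\kappa,\sigmap/2)$, which immediately gives $\Pcall_{0}=C_{\BS}(x(0),\kappa,\sigmap/2)$.

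The remaining $\Pcall_{i}$ follow from the chain rule, using the elementary fact $x'(\ve)=\tfrac{1}{2}x(\ve)$ together with the identifications $\partial_{x}C_{\BS}=\Deltacall_{\BS}$, $\partial_{x}^{2}C_{\BS}=\Gamma_{{\rm BS}}$, $\partial_{x}^{3}C_{\BS}={\rm Speed}_{{\rm BS}}$ recorded just before the statement. One differentiation gives $\Pcall_{1}=\tfrac{1}{2}x(0)\Deltacall_{\BS}(x(0),\kappa,\sigmap/2)$; a second produces the two terms $\tfrac{1}{4}x\,\Deltacall_{\BS}$ and $\tfrac{1}{4}x^{2}\Gamma_{{\rm BS}}$, and rewriting $\tfrac{1}{4}x(0)\Deltacall_{\BS}=\tfrac{1}{2}\Pcall_{1}$ and $x(0)^{2}=e^{\mup+\sigmap^{2}/4}$ reproduces the stated $\Pcall_{2}$. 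The third derivative is handled identically: I would collect the coefficients of $x\,\Deltacall_{\BS}$, $x^{2}\Gamma_{{\rm BS}}$, and $x^{3}{\rm Speed}_{{\rm BS}}$, then eliminate the first two in favor of $\Pcall_{1}$ and $\Pcall_{2}$ to obtain the announced recursion for $\Pcall_{3}$, with $x(0)^{3}=e^{3\mup/2+3\sigmap^{2}/8}$ in front of the Speed term.

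The futures case is immediate: for $\varphi(x)=\sqrt{x}$ the expectation is the lognormal mean $\bE[e^{\frac{\mup+\ve}{2}+\frac{\sigmap}{2}Z}]=x(\ve)$, and since $x(\ve)=e^{\ve/2}x(0)$ one reads off $\PF_{i}=2^{-i}x(0)=2^{-i}e^{\mup/2+\sigmap^{2}/8}$. For puts I would avoid recomputing anything and instead use the pathwise parity $(\sqrt{y}-\kappa)_{+}-(\kappa-\sqrt{y})_{+}=\sqrt{y}-\kappa$; applying it at $y=\VIX_{T,{\rm P}}^{2}e^{\ve}$, taking expectations, and differentiating $i$ times in $\ve$ gives $\Pcall_{i}-\Pput_{i}=\PF_{i}$ for $i\ge 1$ and $\Pcall_{0}-\Pput_{0}=\PF_{0}-\kappa$ (the constant $\kappa$ surviving only at order zero), which rearrange into the claimed formulas for $\Pput_{i}$.

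Finally, the put--call parity \eqref{eq:put:call:parity:expansion} for the truncated expansion is a formal consequence of these per-order relations. Since the coefficients $\gamma_{i}$ are payoff-independent (Remark \ref{rem:coeffs_gamma}), subtracting the truncated expansion for puts from that for calls term by term and substituting $\Pcall_{i}-\Pput_{i}=\PF_{i}$ together with the $i=0$ variant collapses the difference to $\PF_{0}+\sum_{i=1}^{3}\gamma_{i}\PF_{i}-\kappa$, which is exactly the right-hand side. I do not expect a genuine obstacle here, as the whole argument is a controlled chain-rule computation; the only step demanding real care is the third-order differentiation for the call, where the Speed term first appears and where the lower-order contributions must be reorganized correctly into the recursive form, so a stray coefficient or sign there is the most likely source of error.
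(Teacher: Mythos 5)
Your proposal is correct and follows essentially the same route as the paper: write $\bE[\varphi(\VIX_{T,{\rm P}}^{2}e^{\ve})]=C_{\BS}\bigl(e^{\frac{\ve}{2}+\frac{\mup}{2}+\frac{\sigmap^{2}}{8}},\kappa,\frac{\sigmap}{2}\bigr)$ and differentiate in $\ve$ via the chain rule, treating futures as the $\kappa=0$ case. The only cosmetic difference is in the put case, where you invoke the pathwise identity $(\sqrt{y}-\kappa)_{+}-(\kappa-\sqrt{y})_{+}=\sqrt{y}-\kappa$ before taking expectations, while the paper uses Black--Scholes put--call parity together with $\Deltaput_{\BS}=-1+\Deltacall_{\BS}$ and the equality of $\Gamma_{\BS}$ and ${\rm Speed}_{\BS}$ for calls and puts --- these amount to the same computation.
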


\subsection{Verification of  Assumption \ref{assu:kernel} and evaluation of the expansion in our main examples} \label{sec:flat:forward:variance}

In this section, we assume that
the initial instantaneous forward variance curve $u \mapsto \xi_{0}^{u}$
is constant over the interval $(T, T+\Delta)$.
In this case, $\nu_{0}(\dd u) = \nu(\dd u) = \frac{\dd u}\Delta$.

\subsubsection{Bergomi model}

We consider the one-factor Bergomi model \eqref{eq:def:one:factor:bergomi}, where the convolution kernel is $K^{u}(t)=\omega e^{-k(u-t)}$ with $\omega>0$ and $k \geq 0$. 
In Proposition \ref{prop:charact:onefactor:flat}, we
establish explicit formulas for the proxy's mean and variance \eqref{eq:mean:variance:proxy},
and work out the exact asymptotics of the $L^{p}$ norms \eqref{eq:assu:drift}
and \eqref{eq:assu:diffusion} as $\Delta \to 0$.
This will allow us to see that Assumption \ref{assu:kernel} is verified for the Bergomi model.
Moreover, we provide closed-form expressions for the coefficients
$\gamma_{i}$ appearing in the expansion \eqref{eq:expansion:plain:model}.

\begin{proposition} \label{prop:charact:onefactor:flat}
Consider the one-factor
Bergomi model \eqref{eq:def:one:factor:bergomi} and recall the coefficients $\Gamma_{\Delta,T,p}$ and $\Lambda_{\Delta,T,p}$ defined in Assumption \ref{assu:kernel}.
If $k = 0$, then $\Gamma_{\Delta,T,p}  = \Lambda_{\Delta,T,p} \equiv 0$.
Otherwise if $k > 0$, for every $p>0$ we have
\begin{align}
\Gamma_{\Delta,T,p} & \underset{\Delta\to0}{\sim}\frac{\omega^{2}(1-e^{-2kT})}{2(1+p)^{\frac{1}{p}}}\Delta,
\qquad
\Lambda_{\Delta,T,p}\underset{\Delta\to0}{\sim}\frac{\omega^{2}k(1-e^{-2kT})}{8(1+2p)^{\frac{1}{p}}}\Delta^{2}.\label{eq:expansion:asymptotic:bergomi}
\end{align}
The mean and variance \eqref{eq:mean:variance:proxy} of the proxy $\VIX_{T,{\rm P}}^{2}$ are given by
\[
\mup=X_{0}-\frac{\omega^{2}}{8k^{2}\Delta}\left(1-e^{-2kT}\right)\left(1-e^{-2k\Delta}\right),
\qquad
\sigmap^{2}=\frac{\omega^{2}}{2k^{3}\Delta^{2}}\left(1-e^{-2kT}\right)\left(1-e^{-k\Delta}\right)^{2} \,.
\]
Furthermore, the coefficients
$\left(\gamma_{i}\right)_{i\in\{1,2,3\}}$  defined in \eqref{eq:def:gamma:coeffs} have the following closed-form expressions
\begin{align*}
\gamma_{1} & =\frac{\omega^{4}}{128k^{4}\Delta^{2}}\Big(-1+k\Delta\frac{1+e^{-2k\Delta}}{1-e^{-2k\Delta}}\Big)\left(1-e^{-2kT}\right)^{2}\left(1-e^{-2k\Delta}\right)^{2}\\
 & \qquad\qquad\qquad+\frac{\omega^{2}}{8k^{3}\Delta^{2}}\left((2+k\Delta)e^{-k\Delta}-2+k\Delta\right)\left(1-e^{-2kT}\right)\left(1-e^{-k\Delta}\right),\\
\gamma_{2} & =-\frac{\omega^{4}}{48\Delta^{3}k^{5}}\left(1-e^{-k\Delta}\right)^{2}\left(2k\Delta e^{-k\Delta}+2k\Delta+e^{-2k\Delta}(2k\Delta+3)-3\right)\left(1-e^{-2kT}\right)^{2},\\
\gamma_{3} & =\frac{\omega^{4}}{16k^{6}\Delta^{4}}\left(1-e^{-2kT}\right)^{2}\left(1-e^{-k\Delta}\right)^{3}\left(k\Delta-2+(2+k\Delta)e^{-k\Delta}\right).
\end{align*}
\end{proposition}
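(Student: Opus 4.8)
The plan is to exploit the multiplicative separation of variables in the exponential kernel, $K^{u}(t)=\omega e^{kt}e^{-ku}$, which decouples the $t$- and $u$-dependencies and reduces every object in the statement to elementary exponential integrals. Writing $u=T+v$ with $v\in[0,\Delta]$ (so that $\nu_{0}(\dd u)=\frac{\dd v}{\Delta}$ for the constant curve), the inner time integrals all carry the common prefactor $\int_{0}^{T}e^{2kt}\dd t=\frac{e^{2kT}-1}{2k}$; setting $A:=\frac{\omega^{2}}{2k}(1-e^{-2kT})$ and $m_{j}:=\frac{1}{\Delta}\int_{0}^{\Delta}e^{-jkv}\dd v=\frac{1-e^{-jk\Delta}}{jk\Delta}$, I would first record the building blocks $\int_{0}^{T}K^{u}(t)^{2}\dd t=A\,e^{-2kv}$, $\int_{0}^{T}\nu_{0}(\Kdot(t)^{2})\dd t=A\,m_{2}$, together with the three centered quantities $\int_{0}^{T}[K^{u}(t)^{2}-\nu_{0}(\Kdot(t)^{2})]\dd t=A(e^{-2kv}-m_{2})$, $\int_{0}^{T}[K^{u}(t)-\nu_{0}(\Kdot(t))]^{2}\dd t=A(e^{-kv}-m_{1})^{2}$, and $\int_{0}^{T}\nu_{0}(\Kdot(t))[K^{u}(t)-\nu_{0}(\Kdot(t))]\dd t=A\,m_{1}(e^{-kv}-m_{1})$. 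The case $k=0$ is then immediate: the kernel is constant in $u$, so all centered quantities vanish identically and $\Gamma_{\Delta,T,p}=\Lambda_{\Delta,T,p}\equiv0$.

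With these blocks in hand, the proxy parameters in \eqref{eq:mean:variance:proxy} follow by substitution, since $\nu_{0}(\Kdot(t))=\omega e^{kt}e^{-kT}m_{1}$: one gets $\sigmap^{2}=\int_{0}^{T}\nu_{0}(\Kdot(t))^{2}\dd t=A\,m_{1}^{2}$ and $\mup=\ln\nu(\xidot0)-\frac{1}{2}A\,m_{2}$, which produce the stated closed forms once $\ln\nu(\xidot0)=X_{0}$ for a constant curve. For the $L^{p}$ asymptotics I would rescale $v=\Delta s$ to obtain $\Gamma_{\Delta,T,p}^{p}=A^{p}\int_{0}^{1}|e^{-2k\Delta s}-m_{2}|^{p}\dd s$ and $\Lambda_{\Delta,T,p}^{p}=A^{p}\int_{0}^{1}(e^{-k\Delta s}-m_{1})^{2p}\dd s$, then Taylor-expand to leading order, $e^{-2k\Delta s}-m_{2}=k\Delta(1-2s)+O(\Delta^{2})$ and $e^{-k\Delta s}-m_{1}=k\Delta(\frac{1}{2}-s)+O(\Delta^{2})$, uniformly in $s\in[0,1]$. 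Factoring out the leading scale and invoking dominated convergence, the integrals converge to $\int_{0}^{1}|1-2s|^{p}\dd s=\frac{1}{p+1}$ and $\int_{0}^{1}(\frac{1}{2}-s)^{2p}\dd s=\frac{1}{(2p+1)4^{p}}$; simplifying $Ak=\frac{\omega^{2}}{2}(1-e^{-2kT})$ and $Ak^{2}=\frac{\omega^{2}k}{2}(1-e^{-2kT})$ then yields the claimed equivalents.

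For the coefficients, I would substitute the building blocks into \eqref{eq:def:gamma:coeffs} and reduce each $\nu_{0}$-integral over $\cA$ to a polynomial in $m_{1},m_{2},m_{3},m_{4}$. Explicitly, $\gamma_{1}=\frac{A^{2}}{8}(m_{4}-m_{2}^{2})+\frac{A}{2}(m_{2}-m_{1}^{2})$, $\gamma_{2}=-\frac{A^{2}m_{1}}{2}(m_{3}-m_{1}m_{2})$, and $\gamma_{3}=\frac{A^{2}m_{1}^{2}}{2}(m_{2}-m_{1}^{2})$. The remaining work is algebraic: writing everything in terms of $x:=e^{-k\Delta}$ and $a:=k\Delta$, and pulling out the common factors $(1-e^{-k\Delta})$, $(1-e^{-2k\Delta})$, and $(1-e^{-2kT})$ to match the stated expressions. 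The one identity worth isolating is the $\gamma_{2}$ factorization $2a(1-x^{3})-3(1-x)^{2}(1+x)=(1-x)(2ax+2a+(2a+3)x^{2}-3)$, a consequence of $(1-x)(1+x+x^{2})=1-x^{3}$; the simpler bracket $k\Delta-2+(2+k\Delta)e^{-k\Delta}$, coming from $m_{2}-m_{1}^{2}$, appears in both the second term of $\gamma_{1}$ and in $\gamma_{3}$.

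I expect the principal difficulty to be bookkeeping rather than conceptual. The $\gamma_{i}$ require careful, sign-correct algebra to recover the exact closed forms, and the $L^{p}$ asymptotics for $\Gamma_{\Delta,T,p}$ demand attention to the sign change of $e^{-2k\Delta s}-m_{2}$ near $s=\frac{1}{2}$ -- which is precisely why the absolute value cannot be dropped and why the limit is $\int_{0}^{1}|1-2s|^{p}\dd s$ rather than a signed moment. Making the equivalences $\sim$ (as opposed to mere $O$-bounds) rigorous rests on justifying that the $O(\Delta^{2})$ Taylor remainders are genuinely subleading uniformly in $s$, which is exactly what the dominated-convergence step delivers.
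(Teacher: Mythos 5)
Your proposal is correct and follows essentially the same route as the paper: explicit computation of the elementary exponential integrals, the change of variables $v=(u-T)/\Delta$, a uniform Taylor expansion of the rescaled integrand to obtain the $L^p$ equivalents via $\int_0^1|1-2s|^p\,\dd s=\frac{1}{1+p}$, and direct algebraic reduction of the $\gamma_i$. Your bookkeeping through the moments $m_j=\frac{1-e^{-jk\Delta}}{jk\Delta}$ is a tidier packaging of the same computations (and all the resulting closed forms check out against the stated expressions), but it is not a different argument.
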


In light of \eqref{eq:expansion:asymptotic:bergomi} 
and of the expressions of $\mup$ and $\sigmap$ in
Proposition \ref{prop:charact:onefactor:flat}, we have the following result.

\begin{corollary}\label{cor:check:onefactor:bergomi} 
In the Bergomi model \eqref{eq:def:one:factor:bergomi} with parameter $k >0$, Assumption \ref{assu:kernel} holds with $d_{1}=1$ and $d_{2}=2$. Consequently, Theorem \ref{thm:expansion:plain:model}
holds and the error term ${\mathscr{E}_{\varphi}}$ in \eqref{eq:expansion:plain:model} is $\cO(\Delta^{3}).$
\end{corollary}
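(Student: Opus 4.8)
The plan is to verify the four conditions \eqref{eq:assu:drift}--\eqref{eq:assu:limit:variance:proxy} constituting Assumption \ref{assu:kernel} (Assumptions \ref{assu:xi0} and \ref{ass:2} being already granted for the constant positive curve and for the Bergomi kernel), and then to substitute the resulting exponents into the error bound of Theorem \ref{thm:expansion:plain:model}. Everything needed is contained in Proposition \ref{prop:charact:onefactor:flat}, so the work is essentially a matter of turning the small-$\Delta$ asymptotics and the closed-form proxy parameters stated there into the uniform-in-$\Delta$ bounds demanded by Assumption \ref{assu:kernel}.

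First I would treat the two deterministic $L^p$ estimates. For each fixed $p > 0$, the equivalences \eqref{eq:expansion:asymptotic:bergomi} read $\Gamma_{\Delta,T,p} \sim c_1 \Delta$ and $\Lambda_{\Delta,T,p} \sim c_2 \Delta^2$ as $\Delta \to 0$, with $c_1, c_2 > 0$ depending on $\omega, k, T, p$; this identifies the candidate rates $d_1 = 1$ and $d_2 = 2$. The one point deserving care is that $\sim$ only controls the small-$\Delta$ regime, whereas \eqref{eq:assu:drift}--\eqref{eq:assu:diffusion} require bounds valid for all $\Delta \le \overline\Delta$. To bridge the gap I would observe that $\Delta \mapsto \Gamma_{\Delta,T,p}$ and $\Delta \mapsto \Lambda_{\Delta,T,p}$ are continuous on $(0,\overline\Delta]$, so that the ratios $\Gamma_{\Delta,T,p}/\Delta$ and $\Lambda_{\Delta,T,p}/\Delta^2$ are continuous there and admit finite positive limits as $\Delta \to 0$; they therefore extend continuously to the compact interval $[0,\overline\Delta]$ and are bounded on it. This yields $\Gamma_{\Delta,T,p} \leq_{c} \Delta$ and $\Lambda_{\Delta,T,p} \leq_{c} \Delta^2$ uniformly over $\Delta \le \overline\Delta$, i.e.\ \eqref{eq:assu:drift} with $d_1 = 1$ and \eqref{eq:assu:diffusion} with $d_2 = 2$.

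Next I would verify the non-degeneracy of the proxy. Starting from the closed form of $\mup$ in Proposition \ref{prop:charact:onefactor:flat} and using $\frac{1-e^{-2k\Delta}}{\Delta} \to 2k$, the map $\Delta \mapsto \mup$ extends continuously to $\Delta = 0$ with finite limit $X_0 - \frac{\omega^2}{4k}(1-e^{-2kT})$; being continuous on $[0,\overline\Delta]$ it is bounded, which is \eqref{eq:assu:limit:mean:proxy}. For the variance, the expression for $\sigmap^2$ is a product of strictly positive factors for every $\Delta > 0$ (here $k, T > 0$ force $1-e^{-2kT} > 0$), and since $\frac{(1-e^{-k\Delta})^2}{\Delta^2} \to k^2$ it extends continuously to the strictly positive limit $\frac{\omega^2}{2k}(1-e^{-2kT})$ at $\Delta = 0$. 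Hence $\sigmap^2$ is bounded above and bounded away from zero on $[0,\overline\Delta]$, giving \eqref{eq:assu:limit:variance:proxy}.

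With Assumption \ref{assu:kernel} in force for $d_1 = 1$ and $d_2 = 2$, Theorem \ref{thm:expansion:plain:model} applies and delivers the expansion \eqref{eq:expansion:plain:model}, whose error is of order $\Delta^{3(d_1 \wedge \frac{d_2}{2})} = \Delta^{3(1 \wedge 1)} = \Delta^3$; thus $\mathscr{E}_{\varphi} = \cO(\Delta^3)$. I do not expect a genuine obstacle: the statement is a bookkeeping corollary, and the only mild subtlety is the continuity and compactness step used to promote the pointwise small-$\Delta$ asymptotics of Proposition \ref{prop:charact:onefactor:flat} to the uniform-in-$\Delta$ bounds that Assumption \ref{assu:kernel} actually requires.
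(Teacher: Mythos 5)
Your proof is correct and follows the same route the paper intends: the paper gives no separate proof of this corollary, presenting it as an immediate consequence of the asymptotics \eqref{eq:expansion:asymptotic:bergomi} and the closed-form expressions for $\mup$ and $\sigmap$ in Proposition \ref{prop:charact:onefactor:flat}. Your continuity-and-compactness step to upgrade the small-$\Delta$ equivalences to uniform bounds over $\Delta\le\overline\Delta$ is exactly the bookkeeping the paper leaves implicit.
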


\subsubsection{The rough Bergomi model}

We now consider the rough Bergomi model \eqref{eq:def:rough:bergomi:model}. Recall that in this case, we have the fractional kernel $K^{u}(t)=\eta(u-t)^{H-\frac{1}{2}}$, with $\eta>0$ and $H\in(0,1)$. 

\begin{proposition} \label{prop:charact:rough:flat}
Consider the rough
Bergomi model \eqref{eq:def:rough:bergomi:model} and recall the coefficients $\Gamma_{\Delta,T,p}$ and $\Lambda_{\Delta,T,p}$ defined in Assumption \ref{assu:kernel}.
If $H = \frac{1}{2}$, then $\Gamma_{\Delta,T,p}  = \Lambda_{\Delta,T,p} \equiv 0$.
Otherwise if $H\in(0,1)\setminus\{\frac{1}{2}\}$, for every $p>0$ we have
\[
\Gamma_{\Delta,T,p}\underset{\Delta\to0}{\sim}\eta^{2}\begin{cases}
\frac{1}{2H}\left(\int_{0}^{1}\left|\frac{1}{2H+1}-y^{2H}\right|^{p}\dd y\right)^{\frac{1}{p}}\Delta^{2H} & \text{if }H\in\big(0,\frac{1}{2}\big),\\
\frac{T^{2H-1}}{2\left(1+p\right)^{\frac{1}{p}}}\Delta & \text{if }H\in\big(\frac{1}{2},1\big),
\end{cases}
\]
and
\[
\Lambda_{\Delta,T,p}\underset{\Delta\to0}{\sim}\eta^{2}f_{{\rm diff}}\left(H,p\right)\Delta^{2H},
\]
where $f_{{\rm diff}}\left(H,p\right):=\frac{1}{\left(H+\frac{1}{2}\right)^{2}}\Bigl(\int_{u=0}^{1}\Bigl|\int_{s=0}^{\infty}\left((1+s)^{H+\frac{1}{2}}-s^{H+\frac{1}{2}}-\left(H+\frac{1}{2}\right)(u+s)^{H-\frac{1}{2}}\right)^{2}\dd s\Bigr|^{p}\dd u\Bigr)^{\frac{1}{p}}.$
The proxy's mean and variance \eqref{eq:mean:variance:proxy} are given by
\begin{align*}
\mup & =X_{0}-\frac{\eta^{2}\bigl[(T+\Delta)^{2H+1}-\Delta^{2H+1}-T^{2H+1}\bigr]}{4H(2H+1)\Delta}\xrightarrow[\Delta\to0]{}X_{0}-\frac{\eta^{2}T^{2H}}{4H}\in\bR,
\\
\sigmap^{2} & =\frac{\eta^{2}}{\Delta^{2}\left(H+\frac{1}{2}\right)^{2}} \biggl[\frac{\left(T+\Delta\right)^{2H+2}+T^{2H+2}-\Delta^{2H+2}}{2H+2}-2\int_{0}^{T}\left(t+\Delta\right)^{H+\frac{1}{2}}t^{H+\frac{1}{2}}\dd t\biggr]
\\
 & \xrightarrow[\Delta\to0]{}\frac{\eta^{2}T^{2H}}{2H}\in(0,\infty) \,.
\end{align*}
Consequently, estimates \eqref{eq:assu:limit:mean:proxy} and \eqref{eq:assu:limit:variance:proxy}
hold for the rough Bergomi model.
\end{proposition}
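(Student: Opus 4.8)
The plan is to reduce everything to explicit deterministic integrals of the power kernel $K^{u}(t)=\eta(u-t)^{H-\frac12}$, exploiting that $\nu_{0}=\nu=\dd u/\Delta$ on $\cA=[T,T+\Delta]$ in the flat-curve regime. First I would compute the two building blocks
\[
\int_{0}^{T}K^{u}(t)^{2}\,\dd t=\frac{\eta^{2}}{2H}\bigl[u^{2H}-(u-T)^{2H}\bigr],
\qquad
\nu_{0}\bigl(\Kdot(t)\bigr)=\frac{\eta}{\Delta(H+\frac12)}\bigl[(T+\Delta-t)^{H+\frac12}-(T-t)^{H+\frac12}\bigr],
\]
both by the substitution $v=u-t$ and (deterministic) Fubini. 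Plugging the first into \eqref{eq:mean:variance:proxy}, integrating in $u$, and using $\ln\nu(\xi_{0}^{\cdot})=X_{0}$ in the flat-curve case yields the announced closed form for $\mup$; squaring the second and integrating over $[0,T]$ (after expanding the square into three explicit power integrals) produces the stated formula for $\sigmap^{2}$. The case $H=\frac12$ is immediate, since then $K^{u}(t)\equiv\eta$ is constant and the integrands defining $\Gamma_{\Delta,T,p}$ and $\Lambda_{\Delta,T,p}$ vanish identically.

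Next I would handle the limits. For $\mup$ the limit $\mup\to X_{0}-\frac{\eta^{2}T^{2H}}{4H}$ follows from the first-order expansion $(T+\Delta)^{2H+1}-T^{2H+1}=(2H+1)T^{2H}\Delta+\cO(\Delta^{2})$ together with $\Delta^{2H+1}=o(\Delta)$. For $\sigmap^{2}$ I would use the pointwise convergence $\nu_{0}(\Kdot(t))\to\eta(T-t)^{H-\frac12}$ on $[0,T)$, so that formally $\sigmap^{2}\to\int_{0}^{T}\eta^{2}(T-t)^{2H-1}\,\dd t=\frac{\eta^{2}T^{2H}}{2H}$, finite precisely because $H>0$. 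This is the first delicate point when $H<\frac12$, where $(T-t)^{2H-1}$ is singular at $t=T$: the mean value theorem gives $\nu_{0}(\Kdot(t))=\eta\,\xi_{t}^{H-\frac12}$ for some $\xi_{t}\in(T-t,\,T-t+\Delta)$, whence $\nu_{0}(\Kdot(t))^{2}\le\eta^{2}(T-t)^{2H-1}$ uniformly in $\Delta$, providing an integrable dominating function and legitimizing dominated convergence. Since both limits are finite and the limit of $\sigmap^{2}$ is strictly positive, the continuous positive quantities $\mup,\sigmap$ are bounded and bounded away from zero on $(0,\overline\Delta]$, which gives \eqref{eq:assu:limit:mean:proxy}--\eqref{eq:assu:limit:variance:proxy}.

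For the rate estimates \eqref{eq:assu:drift}--\eqref{eq:assu:diffusion} I would rescale the outer variable as $u=T+\Delta y$, $y\in[0,1]$, turning $\Gamma_{\Delta,T,p}^{p}$ into $\int_{0}^{1}|F(T+\Delta y)|^{p}\,\dd y$ with $F(u):=\int_{0}^{T}[K^{u}(t)^{2}-\nu_{0}(\Kdot(t)^{2})]\,\dd t$. Expanding $F(T+\Delta y)$ reveals a competition between a term of order $\Delta$ proportional to $(y-\frac12)$ and a term of order $\Delta^{2H}$ proportional to $\bigl(\frac{1}{2H+1}-y^{2H}\bigr)$: for $H\in(\frac12,1)$ the $\Delta$--term dominates and integrating $|y-\frac12|^{p}$ gives the factor $\frac{1}{2(1+p)^{1/p}}$, whereas for $H\in(0,\frac12)$ the $\Delta^{2H}$--term dominates and integrating $|\frac{1}{2H+1}-y^{2H}|^{p}$ gives the announced constant. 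For $\Lambda_{\Delta,T,p}$ I would additionally rescale the inner time variable as $t=T-\Delta\sigma$: then $u-t=\Delta(y+\sigma)$ and $T-t=\Delta\sigma$, so the bracket $K^{u}(t)-\nu_{0}(\Kdot(t))$ factors as $\eta\Delta^{H-\frac12}B(\sigma,y)$ with the universal profile
\[
B(\sigma,y):=(y+\sigma)^{H-\frac12}-\frac{(1+\sigma)^{H+\frac12}-\sigma^{H+\frac12}}{H+\frac12},
\]
whence $\int_{0}^{T}[\cdots]^{2}\,\dd t=\eta^{2}\Delta^{2H}\int_{0}^{T/\Delta}B(\sigma,y)^{2}\,\dd\sigma$. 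Sending $\Delta\to0$ replaces $T/\Delta$ by $\infty$ and, after identifying $(1+s)^{H+\frac12}-s^{H+\frac12}-(H+\frac12)(u+s)^{H-\frac12}=-(H+\frac12)B(s,u)$, matches exactly the constant $f_{\rm diff}(H,p)$.

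The main obstacle is justifying the asymptotic equivalences $\underset{\Delta\to0}{\sim}$ rather than the leading-order computations themselves, i.e.\ passing $\Delta\to0$ inside the $y$--integrals defining $\Gamma^{p}$ and $\Lambda^{p}$. For $\Gamma$ this only requires that the expansion of $F(T+\Delta y)$ be uniform in $y\in[0,1]$ and that the competing lower-order contributions be negligible relative to the leading one in each regime, which is routine since $F$ is regular on $(T,T+\Delta]$. For $\Lambda$ it is genuinely more delicate: the rescaled upper limit $T/\Delta\to\infty$, and the profile obeys $B(\sigma,y)^{2}\sim(H-\frac12)^{2}(y-\frac12)^{2}\sigma^{2H-3}$ as $\sigma\to\infty$, which is integrable only because $H<1$, while near $\sigma=0$ the term $(y+\sigma)^{H-\frac12}$ forces an integrable singularity for small $y$ when $H<\frac12$. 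I would therefore set up a dominated-convergence argument controlling $\int_{0}^{T/\Delta}B(\sigma,y)^{2}\,\dd\sigma$ uniformly in $y$, which simultaneously yields $\int_{0}^{T/\Delta}\to\int_{0}^{\infty}$ and the finiteness $f_{\rm diff}(H,p)<\infty$.
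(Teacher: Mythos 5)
Your proposal is correct and follows essentially the same route as the paper's proof: the same explicit computations of $\nu_0(K^\cdot(t))$ and $\nu_0(K^\cdot(t)^2)$, the same rescalings $u=T+\Delta y$ (and $t=T-\Delta\sigma$ for $\Lambda_{\Delta,T,p}$), the same identification of the competing $\Delta$ versus $\Delta^{2H}$ contributions for $\Gamma_{\Delta,T,p}$, and the same $\sigma^{2H-3}$ tail bound plus dominated convergence to pass $T/\Delta\to\infty$ and obtain $f_{\rm diff}(H,p)<\infty$. The only cosmetic difference is that you dominate $\nu_0(K^\cdot(t))^2$ via the mean value theorem for the limit of $\sigma_{\rm P}^2$, where the paper uses a double-integral representation over $\lambda_1,\lambda_2\in[0,1]$; both yield the same integrable majorant $\eta^2(T-t)^{2H-1}$.
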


As a direct corollary of Proposition \ref{prop:charact:rough:flat}, we have the following result.

\begin{corollary}\label{cor:check:rough:bergomi} Assume $H \in (0,1) \setminus \{\frac 12\}$. In the rough Bergomi
model, Assumption \ref{assu:kernel} holds with $d_{1}=1\wedge2H$
and $d_{2}=2H$. Consequently, Theorem \ref{thm:expansion:plain:model}
holds and the error term ${\mathscr{E}_{\varphi}}$ in the expansion \eqref{eq:expansion:plain:model} is $\cO(\Delta^{3H}).$
\end{corollary}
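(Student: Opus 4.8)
The plan is to treat this as a direct transcription of the asymptotics established in Proposition~\ref{prop:charact:rough:flat} into the four conditions constituting Assumption~\ref{assu:kernel}, and then to feed the resulting rates $d_1,d_2$ into Theorem~\ref{thm:expansion:plain:model}. First I would read off the exponents in the two $L^p$-norm conditions \eqref{eq:assu:drift} and \eqref{eq:assu:diffusion}. From the $\Delta\to0$ equivalence for $\Gamma_{\Delta,T,p}$ in Proposition~\ref{prop:charact:rough:flat}, the leading power is $\Delta^{2H}$ on the regime $H\in(0,\tfrac12)$ (where $2H<1$) and $\Delta$ on the regime $H\in(\tfrac12,1)$; in both cases this is exactly $\Delta^{1\wedge 2H}$, giving $d_1=1\wedge 2H$. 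Likewise the equivalence $\Lambda_{\Delta,T,p}\sim\eta^2 f_{\rm diff}(H,p)\Delta^{2H}$ yields $d_2=2H$, uniformly over $H\in(0,1)\setminus\{\tfrac12\}$.

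The one point requiring a word beyond reading off exponents is that an asymptotic equivalence $\Gamma_{\Delta,T,p}\sim c\,\Delta^{d_1}$ as $\Delta\to0$ only delivers the bound $\Gamma_{\Delta,T,p}\le C\Delta^{d_1}$ of \eqref{eq:assu:drift} near $\Delta=0$. To upgrade this to all $\Delta\le\overline\Delta$, I would note that $\Delta\mapsto\Gamma_{\Delta,T,p}/\Delta^{d_1}$ extends to a continuous, hence bounded, function on the compact interval $[0,\overline\Delta]$ (its value at $0$ being the constant $c$ in the equivalence), and the same for $\Lambda_{\Delta,T,p}/\Delta^{d_2}$; this fixes the constant $C$. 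The non-degeneracy conditions are then immediate: since Proposition~\ref{prop:charact:rough:flat} gives $\mup\to X_0-\tfrac{\eta^2 T^{2H}}{4H}\in\R$ and $\sigmap^2\to\tfrac{\eta^2 T^{2H}}{2H}\in(0,\infty)$ as $\Delta\to0$, continuity in $\Delta$ yields that $\mup$ is bounded and that $\sigmap$ is bounded and bounded away from zero over $\Delta\le\overline\Delta$, which is precisely \eqref{eq:assu:limit:mean:proxy}--\eqref{eq:assu:limit:variance:proxy}.

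With Assumption~\ref{assu:kernel} in force for $d_1=1\wedge 2H$ and $d_2=2H$, Theorem~\ref{thm:expansion:plain:model} applies verbatim and the error satisfies $|\mathscr{E}_{\varphi}|\leq_{c}\Delta^{3(d_1\wedge \frac{d_2}{2})}$. It then remains only to simplify the exponent. Since $\tfrac{d_2}{2}=H$, I compute $d_1\wedge\tfrac{d_2}{2}=(1\wedge 2H)\wedge H$: on $H\in(0,\tfrac12)$ this equals $\min(2H,H)=H$, and on $H\in(\tfrac12,1)$ it equals $\min(1,H)=H$ (using $H<1$). In both regimes the exponent collapses to $H$, so the error term is $\cO(\Delta^{3H})$, as claimed.

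I do not anticipate a genuine obstacle here: all the analytic weight -- the fractional-kernel $L^p$ computations and the limits of the proxy parameters -- is carried by Proposition~\ref{prop:charact:rough:flat}. The only two things to handle with a little care are the routine passage from the $\sim$-asymptotics to uniform bounds on $[0,\overline\Delta]$ and the elementary case split in $H$ when minimizing $d_1\wedge\tfrac{d_2}{2}$, both of which are straightforward.
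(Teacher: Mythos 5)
Your proposal is correct and follows essentially the same route as the paper, which states this result as a direct corollary of Proposition~\ref{prop:charact:rough:flat}: one reads off $d_1=1\wedge 2H$ and $d_2=2H$ from the asymptotics of $\Gamma_{\Delta,T,p}$ and $\Lambda_{\Delta,T,p}$, takes the non-degeneracy conditions \eqref{eq:assu:limit:mean:proxy}--\eqref{eq:assu:limit:variance:proxy} from the stated limits of $\mup$ and $\sigmap^2$, and simplifies $d_1\wedge\tfrac{d_2}{2}=H$. Your added care in upgrading the $\sim$-asymptotics to uniform bounds on $[0,\overline\Delta]$ via continuity is a sensible point the paper leaves implicit.
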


\begin{rem}
The proxy's variance can also be rewritten as 
\begin{multline}
\sigmap^{2}=\frac{\eta^{2}}{\Delta^{2}(H+\frac{1}{2})^{2}}\biggl[\frac{(T+\Delta)^{2H+2}+T^{2H+2}-\Delta^{2H+2}}{2H+2}\\
-\frac{4}{2H+3}\Delta^{H+\frac{1}{2}}T^{H+\frac{3}{2}}{}_{2}F_{1}\Bigl(-H-\frac{1}{2},H+\frac{3}{2};H+\frac{5}{2};-\frac{T}{\Delta}\Bigr)\biggr],\label{eq:var:proxy:hyper}
\end{multline}
using the identity \cite[3.197.8]{gradshteyn2014table}, where
$_{2}F_{1}$ is the Gaussian hypergeometric function (see \cite[Chapter 15]{olver2010nist}).
\end{rem}

\begin{rem} \label{rem:gamma_i_rough_Bergomi}
In the rough Bergomi model, the coefficients $(\gamma_{i})_{i\in\{1,2,3\}}$ defined in \eqref{eq:def:gamma:coeffs} do not seem to admit a closed-form representation as in the standard Bergomi model.
Nevertheless, note that their dependence with respect to the volatility-of-variance parameter $\eta$ is particularly simple: we have  
$\gamma_1 = \eta^{4} \int_T^{T+\Delta} g_H(u,t) \frac{\dd u}{\Delta}  + \eta^2 \int_T^{T+\Delta} \int_0^T f_{1,H}(u,t) \dd t \, \frac {\dd u}{\Delta}$,
$\gamma_2 = \eta^{4} \int_T^{T+\Delta} g_H(u,t) \int_0^T f_{2,H}(u,t) \dd t \, \frac {\dd u}{\Delta}$,
$\gamma_3 = \eta^{4} \int_T^{T+\Delta}(\int_0^T f_{2,H}(u,t) \dd t)^2 \frac {\dd u}{\Delta}$,
where $f_{1,H}$, $f_{2,H}$ and $g_H$ are explicit functions
(see \cite[Chapter 6]{bourgey2020stochastic} for detailed expressions).
For a given VIX maturity $T$, the integrals of the functions $f_{1,H}$, $f_{2,H}$ and $g_H$ can be evaluated and tabulated once for several values of $H$ over a grid in the interval $(0,1)$, and then simply looked up within a pricing or calibration procedure.
The dependence with respect to the parameter $\eta$ is explicit.
Recall, as pointed out in Remark \ref{rem:coeffs_gamma},  that the $(\gamma_{i})_{i\in\{1,2,3\}}$ do
not depend on the option payoff $\varphi$, and can therefore be evaluated beforehand for all call and put strikes.
\end{rem}

%

\subsection{Numerical tests of option price formulas}
\label{sec:vix:expansion:numerical:tests}

In this section, we test our approximation formulas for VIX futures, call, and put options. First of all, let us explain how we compute our reference  prices.

\begin{remark}(Computation of the reference prices)\label{rem:benchmark:prices}
For the standard Bergomi model, recall we have the Markovian representation of forward variances
	\[
	\VIX_{T}^{2} = \frac{1}{\Delta}\int_{T}^{T+\Delta} \xi_{0}^{u} \, f^u \left(T,X_{T}\right)\dd u, 
	\]
where 
\[
	f^{u}(T,x)=\exp\bigl(\omega e^{-k(u-T)}x-\frac{1}{2}\omega^{2}  e^{-2k(u-T)} \Var(X_T) \bigr)
\]
and $(X_{t})_{t\geq 0}$ is the Ornstein--Uhlenbeck process $X_{t} = -k \int_0^t X_{s}\dd s + W_{s}$, with $\Var(X_T) =  \frac{1 - e^{-2kT}}{2k} \1_{k>0} + T \1_{k=0}$.
Consequently, in order to price an option on $\VIX_T$ in the one-factor Bergomi model, we can (and do) rely on a two-dimensional deterministic quadrature: we couple a Gauss--Legendre scheme for the integration with respect to the time parameter $u$, and a Gauss--Hermite scheme for the space dimension (see \cite{bergomi2005smile, guyon2020vix} for more details on the implementation).

For the rough Bergomi model, the Markovian representation is not available anymore and  we rely on a discretization of the variance curve process $u \mapsto \xi_T^u$. We approximate the integral $\VIX_{T}^{2}=\frac{1}{\Delta}\int_{T}^{T+\Delta}\xi_{T}^{u}\dd u$ using a rectangle scheme $\VIX_{T}^{2,n} = \frac 1 n \sum_{i=1}^{n} \xi_T^{u_i}$ over a regular grid $(u_i)_{i=1,\dots,n}$.
The random vector $(\xi_T^{u_i})_i$ can be simulated exactly,
so that we eventually approximate VIX futures and options prices via their  empirical means over i.i.d.\ samples of the discretized variable $\VIX_{T}^{2,n}$. 
The related weak error is known to behave as $O(\frac 1 n)$ (independently of the value of $H$), see \cite{horvath2018volatility} and \cite{bourgey2021multilevel}.
We can considerably reduce the variance of the estimators using an efficient control variate that is nothing but the discretized version of our lognormal proxy, see	again \cite{bourgey2021multilevel} for details.
\end{remark}

We consider three different VIX maturities $T \in \{1,3,6 \mbox{ months}\}$, and set the initial instantaneous forward variance to $\xi_{0}=0.235^{2}$ (so that $X_{0}=\ln\left(\xi_{0}\right)\approx-2.896$). 
Note we could have considered a piece-wise constant initial variance curve, constant 
over $(T,T+\Delta)$ for every $T$, instead of a flat one -- this will actually be the case in our calibration tests in section \ref{sec:mixed_model}.
To keep track of the error sign, we consider the signed relative error, computed as
$\frac{\text{Approximation Price} - \text{Reference Price}}{\text{Reference Price}}\times100.$

\subsubsection{Numerical tests for the rough Bergomi model}

We consider options struck at $\kappa=0.2$. For the reference price, we used $M=10^{6}$ Monte Carlo samples and $n=300$ discretization points to construct the estimators outlined in Remark \ref{rem:benchmark:prices}. 
We compare the accuracy of the expansions for various values of $\eta$ and $H$.
In each case, we plot the reference price along with its $95\%$ Monte Carlo confidence interval.

\paragraph{Quality of the approximation for different values of the vol-of-variance $\eta$.}

In Figure \ref{fig:vix:eta}, we set $H=0.1$, $\Delta=\frac{1}{12}$,
and choose $10$ evenly-spaced values of $\eta$ ranging from $0.1$ to $1.5.$
We observe that the smaller the $\eta$, the more accurate our expansion formulas. Yet, even for large values of $\eta$ (see Table \ref{tab:calib:params:mixed_rough} for typical values of $\eta$ obtained from the calibration of market $\VIX$ smiles),	our approximations are extremely accurate (almost indistinguishable from the reference prices in the left plots in Figure \ref{fig:vix:eta}): absolute relative errors are less than $0.5\%$ for
the futures contract, $0.3\%$ for the call option, and $1.4\%$ for the put option.

\begin{figure}[H]
\includegraphics[scale=0.5]{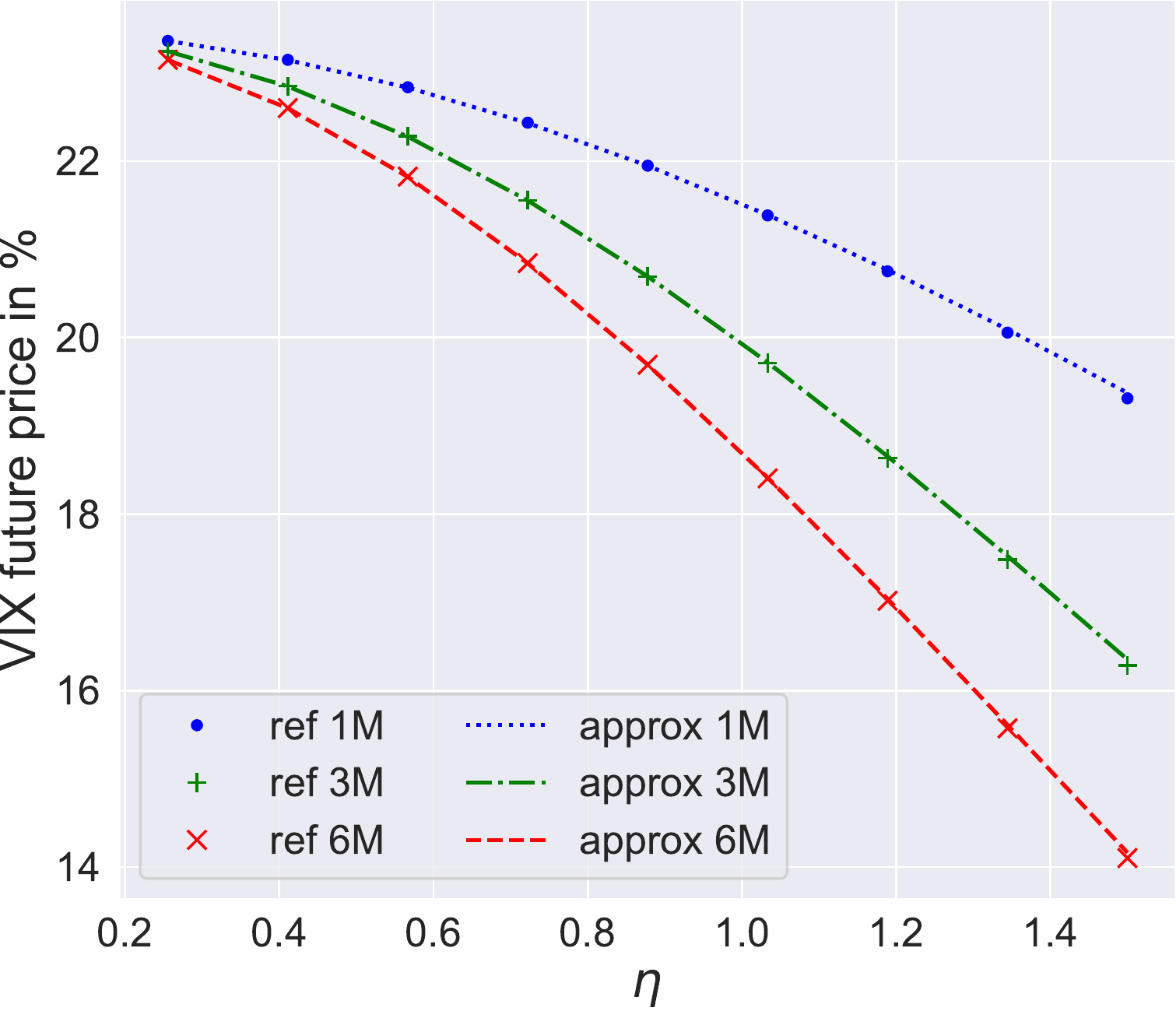}
\includegraphics[scale=0.5]{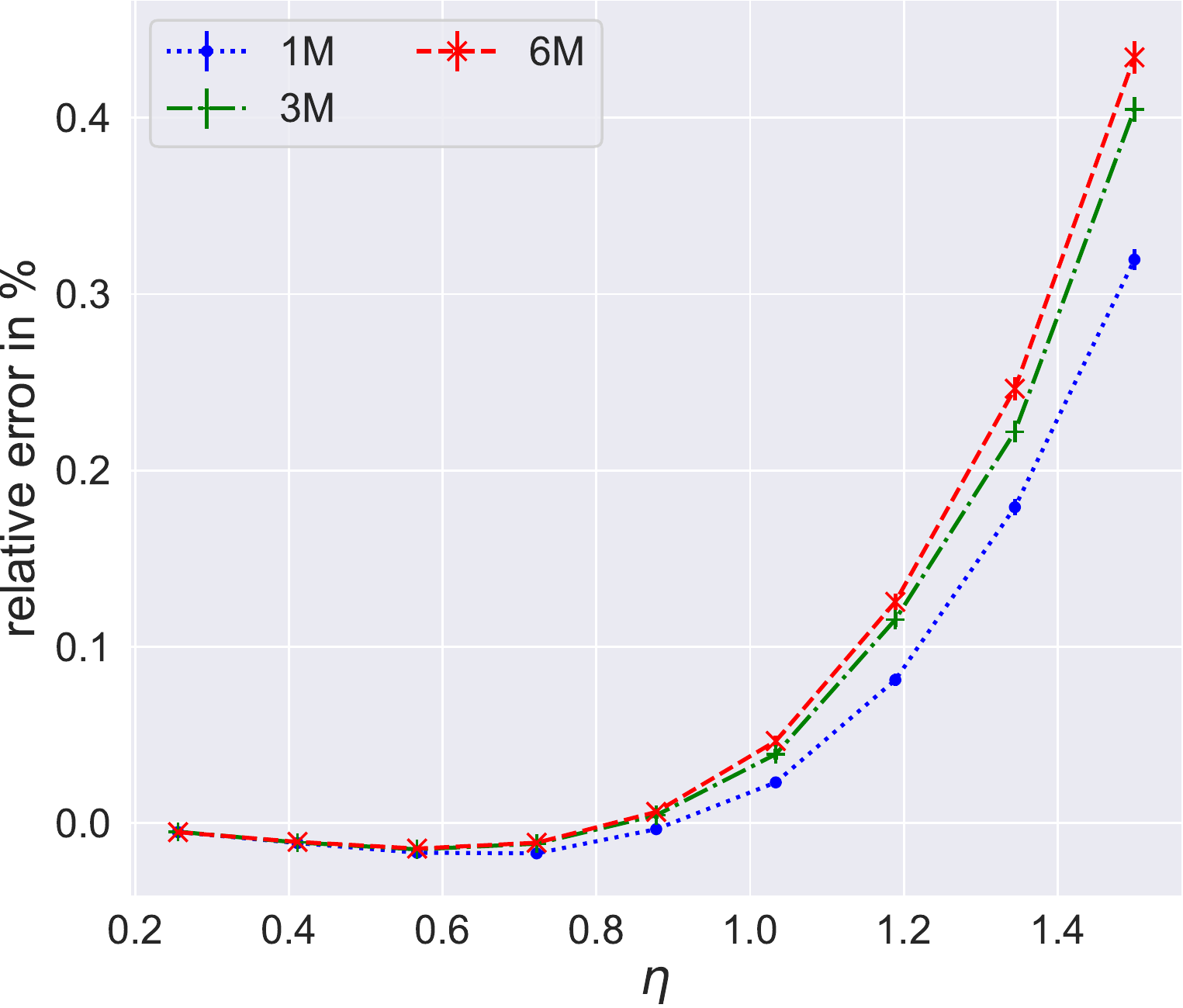}
\\
 \includegraphics[scale=0.5]{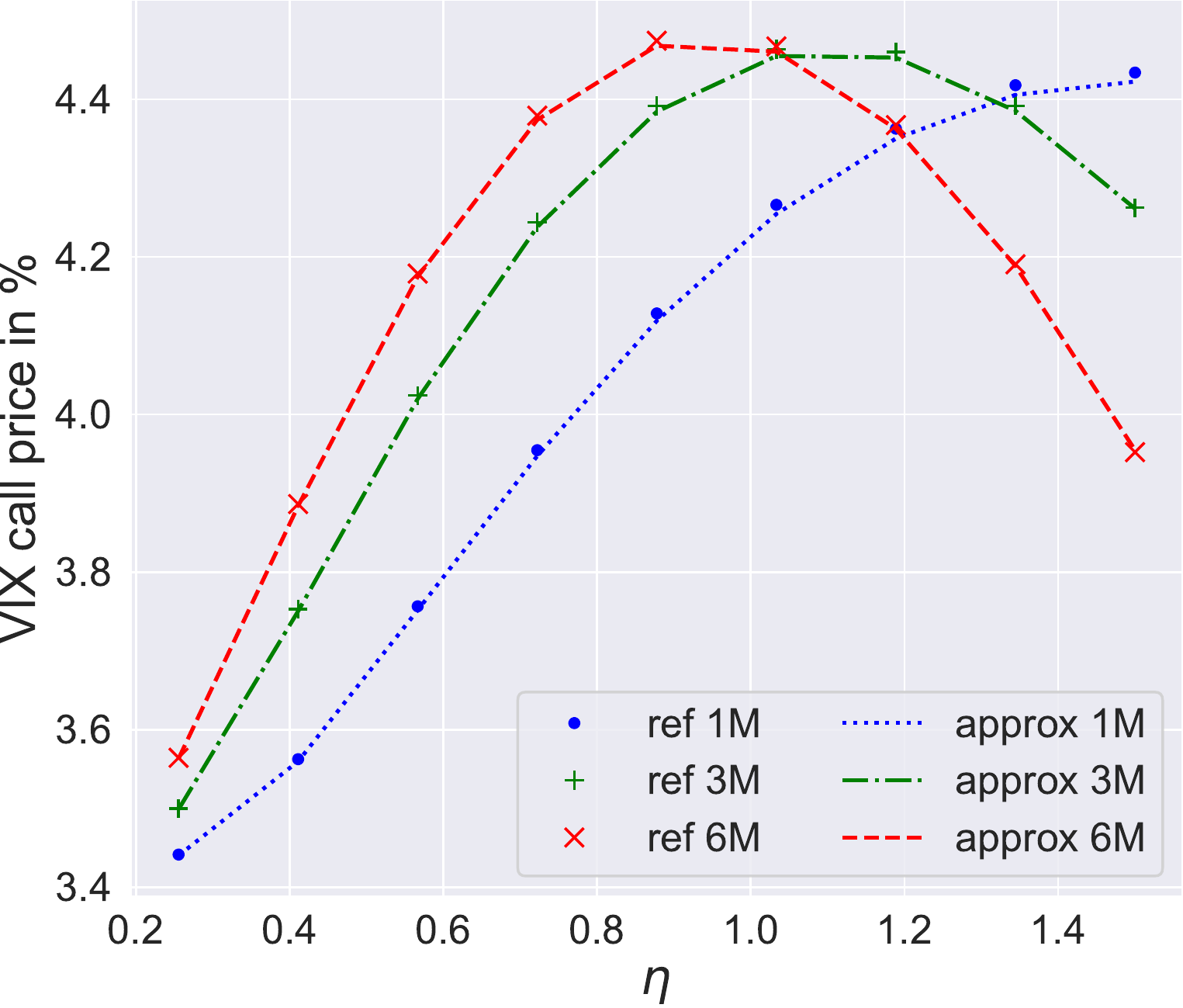}
 \includegraphics[scale=0.5]{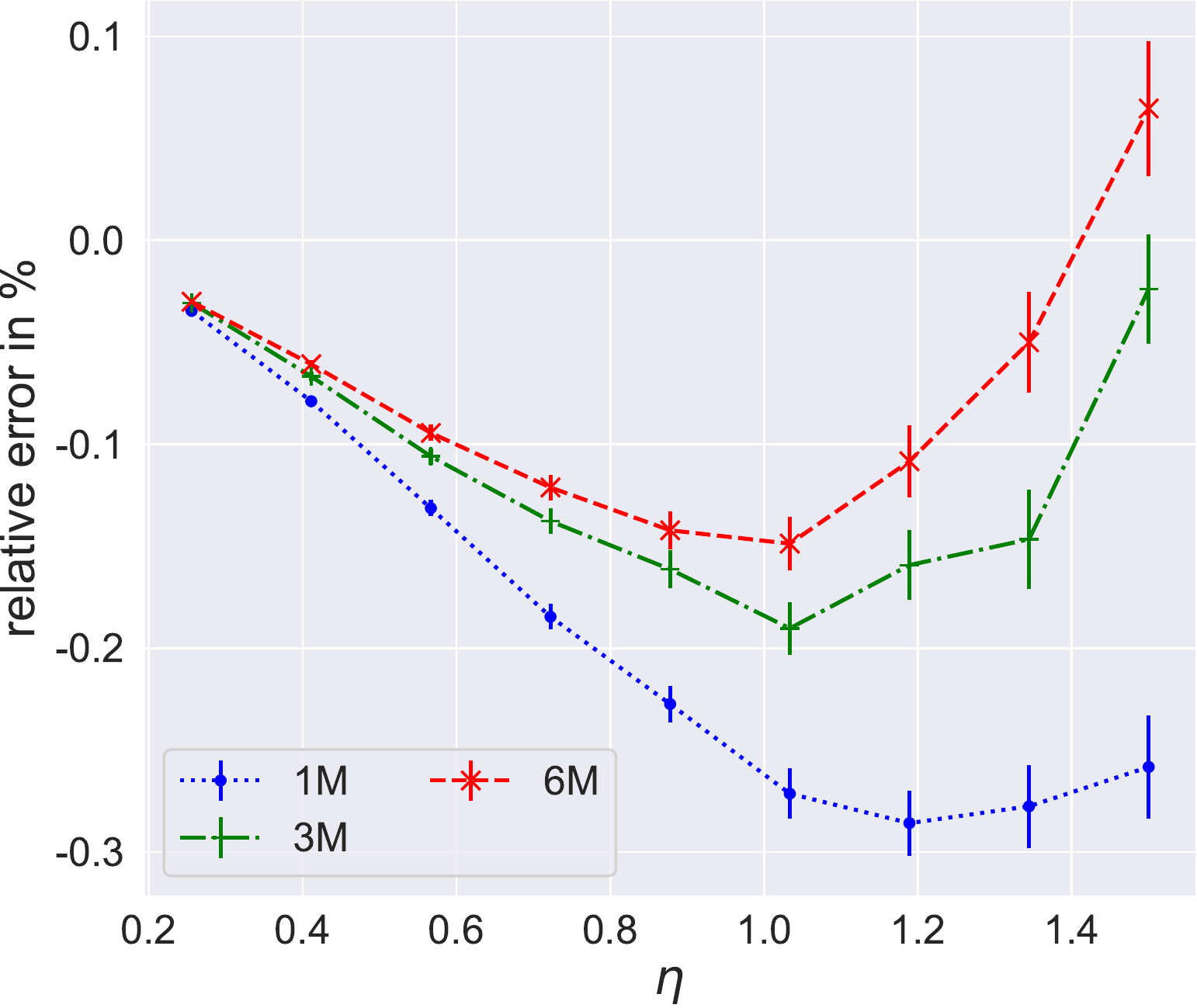}
 \\
 \includegraphics[scale=0.5]{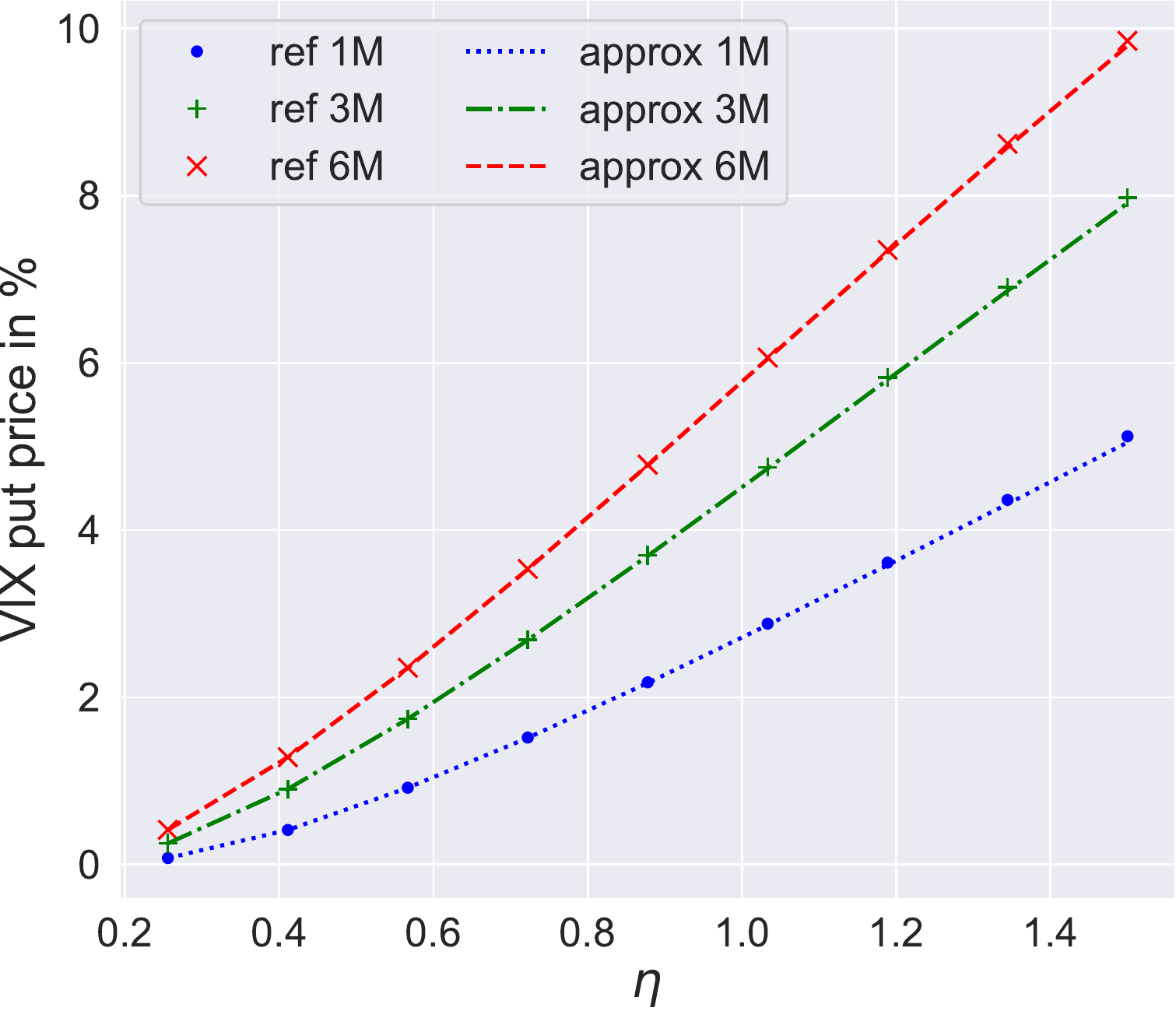}
 \includegraphics[scale=0.5]{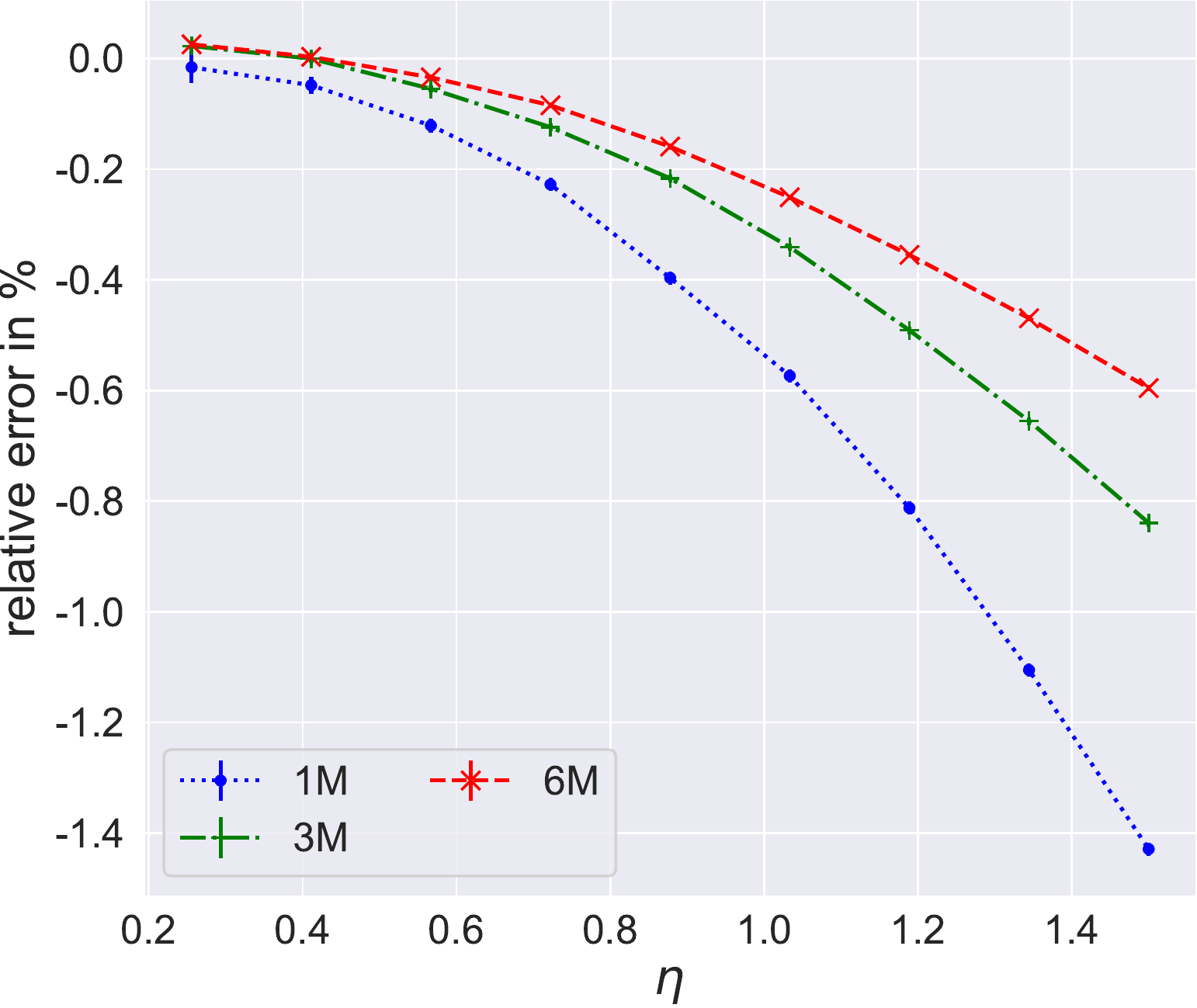}
 \caption{
$\VIX$ futures, call, and put option prices 
for different values of the volatility-of-variance parameter $\eta$ in the rough Bergomi model \eqref{eq:def:rough:bergomi:model}.
\emph{Left}: 
benchmark prices obtained with the Monte Carlo procedure described in Remark \ref{rem:benchmark:prices} along 
with our explicit expansions from Theorem \ref{thm:expansion:plain:model}. 
\emph{Right}: 
relative error in $\%$ 
between the benchmark prices and the expansions.
}
\label{fig:vix:eta} 
\end{figure}

\paragraph{Quality of the approximation for different values of $H$.}
In Figure \ref{fig:vix:H}, we set $\eta=1$, $\Delta=\frac{1}{12}$, 
and choose $10$ evenly-spaced values of $H$ ranging from $0.05$ to $0.4$. Since the closer $H$ to zero, the more singular the kernel, we expect the errors of our expansions to be a decreasing function of $H$,
and this is indeed what we observe in Figure \ref{fig:vix:H}.

\begin{figure}[htp]
\includegraphics[scale=0.5]{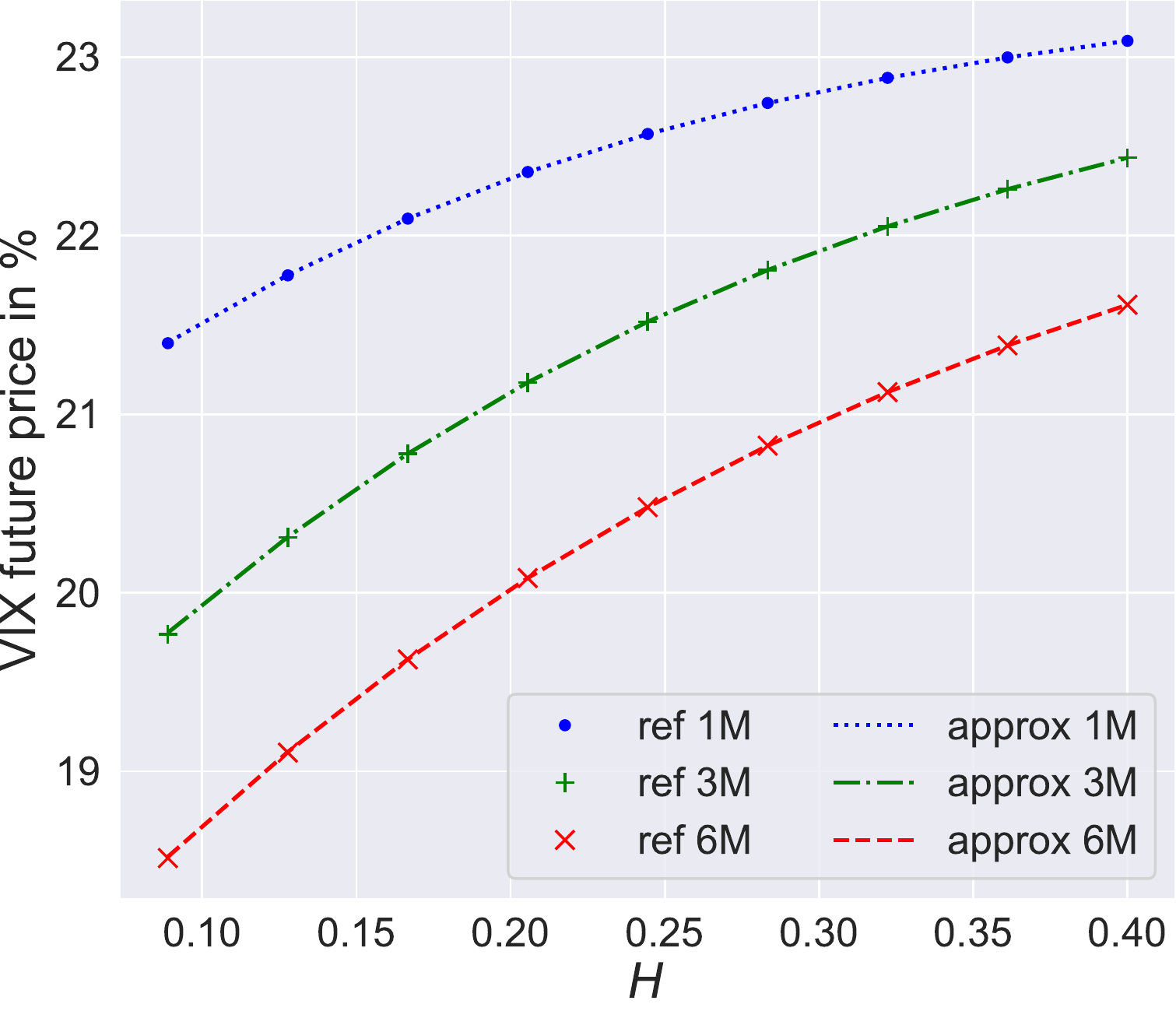}
\includegraphics[scale=0.5]{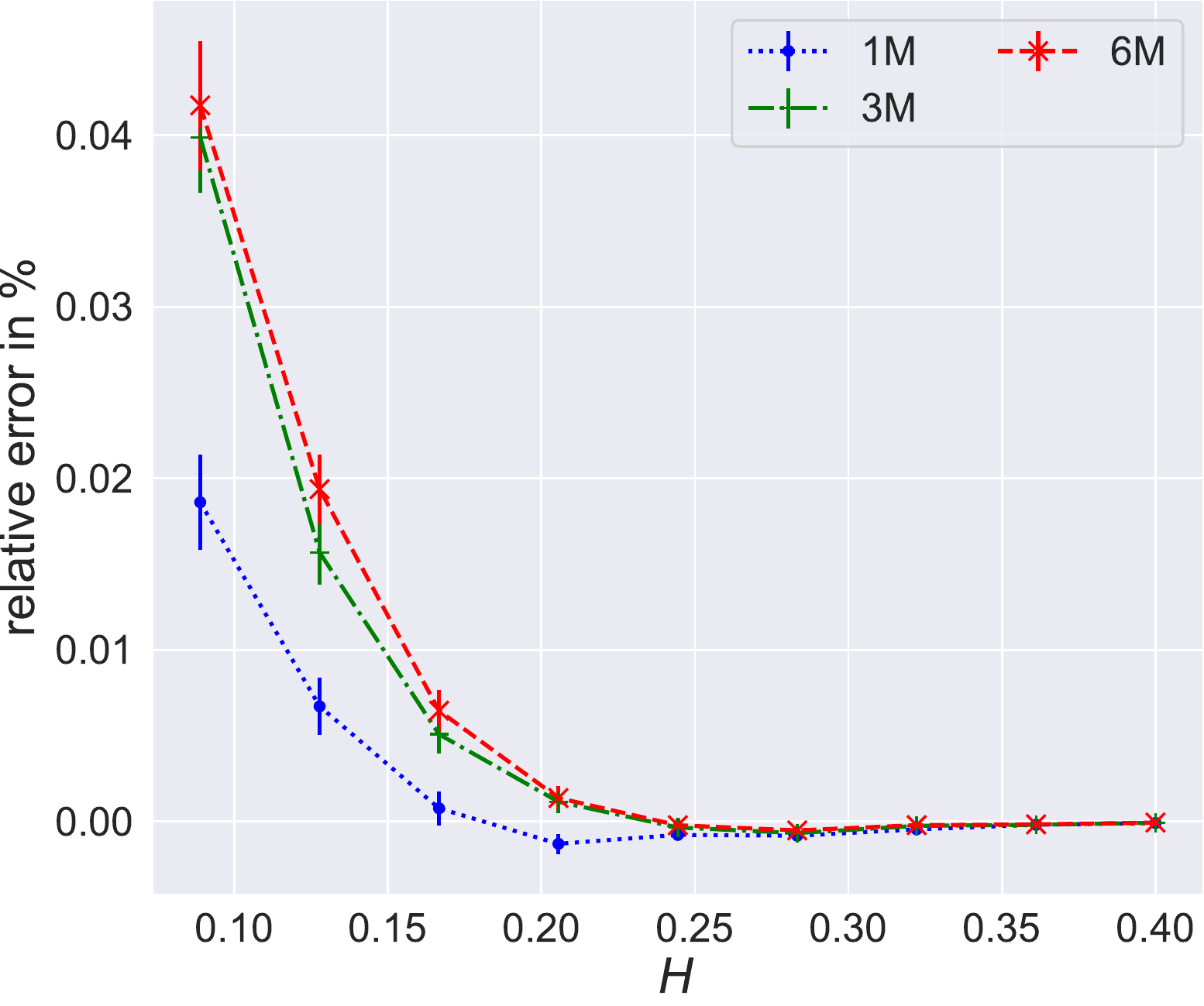}
\\
\includegraphics[scale=0.5]{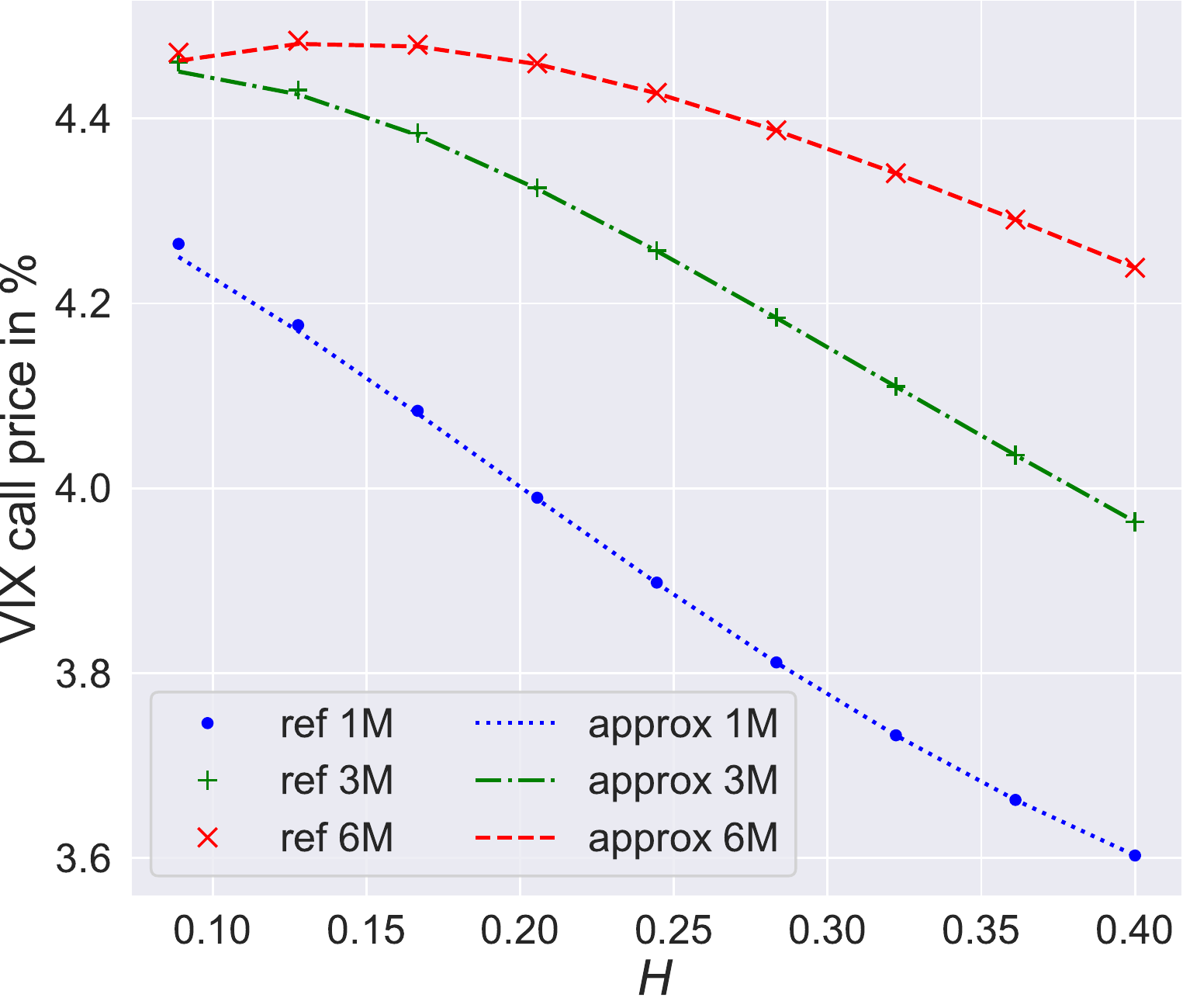}
\includegraphics[scale=0.5]{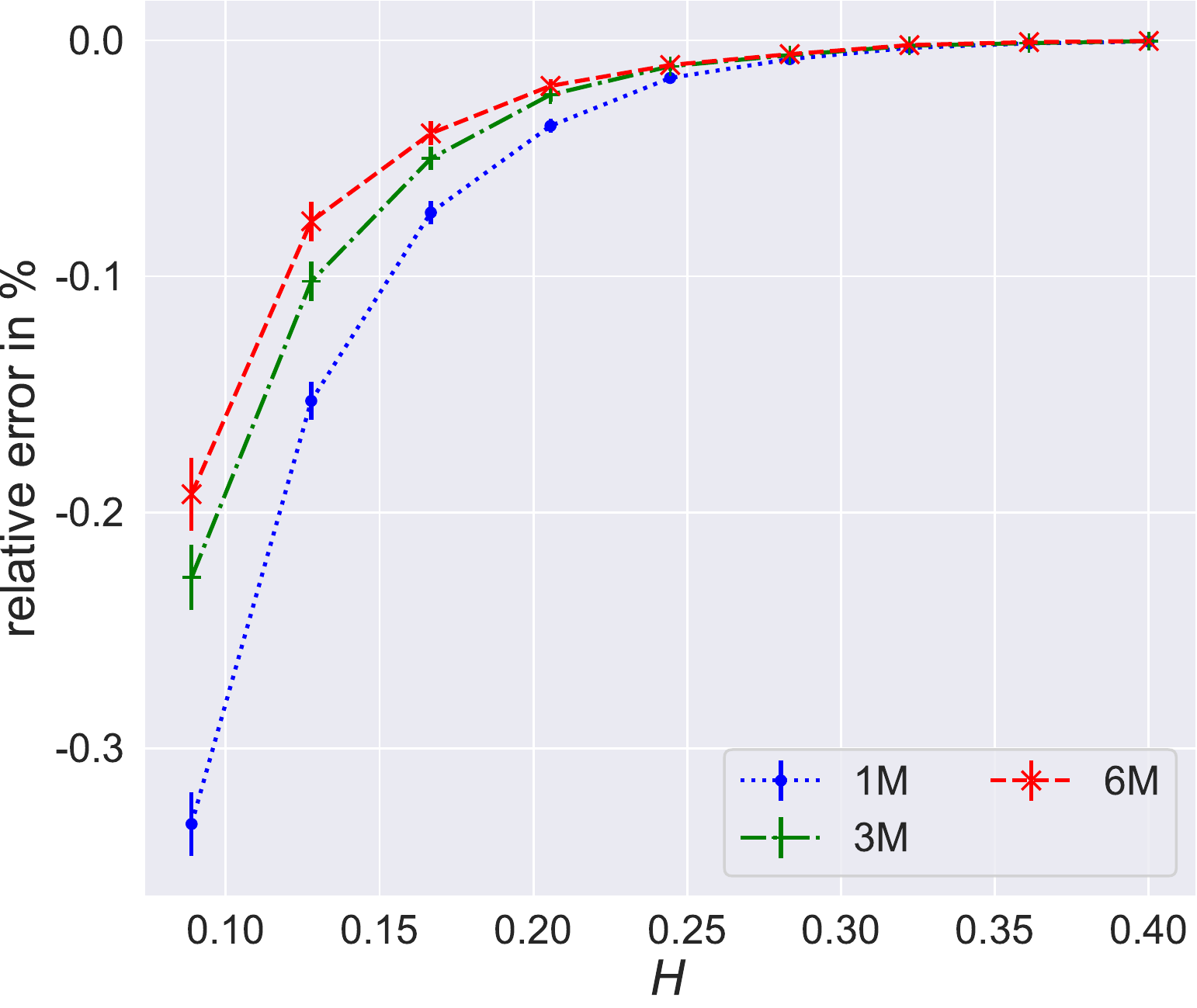}
\\
\includegraphics[scale=0.5]{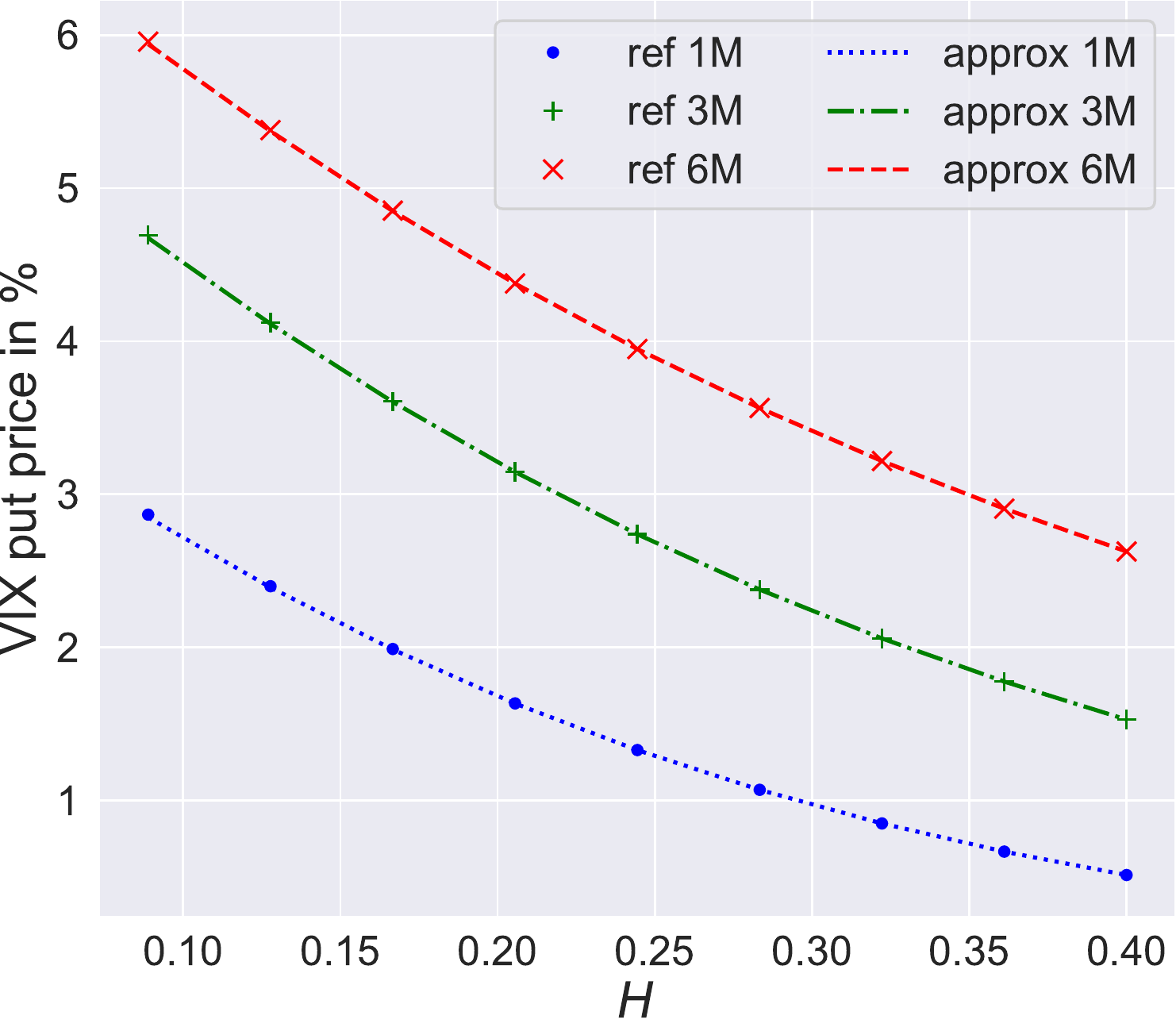}
\includegraphics[scale=0.5]{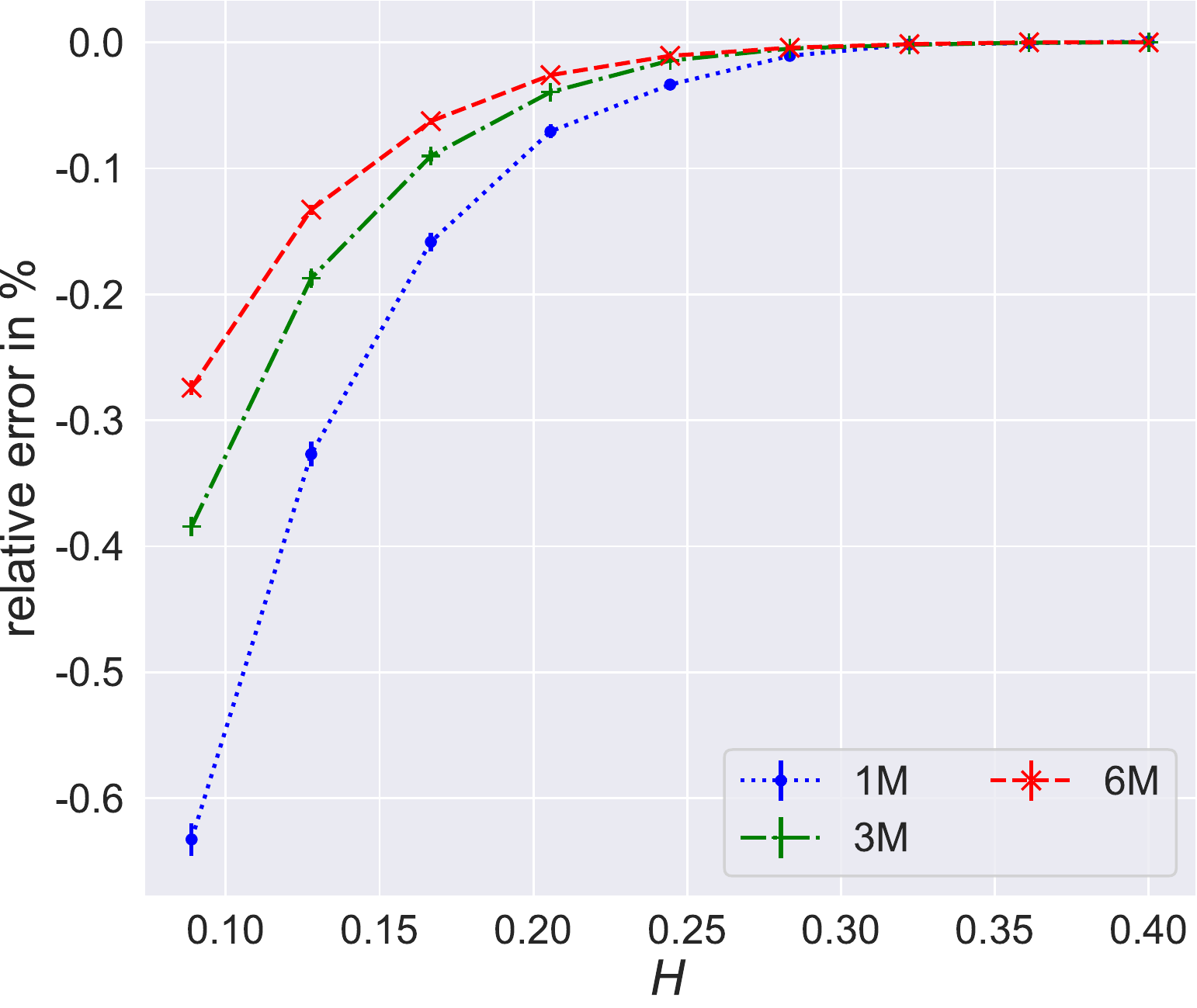}
 \caption{
$\VIX$ futures, call, and put option prices  for different values of the fractional parameter $H$
in the rough Bergomi model \eqref{eq:def:rough:bergomi:model}.
\emph{Left}: 
benchmark prices obtained with
the Monte Carlo procedure described in Remark \ref{rem:benchmark:prices} along 
with our explicit expansions from Theorem \ref{thm:expansion:plain:model}. 
\emph{Right}: 
relative error in $\%$ 
between the benchmark prices and the expansions.
}
\label{fig:vix:H} 
\end{figure}

\subsubsection{Numerical tests for the standard Bergomi model}

We now focus on the standard Bergomi model and
consider $\VIX$ futures and at-the-money $\VIX$ call and put options. 
The reference price is computed as described in Remark \ref{rem:benchmark:prices}, using $80$ nodes for the deterministic quadratures in the time and space dimensions.

\paragraph{Quality of the approximation for different values of the mean-reversion $k$.}
In Figure \ref{fig:vix:k}, we set $\omega = 2$, $\Delta = \frac{1}{12}$, and choose $10$ 
evenly-spaced values of $k$ ranging from $0.5$ to $15$. Note that the values for the call and put options are the same (for we considered at-the-money options); 
this is in line with the put-call parity 
\eqref{eq:put:call:parity:expansion} satisfied by our approximation formulas.
Once again, despite the wide range of values chosen for the mean-reversion parameter $k$, 
we note that the approximation formulas provided by Theorem \ref{thm:expansion:plain:model} are extremely accurate: relative errors are now
less than $10^{-3} \%$ for the $\VIX$ futures and less than $1 \%$ for at-the-money options.

\begin{figure}[H]
	\includegraphics[scale=0.49]{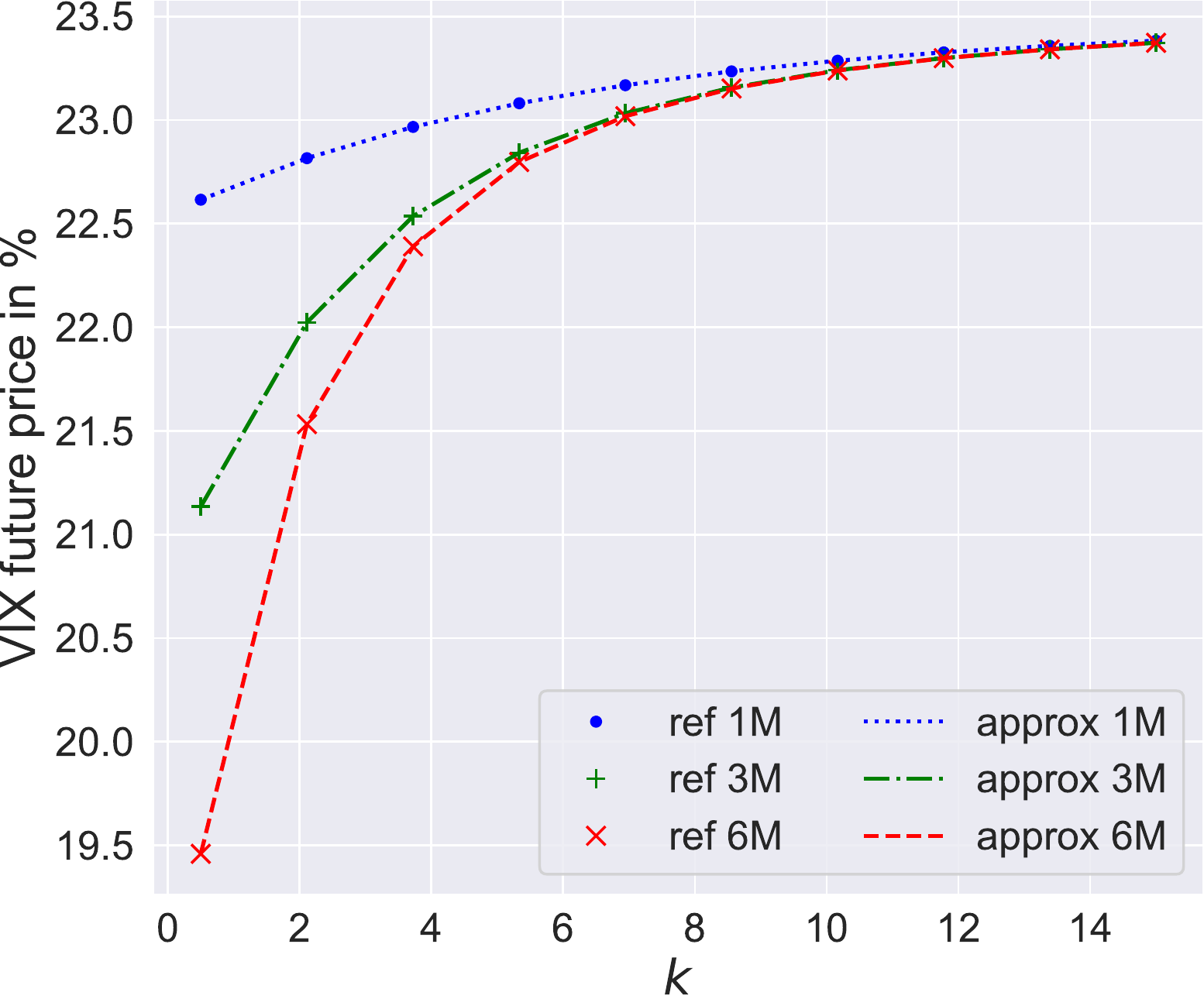}
	\includegraphics[scale=0.49]{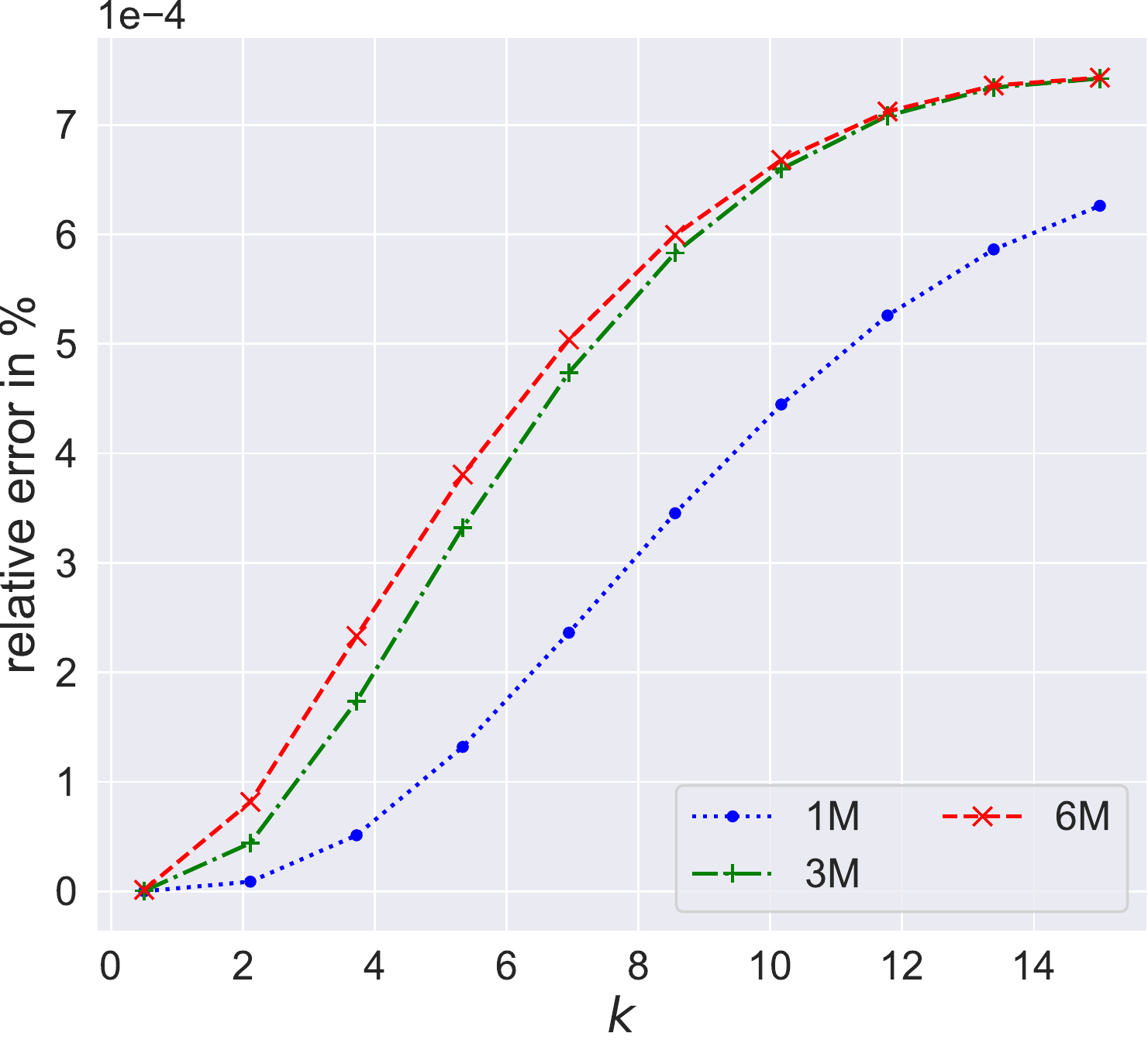}
	\\
	\includegraphics[scale=0.49]{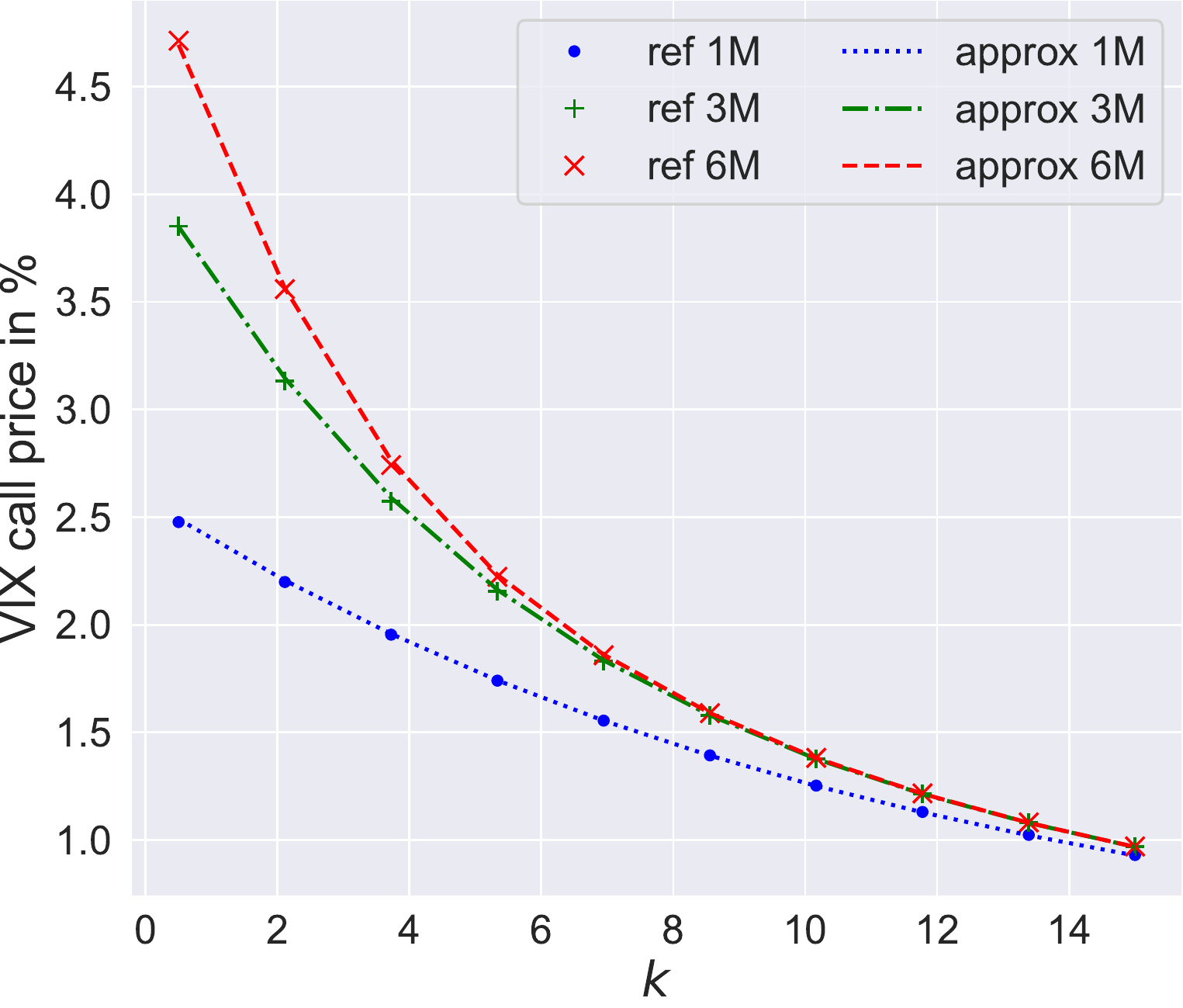}
	\includegraphics[scale=0.49]{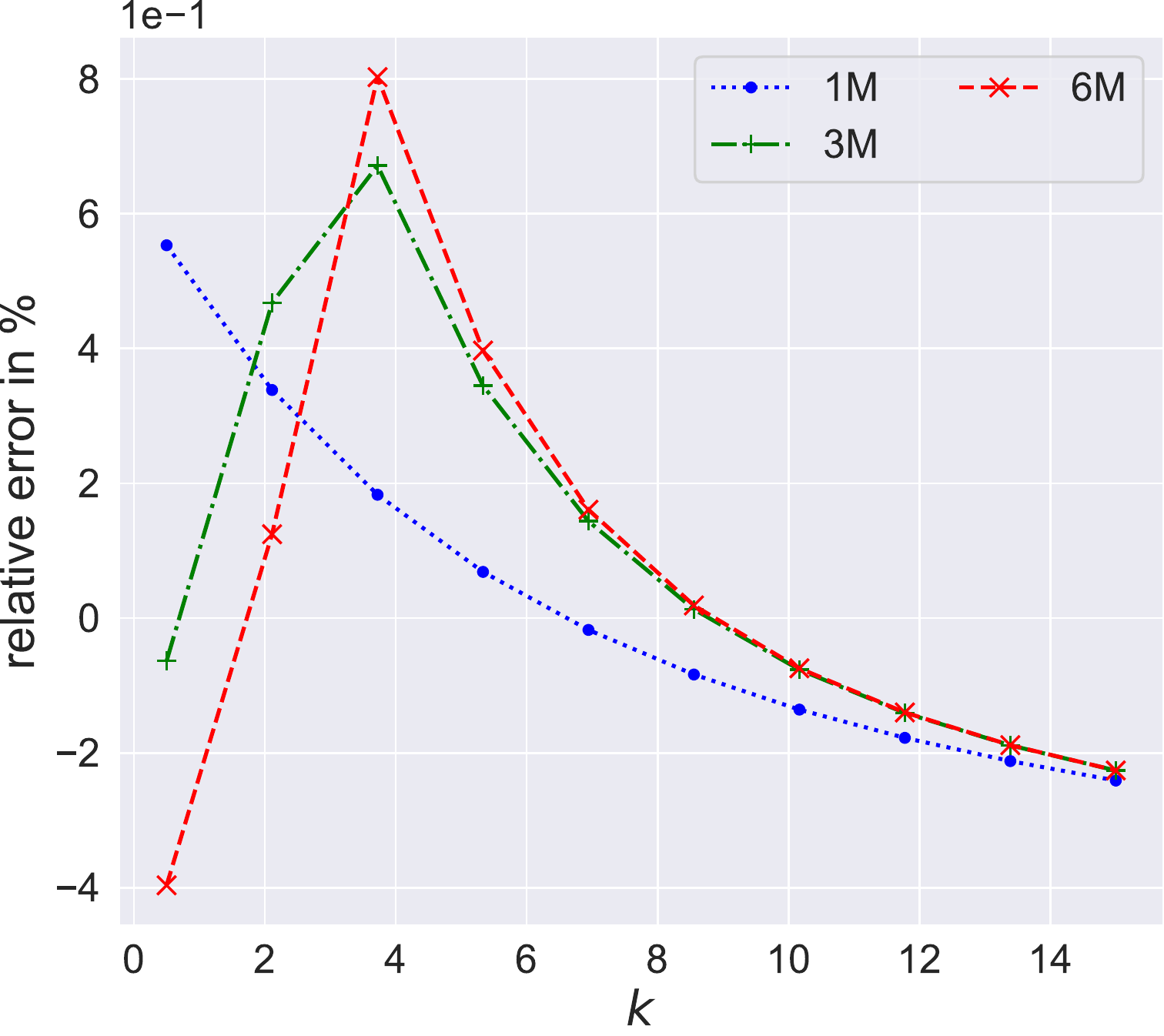}
	\\
	\includegraphics[scale=0.49]{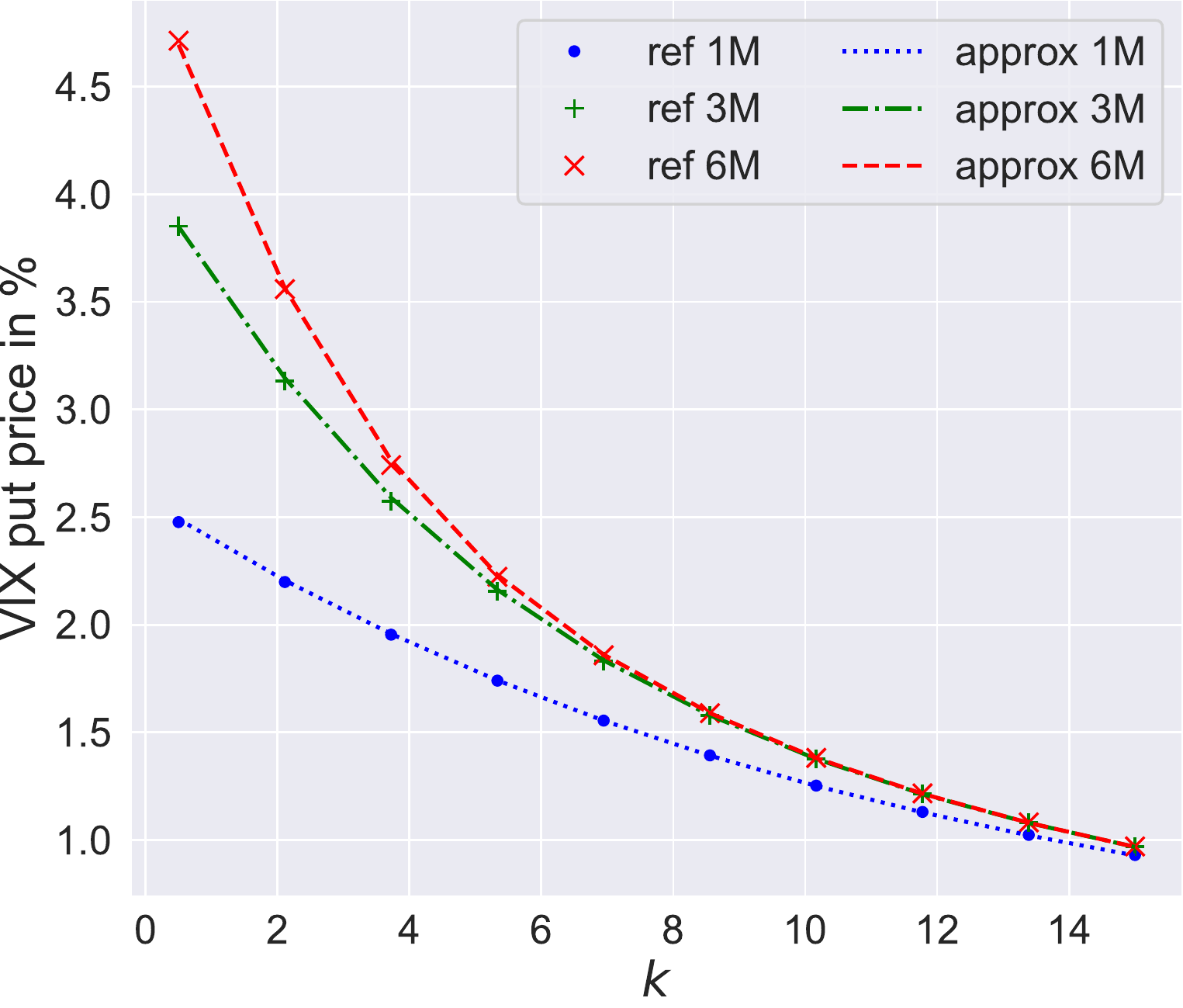}
	\includegraphics[scale=0.49]{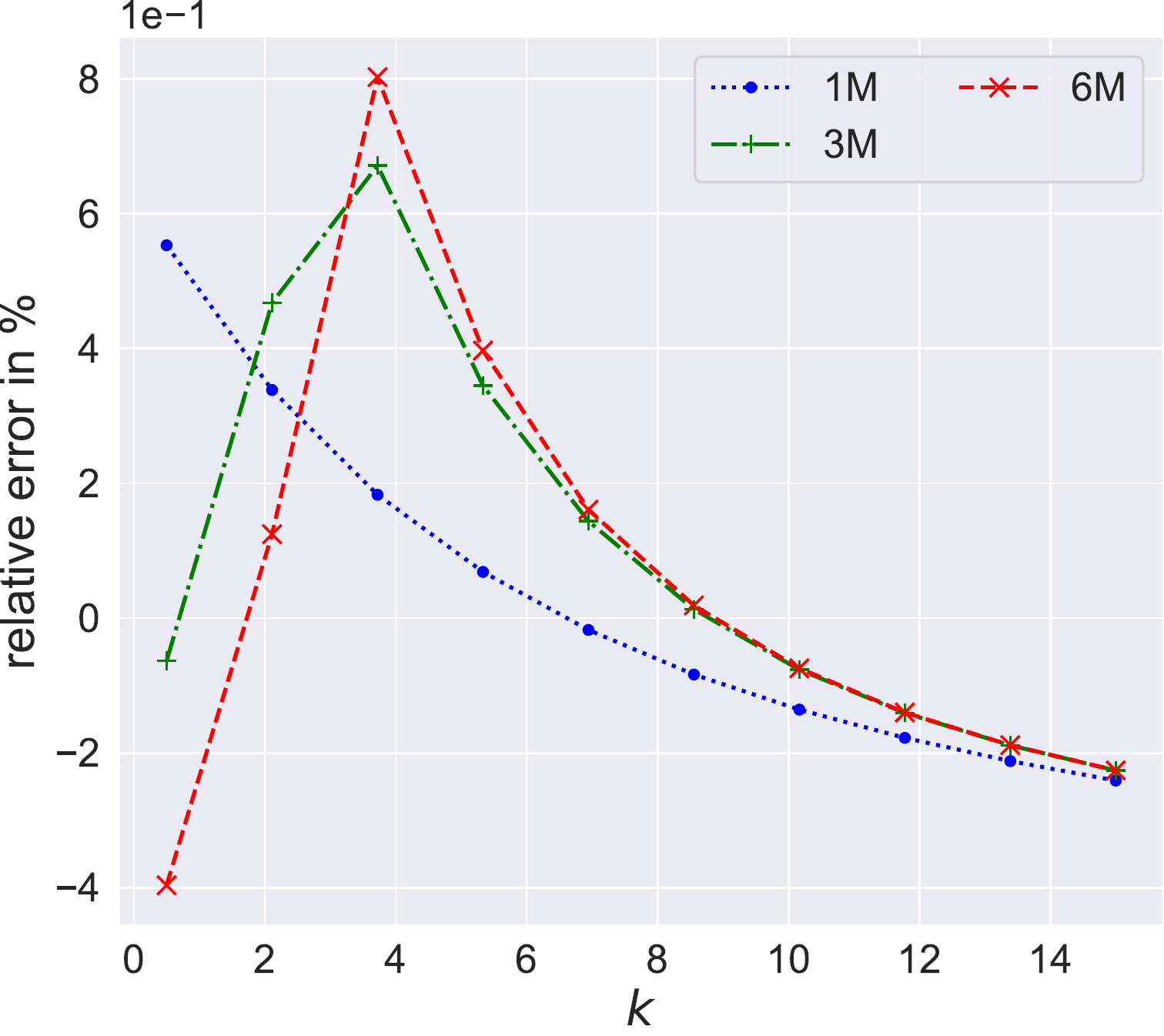}
 \caption{
	$\VIX$ futures, ATM call, and ATM put option prices  for different values of the mean-reversion speed parameter $k$ in the standard Bergomi model \eqref{eq:def:one:factor:bergomi}.
	\emph{Left}: 
	benchmark prices obtained according to the two-dimensional 
	quadrature described in Remark \ref{rem:benchmark:prices}, along 
	with our explicit expansion from Theorem \ref{thm:expansion:plain:model}. 
	\emph{Right}: 
	relative error in $\%$ 
	between the benchmark prices and the expansions.
}
\label{fig:vix:k} 
\end{figure}

\paragraph{Quality of the approximation for different values of the vol-of-variance $\omega$.}
In Figure \ref{fig:vix:w}, we set $k = 1$, $\Delta = \frac{1}{12}$, and choose $10$ evenly-spaced values of $\omega$ ranging from $0.5$ to $6$. 
	
\begin{figure}[H]
	\includegraphics[scale=0.49]{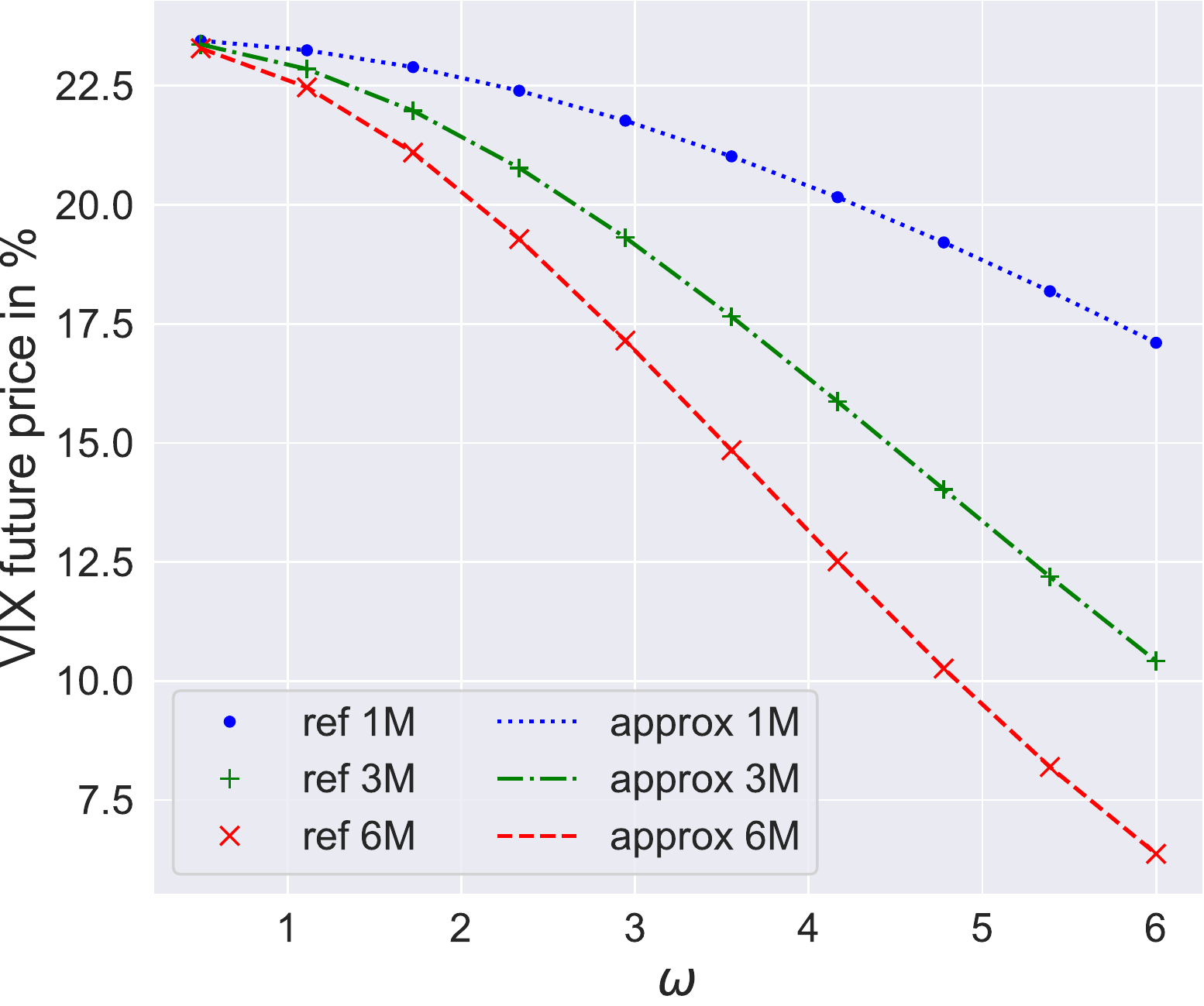}
	\includegraphics[scale=0.49]{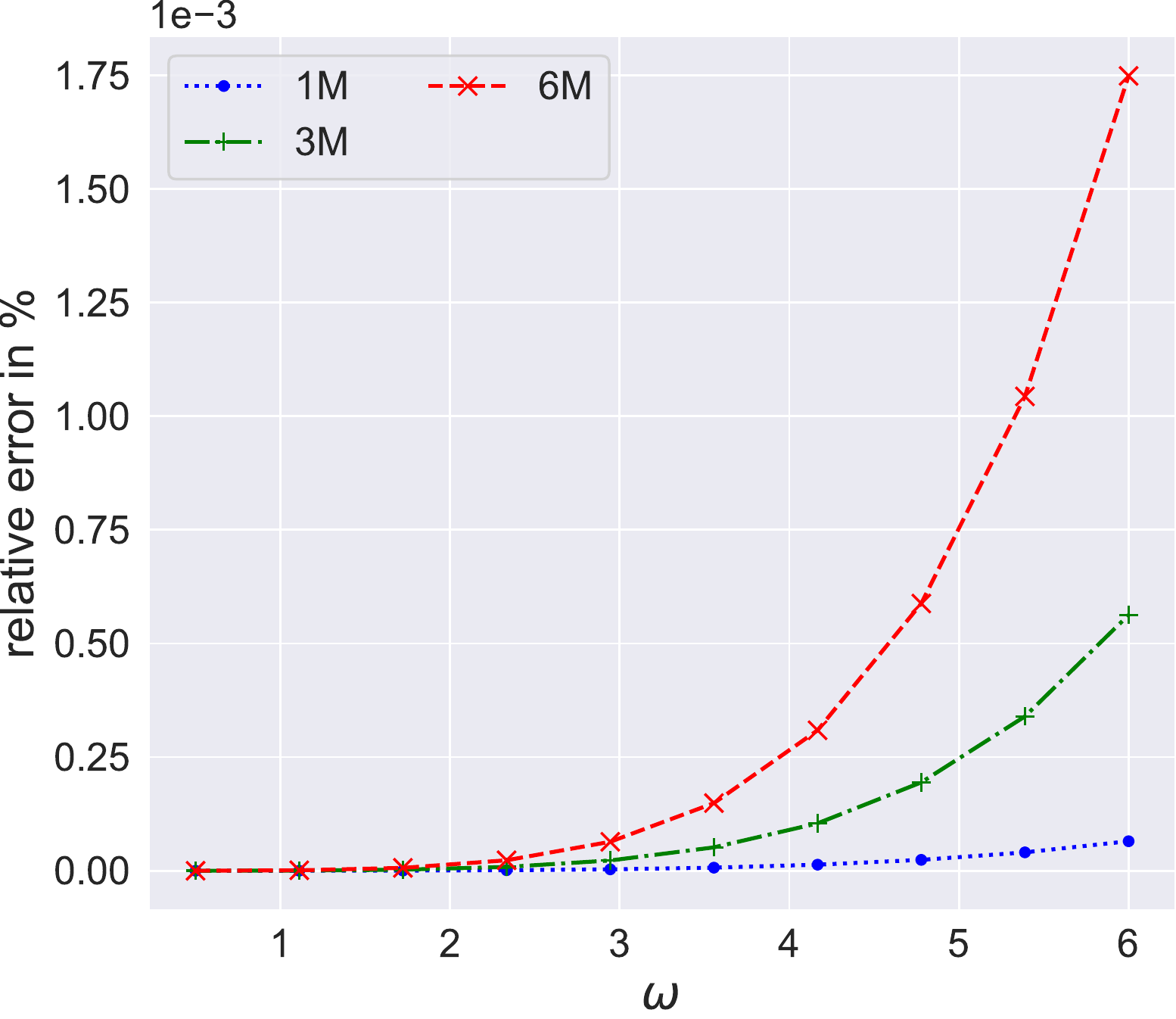}
	\\
	\includegraphics[scale=0.49]{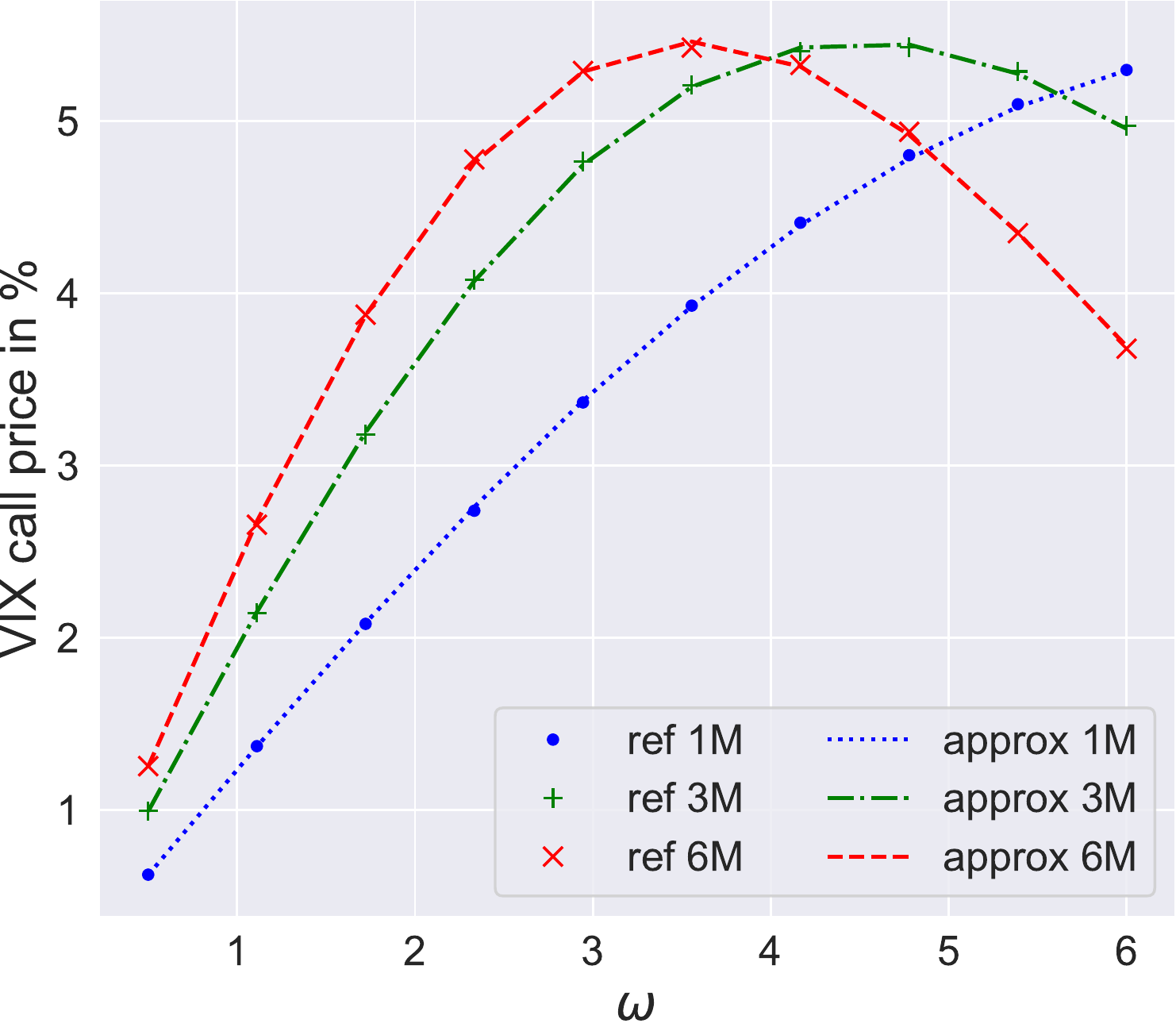}
	\includegraphics[scale=0.49]{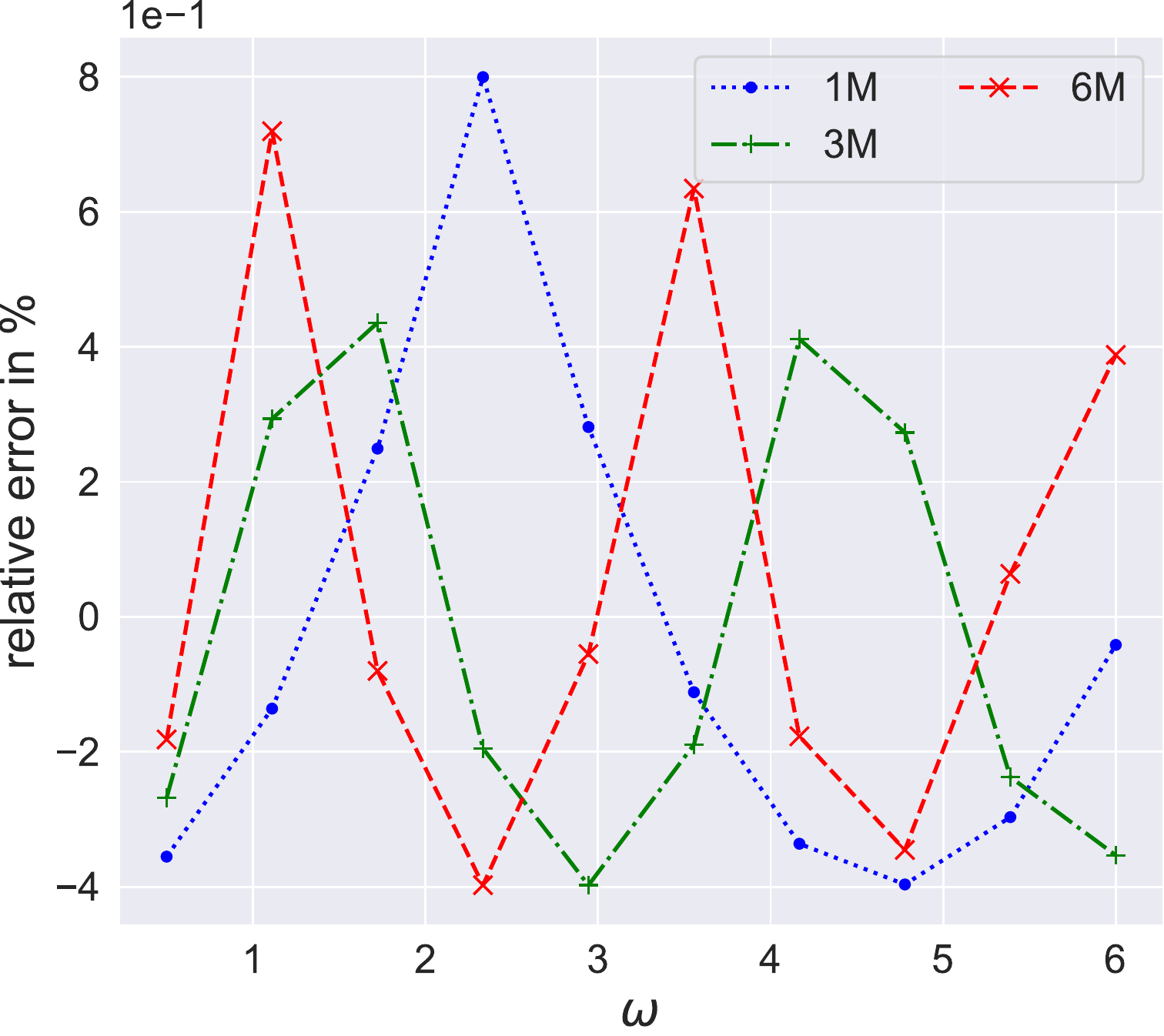}
	\\
	\includegraphics[scale=0.49]{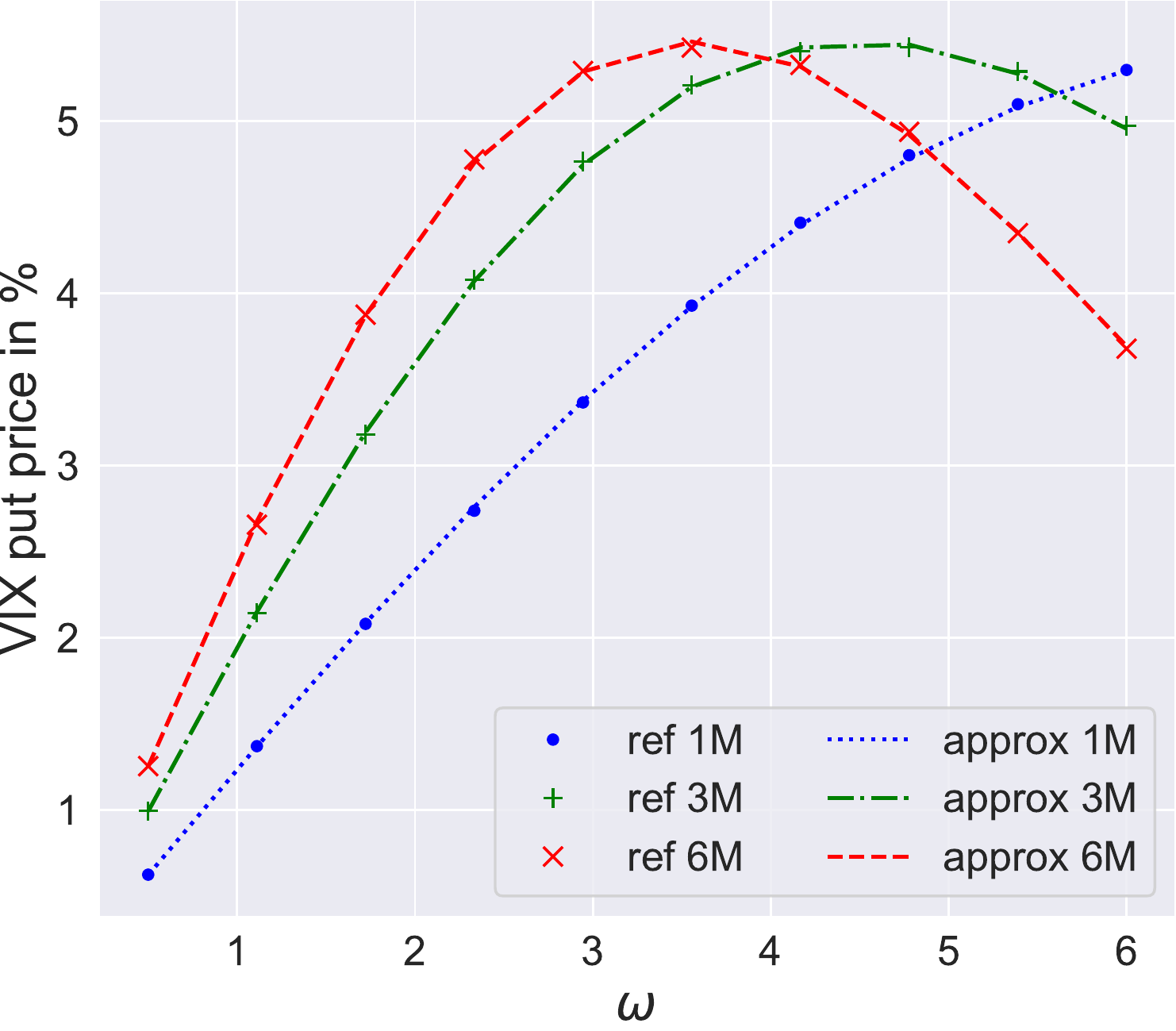}
	\includegraphics[scale=0.49]{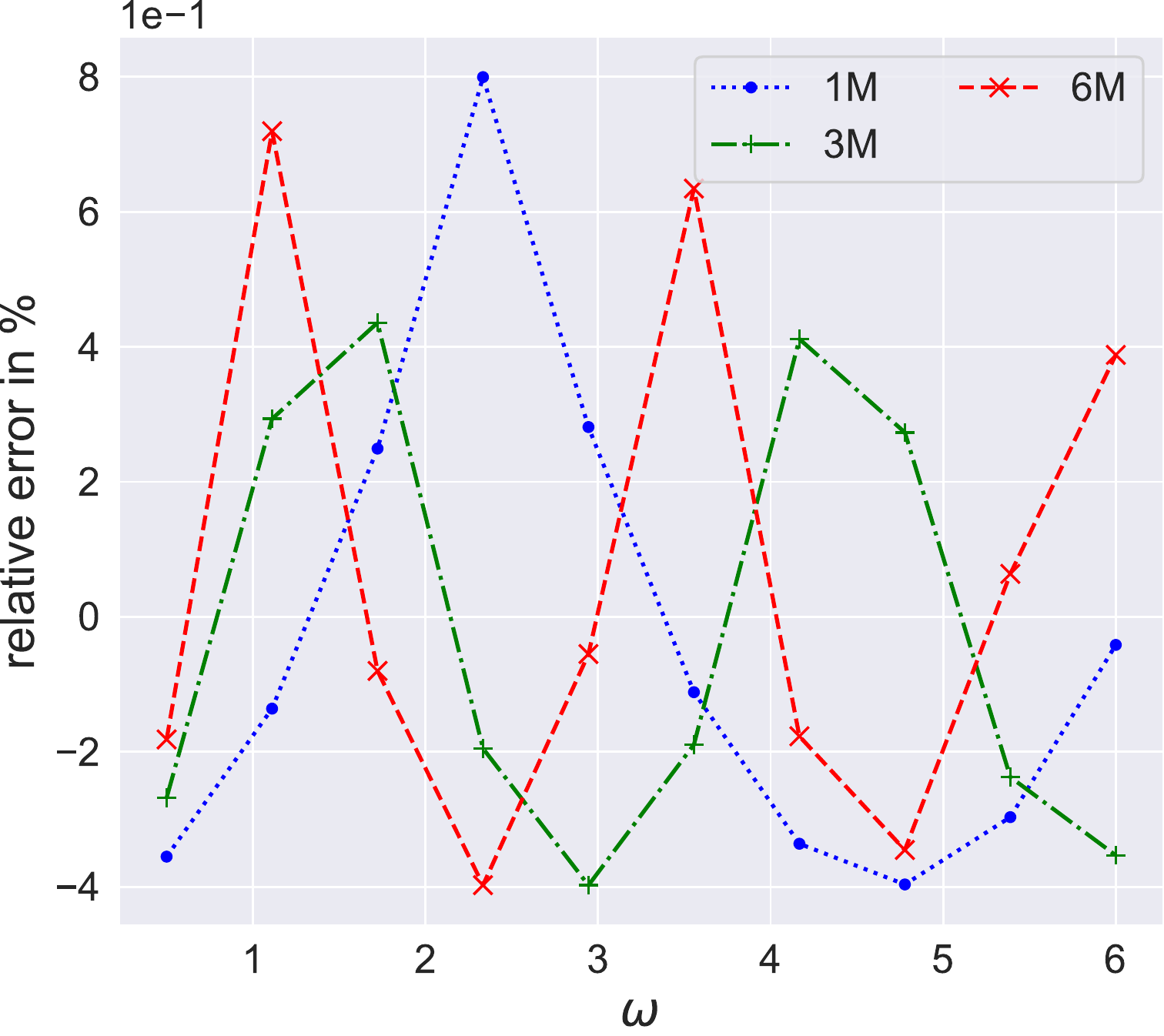}
	\caption{
		$\VIX$ futures, ATM call  and ATM put option prices for different values of the volatility-of-variance parameter $\omega$ in the standard Bergomi model  \eqref{eq:def:one:factor:bergomi}.
		\emph{Left}: 
		benchmark prices obtained according to the two-dimensional 
		quadrature described in Remark \ref{rem:benchmark:prices}, along 
		with our explicit expansion from Theorem \ref{thm:expansion:plain:model}. 
		\emph{Right}: 
		relative error in $\%$ 
		between the benchmark prices and the expansions.
	}
	\label{fig:vix:w} 
\end{figure}

\subsubsection{Behavior of the error terms for different values of $\Delta$}

We wish to compare the theoretical estimates $\cO(\Delta^{3})$ and $\cO(\Delta^{3H})$ given in Corollaries \ref{cor:check:onefactor:bergomi} and \ref{cor:check:rough:bergomi} for the error terms  in the standard Bergomi and rough Bergomi models with their empirical behavior.
In Figure \ref{fig:check:error},  we plot the absolute difference between the reference price and our approximations for futures and at-the-money call and put options, for several values of the time-window $\Delta$, in a log-log plot.
We consider a grid of $10$ evenly-spaced values of $\Delta$ in $[0.05,0.25]$, for
both models, and set $T= 1 \mbox{ month}$ and $\xi_0 = 0.04$. 
The parameters $H, \eta, k, \omega$ are displayed above each figure.

\begin{figure}[H]	
\includegraphics[scale=0.45]{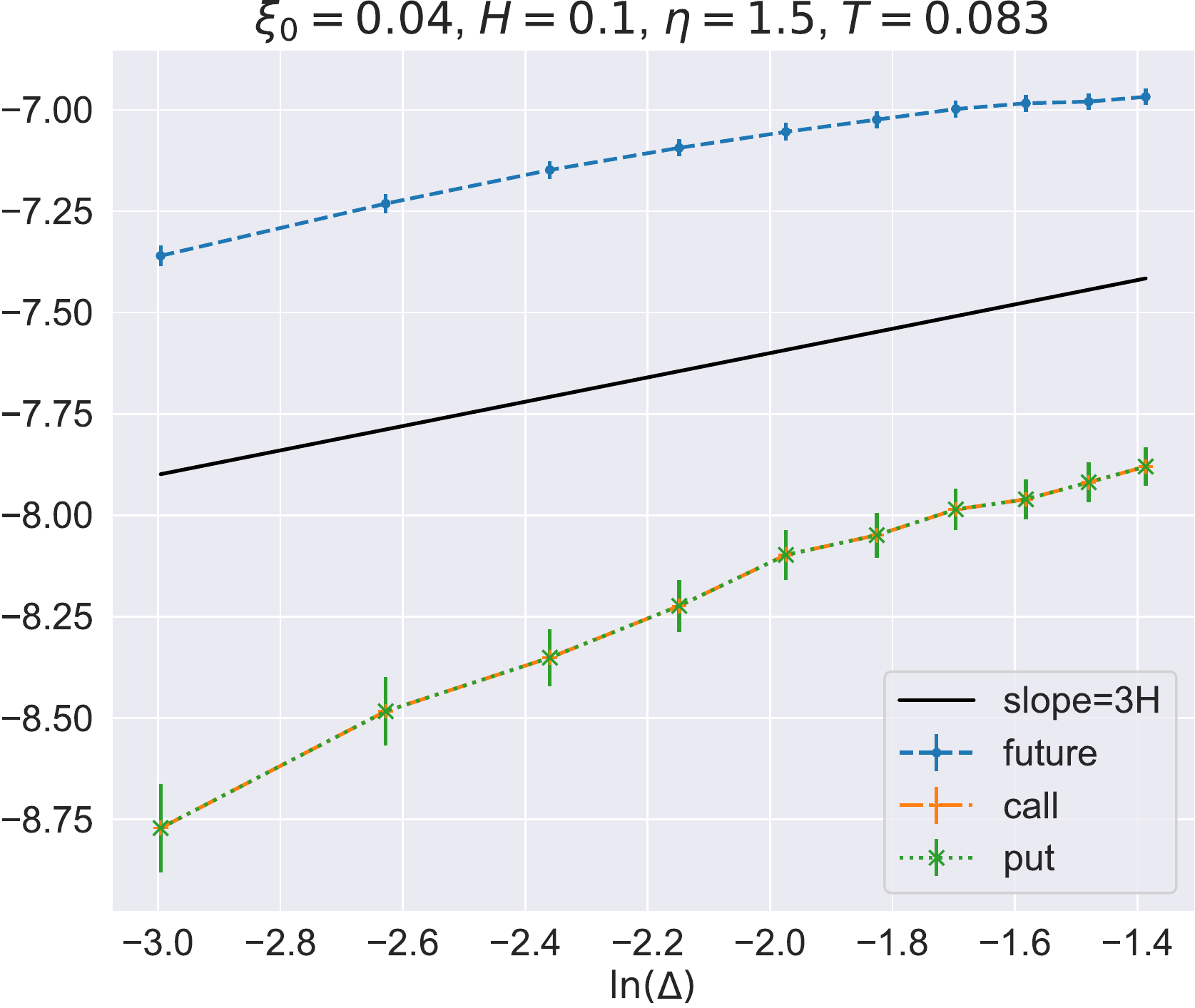}
\includegraphics[scale=0.45]{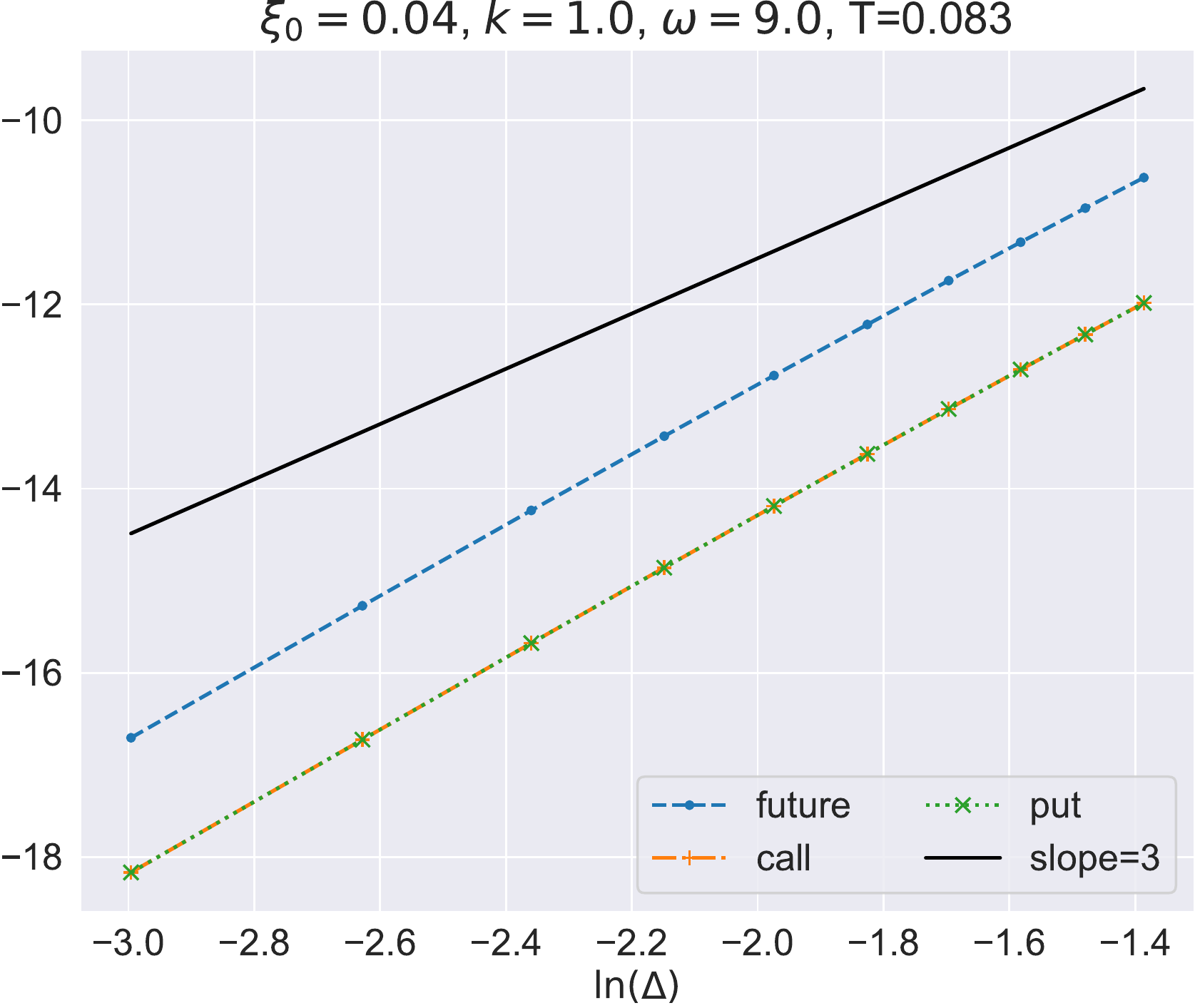}
\caption{ 
Log-log plot of the absolute difference between the reference price and our approximation formula from Theorem \ref{thm:expansion:plain:model} for futures, at-the-money call and at-the-money put options
as a function of $\Delta$, for the rough Bergomi model (left figure) and the standard Bergomi model (right figure).
}
\label{fig:check:error}
\end{figure}

We observe consistency with the error behavior announced in Corollaries \ref{cor:check:onefactor:bergomi} and \ref{cor:check:rough:bergomi} for the two models.

\subsubsection{VIX implied volatility}

\label{sec:implied:vol:plain}

The implied volatility of $\VIX$ options is computed from the Black-Scholes formula, using as forward parameter the model-generated $\VIX$ futures.
In Figure \ref{fig:implied:vol:plain}, we plot the $\VIX$ smile in the rough Bergomi model obtained from the reference option prices when $\xi_{0}= 0.235^{2},\eta=1$, and $H=0.1$,
along with its approximation from Theorem \ref{thm:expansion:plain:model} and
the associated signed relative error. Our approximation formula is again very accurate and yields relative errors for implied volatilities smaller (in absolute value) than $1.5\%$ for
a one-month maturity, $0.5\%$ for three months, and $0.35\%$ for six months.

It has already been observed and reported by several authors \cite{bayer2016pricing,horvath2018volatility} that the VIX smile generated by exponential forward
variance models \eqref{eq:sde:forward:variance} is almost flat, as also observed in Figure  \ref{fig:implied:vol:plain}.
This is precisely a consequence of the fact that the  true $\VIX_{T}^{2}$ random variable is well approximated by the log-normal proxy $\VIX_{T,{\rm P}}^{2}$ for realistic 
model parameters, so that the $\VIX$ itself is not far from a log-normal random variable with a flat smile structure (as we have already pointed out, our Theorems \ref{thm:estimate:nue:nueproxy} 
and \ref{thm:expansion:plain:model}  precisely quantify the difference existing between the distributions of these two random variables).
In practice, market $\VIX$ smiles exhibit a pronounced positive skew, which has motivated the search for more general model classes, which we now consider in section \ref{sec:mixed_model}.

\begin{figure}[H]
\includegraphics[scale=0.5]{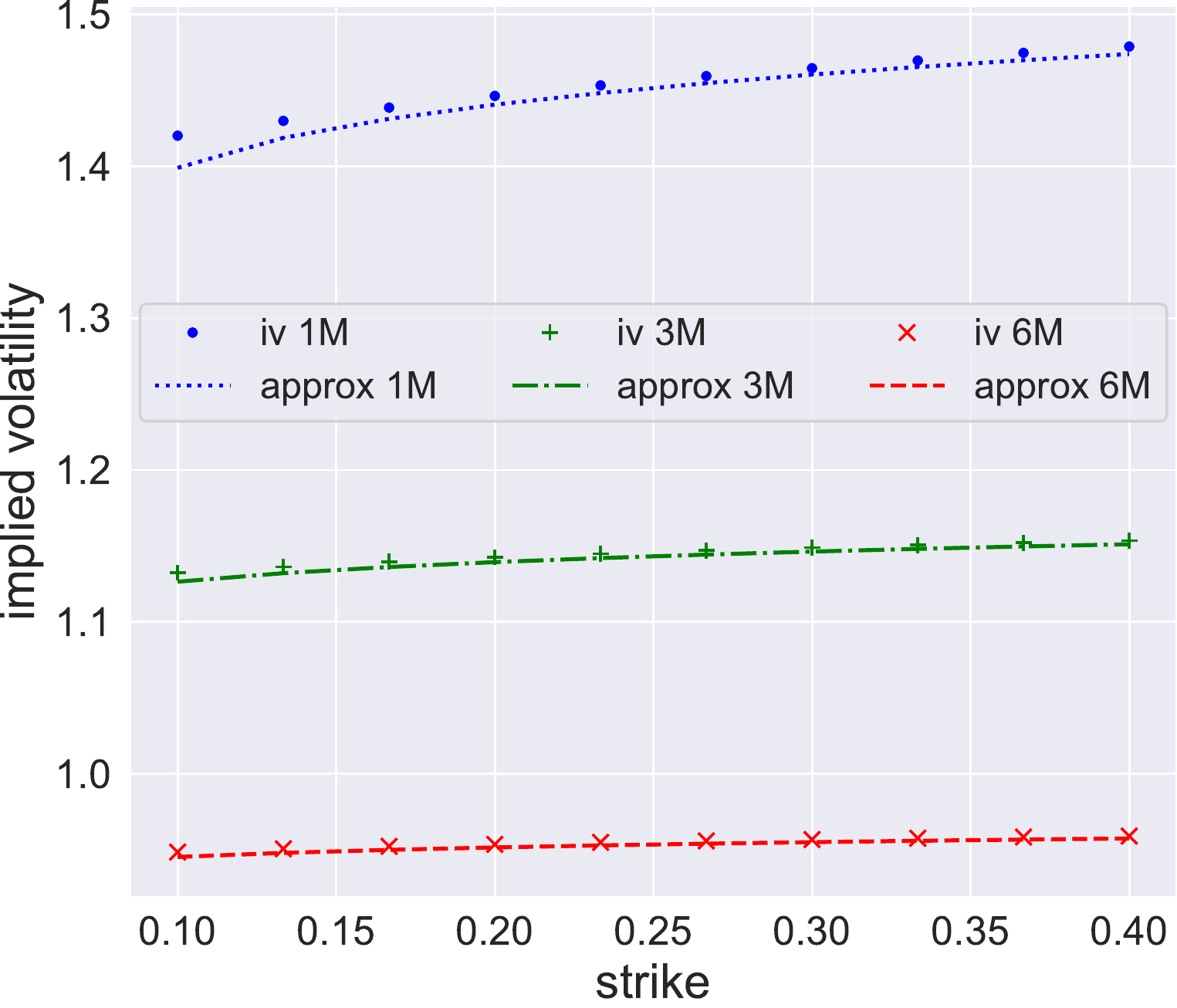} 
\includegraphics[scale=0.5]{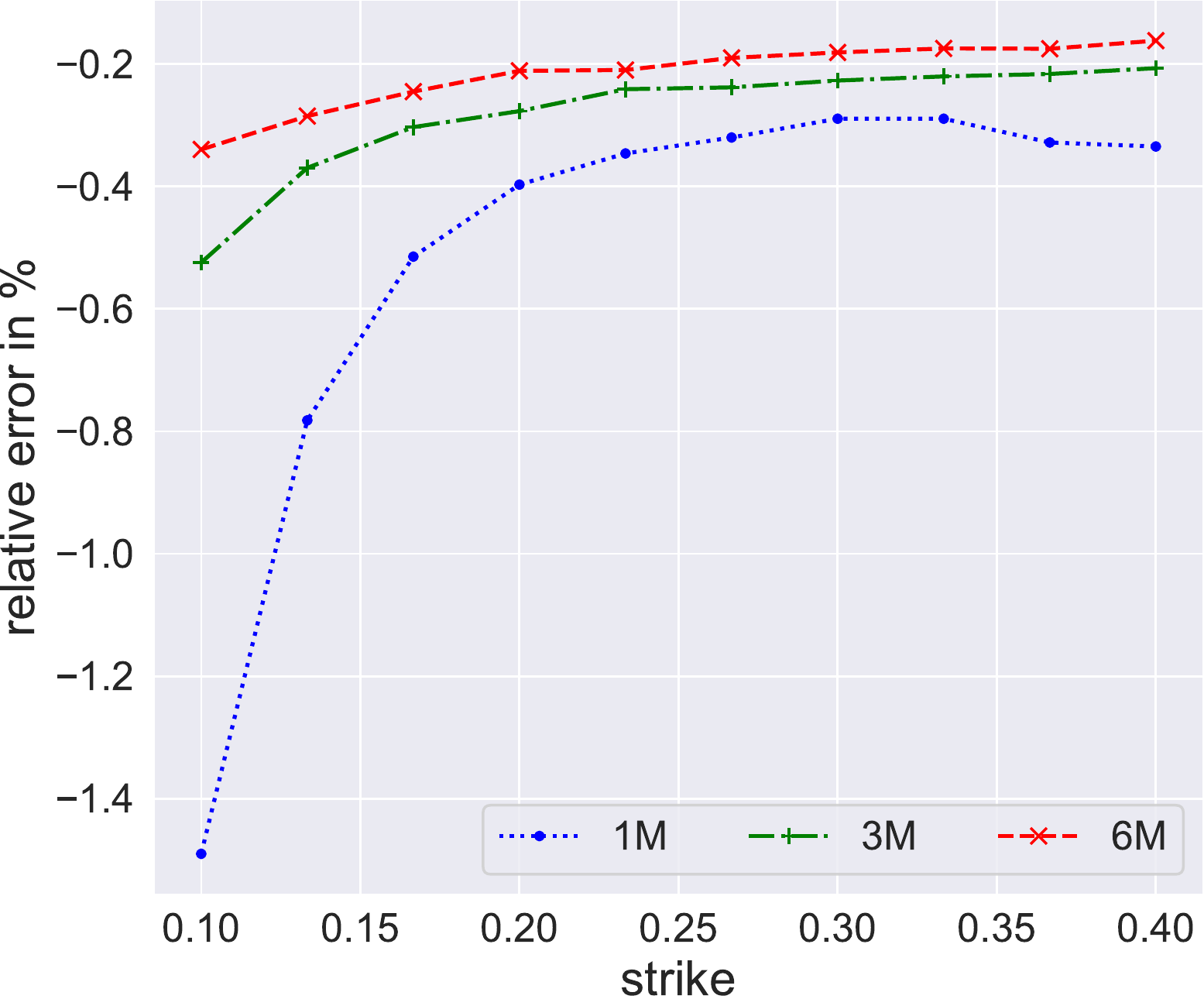}
\caption{ 
\emph{Left:} $\VIX$ smiles in the rough Bergomi model for $T=1,3,6$ months. 
\emph{Right:}
relative error in $\%$. 
The model parameters are $\xi_{0} = 0.235^{2},\eta=1,H=0.1$. 
}
\label{fig:implied:vol:plain} 
\end{figure}

The qualitative behavior of the VIX smiles that we observe for the standard Bergomi model is very similar, as already reported in \cite{bergomi2005smile}.
The $\VIX$ smiles we obtained from the reference prices and the price approximations in the Bergomi model are therefore not reported, being very similar 
to the  implied volatility smiles in Figure \ref{fig:implied:vol:plain} (the approximation formulas still being very accurate when compared to the reference implied volatilities).

\section{Mixed exponential models} 
\label{sec:mixed_model}

A class of models more general than \eqref{eq:sde:forward:variance}, and capable of appropriately capturing the behavior of market VIX smiles, was introduced in Bergomi \cite{bergomi2008smile}, who observed that a simple yet efficient way to twist the distribution of forward variances
is to replace the exponential process $\xi_t^u =  \xi_{0}^{u} \, e^{Y_t^u}$  stemming from the dynamics \eqref{eq:sde:forward:variance} with a convex combination of two exponential functions.
This observation yields the extended model family
\begin{equation}
\xi_{T}^{u}
=
\xi_{0}^{u}
\biggl[
\lambda \, \cE \Bigl( \int_{0}^{T}K_{1}^{u}(t) \dd W_{t} \Bigr)
+
(1 - \lambda) \cE \Bigl( \int_{0}^{T}K_{2}^{u}(t) \dd W_{t} \Bigr) \biggl],
\label{eq:def:model:mixed}
\end{equation}
which we refer to as ``mixed exponential models'' or ``mixed Bergomi models''.
In \eqref{eq:def:model:mixed}, $\cE\left(X\right)$
is a shorthand notation for $e^{X-\frac{1}{2}\Var\left(X\right)}$, and $\lambda\in\left[0,1\right]$ is a mixing parameter allowing to tune the relative importance of each exponential term.

\begin{remark}[Mixing log-normal models]
When $K_{i}^{u}(t)=\omega_{i} \, e^{-k(u-t)}$, $i\in\{1,2\}$, \eqref{eq:def:model:mixed} corresponds to the mixed one-factor Bergomi model introduced in \citet{bergomi2008smile}, and when $K_{i}^{u}=\eta_{i}(u-t)^{H-\frac{1}{2}}$, \eqref{eq:def:model:mixed} yields the mixed rough Bergomi model, introduced simultaneously by \citet{demarco2018pres} and \citet{guyon2018pres}.
It is of course possible to consider representations that encompass both the standard Bergomi and the rough Bergomi models, using kernels of the form $K^u(t) \propto e^{-k(u-t)}(u-t)^{H-\frac{1}{2}}$, as done in \cite[section 3.2]{alos2022smile},
\end{remark}

The squared VIX is of course still defined by integrated instantaneous forward variances as in \eqref{eq:def:VIX2};
under \eqref{eq:def:model:mixed}, we see that $\VIX_{T}^{2} $ is given by a convex combination of integral means of the form \eqref{eq:def:true:quantity}, that is
\be \label{e:true_VIX_mixed}
\VIX_{T}^{2} 
= \lambda \, \VIX_{T,1}^{2} + (1-\lambda) \VIX_{T,2}^{2}
:= \lambda \frac 1 \Delta \int_T^{T+\Delta} \xi_0^u \, e^{Y_{T,1}^{u}} \dd u + (1 - \lambda) \frac 1 \Delta \int_T^{T+\Delta} \xi_0^u \, e^{Y_{T,2}^{u}} \dd u \,,
\ee
where  
\begin{equation}\label{eq:def:Y_T_j}
Y_{T, j}^{u} 
:= -\frac{1}{2}\int_{0}^{T}K_{j}^{u}\left(t\right)^{2}\dd t+\int_{0}^{T}K_{j}^{u}\left(t\right)\dd W_{t}
\qquad j\in\{1,2\} \,.
\end{equation}
Following the approach of section \ref{sec:vix:expansion:}, we approximate each integral mean with a log-normal random random variable, so that  $\VIX_{T}^{2}$ is eventually approximated by
\begin{align*}
\VIX_{T,{\rm P}}^{2} 
& =
\lambda \, \nu\left(\xidot0\right)
e^{\nu_{0}\left(Y_{T,1}^{\cdot}\right)}+\left(1-\lambda\right)\nu\left(\xidot0\right)e^{\nu_{0}\left(Y_{T,2}^{\cdot}\right)}
\\
 & =:\lambda \,  \nu\left(\xidot0\right)
 \VIX_{T,{\rm P},1}^{2}
 +
 (1-\lambda)
 \nu\left(\xidot0\right)
 \VIX_{T,{\rm P},2}^{2} \,.
\end{align*}
The overall proxy $\VIX_{T,{\rm P}}^{2}$ is therefore a convex combination of correlated log-normal random variables. For $j\in\{1,2\},$
we have 
\begin{equation}
\ln\left(\VIX_{T,{\rm P},j}^{2}\right)
\egl \cN\left( \mu_{{\rm P},j},\,\sigma_{{\rm P},j}^{2}\right),\label{eq:mixed:model:mean:var}
\end{equation}
where 
$\mu_{{\rm P},j}:=-\frac{1}{2}\int_{0}^{T}\nu_{0}\bigl(K_{j}^{\cdot}(t)^{2}\bigr)\dd t$, 
$\sigma_{{\rm P},j}^{2}:=\int_{0}^{T}\nu_{0}\bigl(K_{j}^{\cdot}(t)\bigr)^{2} \dd t$.

\subsection{Price expansion}

Let us define the analogous of the coefficients  $\gamma$  in \eqref{eq:def:gamma:coeffs}: for $j\in\{1,2\}$,
\begin{align*}
\gamma_{1,j} & :=\frac{1}{8}\int_{\cA}\Bigl(\int_{0}^{T}\bigl[K_{j}^{u}(t)^{2}-\nu_{0}(K_{j}^{\cdot}(t)^{2})\bigr]\dd t\Bigr)^{2}\nu_{0}(\dd u)+\frac{1}{2}\int_{\cA}\Bigl(\int_{0}^{T}\bigl[K_{j}^{u}(t)-\nu_{0}(K_{j}^{\cdot}(t))\bigr]^{2}\dd t\Bigr)\nu_{0}(\dd u),\\
\gamma_{2,j} & :=-\frac{1}{2}\int_{\cA}\Bigl(\int_{0}^{T}\nu_{0}(K_{j}^{\cdot}(t))\bigl[K_{j}^{u}(t)-\nu_{0}(K_{j}^{\cdot}(t))\bigr]\dd t\Bigr)\Bigl(\int_{0}^{T}\bigl[K_{j}^{u}(t)^{2}-\nu_{0}(K_{j}^{\cdot}(t)^{2})\bigr]\dd t\Bigr)\nu_{0}(\dd u),\\
\gamma_{3,j} & :=\frac{1}{2}\int_{\cA}\Bigl(\int_{0}^{T}\nu_{0}(K_{j}^{\cdot}(t))\bigl[K_{j}^{u}(t)-\nu_{0}(K_{j}^{\cdot}(t))\bigr]\dd t\Bigr)^{2}\nu_{0}(\dd u).
\end{align*}

\begin{theorem} \label{thm:expansion:mixed:model}
Let $\varphi \in \mathcal{C}_{b}^{2}$. In the mixed rough Bergomi model obtained setting $K_{i}^{u}=\eta_{i}(u-t)^{H-\frac{1}{2}}$ in \eqref{eq:def:model:mixed}, the price of an option on $\VIX_{T}^{2}$ with payoff $\varphi$ is given by 
\begin{equation} \label{eq:expansion:mixed:model}
\bE\left[\varphi\left(\VIX_{T}^{2}\right)\right]
=
\bE\left[\varphi\left(\VIX_{T,{\rm P}}^{2}\right)\right]
+
\sum_{i=1}^{3}\sum_{j=1}^{2}\gamma_{i,j}P_{i,j}+\mathscr{E}_{\varphi},
\end{equation}
where $\mathscr{E}_{\varphi}$ is an error term satisfying $\left|\mathscr{E}_{\varphi}\right|\leq_{c}\Delta^{3(d_{1}\wedge\frac{d_{2}}{2})}$ 
with $d_1,d_2$  given in \eqref{eq:assu:drift}-\eqref{eq:assu:diffusion}, and
\begin{align}
\bE\left[\varphi\left(\VIX_{T,{\rm P}}^{2}\right)\right]
&
=\mathbb{E}\left[\varphi
\left(
\nu\left(\xidot0\right) \left[
\lambda \, e^{\mu_{{\rm P,}1}+\sigma_{{\rm P},1}Z}+(1-\lambda)e^{\mu_{{\rm P,}2}+\sigma_{{\rm P},2}Z}
\right]
\right)\right],
\label{def:price_mixed_proxy}
\\
P_{i,j}
&=
\partial_{\ve}^{i-1}\left.\mathbb{E}\left[\Psi_{j}\left(\mu_{{\rm P},j}+\sigma_{{\rm P,}j}Z+\ve\right)\right]\right|_{\ve=0},\quad  i\in\{1,2,3\},\ j\in\{1,2\},
\nonumber
\\ 
\Psi_{1}\left(x\right) 
&
=\partial_{y}\! \left.\varphi\left(\nu\left(\xidot0\right)\left[\lambda \, e^{x + y}
+
\left(1-\lambda\right)e^{
	\frac{\eta_{2}}{2}\left(\eta_{1}-\eta_{2}\right)  \int_{0}^{T}\nu_{0}\left(K_{0}^{\cdot}\left(t\right)^{2}\right)\dd t
	+
	\frac{\eta_{2}}{\eta_{1}} x
}\right]\right)\right|_{y = 0},
\label{def:psi1}
\\
\Psi_{2}\left(x\right) 
& 
=
\partial_{y} \! \left.
\varphi\left(\nu\left(\xidot0\right)\left[\lambda \, e^{
	\frac{\eta_{1}}{2}\left(\eta_{2}-\eta_{1}\right)\int_{0}^{T}\nu_{0}\left(K_{0}^{\cdot}\left(t\right)^{2}\right)\dd t
	+
	\frac{\eta_{1}}{\eta_{2}}x}+\left(1-\lambda\right)e^{x + y}\right]\right)
\right|_{y = 0}.
\label{def:psi2}
\\
K_{0}^{u}(t)
&
=(u-t)^{H-\frac{1}{2}}.
\nonumber
\end{align}
A similar expansion  holds for the mixed standard 
Bergomi model, taking $K_{0}^{u}(t) = e^{-k(u-t)}$ and replacing $\eta_i$ 
with $\omega_i$ for $i\in \{1,2\}$ in \eqref{def:psi1} and \eqref{def:psi2}.
\end{theorem}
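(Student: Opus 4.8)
The plan is to run the argument of Theorem~\ref{thm:expansion:plain:model} \emph{componentwise} on the two exponentials in \eqref{e:true_VIX_mixed}, while exploiting the \emph{crucial} structural fact that in both mixed models the two kernels share a common profile, $K_j^u(t)=\eta_j K_{0}^{u}(t)$ (resp.\ $\omega_j K_{0}^{u}(t)$). Writing $\lambda_1=\lambda$, $\lambda_2=1-\lambda$, this means that the two proxy exponents in \eqref{eq:mixed:model:mean:var} are driven by the \emph{same} one-dimensional Gaussian $G:=\int_{0}^{T}\nu_0(K_{0}^{\cdot}(t))\dd W_t$, only rescaled: $\sigma_{{\rm P},j}Z=\eta_j G$ with $Z:=G/\sqrt{\Var(G)}$. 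First I would write $\VIX_T^2-\VIX_{T,{\rm P}}^2=\sum_{j=1}^2\lambda_j(\VIX_{T,j}^2-\nu(\xidot0)\VIX_{T,{\rm P},j}^2)$ and, for each $j$, introduce the interpolation $I_j(\ve)$ exactly as in \eqref{eq:def:I:eps}--\eqref{eq:I:nth:derivative} with $Y_T^{\cdot}$ replaced by $Y_{T,j}^{\cdot}$ of \eqref{eq:def:Y_T_j}. Since $K_j=\eta_j K_{0}$, each component satisfies Assumption~\ref{assu:kernel} with the \emph{same} rates $d_1,d_2$ (only the constants carry the factors $\eta_j^2$), so Proposition~\ref{prop:estimate:YT:meanYT} and Theorem~\ref{thm:estimate:nue:nueproxy} apply verbatim to each $j$.

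Next I would Taylor-expand the payoff. As $\varphi\in\mathcal{C}_b^2$, $\bE[\varphi(\VIX_T^2)]=\bE[\varphi(\VIX_{T,{\rm P}}^2)]+\bE[\varphi'(\VIX_{T,{\rm P}}^2)(\VIX_T^2-\VIX_{T,{\rm P}}^2)]+E_0$, with $|E_0|\leq\|\varphi''\|_\infty\|\VIX_T^2-\VIX_{T,{\rm P}}^2\|_2^2\leq_c\Delta^{2(2d_1\wedge d_2)}$, which is of smaller order than $\Delta^{3(d_1\wedge\frac{d_2}{2})}$. Substituting, for each $j$, the representation \eqref{eq:taylor:I1:I0} of $\VIX_{T,j}^2-\nu(\xidot0)\VIX_{T,{\rm P},j}^2=\frac12 I_j^{(2)}(0)+\int_0^1\frac{(1-\ve)^2}{2}I_j^{(3)}(\ve)\dd\ve$ into the first-order term isolates the leading correction $\frac12\sum_j\lambda_j\bE[\varphi'(\VIX_{T,{\rm P}}^2)\nu(\xidot0)\VIX_{T,{\rm P},j}^2\int_{\cA}(Y_{T,j}^u-\nu_0(Y_{T,j}^{\cdot}))^2\nu_0(\dd u)]$; the residual $I_j^{(3)}$ contributions, multiplied by the bounded $\varphi'$, are $\leq_c\Delta^{3(d_1\wedge\frac{d_2}{2})}$ by Theorem~\ref{thm:estimate:nue:nueproxy} and make up the announced $\mathscr{E}_\varphi$.

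The heart of the proof is then the Gaussian (Malliavin) integration by parts that converts each leading-correction weight into the explicit derivatives $P_{i,j}$. I would split $Y_{T,j}^u-\nu_0(Y_{T,j}^{\cdot})=a_j^u+M_j^u$ into its deterministic drift $a_j^u:=-\frac12\int_0^T[K_j^u(t)^2-\nu_0(K_j^{\cdot}(t)^2)]\dd t$ and centred Gaussian part $M_j^u:=\int_0^T[K_j^u(t)-\nu_0(K_j^{\cdot}(t))]\dd W_t$, so that the weight decomposes into a deterministic, a linear-Gaussian and a quadratic-Gaussian piece. The key point is that $\VIX_{T,{\rm P}}^2$ depends on $W$ only through $G$, hence the \emph{whole} proxy is a deterministic function of the single variable $\xi:=\mu_{{\rm P},j}+\sigma_{{\rm P},j}Z$; this is precisely what the auxiliary functions $\Psi_j$ of \eqref{def:psi1}--\eqref{def:psi2} encode, the other exponent being re-expressed through $\xi$ via the ratio $\eta_{3-j}/\eta_j$ (equivalently $\sigma_{{\rm P},3-j}Z=\frac{\eta_{3-j}}{\eta_j}\sigma_{{\rm P},j}Z$), so that $\lambda_j\varphi'(\VIX_{T,{\rm P}}^2)\nu(\xidot0)\VIX_{T,{\rm P},j}^2=\Psi_j(\xi)$. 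The one-dimensional Gaussian identities $\bE[\Psi_j(\xi)M_j^u]=\Cov(\xi,M_j^u)\bE[\Psi_j'(\xi)]$ and $\bE[\Psi_j(\xi)(M_j^u)^2]=\bE[(M_j^u)^2]\bE[\Psi_j(\xi)]+\Cov(\xi,M_j^u)^2\bE[\Psi_j''(\xi)]$ then apply. Using $\Cov(\xi,M_j^u)=\int_0^T\nu_0(K_j^{\cdot}(t))[K_j^u(t)-\nu_0(K_j^{\cdot}(t))]\dd t$ and $\bE[(M_j^u)^2]=\int_0^T[K_j^u(t)-\nu_0(K_j^{\cdot}(t))]^2\dd t$, integrating against $\nu_0(\dd u)$ and collecting terms reproduces exactly $\gamma_{1,j}\bE[\Psi_j(\xi)]+\gamma_{2,j}\bE[\Psi_j'(\xi)]+\gamma_{3,j}\bE[\Psi_j''(\xi)]=\gamma_{1,j}P_{1,j}+\gamma_{2,j}P_{2,j}+\gamma_{3,j}P_{3,j}$ (the deterministic $a_j^u$ piece and the $\bE[(M_j^u)^2]$ piece merging into the two halves of $\gamma_{1,j}$). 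Summing over $j$ gives the claimed expansion.

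The main obstacle is the careful bookkeeping of this last step rather than any deep analytic difficulty: one must justify the reduction to the single Gaussian $G$ (which hinges on the shared profile $K_{0}$ and would break down for two genuinely distinct kernels), match the cross-covariances to the integrands defining $\gamma_{2,j}$ and $\gamma_{3,j}$, and check that the deterministic and quadratic contributions assemble into the single coefficient $\gamma_{1,j}$. Two remarks close the plan. First, the restriction $\varphi\in\mathcal{C}_b^2$ (stronger than the $\theta$-H\"older hypothesis of Theorem~\ref{thm:expansion:plain:model}) is what lets the Taylor step and the integration by parts go through directly, bypassing the density-regularization argument that was available for a single log-normal proxy but is more delicate for a convex combination of two log-normals; non-degeneracy of $\xi$, needed for $P_{i,j}$ to be well defined, follows from \eqref{eq:assu:limit:variance:proxy} applied to each component. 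Second, the fractional kernel enters only through the deterministic $L^p$ quantities already controlled in Proposition~\ref{prop:estimate:YT:meanYT} and through the rates $d_1,d_2$, so the rough case demands no extra analytic input beyond $\int_0^T K_{0}^{u}(t)^2\dd t<\infty$ for $u\ge T$.
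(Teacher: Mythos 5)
Your proposal is correct and follows the same overall architecture as the paper's proof: a Taylor expansion of $\varphi$ around the proxy, the componentwise interpolation (your $\lambda I_1(\ve)+(1-\lambda)I_2(\ve)$ is exactly the paper's $J(\ve)$ of \eqref{eq:def:J:eps}), the reduction of both log-normal exponents to a single Gaussian via the shared kernel profile $K_j^u=\eta_j K_0^u$ (which is what makes the functions $\Psi_j$ of \eqref{def:psi1}--\eqref{def:psi2} well defined), and the observation that $\varphi\in\mathcal C^2_b$ lets one bound $E_0$ and $E_1$ directly without the regularization and Malliavin non-degeneracy machinery needed in Theorem \ref{thm:expansion:plain:model}. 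Where you genuinely diverge is in the mechanism for converting the leading-correction weight $\tfrac12\bigl(Y_{T,j}^u-\nu_0(Y_{T,j}^\cdot)\bigr)^2$ into the combination $\sum_i\gamma_{i,j}P_{i,j}$: the paper first applies It\^o's formula to represent this square as an iterated Wiener integral and then invokes Lemma \ref{lem:integration:by:parts} (from Gobet--Miri), whereas you decompose it into drift plus centred Gaussian, $a_j^u+M_j^u$, and use the one-dimensional Stein/Isserlis identities $\bE[\Psi_j(\xi)M_j^u]=\Cov(\xi,M_j^u)\,\bE[\Psi_j'(\xi)]$ and $\bE[\Psi_j(\xi)(M_j^u)^2]=\bE[(M_j^u)^2]\,\bE[\Psi_j(\xi)]+\Cov(\xi,M_j^u)^2\,\bE[\Psi_j''(\xi)]$. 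I checked that your bookkeeping does reproduce the coefficients: the $\tfrac12(a_j^u)^2$ and $\tfrac12\bE[(M_j^u)^2]$ terms assemble into the two halves of $\gamma_{1,j}$, the cross term $a_j^u\,\Cov(\xi,M_j^u)$ gives $\gamma_{2,j}$, and $\tfrac12\Cov(\xi,M_j^u)^2$ gives $\gamma_{3,j}$. Your route is more elementary and self-contained for this particular step --- it avoids the It\^o-formula representation and the external lemma at the cost of being tied to the purely Gaussian structure, which is all that is needed here; the paper's iterated-integral lemma is the more general tool that also underlies the non-smooth-payoff case of Theorem \ref{thm:expansion:plain:model}. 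Your error estimates ($|E_0|\leq_c\Delta^{4d_1\wedge 2d_2}$, dominated by $\Delta^{3(d_1\wedge\frac{d_2}{2})}$, and the $I_j^{(3)}$ remainder controlled by Theorem \ref{thm:estimate:nue:nueproxy} applied per component) are also in order.
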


\begin{remark}
We note that the form of  \eqref{def:price_mixed_proxy}-\eqref{def:psi1}-\eqref{def:psi2} is specific to the mixed one-factor Bergomi model where $K_{i}^{u}(t)=\omega_{i}e^{-k(u-t)}$ (same value of $k$ for the two kernels) and to the rough Bergomi model where $K_{i}^{u}=\eta_{i}(u-t)^{H-\frac{1}{2}}$ (same value of	$H$). In these cases, $Y_{T,1}^{u}$ and $Y_{T,2}^{u}$ are linearly dependent Gaussian variables (${\rm Corr}(Y_{T,1}^{u},Y_{T,2}^{u})=1$) and the VIX proxy is a function of a single Gaussian random variable,
	\begin{equation*}
		\VIX_{T,{\rm P}}^{2}
		\egl
		\nu(\xidot0)
		\left[
		\lambda \, 	e^{\mu_{{\rm P},1}+\sigma_{{\rm P},1}Z}+\left(1-\lambda\right)e^{\mu_{{\rm P,}2}+\sigma_{{\rm P,}2}Z}\right],
		\qquad Z\egl\cN\left(0,1\right),
		\label{eq:mixed:proxy:distribution}
	\end{equation*}
so that all the expressions in Theorem \ref{thm:expansion:mixed:model} can be evaluated with efficient one-dimensional Gaussian quadratures, as we explain in detail in the following section.
\end{remark}

Theorem \ref{thm:expansion:mixed:model} could be extended to non-smooth payoffs, using  similar arguments to the proof of Theorem \ref{thm:expansion:plain:model}. 
Leaving this rather long analysis for future work, we prove the current statement of Theorem \ref{thm:expansion:mixed:model} for smooth payoffs in section \ref{sec:proofs}, while still providing numerical tests for futures, call and put payoffs in the next section.

\subsection{Numerical tests for option price formulas and implied volatilities}
\label{sec:vix:expansion:numerical:tests:mixed:rough:bergomi}

Reference prices in the mixed models are still computed as described in Remark \ref{rem:benchmark:prices}: in the mixed rough Bergomi model, we discretize the	
$\VIX_{T,j}^2$ in \eqref{e:true_VIX_mixed} for $j \in \{1, 2\}$ with a rectangle scheme and simulate exactly the discretized variable, while in the mixed standard Bergomi model, we exploit the Markovian representation in  Remark \ref{rem:benchmark:prices} for each term $\VIX_{T,j}^2$, $j \in \{1, 2\}$, and apply a two-dimensional deterministic quadrature with respect to the parameter $u$ and to the space dimension.

All the numerical tests were performed on a MacBook Air laptop (M1, 2020) with 8GB of memory using the programming language Python 3.9.9.

\paragraph{Computation of the $P_{i,j}$ in Theorem \ref{thm:expansion:mixed:model} for $i \in \{1,2,3\}$ and $j \in \{1,2\}$.}
Let us drop the subscript $\mathrm P$ and denote $\mu_j = \mu_{{\rm P},j}$, $\sigma_j = \sigma_{{\rm P},j}$.
Recalling that 
\[
\mathbb{E} 
\left[
\varphi
\left(
{\rm VIX}_{T,{\rm P}}^{2}
\right)
\right]
=
\mathbb{E}\left[\varphi\left(\nu(\xi_{0}^{\cdot}) \bigl[ \lambda \,  e^{\mu_{1}+\sigma_{1}Z} 
+   
(1 - \lambda) e^{\mu_{2}+\sigma_{2}Z} 
\bigr]
\right)\right],
\]
where $Z\sim\mathcal{N}(0,1)$, we can use a one-dimensional Gauss--Hermite quadrature (with $80$ nodes in our tests) in order to evaluate the option price over the proxy $\mathbb{E}\left[\varphi\left({\rm VIX}_{T,{\rm P}}^{2}\right)\right]$.
Recall that VIX futures correspond to $\varphi(x) = \sqrt x$ and VIX call options to $\varphi(x) = (\sqrt x - \kappa)^+$.
The terms $P_{i,1}$  and $P_{i,2}$ for $i \in \{1,2,3\}$ are given by derivatives of the expectation above with respect to a parameter, and therefore they can be recast under the form of expectations using the likelihood method (derivation of the density function).
Consider the terms $P_{i,1}$ for $i \in \{1,2,3\}$. 
We have $P_{1,1}  =\mathbb{E}\left[\Psi_{1}\left(\mu_{1}+\sigma_{1}Z\right)\right]$, where the function $\Psi_1$ is explicity given in \eqref{def:psi1}. Since
\begin{equation*}
	f_{1}(\varepsilon):=\mathbb{E}\left[\Psi_{1}\left(\mu_{1}+\sigma_{1}Z+\varepsilon\right)\right] 
	=\frac{1}{\sigma_{1}}\int_{\mathbb{R}}\Psi_{1}\left(y\right)\frac{\exp(-\frac{1}{2\sigma_{1}^{2}}(y-\mu_{1}-\varepsilon)^{2})}{\sqrt{2\pi}}{\rm d}y \,,
\end{equation*}
we have
\begin{equation*}
	P_{2,1} = f_{1}^{\prime}(0) 
	=\frac{1}{\sigma_{1}}\int_{\mathbb{R}}\Psi_{1}\left(y\right)\frac{y-\mu_{1}}{\sigma_{1}}\frac{\exp(-\frac{1}{2\sigma_{1}^{2}}(y-\mu_{1})^{2})}{\sqrt{2\pi}}\frac{{\rm d}y}{\sigma_{1}}\\
	=\frac{1}{\sigma_{1}}\mathbb{E}\left[Z\Psi_{1}\left(\mu_{1}+\sigma_{1}Z\right)\right],
\end{equation*}
and 
\begin{equation*}
	P_{3,1} = f_{1}^{\prime\prime}(0) 
	=\frac{1}{\sigma_{1}^{2}}\mathbb{E}\left[\left(Z^{2}-1\right)\Psi_{1}\left(\mu_{1}+\sigma_{1}Z\right)\right].
\end{equation*}
Consequently, we can again use a one-dimensional Gauss--Hermite quadrature to evaluate the $P_{i,1}$ for $i \in \{1,2,3\}$.  The terms $P_{i, 2}$, $i \in \{1,2,3\}$, are treated analogously.

In terms of complexity for the pricing procedure, we have the replaced Monte Carlo simulation of the VIX discretization scheme in the mixed rough Bergomi model, resp.\ the two-dimensional quadrature in the mixed standard Bergomi model, with one-dimensional Gaussian quadratures.

\begin{remark}
Call and put option prices on the squared VIX proxy $\VIX_{T,{\rm P}}^{2}$, corresponding to $\varphi(x) = ( x - \kappa)^+$ and $\varphi(x) = (\kappa - x)^+$ in \eqref{def:price_mixed_proxy}, admit explicit expressions in terms  of Black--Scholes formulas, provided one evaluates the point $F^{-1}(\kappa)$, $F^{-1}$ being the inverse of the function $F:x\in\bR\to \nu(\xi_0^\cdot) \Bigl[\lambda \, e^{\mu_{1}+\sigma_{1}x}+\left(1-\lambda\right)e^{\mu_{2}+\sigma_{2}x}\Bigr]$. The evaluation of $F^{-1}(\kappa)$ can be performed with a simple root-finding procedure.
As a consequence, in the case of call and put options on $\VIX_{T}^{2}$, the whole expansion \eqref{eq:expansion:mixed:model} boils down to an explicit combination of Black--Scholes prices and greeks.
\end{remark}

\subsubsection{VIX implied volatility}

As done in the previous sections, we compare the reference $\VIX$ implied volatility with the approximate $\VIX$ implied volatility computed with our expansion in Theorem \ref{thm:expansion:mixed:model}. 
To test the approximation formulas on different $\VIX$ smiles, we consider two different parameter scenarios in the rough and standard Bergomi models.  We have considered options maturities equal to
$1,3,$ and $6$ months.

\paragraph{Implied volatility for the mixed rough Bergomi model.} 
Recall that the mixed rough Bergomi model is obtained by injecting $K_{i}^{u}=\eta_{i}(u-t)^{H-\frac{1}{2}}$, $i\in\{1,2\}$, in \eqref{eq:def:model:mixed}.
We set $\xi_0 = 0.235^{2}$ and $H=0.1$; the other model parameters can be found in Table \ref{tab:mixed_rBergomi_scenarios}.
We evaluate the reference option prices with $10^{6}$ Monte Carlo samples and $300$ discretization
points. 
\begin{table}[H] 
\begin{centering}
\begin{tabular}{|c|c|c|c|}
	\hline 
	Scenario  & 1-month $\VIX$ futures  & 3-month $\VIX$ futures  & 6-month $\VIX$ futures
	\tabularnewline
	\hline 
	\hline 
	$1$  & $0.218650\pm5\times10^{-6}$  & $0.206308\pm5\times10^{-6}$  & $0.196890\pm5\times10^{-6}$
	\tabularnewline
	\hline 
	$2$  & $0.229001\pm3\times10^{-6}$  & $0.224244\pm3\times10^{-6}$   & $0.220472\pm3\times10^{-6}$ 
	\tabularnewline
	\hline 
\end{tabular}
\begin{tabular}{|c|c|c|c|c|}
	\hline 
	Scenario  & $\eta_{1}$ & $\eta_{2}$ & $\lambda$ 
	\tabularnewline
	\hline 
	\hline 
	$1$  & $1.4$ & $0.7$ & $0.3$ 
	\tabularnewline
	\hline 
	$2$  & $0.9$ & $0$ & $0.6$ 
	\tabularnewline
	\hline 
\end{tabular}
\par\end{centering}
\caption{Term structure of $\VIX$ futures
and model parameters  for scenarios $1$ and $2$ in the mixed rough Bergomi model.}
	\label{tab:mixed_rBergomi_scenarios}
\end{table}

\begin{figure}[t]
\includegraphics[scale=0.5]{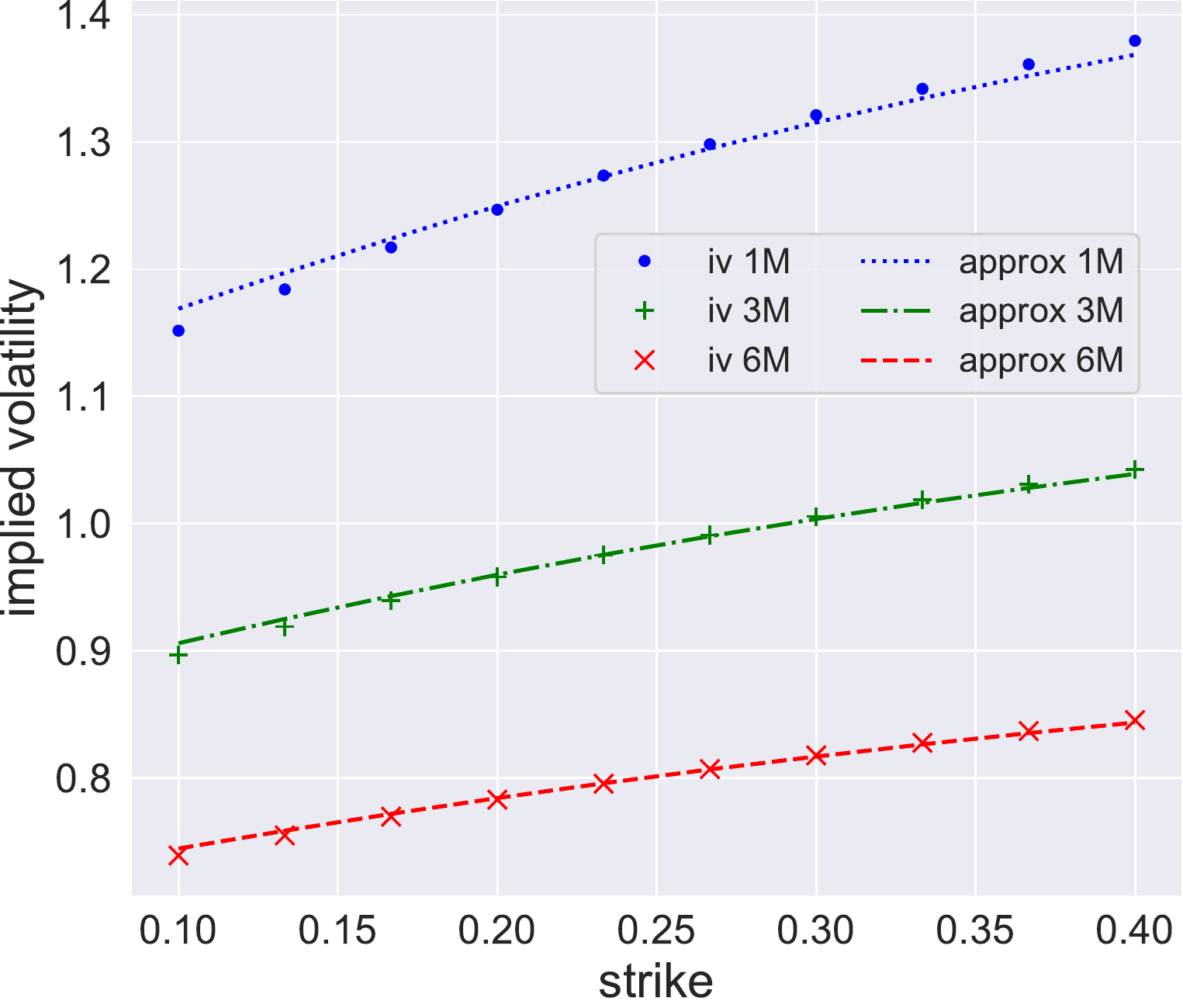}
\includegraphics[scale=0.5]{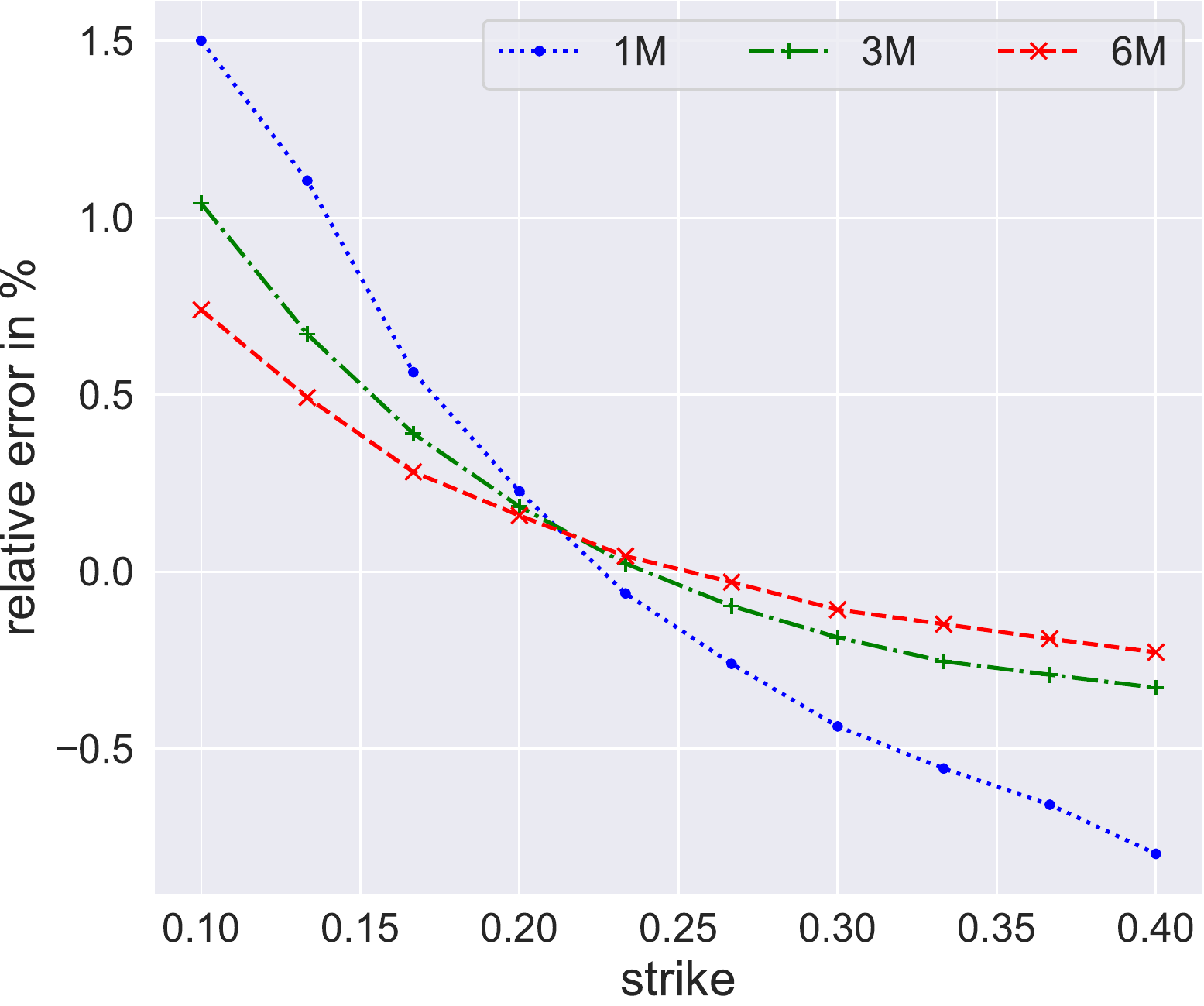}
\\
\includegraphics[scale=0.5]{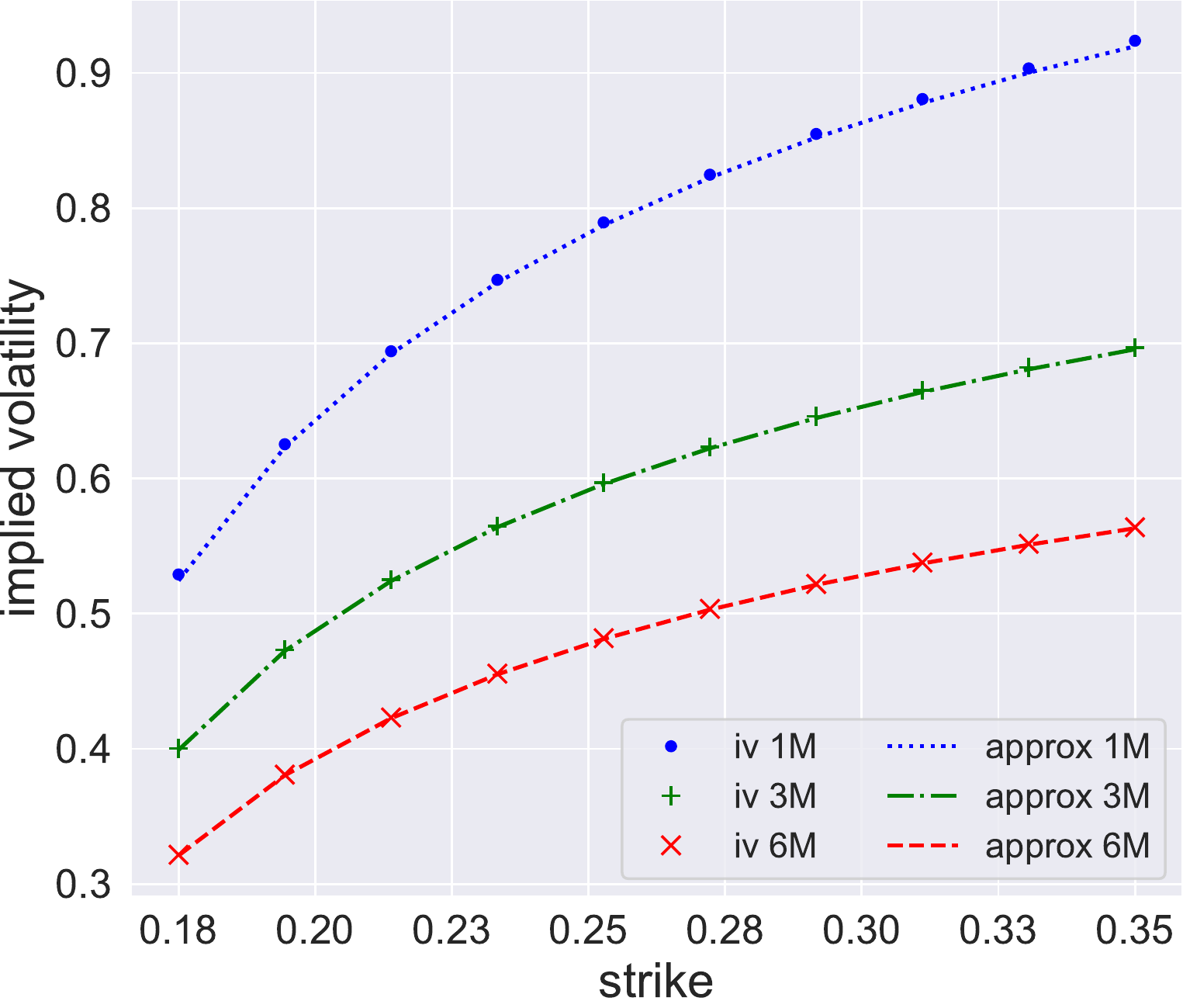}
\includegraphics[scale=0.5]{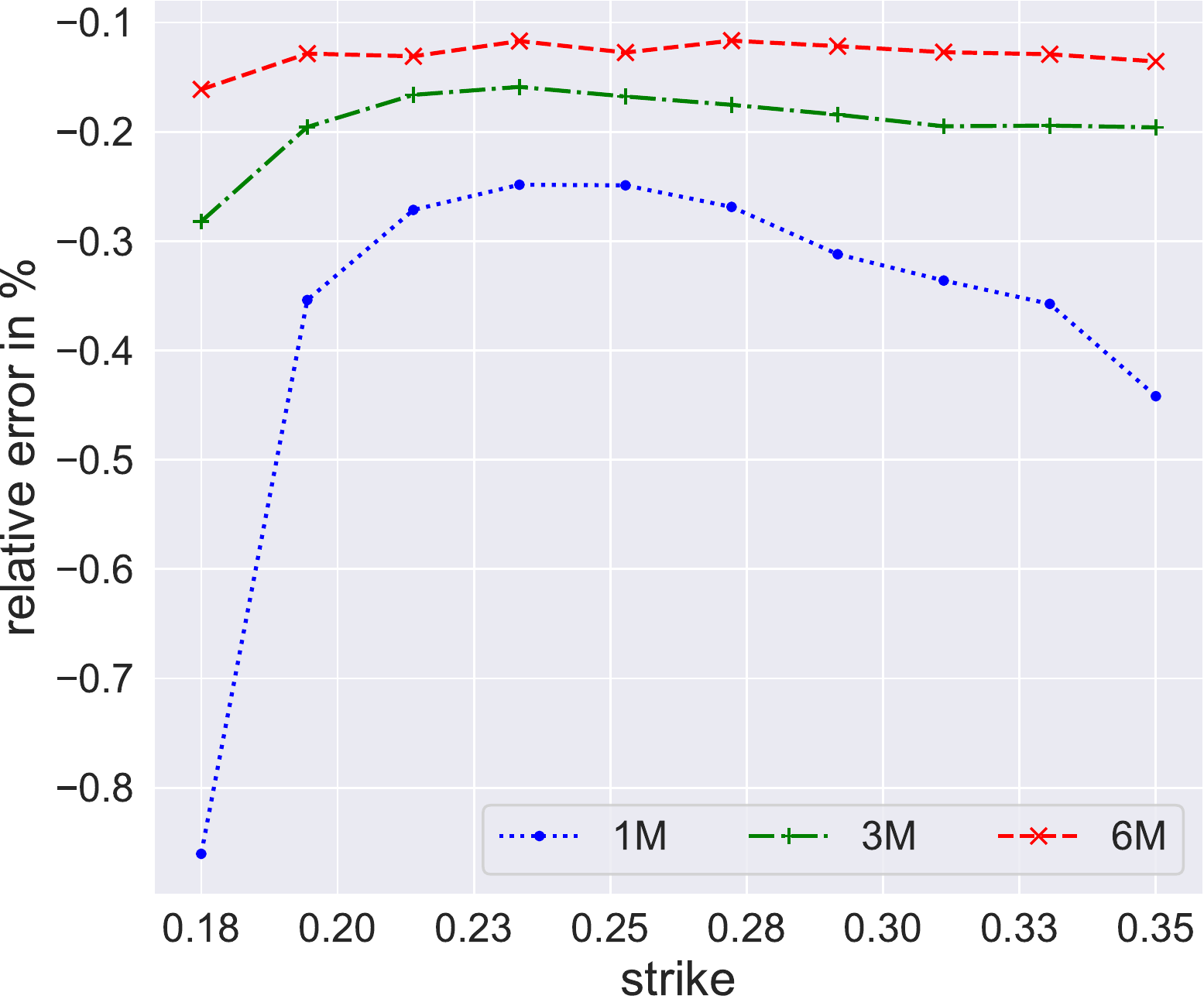}
\caption{$\VIX$ smiles in the mixed rough Bergomi model for $T=1,3,6$ months (left), and corresponding percentage relative errors between the reference  implied volatilities and their approximations  (right), for parameter scenarios $1$ (top figures) and $2$ (bottom figures).}
\end{figure}

In the mixed model as well, our approximation formula proves to be very accurate: the relative error is less than $1.6\%$ (in absolute value) for scenario $1$ and less than $0.9\%$ for scenario $2$.

\paragraph{Implied volatility for the mixed standard Bergomi model.}
We perform a similar numerical analysis for the mixed one-factor standard Bergomi model, obtained setting $K_{i}^{u}(t)=\omega_{i} \, e^{-k(u-t)}$, $i\in\{1,2\}$, in \eqref{eq:def:model:mixed}.
We set $\xi_0 = 0.2^2$ and $k = 1$. The other model parameters are given in Table \ref{tab:scenarios_mixed_Bergomi}.
\begin{table}[H]
	\begin{centering}
		\begin{tabular}{|c|c|c|c|}
			\hline 
			Scenario  & 1-month $\VIX$ futures  & 3-month $\VIX$ futures  & 6-month $\VIX$ futures
			\tabularnewline
			\hline 
			\hline 
			$3$  & $0.172764$  & $0.145976$  & $0.130503$ 
			\tabularnewline
			\hline 
			$4$  & $0.181527$  & $0.165480$  & $0.155141$ 
			\tabularnewline
			\hline 
\end{tabular}
\begin{tabular}{|c|c|c|c|c|}
		\hline 
		Scenario  & $\omega_{1}$ & $\omega_{2}$ & $\lambda$ 
		\tabularnewline
		\hline 
		\hline 
		$3$  & $0.5$ & $6$ & $0.3$ 
		\tabularnewline
		\hline 
		$4$  & $10$ & $2$ & $0.2$ 
		\tabularnewline
		\hline 
\end{tabular}
\par\end{centering}
\caption{Term structure of $\VIX$ futures
and model parameters for scenarios
$3$ and $4$ in the mixed standard Bergomi model.}
\label{tab:scenarios_mixed_Bergomi}
\end{table}

\begin{figure}[t]
	\includegraphics[scale=0.45]{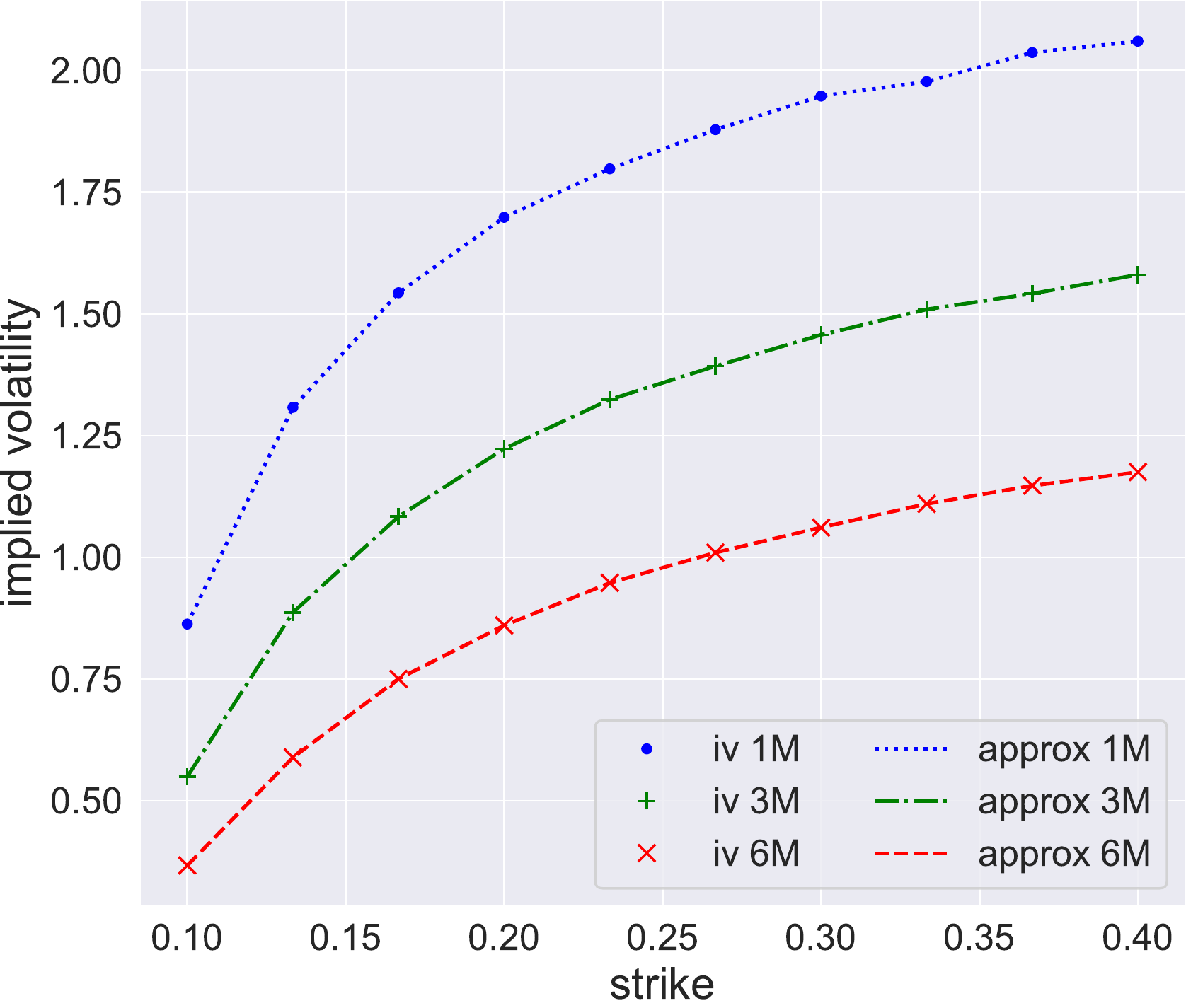}
	\includegraphics[scale=0.45]{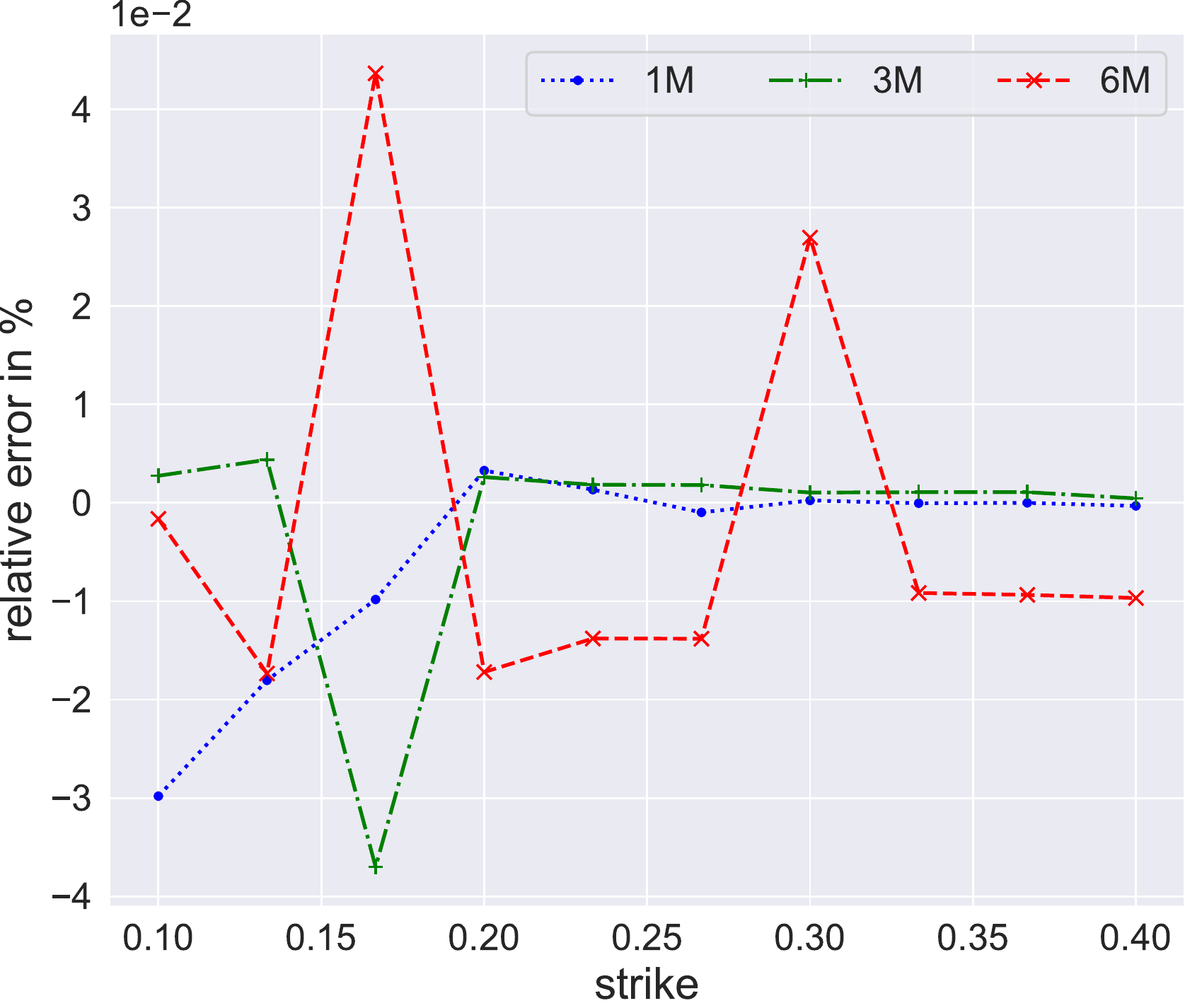}
	\\
	\includegraphics[scale=0.45]{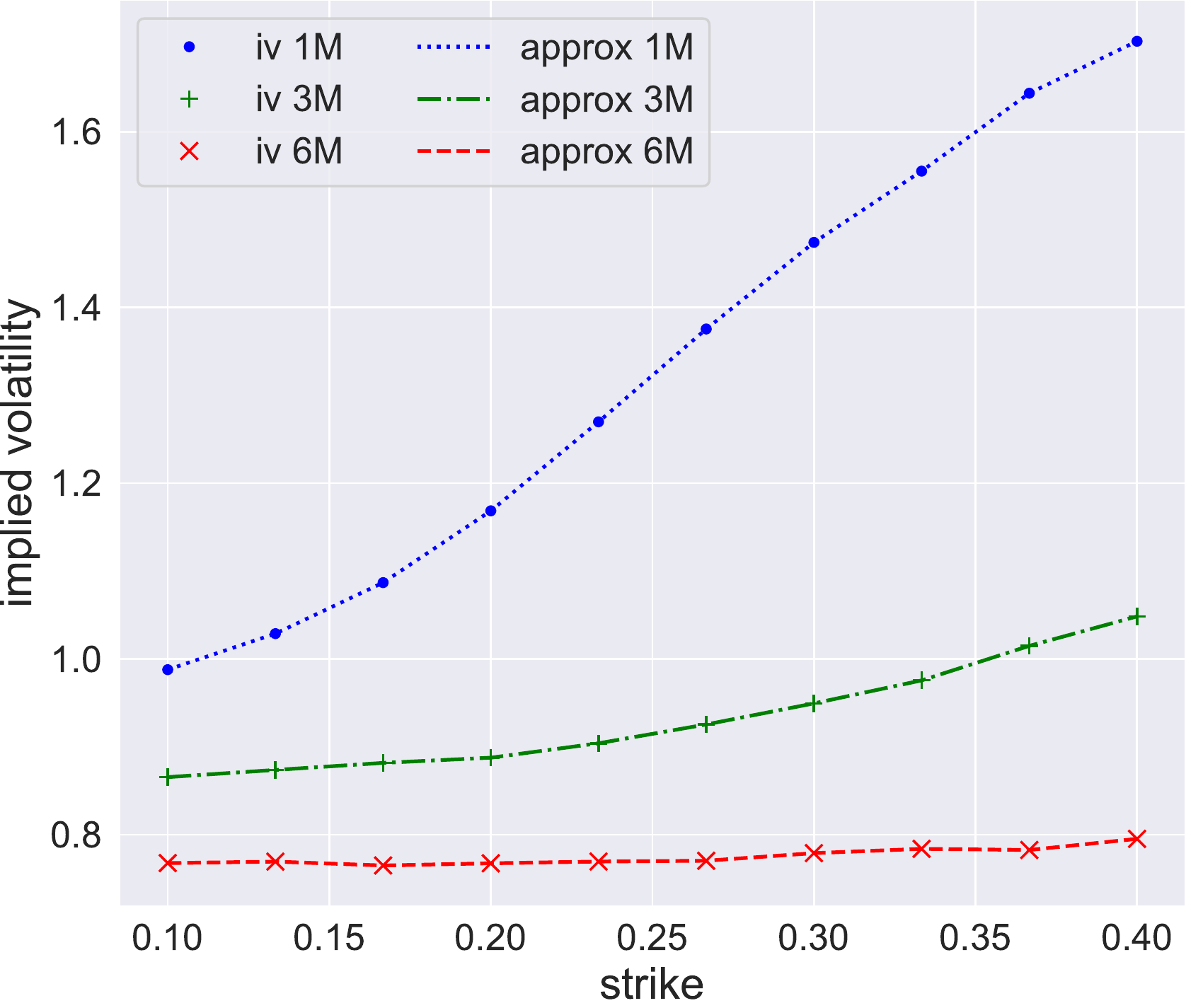}
	\includegraphics[scale=0.45]{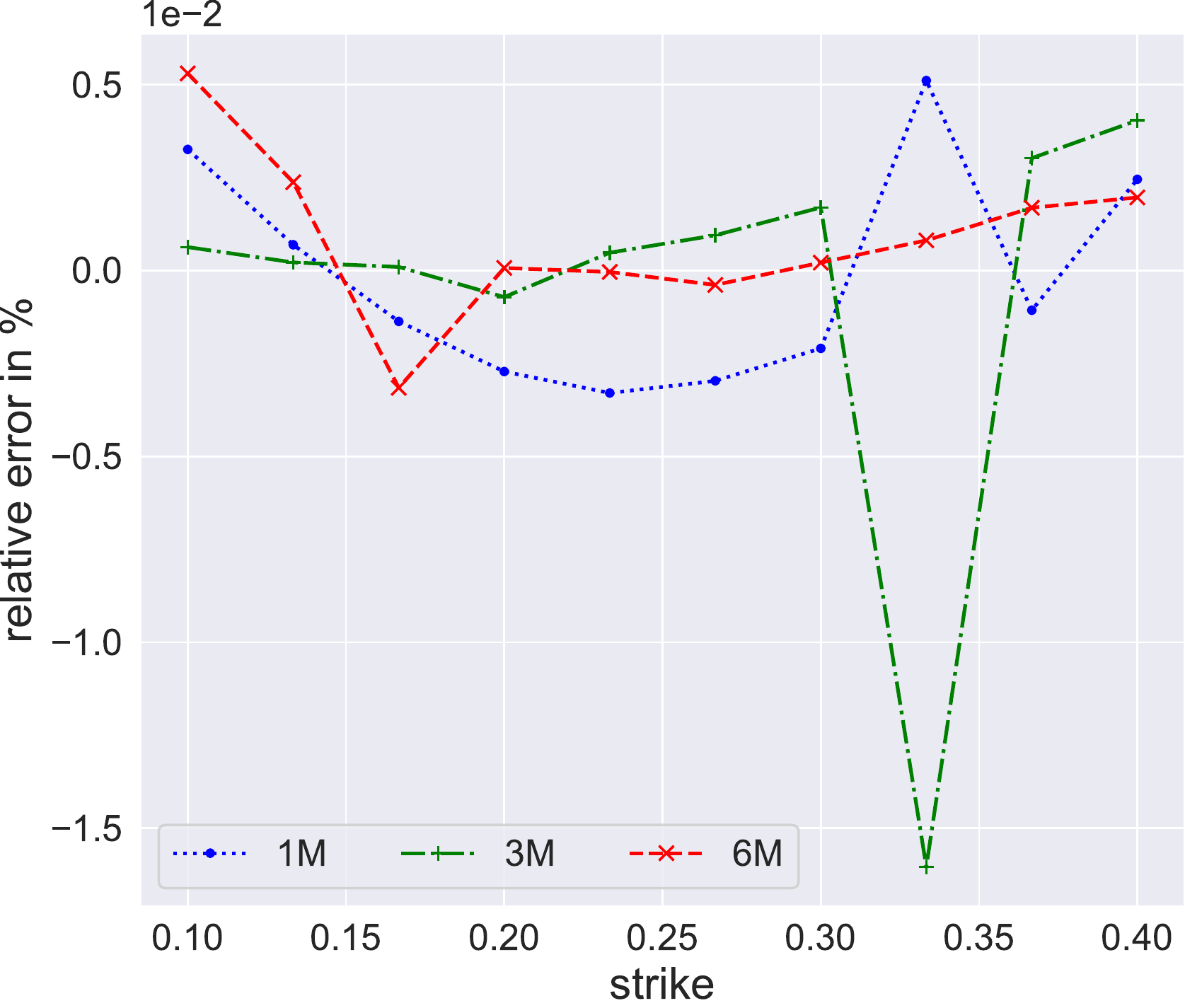}
	\caption{$\VIX$ smiles in the mixed standard Bergomi model for $T=1,3,6$ months (left figures), and relative errors between the reference implied volatilities and their approximations (right figures) for parameter scenarios $3$ (top figures) and $4$ (bottom figures).}
\end{figure}

Also, in this case, the approximation formula from Theorem \ref{thm:expansion:mixed:model}  turns out to be extremely accurate (relative errors are less than
$5 \times 10^{-2} \, \%$ for parameter scenario $3$ and less than $2 \times 10^{-2} \, \%$ for scenario $4$). 

\subsection{Calibration to VIX market data}
In this section, we perform a calibration test of the mixed rough Bergomi model and of the standard Bergomi model to market $\VIX$ smiles as of November 22, 2017, using our pricing formula \eqref{eq:expansion:mixed:model}.
We decide to set $H=0.1$ (resp.\ $k = 1$) for the mixed rough (resp.\ for the standard) Bergomi model and calibrate the other free parameters; of course it is also possible to calibrate the parameter $H$ (or $k$).

Let us describe the calibration procedure for the rough model.
The model contains the initial forward variance curve $\xi_{0}^{T}$ and the additional parameters $(\eta_1, \eta_2,\lambda)$. 
We can decide to use the variance curve to match the market  term structure of VIX futures exactly while using the other parameter to fit the smile of VIX options.
We consider the $n = 4$ shortest $\VIX$ futures quoted on the observation date, each associated to a maturity
$\left(T_{i}\right)_{i=1,\dots,n}$, with market values $(F_i)_{i=1,\dots,n}$.
Note that we can introduce a term structure also in the parameters $(\eta_1, \eta_2,\lambda)$ making them maturity-dependent and piece-wise constant between $T_i$ and $T_{i+1}$; the $i$-th VIX futures will be attached to its own parameter set $(\xi_{0}^{T_i},\eta_{1}^{T_i},\eta_{2}^{T_i},\lambda^{T_i})$.
We calibrate the model sequentially from the shortest to the largest futures maturity ; for each maturity, the procedure
follows two steps:
\begin{enumerate}
\item For given $(\eta_{1}^{T_i},\eta_{2}^{T_i},\lambda^{T_i})$, we set $\xi_{0}^{T_i}$ as the unique solution to $F_{i}=\PF \left(T_{i},\xi_{0}^{T_i},
\eta_{1}^{T_i},\eta_{2}^{T_i},\lambda^{T_i}\right)$
where $\PF(\cdot)$ corresponds to the approximate price
for VIX futures given by Theorem \ref{thm:expansion:mixed:model}. 
\item We then compute the $\VIX$ implied volatility smile using the value found for $\xi_{0}^{T_i}$ in the previous step and the values of the other parameters $(\eta_{1}^{T_i},\eta_{2}^{T_i},\lambda^{T_i})$. 
We evaluate the $L^{2}$ distance between the model implied volatility and the market implied volatility. Until we find a minimum, we go back to step $1.$ 
\end{enumerate}

\noindent In our tests, we used the function $\texttt{scipy.optimize.least\_squares}$ from the $\texttt{scipy}$ library \cite{virtanen2020scipy} in step 2.
The procedure is the same for the mixed standard Bergomi model, replacing $H$ with $k$ and $\eta_1, \eta_2$ with $\omega_1,\omega_2$.

\begin{figure}[t]
\includegraphics[scale=0.46]{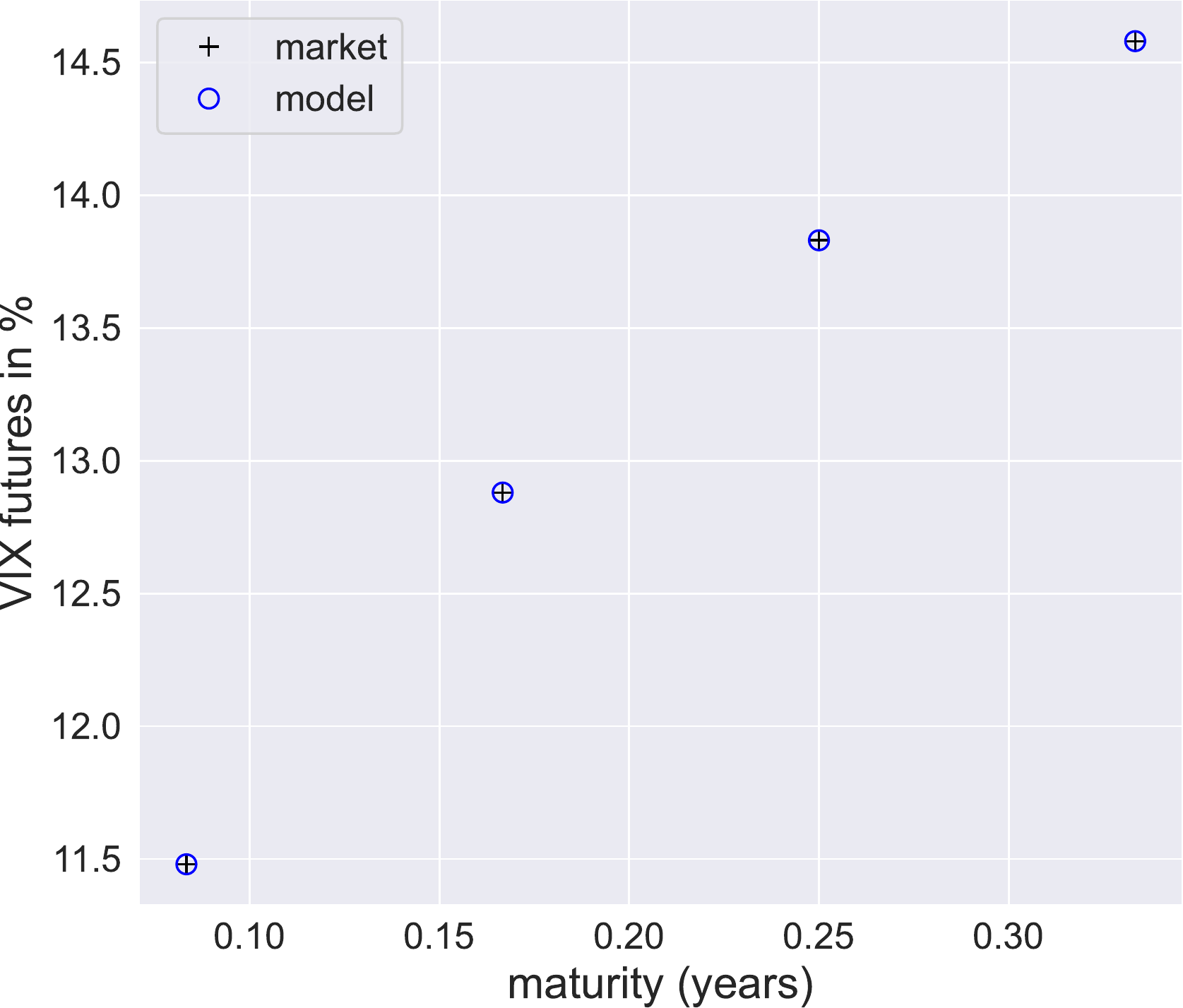}
\includegraphics[scale=0.46]{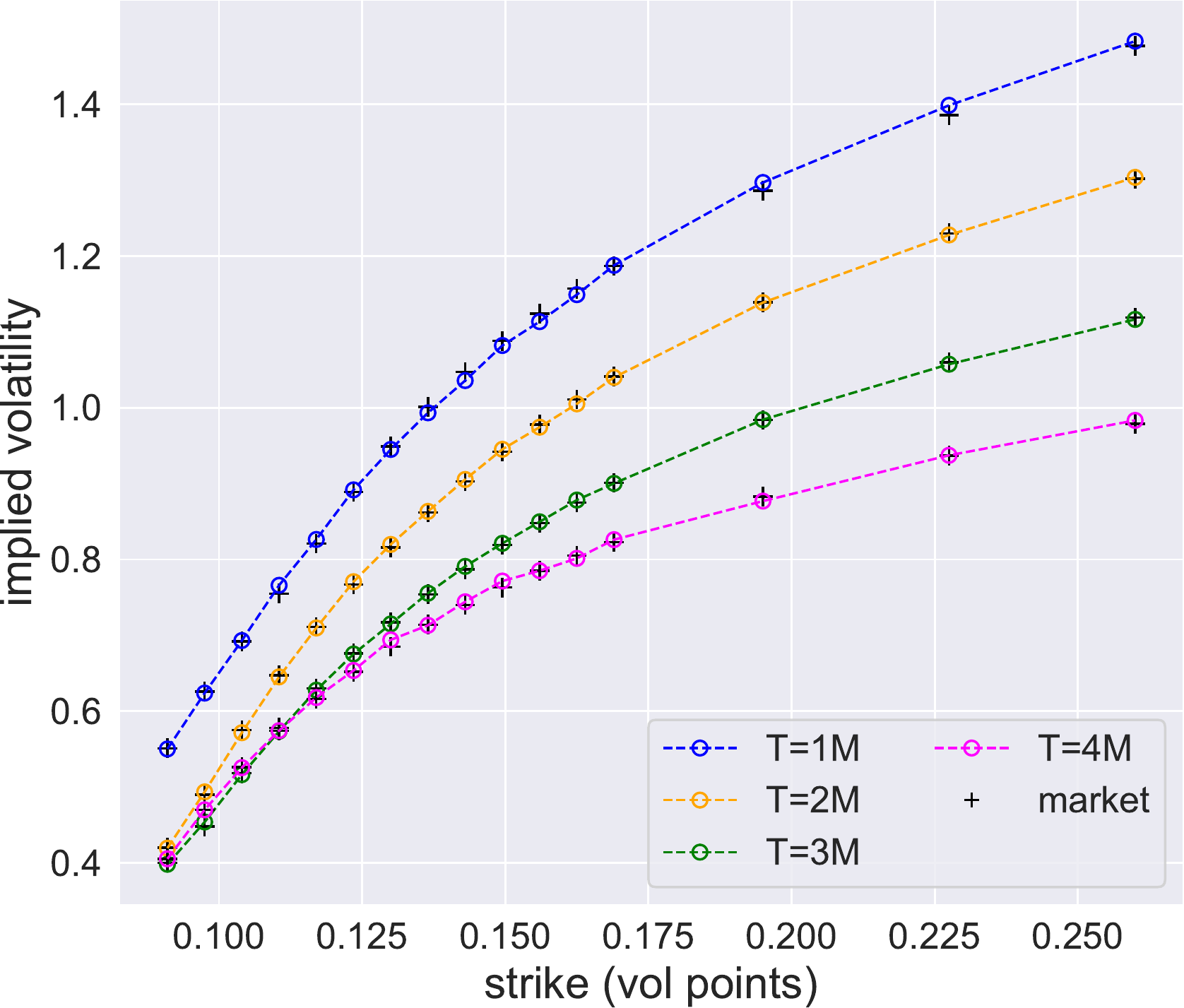}
\caption{\emph{Left}: term structure of $\VIX$ futures. \emph{Right}: market $\VIX$ smiles calibrated using our approximate price formulas in the mixed rough Bergomi model as of November 22, 2017.}
\label{fig:calib:vix:mixed}
\end{figure}

\begin{table}[h!]
\begin{centering}
\begin{tabular}{|c|c|c|c|c|}
\hline 
 & $T\in[\frac{1}{12},\frac{2}{12}[$  & $T\in[\frac{2}{12},\frac{3}{12}[$  & $T\in[\frac{3}{12},\frac{4}{12}[$  & $T\in[\frac{4}{12},\frac{5}{12}[$\tabularnewline
\hline 
\hline 
$\xi_{0}^{T}$  & $1.449 \times10^{-2}$  & $2.074 \times 10^{-2}$  & $2.543 \times 10^{-2}$  & $2.871 \times 10^{-2}$
\tabularnewline
\hline 
$\eta_{1}^{T}$  & $1.899$  & $1.887$  & $1.684$  & $1.410$
\tabularnewline
\hline 
$\eta_{2}^{T}$  & $0.1937$  & $0.1481$  & $0.1482$  & $0.1166$
\tabularnewline
\hline 
$\lambda^{T}$  & $0.3208$  & $0.4849$  & $0.5614$  & $0.6511$
\tabularnewline
\hline 
\end{tabular}
\par\end{centering}
\caption{Term structure for the calibrated parameters $\left(\xi_{0}^{T},\eta_{1}^{T},\eta_{2}^{T},\lambda^{T}\right)$ in the mixed rough Bergomi model as of November 22, 2017.}
\label{tab:calib:params:mixed}
\end{table}

\begin{figure}[t]
	\includegraphics[scale=0.46]{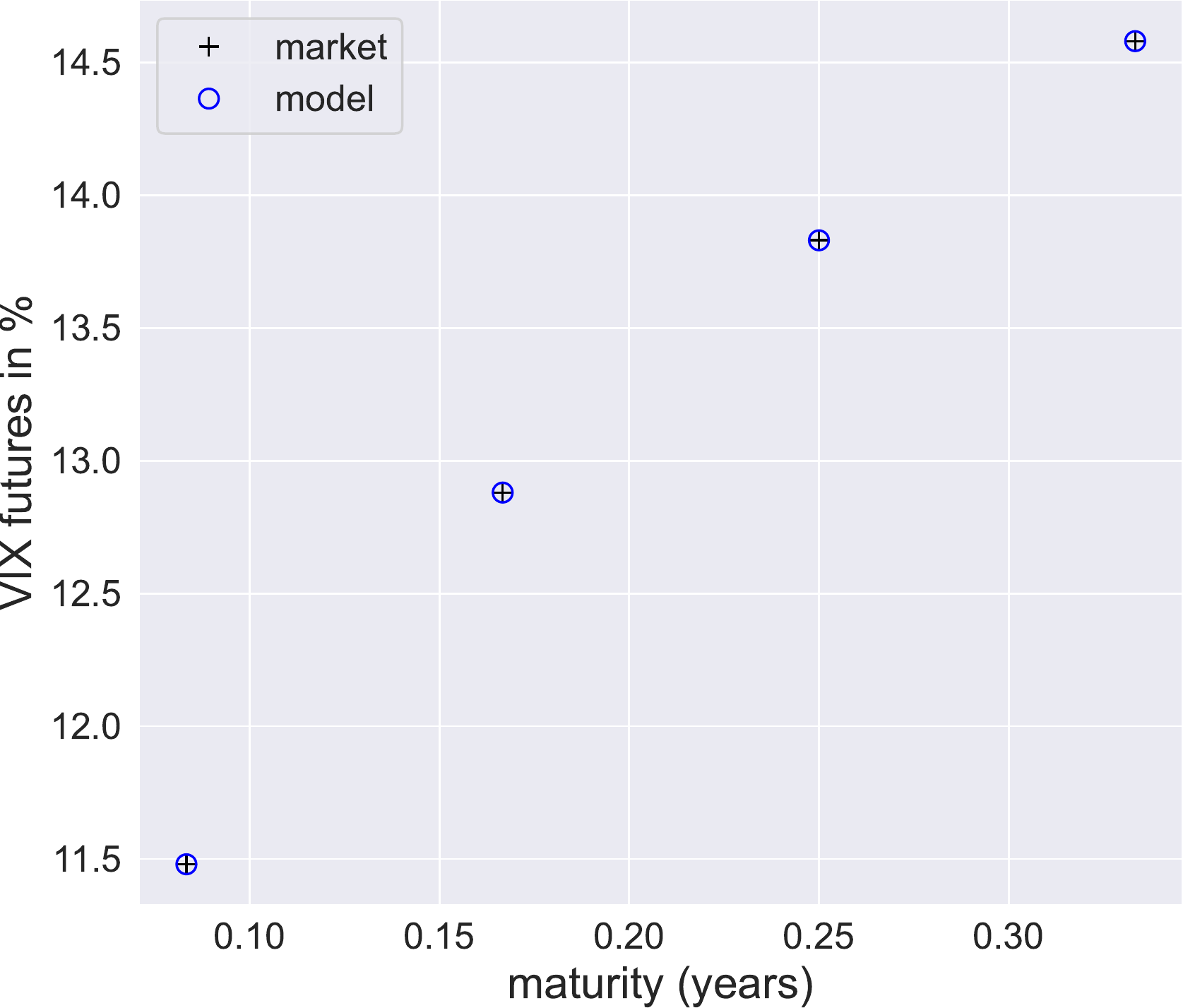}
	\includegraphics[scale=0.46]{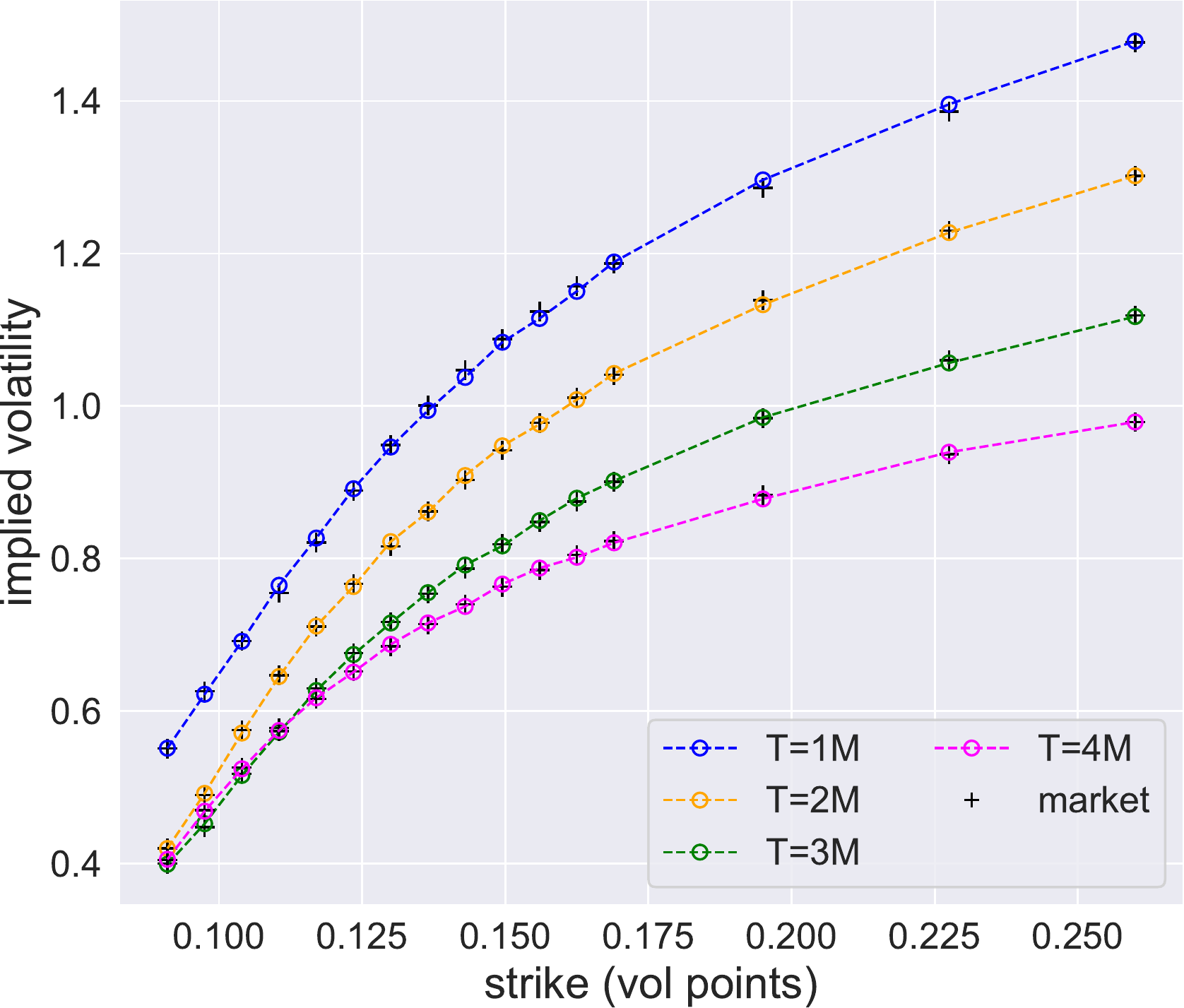}
	\caption{\emph{Left}: term structure of $\VIX$ futures. \emph{Right}: market $\VIX$ smiles calibrated using our approximate price formulas in the mixed standard Bergomi model as of November 22, 2017.}
	\label{fig:calib:vix:rough_mixed}
\end{figure}

\begin{table}[h!]
	\begin{centering}
		\begin{tabular}{|c|c|c|c|c|}
			\hline 
			& $T\in[\frac{1}{12},\frac{2}{12}[$  & $T\in[\frac{2}{12},\frac{3}{12}[$  & $T\in[\frac{3}{12},\frac{4}{12}[$  & $T\in[\frac{4}{12},\frac{5}{12}[$\tabularnewline
			\hline 
			\hline 
			$\xi_{0}^{T}$  & $1.445 \times 10^{-2}$  & $2.065 \times10^{-2}$  & $2.533 \times10^{-2}$  & $2.862 \times10^{-2}$\tabularnewline
			\hline 
			$\omega_{1}^{T}$  & $6.1970$  & $5.3118$  & $4.5273$  & $3.6860$ \tabularnewline
			\hline 
			$\omega_{2}^{T}$  & $0.6586$  & $0.4301$  & $0.4238$  & $0.3226$ \tabularnewline
			\hline 
			$\lambda^{T}$  & $0.3021$  & $0.4790$  & $0.5497$  & $0.6426$ \tabularnewline
			\hline 
		\end{tabular}
		\par\end{centering}
	\caption{Term structure for the calibrated parameters $\left(\xi_{0}^{T}, \omega_{1}^{T}, \omega_{2}^{T}, \lambda^{T}\right)$  in the mixed 
		standard Bergomi model as of November 22, 2017.}
	\label{tab:calib:params:mixed_rough}
\end{table}

The calibration of the two models based on our approximate price formula
proves to be very efficient and fast.
For the standard Bergomi model, in our tests the overall procedure is $3.5$ times faster than the calibration based on option pricing with a two-dimensional quadrature.
More precisely, starting from the initial guess $\xi_{0}^T = 2 \times 10^{-2}$, 
$\omega_1^T = 1.5$, $\omega_{2}^T = 0.5$, $\lambda^T = 0.5$, and using quadratures with $120$ nodes for the space and time integration,
the calibration of the whole implied volatility surface takes about $10$ seconds using our approximation formulas
as opposed to $36$ seconds when using the two-dimensional quadrature. 
In the mixed rough Bergomi model, the ratio of the two calibration times becomes much higher.
Starting from the same initial guess (replacing $\omega_{1}, \omega_{2}$ with $\eta_{1}, \eta_{2}$)
and using again  Gaussian quadrature with $120$ nodes in our approximation formulas, the calibration shown in Figure \ref{fig:calib:vix:mixed} also took about $10$ seconds.
On the contrary, we chose not to perform any calibration test with the Monte Carlo pricing procedure  described in Remark \ref{rem:benchmark:prices}, as the calibration time is likely to be over an hour in this case.
For reference, a single $\VIX$ futures pricing takes about $10$ seconds with $10^6$ Monte Carlo samples and $300$ discretization points.

\section{Conclusion\label{sec:conclusion}}

We have analyzed the accuracy of weak approximations for the $\VIX$ based on log-normal random variables in different classes of forward variance curve models, including basic and mixed Bergomi models with different convolution kernels. We have provided expansion results for $\VIX$ derivatives that are accurate, showing that the resulting approximation formulas can be used for fast and efficient calibration of the $\VIX$ implied volatility surface.

\section{Proofs\label{sec:proofs}}

The following estimate is standard.
\begin{lemma} \label{lem:inequality:power:sum}For
every $n\in\bN^{*}$ and $\bm{x}=\left(x_{1},\dots,x_{n}\right)\in\bR^{n}$,
the following inequality holds 
\[
\left|\sum_{i=1}^{n}x_{i}\right|^{p}\leq\begin{cases}
\sum_{i=1}^{n}\left|x_{i}\right|^{p} & \text{if }p\in(0,1],\\
n^{p-1}\sum_{i=1}^{n}\left|x_{i}\right|^{p} & \text{if }p>1.
\end{cases}
\]
\end{lemma}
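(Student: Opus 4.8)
The plan is to reduce everything to two standard facts about the function $t\mapsto t^{p}$ on $[0,\infty)$: subadditivity when $p\in(0,1]$ and convexity (equivalently, the power--mean inequality) when $p>1$. In both regimes I would first discard the signs by combining the triangle inequality with the monotonicity of $t\mapsto t^{p}$ on $[0,\infty)$: since $\bigl|\sum_{i=1}^{n}x_{i}\bigr|\le\sum_{i=1}^{n}|x_{i}|$ and $t\mapsto t^{p}$ is nondecreasing for every $p>0$, we have
\[
\Bigl|\sum_{i=1}^{n}x_{i}\Bigr|^{p}\le\Bigl(\sum_{i=1}^{n}|x_{i}|\Bigr)^{p},
\]
so it suffices to bound $\bigl(\sum_{i}|x_{i}|\bigr)^{p}$. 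Writing $y_{i}:=|x_{i}|\ge0$ and $S:=\sum_{i=1}^{n}y_{i}$, the problem becomes a purely nonnegative statement.

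\textbf{Case $p\in(0,1]$.} If $S=0$ the inequality is trivial, so assume $S>0$. The crux is the subadditivity $S^{p}\le\sum_{i=1}^{n}y_{i}^{p}$, which I would derive from the elementary bound $t^{p}\ge t$ valid for $t\in[0,1]$ and $p\in(0,1]$. Applying it to $t=y_{i}/S\in[0,1]$ gives $(y_{i}/S)^{p}\ge y_{i}/S$, and summing over $i$ yields $\sum_{i=1}^{n}(y_{i}/S)^{p}\ge\sum_{i=1}^{n}y_{i}/S=1$, i.e.\ $\sum_{i=1}^{n}y_{i}^{p}\ge S^{p}$. Combined with the sign reduction above, this settles the range $p\in(0,1]$.

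\textbf{Case $p>1$.} Here I would invoke the convexity of $t\mapsto t^{p}$ through Jensen's inequality with uniform weights $1/n$:
\[
\Bigl(\frac{1}{n}\sum_{i=1}^{n}y_{i}\Bigr)^{p}\le\frac{1}{n}\sum_{i=1}^{n}y_{i}^{p}.
\]
Multiplying both sides by $n^{p}$ gives $\bigl(\sum_{i=1}^{n}y_{i}\bigr)^{p}\le n^{p-1}\sum_{i=1}^{n}y_{i}^{p}$, which together with the sign reduction completes the proof. Equivalently, one may apply H\"{o}lder's inequality to $\sum_{i}y_{i}\cdot1$ with conjugate exponents $p$ and $p/(p-1)$, obtaining the factor $n^{1-1/p}$ and hence $n^{p-1}$ after raising to the power $p$.

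There is essentially no serious obstacle here: the only points requiring a little care are the degenerate case $S=0$ in the first part and the correct direction of the monotonicity/sign reduction, both of which are immediate. The only conceptual "hard part," if one can call it that, is recognizing that the appropriate tool switches from subadditivity to convexity precisely at $p=1$ -- which is exactly why the constant jumps from $1$ to $n^{p-1}$.
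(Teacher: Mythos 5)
Your proof is correct: the reduction to nonnegative $y_i$ via the triangle inequality and monotonicity of $t\mapsto t^p$, the subadditivity argument from $t^p\ge t$ on $[0,1]$ for $p\in(0,1]$, and the Jensen (or H\"{o}lder) argument for $p>1$ are all sound. The paper itself gives no proof, dismissing the lemma as standard, and your argument is exactly the standard one it has in mind, so there is nothing to compare beyond noting that you have supplied the details the authors chose to omit.
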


\subsection{Proof of Proposition \ref{prop:estimate:YT:meanYT}}

From \eqref{eq:def:YT}, we have  
\begin{equation}
Y_{T}^{u}-\nu_{0}\left(\Ydot T\right)=-\frac{1}{2}\int_{0}^{T}\left(K^{u}\left(t\right)^{2}-\nu_{0}\left(\Kdot\left(t\right)^{2}\right)\right)\dd t+\int_{0}^{T}\left(K^{u}\left(t\right)-\nu_{0}\left(\Kdot\left(t\right)\right)\right)\dd W_{t}\,,
\label{eq:Y-nu0Y}
\end{equation}
for every $u\in[T,T+\Delta]$.
Estimate \eqref{eq:Lpq:Y} can be obtained with an application of Minkowski and Burkholder--Davis--Gundy inequalities,
Lemma \ref{lem:inequality:power:sum}, and conditions \eqref{eq:assu:drift} and \eqref{eq:assu:diffusion}.
\qed

\subsection{Proof of Theorem \ref{thm:estimate:nue:nueproxy}}

Let $p\geq1$ and $\ve\in[0,1]$.
Starting from \eqref{eq:I:nth:derivative} and repeatedly applying the generalized Minkowski and H\"{o}lder inequalities, we have 
\begin{align}
 & \frac{\bigl\Vert I^{(n)}(\ve)\bigr\Vert_{p}}{\nu(\xidot0)}\leq\int_{\cA}\left\Vert \left(Y_{T}^{u}-\nu_{0}\left(\Ydot T\right)\right)^{n}e^{\ve\left(Y_{T}^{u}-\nu_{0}\left(\Ydot T\right)\right)+\nu_{0}\left(\Ydot T\right)}\right\Vert _{p}\nu_{0}\left(\dd u\right)\nonumber \\
 & \qquad\leq\int_{\cA}\left\Vert Y_{T}^{u}-\nu_{0}\left(\Ydot T\right)\right\Vert _{3np}^{n}\times\left\Vert e^{\ve\left(Y_{T}^{u}-\nu_{0}\left(\Ydot T\right)\right)}\right\Vert _{3p}\times\left\Vert e^{\nu_{0}\left(\Ydot T\right)}\right\Vert _{3p}\nu_{0}\left(\dd u\right)\nonumber \\
 & \qquad\leq\bigl\Vert e^{\nu_{0}\left(\Ydot T\right)}\bigr\Vert_{3p}\Bigl(\int_{\cA}\bigl\Vert Y_{T}^{u}-\nu_{0}\left(\Ydot T\right)\bigr\Vert_{3np}^{2n}\nu_{0}\left(\dd u\right)\Bigr)^{\frac{1}{2}}\Bigl(\int_{\cA}\bigl\Vert e^{\ve\left(Y_{T}^{u}-\nu_{0}\left(\Ydot T\right)\right)}\bigr\Vert_{3p}^{2}\nu_{0}\left(\dd u\right)\Bigr)^{\frac{1}{2}}.\label{eq:proof:norm:In}
\end{align}
From \eqref{eq:def:proxy} and \eqref{eq:mean:variance:proxy}, we
have $\bigl\Vert e^{\nu_{0}(\Ydot T)}\bigr\Vert_{p}\leq e^{\frac{p\sigmap^{2}}{2}}$
which is uniformly bounded in $\Delta$ thanks to \eqref{eq:assu:limit:variance:proxy}.
Using\eqref{eq:assu:drift} and the boundedness of $u\mapsto\xi_{0}^{u}$, we have 
\[
\bigl\Vert e^{\ve(Y_{T}^{u}-\nu_{0}(\Ydot T))}\bigr\Vert_{p}=e^{-\frac{\ve}{2}\int_{0}^{T}(K^{u}(t)^{2}-\nu_{0}(\Kdot(t)^{2}))\dd t}e^{\frac{p\ve^{2}}{2}\int_{0}^{T}(K^{u}(t)-\nu_{0}(\Kdot(t)))^{2}\dd t}\leq C_{1}e^{C_{2}(\int_{0}^{T}K^{u}(t)^{2}\dd t+\sigmap^{2})},
\]
for some positive constants $C_{1},C_{2}$. Integrating with respect to $u$ and applying \eqref{eq:assu:kernel:exp:integral:u} and \eqref{eq:assu:diffusion},
we infer that the third factor on the right-hand side of \eqref{eq:proof:norm:In}
is uniformly bounded in $\Delta$. Finally, applying Proposition \ref{prop:estimate:YT:meanYT}
with $q=2n$, we see that the second factor on the right-hand side of \eqref{eq:proof:norm:In} is bounded
by $\Delta^{(d_{1}\wedge\frac{d_{2}}{2})n}$ up to a multiplying constant,
and we obtain the estimate \eqref{eq:Lp:I:nth:derivative}.

We now consider estimate \eqref{eq:Lp:exp:exp:proxy}. Applying the generalized Minkowski inequality to \eqref{eq:taylor:I1:I0}, we obtain
\[
\left\Vert \VIX_{T}^{2}-\VIX_{T,{\rm P}}^{2}\right\Vert _{p}\leq\int_{0}^{1}\left(1-\ve\right)\left\Vert I^{\left(2\right)}\left(\ve\right)\right\Vert _{p}\dd\ve\le\sup_{\ve\in[0,1]}\left\Vert I^{\left(2\right)}\left(\ve\right)\right\Vert _{p},
\]
and we can conclude applying \eqref{eq:Lp:I:nth:derivative} with $n=2.$
\qed

\subsection{Proof of Theorem \ref{thm:expansion:plain:model}} \label{sec:proof_main_thm}

Recall that $X_{T}^{u}=\ln\left(\xi_{T}^{u}\right),$ $Y_{T}^{u}=X_{T}^{u}-X_{0}^{u}$,
$\VIX_{T}^{2}=\nu(\xidot0)\nu_{0}(e^{Y_{T}^{\cdot}})$, $\VIX_{T,{\rm P}}^{2}=\nu(\xidot0)e^{\nu_{0}\left(\Ydot T\right)}$,
and let $\left[\varphi\right]_{\Hol}$ be the H\"{o}lder coefficient of
$\varphi$.

We are inspired by the techniques of \cite{bompis:gobet:2018}, where  the authors provide analytical approximations of option prices with Lipschitz payoffs in a  local-Heston volatility model.
In our case, to alleviate the possible lack
of smoothness of $\varphi$ and to overcome some degeneracy problems in the Malliavin sense (see later for more details), we introduce the Gaussian regularization $x\mapsto\varphi_{\delta}(x)=\bE\left[\varphi(x+\delta B_{T})\right]$ with a positive parameter $\delta$ defined by 
\begin{equation}
\delta:=\Delta^{\frac{3}{\theta}(d_{1}\wedge\frac{d_{2}}{2})},\label{eq:condition:on:delta}
\end{equation}
and where $B$ is a standard Brownian motion independent of $W$.
Note that 
\begin{align}
\varphi_{\sqrt{2}\delta}(x)=\bE\left[\varphi(x+\delta B_{T}+\delta\tilde{B}_{T})\right]=\bE\left[\varphi_{\delta}\left(x+\delta B_{T}\right)\right],\label{eq:convolution}
\end{align}
where $\tilde{B}$ is another standard Brownian motion such that $B,\tilde{B},W$ are all independent.  The functions
$x\mapsto\varphi_{\delta}(x)$ and $x\mapsto\varphi_{\sqrt{2}\delta}(x)$
are smooth ($\cC^{\infty}\left(\bR\right)$). Applying Taylor's
theorem with integral remainder to $\varphi_{\sqrt{2}\delta}$ at
the points $\VIX_{T}^{2}$ and $\VIX_{T,{\rm P}}^{2}$, one gets
\begin{align}
\bE\bigl[\varphi_{\sqrt{2}\delta}\left(\VIX_{T}^{2}\right)\bigr]
& =\bE\bigl[\varphi_{\sqrt{2}\delta}\left(\VIX_{T,{\rm P}}^{2}\right)\bigr]
+ \bE\Bigl[\varphi_{\sqrt{2}\delta}^{\prime}\left(\VIX_{T,{\rm P}}^{2}\right)\left(\VIX_{T}^{2}-\VIX_{T,{\rm P}}^{2}\right)\Bigr] + E_{0}\bigl(\varphi_{\sqrt{2}\delta}^{\prime \prime}\bigr) \,, \label{eq:taylor:phi}
\\
E_{0}\bigl(\varphi_{\sqrt{2}\delta}^{\prime \prime}\bigr) & :=\int_{0}^{1}\left(1-\lambda\right)\bE\bigl[\varphi_{\sqrt{2}\delta}^{\prime \prime}\left(\lambda\VIX_{T}^{2}+\left(1-\lambda\right)\VIX_{T,{\rm P}}^{2}\right)\left(\VIX_{T}^{2}-\VIX_{T,{\rm P}}^{2}\right)^{2}\bigr]\dd\lambda.\nonumber 
\end{align}
Recall that, from \eqref{eq:taylor:I1:I0} and \eqref{eq:I:nth:derivative}, we have the representation formula
\begin{align*}
\VIX_{T}^{2}-\VIX_{T,{\rm P}}^{2}
=
\frac{1}{2} \VIX_{T,{\rm P}}^{2} \int_{\cA}\left(Y_{T}^{u}-\nu_{0}\left(\Ydot T\right)\right)^{2}\nu_{0}\left(\dd u\right)
+ \int_{0}^{1}\frac{\left(1-\ve\right)^{2}}{2}I^{(3)}\left(\ve\right)\dd\ve.
\end{align*}
Now, introducing the new function 
\[
\Psi\left(x\right):=\varphi_{\sqrt{2}\delta}\left(\nu\left(\xidot0\right)e^{x}\right),
\]
we have 
\begin{align*}
\VIX_{T,{\rm P}}^{2} \, \varphi_{\sqrt{2}\delta}^{\prime}\left(\VIX_{T,{\rm P}}^{2}\right) 
& = \nu\left(\xidot0\right)e^{\nu_{0}\left(\Ydot T\right)}\varphi_{\sqrt{2}\delta}^{\prime}\left(\nu\left(\xidot0\right)e^{\nu_{0}\left(\Ydot T\right)}\right)
=\Psi^{\prime}\left(\nu_{0}\left(\Ydot T\right)\right),
\end{align*}
and the second expectation on the right-hand side of \eqref{eq:taylor:phi} can be rewritten as
\begin{multline}
\bE\Bigl[\varphi_{\sqrt{2}\delta}^{\prime}\left(\VIX_{T,{\rm P}}^{2}\right)\left(\VIX_{T}^{2}-\VIX_{T,{\rm P}}^{2}\right)\Bigr]
\\
= \int_{\cA}
\bE\Bigl[\Psi^{\prime} \left(\nu_{0}\left(\Ydot T\right)\right) \frac{1}{2}\left(Y_{T}^{u}-\nu_{0}\left(\Ydot T\right)\right)^{2}\Bigr] \nu_{0}\left(\dd u\right)
+ 
E_{1}\bigl(\varphi_{\sqrt{2}\delta}'\bigr)\label{eq:dphi:nu:moins:nuexp}\,,
\end{multline}
where $E_{1}\bigl(\varphi_{\sqrt{2}\delta}'\bigr):=\bE\left[\varphi_{\sqrt{2}\delta}'\left(\VIX_{T,{\rm P}}^{2}\right)\int_{0}^{1}\frac{\left(1-\ve\right)^{2}}{2}I^{(3)}\left(\ve\right)\dd\ve\right]$.

We are going to further manipulate the expectation term appearing inside the integral $\int_{\cA} \nu_{0}\left(\dd u\right)$ on the right-hand side of \eqref{eq:dphi:nu:moins:nuexp}.
Recalling \eqref{eq:Y-nu0Y}, an application of It\^{o}'s formula to the process $\left(Y_{t}^{u}-\nu_{0}\left(\Ydot t\right)\right)^{2}$, $0 \le t \le T$, for fixed $u$ yields
\be \label{e:Ito_Y}
\begin{aligned}
\frac{1}{2}\left(Y_{T}^{u}-\nu_{0}\left(\Ydot T\right)\right)^{2} 
&= 
\int_0^T \bigl[ Y_{t}^{u}-\nu_{0}\left(\Ydot t\right) \bigr] \dd \bigl( Y_{t}^{u}-\nu_{0}\left(\Ydot t\right)  \bigr)
+ \frac 12 \int_0^T \dd \langle Y_{t}^{u}-\nu_{0}\left(\Ydot t\right) \rangle
\\
&= \int_{0}^{T}\left[\int_{0}^{t}\left(-\frac{1}{2}K^{u}\left(s\right)^{2}+\frac{1}{2}\nu_{0}\left(\Kdot\left(s\right)^{2}\right)\right)\dd s+\left(K^{u}\left(s\right)-\nu_{0}\left(\Kdot\left(s\right)\right)\right)\dd W_{s}\right]
\\
 & \qquad\times\left(\left[-\frac{1}{2}K^{u}\left(t\right)^{2}+\frac{1}{2}\nu_{0}\left(\Kdot\left(t\right)^{2}\right)\right]\dd t+\left[K^{u}\left(t\right)-\nu_{0}\left(\Kdot\left(t\right)\right)\right]\dd W_{t}\right)
\\
 & \quad +\int_{0}^{T}\frac{1}{2}\left[K^{u}\left(t\right)-\nu_{0}\left(\Kdot\left(t\right)\right)\right]^{2}\dd t.
\end{aligned}
\ee
In light of \eqref{e:Ito_Y}, the expectation $\bE\Bigl[\Psi'\left(\nu_{0}\left(\Ydot T\right)\right)\frac{1}{2}\left(Y_{T}^{u}-\nu_{0}\left(\Ydot T\right)\right)^{2}\Bigr]$ we want to evaluate is equal to
\begin{equation}\label{eq:proof:psi:prime}
	\bE\Bigl[\Psi'\left(\nu_{0}\left(\Ydot T\right)\right)\int_{0}^{T}\bigl[Y_{t}^{u}-\nu_{0}\left(\Ydot t\right)\bigr]\dd\bigl(Y_{t}^{u}-\nu_{0}\left(\Ydot t\right)\bigr)\Bigr]
+\frac{1}{2}\int_{0}^{T}  \bE\left[\Psi^{\prime}(\nu_{0}\left(\Ydot T\right))\right]
\dd\langle Y_{t}^{u}-\nu_{0}\left(\Ydot t\right)\rangle \,,
\end{equation}
since the quadratic variation $\langle Y_{t}^{u}-\nu_{0}\left(\Ydot t\right)\rangle = \int_{0}^{T}\frac{1}{2}\left[K^{u}\left(t\right)-\nu_{0}\left(\Kdot\left(t\right)\right)\right]^{2}\dd t$ is deterministic.
The expectation $ \bE\left[\Psi^{\prime}(\nu_{0}\left(\Ydot T\right))\right]$ gives rise to the first derivative of a Black-Scholes price, the variable $\nu_{0}\left(\Ydot T\right)$ being Gaussian.
Our goal is to give a similar formulation for the first expectation in \eqref{eq:proof:psi:prime}, too.
To do so, we recall a formula for the integration by parts of functions of iterated Wiener integrals, under a form that is suitable for our purposes.
\begin{lemma}[{\cite[Lemma A.2.]{gobet2014weak}}]
\label{lem:integration:by:parts}Let $a_{t},e_{t},f_{t},g_{t},h_{t}:\left[0,T\right]\to\bR$
be square-integrable and deterministic processes, and let $l:\bR \to \bR$ be a bounded
smooth function with bounded derivatives. Then, 
\begin{multline*}
\bE\left[l\left(\int_{0}^{T}a_{t}\dd W_{t}\right)\left(\int_{0}^{T}\left[\int_{0}^{t}g_{s}\dd s+h_{s}\dd W_{s}\right]\left(e_{t}\dd t+f_{t}\dd W_{t}\right)\right)\right]\\
=\sum_{i=0}^{2}\lambda_{i} \, \partial_{\ve}^{i}\left.\bE\left[l\left(\int_{0}^{T}a_{t}\dd W_{t}+\ve\right)\right]\right|_{\ve=0}
\end{multline*}
where 
\[
\lambda_{0}=\int_{0}^{T}\int_{0}^{t}e_{t}g_{s}\dd s\dd t,\quad\lambda_{1}=\int_{0}^{T}\int_{0}^{t}\left(g_{s}a_{t}f_{t}+e_{t}a_{s}h_{s}\right)\dd s\dd t,\quad\lambda_{2}=\int_{0}^{T}\int_{0}^{t}a_{t}f_{t}a_{s}h_{s}\dd s\dd t.
\]
\end{lemma}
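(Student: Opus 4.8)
The plan is to exploit that all the coefficient processes are deterministic, so that the Gaussian variable $G:=\int_{0}^{T}a_{t}\,\dd W_{t}$ and the martingale part of the inner process are jointly Gaussian, and to reduce the claim to iterated Gaussian integration by parts. First I would split the inner integrand by writing $M_{t}:=\int_{0}^{t}g_{s}\,\dd s+\int_{0}^{t}h_{s}\,\dd W_{s}=D_{t}+N_{t}$, where $D_{t}:=\int_{0}^{t}g_{s}\,\dd s$ is deterministic and $N_{t}:=\int_{0}^{t}h_{s}\,\dd W_{s}$ is a centered Gaussian martingale. Expanding the product $M_{t}(e_{t}\,\dd t+f_{t}\,\dd W_{t})$ then decomposes the quantity inside the expectation into four pieces: a purely deterministic term $\int_{0}^{T}D_{t}e_{t}\,\dd t$, two ``first order'' terms $\int_{0}^{T}D_{t}f_{t}\,\dd W_{t}$ and $\int_{0}^{T}N_{t}e_{t}\,\dd t$ that are centered Gaussian, and one ``second order'' term $\int_{0}^{T}N_{t}f_{t}\,\dd W_{t}$ lying in the second Wiener chaos.

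The deterministic piece factors out of the expectation and contributes $\bigl(\int_{0}^{T}\int_{0}^{t}e_{t}g_{s}\,\dd s\,\dd t\bigr)\bE[l(G)]=\lambda_{0}\,\bE[l(G)]$. For the two first-order pieces I would invoke the elementary Gaussian (Stein) identity $\bE[l(G)Y]=\Cov(G,Y)\,\bE[l'(G)]$, valid whenever $Y$ is centered and jointly Gaussian with $G$. Computing the covariances by the It\^o isometry gives $\Cov\bigl(G,\int_{0}^{T}D_{t}f_{t}\,\dd W_{t}\bigr)=\int_{0}^{T}a_{t}f_{t}\bigl(\int_{0}^{t}g_{s}\,\dd s\bigr)\dd t$ and $\Cov\bigl(G,\int_{0}^{T}N_{t}e_{t}\,\dd t\bigr)=\int_{0}^{T}e_{t}\bigl(\int_{0}^{t}a_{s}h_{s}\,\dd s\bigr)\dd t$, whose sum is exactly $\lambda_{1}$; hence these two pieces contribute $\lambda_{1}\,\bE[l'(G)]$.

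The heart of the argument is the second-order term $Q:=\int_{0}^{T}N_{t}f_{t}\,\dd W_{t}$. Since its integrand is adapted, the Malliavin duality formula (for which the Skorohod integral coincides with the It\^o integral) lets me move the $\dd W_{t}$ onto $l(G)$: writing $\mathscr D$ for the Malliavin derivative, $\bE[l(G)Q]=\bE\bigl[\int_{0}^{T}\mathscr D_{t}\bigl(l(G)\bigr)\,N_{t}f_{t}\,\dd t\bigr]$, and since $\mathscr D_{t}(l(G))=l'(G)a_{t}$ this equals $\int_{0}^{T}a_{t}f_{t}\,\bE[l'(G)N_{t}]\,\dd t$. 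Applying Stein's identity once more to the centered Gaussian $N_{t}$ yields $\bE[l'(G)N_{t}]=\Cov(G,N_{t})\,\bE[l''(G)]=\bigl(\int_{0}^{t}a_{s}h_{s}\,\dd s\bigr)\bE[l''(G)]$, so that $\bE[l(G)Q]=\bigl(\int_{0}^{T}\int_{0}^{t}a_{t}f_{t}a_{s}h_{s}\,\dd s\,\dd t\bigr)\bE[l''(G)]=\lambda_{2}\,\bE[l''(G)]$. Collecting the three contributions and using $\partial_{\ve}^{i}\,\bE[l(G+\ve)]\big|_{\ve=0}=\bE[l^{(i)}(G)]$ gives the announced formula.

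The main obstacle is precisely this second-chaos term $Q$: it is not Gaussian, so a single Stein identity does not apply and one must either use the Malliavin duality as above or, avoiding Malliavin calculus, pass through Riemann sums $\sum_{i}f_{t_{i}}N_{t_{i}}\,(W_{t_{i+1}}-W_{t_{i}})$ and apply the Gaussian product (Wick/Isserlis) identity $\bE[l(G)X_{1}X_{2}]=\Cov(X_{1},X_{2})\bE[l(G)]+\Cov(G,X_{1})\Cov(G,X_{2})\bE[l''(G)]$; here $\Cov(N_{t_{i}},W_{t_{i+1}}-W_{t_{i}})=0$ by independence of increments kills the $\bE[l(G)]$ part and leaves $\lambda_{2}\,\bE[l''(G)]$ in the limit. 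The remaining routine points, namely interchanging expectation and time integration by Fubini and the finiteness of all the expectations, are guaranteed by the square-integrability of $a,e,f,g,h$ together with the boundedness of $l,l',l''$.
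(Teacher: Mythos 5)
This lemma is stated in the paper as an imported result, quoted verbatim from \cite[Lemma A.2]{gobet2014weak}, and the paper itself offers no proof of it; so there is nothing internal to compare your argument against. On its own merits, your proof is correct and complete. The decomposition of $\int_0^T M_t(e_t\,\dd t + f_t\,\dd W_t)$ into a deterministic piece, two first-chaos pieces, and the second-chaos piece $Q=\int_0^T N_t f_t\,\dd W_t$ is exactly the right way to organize the computation: the constant piece gives $\lambda_0\,\bE[l(G)]$, Stein's identity handles the two jointly Gaussian pieces and their covariances with $G$ (computed correctly via the It\^o isometry) sum to $\lambda_1$, and the only genuinely non-Gaussian term is $Q$, which you treat correctly either by the Malliavin duality $\bE[l(G)\,\delta(u)]=\bE[\int_0^T l'(G)a_t u_t\,\dd t]$ (legitimate since $u_t=N_tf_t$ is adapted and square-integrable, so the Skorohod and It\^o integrals coincide) followed by a second application of Stein, or by the Riemann-sum/Isserlis route where the vanishing of $\Cov(N_{t_i}, W_{t_{i+1}}-W_{t_i})$ is what kills the zeroth-order contribution. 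The identification $\partial_\ve^i\,\bE[l(G+\ve)]\big|_{\ve=0}=\bE[l^{(i)}(G)]$ is justified by the boundedness of $l$ and its derivatives. This is essentially the standard proof one would find in the cited reference, and the integrability checks you defer to Fubini are indeed routine under the stated hypotheses.
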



\noindent We set 
\begin{align*}
	l(x) & =\Psi'\biggl(x-\frac{1}{2}\int_{0}^{T}\nu_{0}(\Kdot(t)^{2})\dd t\biggr),\quad a_{t}=\nu_{0}\left(\Kdot\left(t\right)\right),
\\
f_{t} (u)&=h_{t}(u)=K^{u}\left(t\right)-\nu_{0}\left(\Kdot\left(t\right)\right),\quad e_{t}(u) = g_{t}(u) = -\frac{1}{2}K^{u}\left(t\right)^{2}+\frac{1}{2}\nu_{0}\Bigl(\Kdot\left(t\right)^{2}\Bigr), \\
	\lambda_{0}(u)&=\int_{0}^{T}\int_{0}^{t}e_{t}(u)g_{s}(u)\dd s\dd t,\quad\lambda_{1}(u)=\int_{0}^{T}\int_{0}^{t}\left(g_{s}(u)a_{t}f_{t}(u)+e_{t}(u)a_{s}h_{s}(u)\right)\dd s\dd t,
	\\
	\lambda_{2}(u)&=\int_{0}^{T}\int_{0}^{t}a_{t}f_{t}(u)a_{s}h_{s}(u)\dd s\dd t \,,
\end{align*}
and apply Lemma \ref{lem:integration:by:parts} to the first expectation in \eqref{eq:proof:psi:prime}.
Recalling that $\Psi^{\prime}(x)=\nu\left(\xidot0\right)e^{x}\varphi_{\sqrt{2}\delta}^{\prime}\left(\nu\left(\xidot0\right)e^{x}\right)$ for every $x\in\mathbb{R}$, we obtain
\[
\int_{\cA}
\bE\left[\Psi'\left(\nu_{0}\left(\Ydot T\right)\right)\frac{1}{2}\left(Y_{T}^{u}-\nu_{0}\left(\Ydot T\right)\right)^{2}\right]
\nu_{0}\left(\dd u\right)
 =\sum_{i=1}^{3}\gamma_{i} \left.\partial_{\ve}^{i} \, \bE\left[\varphi_{\sqrt{2}\delta}\left(\nu\left(\xidot0\right)e^{\nu_{0}\left(\Ydot T\right)+\ve}\right)\right]\right|_{\ve=0}.
\]
where
\begin{align*}
\gamma_{1} & =\int_{\mathcal{A}}\lambda_{0}(u)\nu_{0}({\rm d}u)
+ \frac{1}{2}\int_{\mathcal{A}}\int_{0}^{T}\dd\langle Y_{t}^{u}-\nu_{0}\left(\Ydot t\right)\rangle \, \nu_{0}({\rm d}u),
\\
\gamma_{2} & =\int_{\mathcal{A}}\lambda_{1}(u)\nu_{0}({\rm d}u),\qquad\gamma_{3}=\int_{\mathcal{A}}\lambda_{2}(u)\nu_{0}({\rm d}u).
\end{align*}
Putting things together, recalling that $ \nu\left(\xidot0\right)e^{\nu_{0}\left(\Ydot T\right)} = \VIX_{T,{\rm P}}^{2}$, we have shown that 
\begin{multline}
\bE\left[\varphi_{\sqrt{2}\delta}\left(\VIX_{T}^{2}\right)\right]=\bE\left[\varphi_{\sqrt{2}\delta}\left(\VIX_{T,{\rm P}}^{2}\right)\right]+\sum_{i=1}^{3}\gamma_{i}\left.\partial_{\ve}^{i}\bE\left[\varphi_{\sqrt{2}\delta}\left(\VIX_{T,{\rm P}}^{2} \, e^{\ve}\right)\right]\right|_{\ve=0}\\
+ E_{1}\Bigl(\varphi_{\sqrt{2}\delta}'\Bigr)
+ E_{0}\Bigr(\varphi_{\sqrt{2}\delta}^{\prime \prime}\Bigr) \,.
\label{eq:proof:expansion:formula:varphi:n}
\end{multline}

Before estimating the error terms $E_{1}$ and $E_{0}$ in \eqref{eq:proof:expansion:formula:varphi:n}, we wish to get back to the true payoff function, and provide an identity analogous to \eqref{eq:proof:expansion:formula:varphi:n}
for the true payoff function $\varphi$ instead of its regularized version $\varphi_{\sqrt{2}\delta}$.
We start with some estimates of the coefficients $(\gamma_{i})_{i=1,2,3}$. 
\begin{lem}
\label{lem:estimates:mup:sigmap:gamma:i} For every $i\in\left\{ 1,2,3\right\} $,
$\gamma_{i}=\cO(\Delta^{2d_{1}\wedge d_{2}})$ as $\Delta\to0$. 
\end{lem}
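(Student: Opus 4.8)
The plan is to bound each $\gamma_{i}$ directly from its definition in \eqref{eq:def:gamma:coeffs}, reducing every integral to the two deterministic quantities $\Gamma_{\Delta,T,p}$ and $\Lambda_{\Delta,T,p}$ controlled by \eqref{eq:assu:drift}--\eqref{eq:assu:diffusion}, together with the uniform bound on $\sigmap^{2}=\int_{0}^{T}\nu_{0}\bigl(\Kdot(t)\bigr)^{2}\dd t$ supplied by \eqref{eq:assu:limit:variance:proxy} and Proposition \ref{prop:proxies:characteristics}. The term $\gamma_{1}$ is the most transparent: its first summand equals $\frac{1}{8}\Gamma_{\Delta,T,2}^{2}$, which is $\cO(\Delta^{2d_{1}})$ by \eqref{eq:assu:drift} with $p=2$, while its second summand equals $\frac{1}{2}\Lambda_{\Delta,T,1}$, which is $\cO(\Delta^{d_{2}})$ by \eqref{eq:assu:diffusion} with $p=1$. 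Hence $\gamma_{1}=\cO(\Delta^{2d_{1}})+\cO(\Delta^{d_{2}})=\cO(\Delta^{2d_{1}\wedge d_{2}})$.

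For $\gamma_{3}$, the key step is to control the inner $t$-integral by Cauchy--Schwarz in the time variable: using $\int_{0}^{T}\nu_{0}(\Kdot(t))^{2}\dd t=\sigmap^{2}$, one gets $\bigl|\int_{0}^{T}\nu_{0}(\Kdot(t))[K^{u}(t)-\nu_{0}(\Kdot(t))]\dd t\bigr|\leq\sigmap\bigl(\int_{0}^{T}[K^{u}(t)-\nu_{0}(\Kdot(t))]^{2}\dd t\bigr)^{1/2}$. Squaring, integrating against $\nu_{0}(\dd u)$, and invoking the uniform bound $\sigmap\leq C$ from \eqref{eq:assu:limit:variance:proxy}, this yields $\gamma_{3}\leq_{c}\sigmap^{2}\,\Lambda_{\Delta,T,1}=\cO(\Delta^{d_{2}})$, which is in particular $\cO(\Delta^{2d_{1}\wedge d_{2}})$ since $d_{2}\geq 2d_{1}\wedge d_{2}$ forces $\Delta^{d_{2}}\leq\Delta^{2d_{1}\wedge d_{2}}$ for $\Delta\leq 1$.

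The term $\gamma_{2}$ is the crux, as it mixes both rates. Writing $\gamma_{2}=-\tfrac{1}{2}\int_{\cA}A(u)B(u)\,\nu_{0}(\dd u)$ with $A(u)=\int_{0}^{T}\nu_{0}(\Kdot(t))[K^{u}(t)-\nu_{0}(\Kdot(t))]\dd t$ (bounded exactly as for $\gamma_{3}$) and $B(u)=\int_{0}^{T}[K^{u}(t)^{2}-\nu_{0}(\Kdot(t)^{2})]\dd t$, I would apply Cauchy--Schwarz in the $u$-variable against $\nu_{0}$ to obtain $|\gamma_{2}|\leq\tfrac{1}{2}\bigl(\int_{\cA}A(u)^{2}\nu_{0}(\dd u)\bigr)^{1/2}\bigl(\int_{\cA}B(u)^{2}\nu_{0}(\dd u)\bigr)^{1/2}$. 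The first factor is $\leq\sigmap\,\Lambda_{\Delta,T,1}^{1/2}=\cO(\Delta^{d_{2}/2})$ and the second equals $\Gamma_{\Delta,T,2}=\cO(\Delta^{d_{1}})$, so that $\gamma_{2}=\cO(\Delta^{d_{1}+d_{2}/2})$. The decisive observation is then that $d_{1}+\tfrac{d_{2}}{2}=\tfrac{1}{2}(2d_{1})+\tfrac{1}{2}d_{2}$ is the arithmetic mean of $2d_{1}$ and $d_{2}$, hence is always at least their minimum $2d_{1}\wedge d_{2}$; this upgrades the bound to $\gamma_{2}=\cO(\Delta^{2d_{1}\wedge d_{2}})$.

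I expect the exponent bookkeeping in $\gamma_{2}$ to be the main (if modest) obstacle: each of the two Cauchy--Schwarz factors individually captures only half of one of the two rates, and it is precisely their geometric combination $\Delta^{d_{1}+d_{2}/2}$ that must be recognized as dominated by $\Delta^{2d_{1}\wedge d_{2}}$. Everything else reduces to identifying the integrals defining $\gamma_{1},\gamma_{3}$ with powers of $\Gamma_{\Delta,T,p}$ and $\Lambda_{\Delta,T,p}$ and invoking the non-degeneracy bound \eqref{eq:assu:limit:variance:proxy} to absorb the $\nu_{0}(\Kdot(t))$ weights. Collecting the three estimates completes the proof.
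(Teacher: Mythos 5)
Your proof is correct and follows the same overall strategy as the paper: each $\gamma_{i}$ is reduced to the quantities $\Gamma_{\Delta,T,p}$ and $\Lambda_{\Delta,T,p}$ from \eqref{eq:assu:drift}--\eqref{eq:assu:diffusion}, with the weight $\nu_{0}(\Kdot(t))$ absorbed via $\sigmap^{2}\leq C$ from \eqref{eq:assu:limit:variance:proxy}; your treatment of $\gamma_{1}$ and $\gamma_{3}$ is identical to the paper's. The only divergence is in $\gamma_{2}$: the paper applies the elementary inequality $xy\leq\frac{1}{2}(x^{2}+y^{2})$ to the product $A(u)B(u)$ before integrating, which lands directly on $\cO(\Delta^{d_{2}})+\cO(\Delta^{2d_{1}})=\cO(\Delta^{2d_{1}\wedge d_{2}})$, whereas you apply Cauchy--Schwarz in $u$ and obtain the geometric-mean rate $\cO(\Delta^{d_{1}+d_{2}/2})$, which is in fact slightly sharper and implies the stated bound since $d_{1}+\frac{d_{2}}{2}\geq 2d_{1}\wedge d_{2}$. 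Both routes are valid and cost the same; yours requires the extra (correct) observation about the arithmetic mean of the exponents, while the paper's avoids it at the price of a marginally coarser intermediate estimate.
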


\begin{proof}
From \eqref{eq:assu:drift}
and \eqref{eq:assu:diffusion}, we have $\gamma_{1}=\cO(\Delta^{2d_{1}})+\cO(\Delta^{d_{2}})=\cO(\Delta^{2d_{1}\wedge d_{2}}).$
Using the trivial inequality $xy\leq\frac{1}{2}\left(x^{2}+y^{2}\right)$
for $x,y\in\bR$ along with the Cauchy--Schwarz inequality, we see
that $\left|\gamma_{2}\right|$ is upper-bounded by 
\[
\frac{\sigmap^{2}}{2}\int_{\cA}\left(\int_{0}^{T}\left[K^{u}\left(t\right)-\nu_{0}\left(\Kdot\left(t\right)\right)\right]^{2}\dd t\right)\nu_{0}(\dd u)+\frac{1}{8}\int_{\cA}\left|\int_{0}^{T}\left[K^{u}\left(t\right)^{2}-\nu_{0}\left(\Kdot\left(t\right)^{2}\right)\right]\dd t\right|^{2}\nu_{0}(\dd u) \,.
\]
Applying \eqref{eq:assu:drift}, \eqref{eq:assu:diffusion} and \eqref{eq:assu:limit:variance:proxy},
we get $\gamma_{2}=\cO(\Delta^{d_{2}})+\cO(\Delta^{2d_{1}})=\cO(\Delta^{2d_{1}\wedge d_{2}}).$
Finally, applying once again the Cauchy--Schwarz inequality, \eqref{eq:assu:diffusion} and \eqref{eq:assu:limit:variance:proxy},
we have 
\[
\gamma_{3}\leq\frac{\sigmap^{2}}{2}\int_{\cA}\left(\int_{0}^{T}\left[K^{u}\left(t\right)-\nu_{0}\left(\Kdot\left(t\right)\right)\right]^{2}\dd t\right)\nu_{0}\left(\dd u\right)=\cO(\Delta^{d_{2}}).
\]
\end{proof}

Note that, since $\varphi$ is $\theta$-H\"{o}lder continuous, the following
useful estimate holds
\begin{equation}
\sup_{x\in\bR}\left|\varphi_{\delta}(x)-\varphi(x)\right|\leq\left[\varphi\right]_{\Hol}\bE\left[\left|\delta B_{T}\right|^{\theta}\right]=\cO(\delta^{\theta}) \,. \label{eq:estimate:phidelta:phi}
\end{equation}
Consequently, in view of \eqref{eq:condition:on:delta} we have
\begin{equation}
\bE\left[\varphi\left(\VIX_{T}^{2}\right)-\varphi\left(\VIX_{T,{\rm P}}^{2}\right)\right]=\bE\left[\varphi_{\sqrt{2}\delta}\left(\VIX_{T}^{2}\right)-\varphi_{\sqrt{2}\delta}\left(\VIX_{T,{\rm P}}^{2}\right)\right]+\cO(\Delta^{3(d_{1}\wedge\frac{d_{2}}{2})}).
\label{eq:estimate:Error:0}
\end{equation}
For $i\in\{1,2,3\}$, we also have
\begin{align}
\left.\partial_{\ve}^{i}\bE\left[\varphi_{\sqrt{2}\delta}\left(\VIX_{T,{\rm P}}^{2}e^{\ve}\right)\right]\right|_{\ve=0} & =\left.\partial_{\ve}^{i}\bE\left[\varphi\left(\VIX_{T,{\rm P}}^{2}e^{\ve}\right)\right]\right|_{\ve=0}+\Error_{1,i},\label{eq:equality:Error:i}
\\
\left|\Error_{1,i}\right|
& \leq_{c}\delta^{\theta}\int_{0}^{\infty}
\biggl|\partial_{\ve}^{i}\Big(e^{-\frac{\left(\ln\left(y\right)-\mup-\ve\right)^{2}}{2\sigmap^{2}}}\Big)\Big|_{\ve=0}\biggr|\frac{\dd y}{\sqrt{2\pi} \, \sigmap \, y}
=\cO(\Delta^{3(d_{1}\wedge\frac{d_{2}}{2})}) \,. \label{eq:error:greeks}
\end{align}
The first inequality  in \eqref{eq:error:greeks} is  obtained applying \eqref{eq:estimate:phidelta:phi} and Proposition \ref{prop:proxies:characteristics}.
In order to prove the last identity in \eqref{eq:error:greeks},
observe that the partial derivatives are related to Hermite polynomials
and it is easy to prove that the integral is uniformly bounded with respect to
$\Delta$ thanks to \eqref{eq:assu:limit:mean:proxy} and \eqref{eq:assu:limit:variance:proxy}.
Putting Lemma \ref{lem:estimates:mup:sigmap:gamma:i},
\eqref{eq:proof:expansion:formula:varphi:n}, \eqref{eq:estimate:Error:0}, and
\eqref{eq:error:greeks} together, the VIX option price $\bE\left[\varphi\left(\VIX_{T}^{2}\right)\right]$
is equal to 
\begin{equation}
\bE\left[\varphi\left(\VIX_{T,{\rm P}}^{2}\right)\right]+\sum_{i=1}^{3}\gamma_{i}\left.\partial_{\ve}^{i}\bE\left[\varphi\left(\VIX_{T,{\rm P}}^{2} \, e^{\ve}\right)\right]\right|_{\ve=0}
+\cO \Bigl(\Delta^{3(d_{1}\wedge\frac{d_{2}}{2})}\Bigr)
+E_{1}\Bigl(\varphi_{\sqrt{2}\delta}^{\prime} \Bigr)
+E_{0}\Bigl(\varphi_{\sqrt{2}\delta}^{\prime \prime}\Bigr) \,.
\label{eq:proof:expansion:error:en:delta}
\end{equation}

\textbf{Error terms}. We now wish to establish estimates for the two error terms $E_{0}(\varphi_{\sqrt{2}\delta}^{\prime \prime})$
and $E_{1}(\varphi_{\sqrt{2}\delta}^{\prime})$ defined in \eqref{eq:taylor:phi}-\eqref{eq:dphi:nu:moins:nuexp}.
Though the function $\varphi_{\sqrt{2}\delta}$ is smooth, establishing
$\Delta$-uniform estimates for its first and second derivatives seem
difficult. Therefore, we exploit an integration-by-parts formula to get back
to $\varphi_{\sqrt{2}\delta}$. To do so, we use the calculus of variations with respect to the two-dimensional Brownian motion $\left(W_{t},B_{t}\right)_{t\in\left[0,T\right]}$.
We refer to the notations of \cite{nual:06} for the Malliavin Sobolev spaces $\bD^{k,p}$
associated to the norm $\Vert\cdot\Vert_{k,p}$. The Malliavin derivative
operator with respect to  $W$ and $B$ is denoted $D(\cdot):=(D_{t}^{1}(\cdot),D_{t}^{2}(\cdot))_{t\in[0,T]}$.
For the second-order Malliavin derivatives, we use the notation $(D^{i,j}(\cdot))_{s,t\in[0,T]}$ for $i,j\in\left\{ 1,2\right\} ,$ and so on for higher derivatives.
For any $\lambda\in[0,1]$, we set 
\begin{align}
G_{\delta}^{\lambda} & :=\lambda\VIX_{T}^{2}+(1-\lambda)\VIX_{T,{\rm P}}^{2}+\delta B_{T}\nonumber \\
 & =\lambda I(1)+(1-\lambda)I(0)+\delta B_{T}=:G^{\lambda}+\delta B_{T},\label{eq:def:G:delta}
\end{align}
recalling \eqref{eq:def:I:eps}. Even though the proxy $\VIX_{T,{\rm P}}^{2}$
is a non-degenerate random variable (owing to \eqref{eq:assu:limit:variance:proxy}
in Assumption \ref{assu:kernel}), with Malliavin covariance matrix
$\sigmap^{2}\VIX_{T,{\rm P}}^{4}>0$, the convex combination $G^{\lambda}$
may be degenerated in the Malliavin sense, and the integration-by-parts
formula may not hold. Having introduced the Gaussian perturbation
$\sqrt{2}\delta B_{T}$ precisely answers this issue and corresponds
to the second reason for having regularized $\varphi$. For these
reasons, it is crucial to observe that the convolution identity \eqref{eq:convolution}
yields the new expressions 
\begin{align}
E_{1}(\varphi_{\sqrt{2}\delta}^{\prime}) & =\bE\Big[\varphi_{\delta}^{\prime}\left(G_{\delta}^{0}\right)\int_{0}^{1}\frac{\left(1-\ve\right)^{2}}{2}I^{(3)}\left(\ve\right)\dd\ve\Big],\label{eq:error:E1:G}
\\
E_{0}(\varphi_{\sqrt{2}\delta}^{\prime \prime})
& =\int_{0}^{1}\left(1-\lambda\right)
\bE\Big[\varphi_{\delta}^{\prime \prime}\left(G_{\delta}^{\lambda}\right)\Big(\int_{0}^{1}\left(1-\ve\right)I^{(2)}\left(\ve\right)\dd\ve\Big)^{2}\Big]\dd\lambda
\label{eq:error:E0:G}
\end{align}
where we have started from \eqref{eq:taylor:phi}-\eqref{eq:dphi:nu:moins:nuexp},
and used \eqref{eq:taylor:I1:I0}.

We start with some estimates for the Sobolev
norms of $I^{(n)}(\ve)$ and $I^{(n)}(\ve)^{2}.$ 
\begin{lem}
\label{lem:estimates:I(n)}For every $p\geq1$, $n\in\bN,k\in\left\{ 1,2,3\right\} ,$
the following estimates hold 
\begin{align}
\sup_{\ve\in\left[0,1\right]}\left\Vert I^{(n)}(\ve)\right\Vert _{k,p} & \leq_{c}\Delta^{n(d_{1}\wedge\frac{d_{2}}{2})}.
\label{eq:estimate:I(n):norm:k:p}
\end{align}
\end{lem}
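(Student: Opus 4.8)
The plan is to upgrade the $L^{p}$ estimate \eqref{eq:Lp:I:nth:derivative} of Theorem \ref{thm:estimate:nue:nueproxy} to the Malliavin--Sobolev scale, exploiting the fact that all the Malliavin derivatives arising here are \emph{deterministic}. First I would record the differential structure. Since $Y_{T}^{u}$ is an affine functional of the Wiener integral $\int_{0}^{T}K^{u}(t)\dd W_{t}$, from \eqref{eq:def:YT} we have $D_{s}^{1}Y_{T}^{u}=K^{u}(s)$ and all higher derivatives vanish; consequently, writing $Z^{u}:=Y_{T}^{u}-\nu_{0}\left(\Ydot T\right)$, the derivatives $D_{s}^{1}Z^{u}=K^{u}(s)-\nu_{0}(\Kdot(s))$ and $D_{s}^{1}\nu_{0}\left(\Ydot T\right)=\nu_{0}(\Kdot(s))$ are deterministic. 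Moreover $I^{(n)}(\ve)$ from \eqref{eq:I:nth:derivative} does not depend on the auxiliary Brownian motion $B$, so only the $W$-directional derivatives $D^{1}$ enter the norm $\Vert\cdot\Vert_{k,p}$, and the relevant Hilbert space $\cH=L^{2}([0,T];\bR^{2})$ effectively reduces to $L^{2}([0,T])$.

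Next I would differentiate under the deterministic $\nu_{0}(\dd u)$-integral. Each Malliavin differentiation of the integrand $(Z^{u})^{n}e^{\nu_{0}\left(\Ydot T\right)+\ve Z^{u}}$ either lowers the power of $Z^{u}$ by one, producing the deterministic factor $K^{u}(s)-\nu_{0}(\Kdot(s))$, or leaves the power unchanged, producing the factor $\nu_{0}(\Kdot(s))+\ve[K^{u}(s)-\nu_{0}(\Kdot(s))]$. Hence the $j$-th derivative $D_{s_{1},\dots,s_{j}}^{1,\dots,1}I^{(n)}(\ve)$ is a finite sum, indexed by subsets $R\subseteq\{1,\dots,j\}$ with $|R|=r\le\min(j,n)$ and bounded combinatorial constants $c$, of terms of the form
\[
\nu(\xidot0)\int_{\cA}c\,(Z^{u})^{n-r}\prod_{\ell\in R}\bigl(K^{u}(s_{\ell})-\nu_{0}(\Kdot(s_{\ell}))\bigr)\prod_{\ell\notin R}\bigl(\nu_{0}(\Kdot(s_{\ell}))+\ve[K^{u}(s_{\ell})-\nu_{0}(\Kdot(s_{\ell}))]\bigr)e^{\nu_{0}\left(\Ydot T\right)+\ve Z^{u}}\nu_{0}(\dd u).
\]

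I would then estimate the $L^{p}(\Omega;\cH^{\otimes j})$ norm of each such term, where taking the $\cH^{\otimes j}$ norm amounts to applying $(\int_{[0,T]^{j}}|\cdot|^{2}\dd s_{1}\cdots\dd s_{j})^{1/2}$. Pulling the $\nu_{0}(\dd u)$-integral out with the generalized Minkowski inequality and separating factors with H\"older's inequality, one is led to control four quantities, all already available: (i) each $\ell\in R$ contributes $\bigl(\int_{0}^{T}[K^{u}(s)-\nu_{0}(\Kdot(s))]^{2}\dd s\bigr)^{1/2}$, which integrated against $\nu_{0}(\dd u)$ is governed by \eqref{eq:assu:diffusion}, giving the rate $\Delta^{d_{2}/2}$ per factor; (ii) each $\ell\notin R$ contributes, after integration in $s_{\ell}$, either $(\int_{0}^{T}\nu_{0}(\Kdot(s))^{2}\dd s)^{1/2}=\sigmap$, bounded by \eqref{eq:assu:limit:variance:proxy}, or a further kernel difference of rate $\Delta^{d_{2}/2}$; (iii) the polynomial factor $(Z^{u})^{n-r}$ is controlled, after raising to the appropriate power and integrating against $\nu_{0}(\dd u)$, by Proposition \ref{prop:estimate:YT:meanYT}, contributing the rate $\Delta^{(n-r)(d_{1}\wedge\frac{d_{2}}{2})}$; and (iv) the exponential factors $\Vert e^{\nu_{0}\left(\Ydot T\right)}\Vert_{p'}$ and $\int_{\cA}\Vert e^{\ve Z^{u}}\Vert_{p'}^{2}\nu_{0}(\dd u)$ are uniformly bounded in $\Delta$ and $\ve\in[0,1]$ exactly as in the proof of Theorem \ref{thm:estimate:nue:nueproxy}, via \eqref{eq:assu:kernel:exp:integral:u} and \eqref{eq:assu:limit:variance:proxy}.

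Combining these, each term carries the rate $\Delta^{(n-r)(d_{1}\wedge\frac{d_{2}}{2})+r\,d_{2}/2}$, and since $\frac{d_{2}}{2}\ge d_{1}\wedge\frac{d_{2}}{2}$ this exponent is at least $n(d_{1}\wedge\frac{d_{2}}{2})$; the prefactor $\nu(\xidot0)$ is bounded uniformly in $\Delta$ by Assumption \ref{assu:xi0} and absorbed into $C$. Summing the finitely many terms and derivative orders $j\le k\le3$, together with the $j=0$ contribution \eqref{eq:Lp:I:nth:derivative}, gives \eqref{eq:estimate:I(n):norm:k:p}. The conceptual heart of the argument — what makes the rate survive differentiation — is precisely this trade-off: every derivative that destroys one factor of $Z^{u}$ (worth $\Delta^{d_{1}\wedge d_{2}/2}$) simultaneously creates a kernel-difference factor worth $\Delta^{d_{2}/2}\ge\Delta^{d_{1}\wedge d_{2}/2}$, while derivatives hitting the exponential cost nothing because $\nu_{0}(\Kdot(\cdot))$ has bounded $L^{2}([0,T])$ norm $\sigmap$. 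The main obstacle I anticipate is the bookkeeping: organizing the combinatorial sum over $R$ and choosing the H\"older exponents so that factors (i)--(iv) decouple cleanly, all uniformly in $\ve\in[0,1]$ and $\Delta$.
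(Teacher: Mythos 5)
Your proposal is correct and follows essentially the same route as the paper's proof: deterministic Malliavin derivatives of $Y_T^u$, differentiation under the $\nu_0(\dd u)$-integral, Cauchy--Schwarz/H\"older decoupling of the kernel-difference factors from the $\cE_n^u$-type factors, and the observation that each derivative hitting the polynomial part trades a $\Delta^{d_1\wedge\frac{d_2}{2}}$ for a $\Delta^{\frac{d_2}{2}}$ while derivatives hitting the exponential cost nothing thanks to \eqref{eq:assu:limit:variance:proxy}. The only difference is organizational: the paper writes out the first and second derivatives explicitly via the $R_{i,n}$ and $S_{i,n}$ terms and skips the third order as similar, whereas you give a single combinatorial formula over subsets $R$ valid for all orders $j\le 3$.
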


\begin{proof}
It is enough to assume $p\geq2$ thanks to the non-expansivity of
the $L^{p}$ norm. For every $\ve\in\left[0,1\right],u\in\cA$, introduce
\[
\cE_{n}^{u}(\ve):=\left(Y_{T}^{u}-\nu_{0}\left(\Ydot T\right)\right)^{n}e^{\nu_{0}\left(\Ydot T\right)+\ve\left(Y_{T}^{u}-\nu_{0}\left(\Ydot T\right)\right)},\qquad I^{\left(n\right)}\left(\ve\right)=\nu\left(\xidot0\right)\int_{\cA}\cE_{n}^{u}(\ve)\nu_{0}\left(\dd u\right).
\]
Proceeding as in the proof of Theorem \ref{thm:estimate:nue:nueproxy}
and evoking Proposition \ref{prop:estimate:YT:meanYT} with $p=2n$,
we have that 
\begin{align}
\big\Vert\bigl(\int_{\cA}\left|\cE_{n}^{u}(\ve)\right|^{2}\nu_{0}\left(\dd u\right)\bigr)^{\frac{1}{2}}\big\Vert_{p} & \leq\bigl(\int_{\cA}\left\Vert \left(\left(Y_{T}^{u}-\nu_{0}\left(\Ydot T\right)\right)^{n}e^{\nu_{0}\left(\Ydot T\right)+\ve\left(Y_{T}^{u}-\nu_{0}\left(\Ydot T\right)\right)}\right)^{2}\right\Vert _{\frac{p}{2}}\nu_{0}\left(\dd u\right)\bigr)^{\frac{1}{2}}\nonumber \\
 & \leq\bigl(\int_{\cA}\left\Vert Y_{T}^{u}-\nu_{0}\left(\Ydot T\right)\right\Vert _{2pn}^{2n}\left\Vert e^{\nu_{0}\left(\Ydot T\right)+\ve\left(Y_{T}^{u}-\nu_{0}\left(\Ydot T\right)\right)}\right\Vert _{2p}^{2}\nu_{0}\left(\dd u\right)\bigr)^{\frac{1}{2}}\nonumber \\
 & \leq_{c}\Delta^{n(d_{1}\wedge\frac{d_{2}}{2})}.\label{eq:proof:Eronde:eps}
\end{align}
Observing that $D_{t}^{1}(Y_{T}^{u}-\nu_{0}(\Ydot{T}))=K^{u}(t)-\nu_{0}(\Kdot(t)),$
and applying the chain rule, we have 
\begin{align}
D_{t}^{1}\Bigl(\frac{I^{(n)}(\ve)}{\nu(\xidot0)}\Bigr) & =n\int_{\cA}\left(K^{u}(t)-\nu_{0}(\Kdot(t))\right)\cE_{n-1}^{u}(\ve)\nu_{0}\left(\dd u\right)\nonumber \\
 & \qquad+\nu_{0}\left(\Kdot(t)\right)\int_{\cA}\cE_{n}^{u}(\ve)\nu_{0}\left(\dd u\right)+\int_{\cA}\left[\ve\left(K^{u}\left(t\right)-\nu_{0}\left(\Kdot(t)\right)\right)\right]\cE_{n}^{u}(\ve)\nu_{0}\left(\dd u\right)\nonumber \\
 & =:nR_{1,n-1}(t)+R_{2,n}(t)+\ve R_{1,n}(t).\label{eq:malliavin:derivee:1}
\end{align}
From Cauchy--Schwarz inequality, it holds 
\[
\left|R_{1,n}(t)\right|^{2}\leq\bigl(\int_{\cA}\left|K^{u}(t)-\nu_{0}(\Kdot(t))\right|^{2}\nu_{0}\left(\dd u\right)\bigr)\bigl(\int_{\cA}\left|\cE_{n}^{u}(\ve)\right|^{2}\nu_{0}\left(\dd u\right)\bigr),
\]
and reintegrating in $t$ and evoking \eqref{eq:assu:diffusion},
\eqref{eq:proof:Eronde:eps}, and taking the $L^{p}$ norm, we have
\[
\big\Vert\bigl(\int_{0}^{T}\left|R_{1,n}(t)\right|^{2}\dd t\bigr)^{\frac{1}{2}}\big\Vert_{p}\leq_{c}\Delta^{\frac{d_{2}}{2}+n(d_{1}\wedge\frac{d_{2}}{2})},\quad n\geq0.
\]
Similarly, combining Cauchy--Schwarz inequality with \eqref{eq:mean:variance:proxy},
\eqref{eq:assu:limit:variance:proxy} and \eqref{eq:proof:Eronde:eps},
\[
\big\Vert\bigl(\int_{0}^{T}\left|R_{2,n}(t)\right|^{2}\dd t\bigr)^{\frac{1}{2}}\big\Vert_{p}\leq\sigmap\big\Vert\bigl(\int_{\cA}\left|\cE_{n}^{u}(\ve)\right|^{2}\nu_{0}\left(\dd u\right)\bigr)^{\frac{1}{2}}\big\Vert_{p}\leq_{c}\Delta^{n(d_{1}\wedge\frac{d_{2}}{2})}.
\]
Noting that $D_{t}^{2}(I^{(n)}(\ve))=0$ and plugging estimates into
\eqref{eq:malliavin:derivee:1}, we get that $\left\Vert \bigl(\int_{0}^{T}\left|D_{t}I^{(n)}(\ve)\right|^{2}\dd t\bigr)^{\frac{1}{2}}\right\Vert _{p}\leq_{c}\Delta^{n(d_{1}\wedge\frac{d_{2}}{2})}$.
The last estimate being independent of $\ve$, \eqref{eq:estimate:I(n):norm:k:p}
is proved for $k=1$.

For every $s,t\in[0,T],$ the second-order Malliavin derivative writes
as 
\begin{align*}
D_{t,s}^{1,1}\Bigl(\frac{I^{(n)}(\ve)}{\nu(\xidot0)}\Bigr) & \underset{\eqref{eq:malliavin:derivee:1}}{}=nD_{s}^{1}R_{1,n-1}(t)+D_{s}^{1}R_{2,n}(t)+\ve D_{s}^{1}R_{1,n}(t),\\
D_{s}^{1}R_{1,n}(t) & =nS_{1,n-1}(t,s)+S_{2,n}(t,s)+\ve S_{1,n}(t,s),\\
D_{s}^{1}R_{2,n}(t) & =\nu_{0}\left(\Kdot(t)\right)\Big(nR_{1,n-1}(s)+R_{2,n}(s)+\ve R_{1,n}(s)\Big),\\
S_{1,n}(t,s) & :=\int_{\cA}\left[K^{u}(t)-\nu_{0}\left(\Kdot\left(t\right)\right)\right]\left[K^{u}(s)-\nu_{0}\left(\Kdot\left(s\right)\right)\right]\cE_{n}^{u}(\ve)\nu_{0}(\dd u),\\
S_{2,n}(t,s) & :=\int_{\cA}\left[K^{u}(t)-\nu_{0}\left(\Kdot\left(t\right)\right)\right]\nu_{0}\left(\Kdot\left(s\right)\right)\cE_{n}^{u}(\ve)\nu_{0}(\dd u).
\end{align*}
Using \eqref{eq:assu:diffusion}, \eqref{eq:proof:Eronde:eps}, and
Cauchy--Schwarz inequality, we get 
\begin{align*}
 & \left|S_{1,n}(t,s)\right|^{2}\leq\int_{\cA}\left[K^{u}(t)-\nu_{0}\left(\Kdot\left(t\right)\right)\right]^{2}\left[K^{u}(s)-\nu_{0}\left(\Kdot\left(s\right)\right)\right]^{2}\nu_{0}(\dd u)\int_{\cA}\left|\cE_{n}^{u}(\ve)\right|^{2}\nu_{0}(\dd u),\\
 & \Vert\big(\int_{[0,T]^{2}}\left|S_{1,n}(t,s)\right|^{2}\dd t\dd s\big)^{\frac{1}{2}}\Vert_{p}\leq_{c}\Delta^{d_{2}+n(d_{1}\wedge\frac{d_{2}}{2})}.
\end{align*}
A similar study gives that $\Vert\bigl(\int_{[0,T]^{2}}\left|S_{2,n}(t,s)\right|^{2}\dd t\dd s\bigr)^{\frac{1}{2}}\Vert_{p}\leq_{c}\Delta^{\frac{d_{2}}{2}+n(d_{1}\wedge\frac{d_{2}}{2})}$
and using that $D_{t}^{1,2}(I^{(n)}(\ve))=D_{t}^{2,2}(I^{(n)}(\ve))=0$,
we obtain \eqref{eq:estimate:I(n):norm:k:p} for $k=2$. The same
estimate holds for $k=3$, the proof being similar we skip it. 
\end{proof}

The following lemma contains an estimate for the Sobolev norm of the Malliavin
derivative of $G_{\delta}^{\lambda}$. 
The proof is a direct consequence of \eqref{eq:def:G:delta}, Lemma \ref{lem:estimates:I(n)},
and \eqref{eq:Lp:I:nth:derivative}. 
\begin{lem}
\label{lemma:estimate:malliavin:derivative:G} For every $p\geq1,k\in\left\{ 1,2\right\} ,$
we have 
\begin{align}
\sup_{\lambda\in[0,1]}\left\Vert DG_{\delta}^{\lambda}\right\Vert _{k,p} & =\cO\left(1\right).\label{eq:estimate:malliavin:derivative:G}
\end{align}
\end{lem}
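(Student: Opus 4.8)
The plan is to read off the result as a direct corollary of the affine representation of $G_\delta^\lambda$ recorded in \eqref{eq:def:G:delta}, namely $G_\delta^\lambda = \lambda I(1) + (1-\lambda) I(0) + \delta B_T$, combined with the Sobolev estimates already proved in Lemma \ref{lem:estimates:I(n)}. The conceptual point is that the two Brownian directions decouple: $I(\ve)$ is a functional of $W$ alone, so its $B$-derivative vanishes, $D^2 I(\ve) = 0$; conversely $B_T$ depends only on $B$, so $D^1 B_T = 0$, $D^2_t B_T = \mathbf{1}_{[0,T]}(t)$, and all higher-order Malliavin derivatives of $B_T$ are zero. Hence the only genuinely random contributions to $DG_\delta^\lambda$ and its iterated derivatives come from $I(0)$ and $I(1)$, which are exactly the objects controlled by Lemma \ref{lem:estimates:I(n)}. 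Throughout I would work with the reduction $\Vert D F \Vert_{k,p} \leq \Vert F \Vert_{k+1,p}$, which converts the $H$-valued Sobolev norm of the derivative into a scalar Sobolev norm of $G_\delta^\lambda$ itself.

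Concretely, I would first invoke $\Vert D G_\delta^\lambda \Vert_{k,p} \leq \Vert G_\delta^\lambda \Vert_{k+1,p}$ and then apply Minkowski's inequality in $\bD^{k+1,p}$ to the decomposition \eqref{eq:def:G:delta}, obtaining, uniformly in $\lambda \in [0,1]$,
\[
\bigl\Vert G_\delta^\lambda \bigr\Vert_{k+1,p}
\leq
\lambda \bigl\Vert I(1) \bigr\Vert_{k+1,p}
+ (1-\lambda) \bigl\Vert I(0) \bigr\Vert_{k+1,p}
+ \delta \bigl\Vert B_T \bigr\Vert_{k+1,p}.
\]
The $L^p$ part of each $\Vert I(\ve) \Vert_{k+1,p}$ is bounded by \eqref{eq:Lp:I:nth:derivative} with $n=0$, which gives $\Vert I(\ve)\Vert_p \leq_c \nu(\xidot0) = \cO(1)$ since $u \mapsto \xi_0^u$ is bounded by Assumption \ref{assu:xi0}; the Malliavin-derivative part is bounded by Lemma \ref{lem:estimates:I(n)} with $n=0$, which yields $\sup_{\ve\in[0,1]} \Vert I(\ve)\Vert_{k+1,p} \leq_c \Delta^{0} = \cO(1)$ for the relevant orders $k+1 \in \{2,3\}$, i.e.\ precisely for $k \in \{1,2\}$. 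The last term is deterministic: $B_T$ has Malliavin derivative $\mathbf{1}_{[0,T]}$ and vanishing derivatives of order $\geq 2$, so $\Vert B_T \Vert_{k+1,p}$ is a finite constant independent of $\Delta$; since $\delta = \Delta^{\frac{3}{\theta}(d_1 \wedge \frac{d_2}{2})} \leq 1$ for $\Delta \leq \overline{\Delta}$ by \eqref{eq:condition:on:delta}, the contribution $\delta \Vert B_T \Vert_{k+1,p}$ is also $\cO(1)$. Collecting the three bounds gives \eqref{eq:estimate:malliavin:derivative:G}.

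I do not expect any serious obstacle here, as the statement is genuinely a bookkeeping corollary of Lemma \ref{lem:estimates:I(n)}; the only points that require a little attention are organizational. First, one must recognize that it is the case $n=0$ of Lemma \ref{lem:estimates:I(n)} that is relevant — the $\ve$-derivatives $I^{(n)}$ with $n \geq 1$ do not enter, because $G_\delta^\lambda$ is built from $I(0)$ and $I(1)$ only. Second, the passage from the $H$-valued norm $\Vert D F\Vert_{k,p}$ to the scalar norm $\Vert F\Vert_{k+1,p}$ is exactly what forces the use of Lemma \ref{lem:estimates:I(n)} up to order $k+1 = 3$, and hence the restriction $k \in \{1,2\}$ in the statement. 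Third, one should note that the regularization parameter $\delta$ stays bounded as $\Delta \to 0$, so that the added $B$-direction contributes only a harmless constant and cannot spoil the uniform $\cO(1)$ bound.
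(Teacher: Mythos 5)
Your proof is correct and follows the same route as the paper, which itself dispatches this lemma as a direct consequence of the decomposition \eqref{eq:def:G:delta}, Lemma \ref{lem:estimates:I(n)} (with $n=0$), and \eqref{eq:Lp:I:nth:derivative}; you have simply made explicit the bookkeeping (Minkowski in $\bD^{k+1,p}$, the vanishing of the cross-derivatives, and the boundedness of $\delta$) that the paper leaves to the reader.
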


\emph{Estimate of the Malliavin covariance matrix.} The Malliavin
covariance matrix of $G_{\delta}^{\lambda}$ is defined as 
\begin{align*}
\gamma_{G_{\delta}^{\lambda}} & :=\int_{0}^{T}\big(|D_{t}^{1}(G_{\delta}^{\lambda})|^{2}+|D_{t}^{2}(G_{\delta}^{\lambda})|^{2}\bigr)\dd t\\
 & =\int_{0}^{T}|\lambda D_{t}^{1}(I(1))+(1-\lambda)\nu_{0}(\Kdot(t))I(0)|^{2}\dd t+\delta^{2}T=:\gamma_{G^{\lambda}}+\delta^{2}T,
\end{align*}
and is invertible as $\gamma_{G_{\delta}^{\lambda}}\geq\delta^{2}T>0$.
Observing that $\sigmap^{2}I(0)^{2}=\int_{0}^{T}\left|D_{t}^{1}(I(0))\right|^{2}\dd t$
and from H\"{o}lder inequality and Lemma \ref{lem:estimates:I(n)}, we
have for any $p\geq2$ 
\begin{align}
\sup_{\lambda\in[0,1]}\left\Vert \gamma_{G^{\lambda}}-\sigmap^{2}I(0)^{2}\right\Vert _{p} & =\sup_{\lambda\in[0,1]}\Bigl\Vert\int_{0}^{T}\lambda[D_{t}^{1}\left(I(1)-I(0)\right)][\lambda D_{t}^{1}(I(1))+(2-\lambda)D_{t}^{1}(I(0))]\dd t\Bigr\Vert_{p}\nonumber \\
 & \leq\Bigl\Vert\big(\int_{0}^{T}[D_{t}^{1}\left(I(1)-I(0)\right)]^{2}\dd t\bigr)^{\frac{1}{2}}\Bigr\Vert_{2p}\nonumber \\
 & \qquad\qquad\qquad\qquad\times\sup_{\lambda\in[0,1]}\Bigl\Vert\big(\int_{0}^{T}[\lambda D_{t}^{1}(I(1))+(2-\lambda)D_{t}^{1}(I(0))]^{2}\dd t\bigr)^{\frac{1}{2}}\Bigr\Vert_{2p}\nonumber \\
 & \leq_{c}\sup_{\ve\in[0,1]}\Bigl\Vert\big(\int_{0}^{T}\big|D_{t}^{1}(I^{(2)}(\ve))\bigr|^{2}\dd t\bigl)^{\frac{1}{2}}\Bigr\Vert_{2p}\leq_{c}\Delta^{2(d_{1}\wedge\frac{d_{2}}{2})},\label{eq:estimate:gammaG:sigmap2}
\end{align}
using the representation \eqref{eq:taylor:I1:I0}. The above estimate
allows us to prove the next lemma, which provides uniform estimates for
the Sobolev norms of the inverse of the Malliavin covariance matrix. 
\begin{lem}
\label{lemma:estimate:gammaG:-1:norm:i:p} For any $p\geq1$, $i\in\{0,1,2\},$
\begin{equation}
\sup_{\lambda\in[0,1]}\left\Vert (\gamma_{G_{\delta}^{\lambda}})^{-1}\right\Vert _{i,p}=\cO(1).\label{eq:estimate:gammaG:-1:norm:i:p}
\end{equation}
\end{lem}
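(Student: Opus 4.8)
The plan is to treat the scalar Malliavin covariance $\gamma_{G_{\delta}^{\lambda}}$ along the classical route: first establish uniform (in $\Delta\le\overline\Delta$ and $\lambda\in[0,1]$) bounds on \emph{all} negative $L^{q}$-moments of $\gamma_{G_{\delta}^{\lambda}}$ (the case $i=0$), and then propagate these to the Sobolev norms ($i=1,2$) through the chain rule for the inverse. Throughout I would rely on the decomposition already prepared in \eqref{eq:estimate:gammaG:sigmap2}, writing
\[
\gamma_{G_{\delta}^{\lambda}} = M + R_{\lambda} + \delta^{2}T, \qquad M := \sigmap^{2}I(0)^{2}, \quad R_{\lambda} := \gamma_{G^{\lambda}} - \sigmap^{2}I(0)^{2},
\]
where $\sup_{\lambda}\left\Vert R_{\lambda}\right\Vert_{p}\leq_{c}\Delta^{2(d_{1}\wedge\frac{d_{2}}{2})}$ for every $p$ by \eqref{eq:estimate:gammaG:sigmap2}, and where the ``good'' part $M$ is, up to the factor $\sigmap^{2}$, the square of the log-normal variable $I(0)=\VIX_{T,{\rm P}}^{2}$; by \eqref{eq:assu:limit:mean:proxy}--\eqref{eq:assu:limit:variance:proxy} the Gaussian $\ln M$ has bounded mean and variance bounded away from $0$ and $\infty$.

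The heart of the argument, and the step I expect to be the main obstacle, is the negative-moment bound, because the regularizing constant $\delta^{2}T = T\Delta^{(6/\theta)(d_{1}\wedge\frac{d_{2}}{2})}$ vanishes as $\Delta\to 0$ while $\gamma_{G^{\lambda}}$ itself may degenerate in the Malliavin sense. I would fix $q\geq 1$ and split according to whether the remainder spoils the good part: on $\{R_{\lambda}\geq -M/2\}$ one has $\gamma_{G_{\delta}^{\lambda}}\geq M/2$, so $(\gamma_{G_{\delta}^{\lambda}})^{-q}\leq 2^{q}M^{-q}$ with $\bE[M^{-q}]$ uniformly bounded (a negative moment of a log-normal with controlled parameters); on the complement one keeps only $\gamma_{G_{\delta}^{\lambda}}\geq \delta^{2}T$, giving
\[
\bE\bigl[(\gamma_{G_{\delta}^{\lambda}})^{-q}\bigr] \leq 2^{q}\bE[M^{-q}] + (\delta^{2}T)^{-q}\,\P\bigl(|R_{\lambda}| > M/2\bigr).
\]
To absorb the blow-up of $(\delta^{2}T)^{-q}$ I would choose a threshold $\epsilon = \Delta^{a}$ with $0<a<2(d_{1}\wedge\frac{d_{2}}{2})$ and bound $\P(|R_{\lambda}|>M/2)\leq \P(M<\epsilon) + \P(|R_{\lambda}|>\epsilon/2)$. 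The term $\P(|R_{\lambda}|>\epsilon/2)$ is handled by the Chebyshev inequality at a large order $r$, yielding a factor $\leq_{c}\Delta^{r(2(d_{1}\wedge\frac{d_{2}}{2})-a)}$ that beats any fixed negative power of $\Delta$ once $r$ is large; the term $\P(M<\epsilon)=\P(\ln M < a\ln\Delta)$ is a lower Gaussian tail and decays like $\exp(-c(\ln\Delta)^{2})$, i.e.\ faster than any polynomial in $\Delta$. Both contributions therefore stay bounded after multiplication by $(\delta^{2}T)^{-q}$, which proves $\sup_{\lambda}\left\Vert(\gamma_{G_{\delta}^{\lambda}})^{-1}\right\Vert_{q}=\cO(1)$ for every $q$.

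For $i\in\{1,2\}$ I would conclude by the standard smoothness-of-the-inverse calculus. Since $\delta^{2}T$ is deterministic, $D\gamma_{G_{\delta}^{\lambda}}=D\gamma_{G^{\lambda}}$, and because $\gamma_{G_{\delta}^{\lambda}}$ is quadratic in $DG_{\delta}^{\lambda}$, Lemma \ref{lemma:estimate:malliavin:derivative:G} (valid for $k\in\{1,2\}$) together with the generalized H\"older inequality yields $\sup_{\lambda}\left\Vert\gamma_{G_{\delta}^{\lambda}}\right\Vert_{2,p}=\cO(1)$ for every $p$. Then, using the chain-rule identities $D(\gamma^{-1})=-\gamma^{-2}D\gamma$ and $D^{2}(\gamma^{-1})=2\gamma^{-3}\,D\gamma\otimes D\gamma-\gamma^{-2}D^{2}\gamma$, each factor $\gamma^{-k}$ is controlled in any $L^{r}$ by the uniform negative-moment bounds of the previous step, while each Malliavin derivative of $\gamma$ is controlled by the $\cO(1)$ Sobolev estimates just recalled. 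Combining these through H\"older gives $\sup_{\lambda}\left\Vert(\gamma_{G_{\delta}^{\lambda}})^{-1}\right\Vert_{i,p}=\cO(1)$ for $i\in\{1,2\}$ as well, completing the proof.
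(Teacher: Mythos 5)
Your proof is correct and follows essentially the same route as the paper: the same decomposition $\gamma_{G_{\delta}^{\lambda}}=\sigmap^{2}I(0)^{2}+R_{\lambda}+\delta^{2}T$, the same good/bad event splitting for the negative moments with the bad event paying the price $(\delta^{2}T)^{-q}$, and the same chain-rule argument for $i\in\{1,2\}$ based on the Sobolev bounds for $DG_{\delta}^{\lambda}$ and the $i=0$ case. The only (harmless) difference is that you control $\P\bigl(|R_{\lambda}|>M/2\bigr)$ via an intermediate threshold $\Delta^{a}$ plus a Gaussian lower-tail estimate for $\ln M$, whereas the paper obtains the needed super-polynomial decay in one step from Markov and H\"older, choosing $q=\frac{3p}{\theta}$ so that $\Vert R_{\lambda}\Vert_{2q}^{q}=\cO(\Delta^{2q(d_{1}\wedge\frac{d_{2}}{2})})$ exactly cancels $(\delta^{2}T)^{-p}$.
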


\begin{proof}
For any $p\geq1,\lambda\in[0,1]$, using that $\gamma_{G_{\delta}^{\lambda}}=\gamma_{G^{\lambda}}+\delta^{2}T\geq\delta^{2}T>0$,
we have 
\begin{align*}
\bE\left[(\gamma_{G_{\delta}^{\lambda}})^{-p}\right] & =\bE\left[(\gamma_{G_{\delta}^{\lambda}})^{-p}\1_{\gamma_{G^{\lambda}}\leq\frac{\sigmap^{2}I(0)^{2}}{2}}\right]+\bE\left[(\gamma_{G_{\delta}^{\lambda}})^{-p}\1_{\gamma_{G^{\lambda}}>\frac{\sigmap^{2}I(0)^{2}}{2}}\right]\\
 & \leq(\delta^{2}T)^{-p}\bP\Big(\sigmap^{2}I(0)^{2}-\gamma_{G^{\lambda}}\geq\frac{\sigmap^{2}I(0)^{2}}{2}\Big)+2^{p}\bE\left[(\sigmap^{2}I(0)^{2})^{-p}\right].
\end{align*}
From \eqref{eq:assu:limit:variance:proxy} and since $I(0)$ is lognormal
with uniformly bounded parameters $\mup,\sigmap$, the second term
of the previous bound is uniformly bounded in $\Delta$, and using
Markov and H\"{o}lder inequalities, we have for any $q\geq1,$ 
\[
\bP\Big(\sigmap^{2}I(0)^{2}-\gamma_{G^{\lambda}}\geq\frac{\sigmap^{2}I(0)^{2}}{2}\Big)\leq_{c}\left\Vert I(0)^{-2}\right\Vert _{2q}^{q}\left\Vert \sigmap^{2}I(0)^{2}-\gamma_{G^{\lambda}}\right\Vert _{2q}^{q}=\cO(\Delta^{2q(d_{1}\wedge\frac{d_{2}}{2})}).
\]
Combining \eqref{eq:condition:on:delta}, \eqref{eq:estimate:gammaG:sigmap2},
and choosing $q=\frac{3p}{\theta}$, \eqref{eq:estimate:gammaG:-1:norm:i:p}
is proved for $i=0$. For the cases $i\in\{1,2\}$, applying the chain
rule (see \cite[Lemma 2.1.6]{nual:06}), we have for every $t,s\in[0,T],$
\begin{align*}
D_{t}^{1}(\gamma_{G_{\delta}^{\lambda}}^{-1}) & =-\frac{D_{t}^{1}(\gamma_{G_{\delta}^{\lambda}})}{\gamma_{G_{\delta}^{\lambda}}^{2}}=-2\gamma_{G_{\delta}^{\lambda}}^{-2}\int_{0}^{T}D_{u}^{1}(G^{\lambda})D_{u,t}^{1,1}(G^{\lambda})\dd u,\\
D_{s,t}^{1,1}(\gamma_{G_{\delta}^{\lambda}}^{-1}) & =-D_{s,t}^{1,1}(\gamma_{G_{\delta}^{\lambda}})\gamma_{G_{\delta}^{\lambda}}^{-2}+2D_{t}^{1}(\gamma_{G_{\delta}^{\lambda}})D_{s}^{1}(\gamma_{G_{\delta}^{\lambda}})\gamma_{G_{\delta}^{\lambda}}^{-3},\\
D_{t}^{2}(\gamma_{G_{\delta}^{\lambda}}^{-1}) & =D_{s,t}^{2,1}(\gamma_{G_{\delta}^{\lambda}}^{-1})=D_{s,t}^{1,2}(\gamma_{G_{\delta}^{\lambda}}^{-1})=D_{s,t}^{2,2}(\gamma_{G_{\delta}^{\lambda}}^{-1})=0;
\end{align*}
we conclude thanks to \eqref{eq:estimate:I(n):norm:k:p} and \eqref{eq:estimate:gammaG:-1:norm:i:p}
with $i=0$. 
\end{proof}

 \emph{Integration-by-parts formula and conclusion}. 
We can now conclude the proof of Theorem \ref{thm:expansion:plain:model}. 
The last ingredient is  the following Malliavin integration-by-parts formula.
\begin{prop}
\label{prop:ipp:malliavin}Let $i\in\left\{ 1,2\right\} ,$ then for
any $H\in\bD^{i,\infty}$ there exist random variables $(H_{i,\lambda})_{\lambda\in[0,1]}$
in $\cap_{p\geq1}L^{p}$ such that 
\[
\bE\left[H\varphi_{\delta}^{(i)}\left(G_{\delta}^{\lambda}\right)\right]
= \bE\left[H_{i,\lambda}\, \varphi_{\delta}\left(G_{\delta}^{\lambda}\right)\right],
\]
where for any $p\geq1,$ 
\[
\sup_{\lambda\in[0,1]}\left\Vert H_{i,\lambda}\right\Vert _{p}\leq_{c}\left\Vert H\right\Vert _{i,p+\frac{1}{2}}.
\]
\end{prop}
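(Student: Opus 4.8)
The proof rests on the classical Malliavin integration-by-parts formula for smooth functionals of a non-degenerate random variable, applied to $G_\delta^\lambda$ and to the $\cC^\infty$ function $\varphi_\delta$. The crucial point, already secured by the Gaussian perturbation, is that $G_\delta^\lambda$ is non-degenerate: its Malliavin covariance satisfies $\gamma_{G_\delta^\lambda}\geq\delta^2 T>0$, so $(\gamma_{G_\delta^\lambda})^{-1}$ is well defined and, by Lemma \ref{lemma:estimate:gammaG:-1:norm:i:p}, lies in every $\bD^{i,p}$ with norms uniformly bounded in $\lambda$ and $\Delta$. Granting that $\varphi_\delta\in\cC^\infty(\bR)$ has derivatives of at most polynomial growth (being the Gaussian convolution of a H\"older function) and that $G_\delta^\lambda\in\bigcap_{k,p}\bD^{k,p}$, the standard formula (see e.g.\ \cite[Prop.\ 2.1.4]{nual:06}) yields for $i=1$
\[
\bE\bigl[H\,\varphi_\delta'(G_\delta^\lambda)\bigr]=\bE\bigl[\varphi_\delta(G_\delta^\lambda)\,\mathrm{div}\bigl(H\,(\gamma_{G_\delta^\lambda})^{-1}DG_\delta^\lambda\bigr)\bigr],
\]
where $\mathrm{div}$ denotes the Skorokhod divergence with respect to the two-dimensional noise $(W,B)$. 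Hence $H_{1,\lambda}:=\mathrm{div}\bigl(H(\gamma_{G_\delta^\lambda})^{-1}DG_\delta^\lambda\bigr)$, and iterating once more gives $H_{2,\lambda}:=\mathrm{div}\bigl(H_{1,\lambda}(\gamma_{G_\delta^\lambda})^{-1}DG_\delta^\lambda\bigr)$.

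Next I would establish the $L^p$ bound on these weights. By Meyer's inequalities the divergence is continuous from $\bD^{1,p}$ into $L^p$, so $\|H_{1,\lambda}\|_p\leq_c\|H(\gamma_{G_\delta^\lambda})^{-1}DG_\delta^\lambda\|_{1,p}$. Expanding the Malliavin derivative by the Leibniz rule and applying H\"older's inequality in the Sobolev norm across the three factors, one obtains
\[
\|H(\gamma_{G_\delta^\lambda})^{-1}DG_\delta^\lambda\|_{1,p}\leq_c\|H\|_{1,p_1}\,\|(\gamma_{G_\delta^\lambda})^{-1}\|_{1,p_2}\,\|DG_\delta^\lambda\|_{1,p_3},
\]
with $\tfrac1p=\tfrac1{p_1}+\tfrac1{p_2}+\tfrac1{p_3}$. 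Since Lemma \ref{lemma:estimate:malliavin:derivative:G} and Lemma \ref{lemma:estimate:gammaG:-1:norm:i:p} bound the last two factors by $\cO(1)$ for every choice of exponents, I may take $p_2,p_3$ arbitrarily large and $p_1=p+\tfrac12$, which yields $\sup_\lambda\|H_{1,\lambda}\|_p\leq_c\|H\|_{1,p+\frac12}$. For $i=2$ I would iterate: first $\|H_{2,\lambda}\|_p\leq_c\|H_{1,\lambda}(\gamma_{G_\delta^\lambda})^{-1}DG_\delta^\lambda\|_{1,p}\leq_c\|H_{1,\lambda}\|_{1,q}$, then use $\|H_{1,\lambda}\|_{1,q}\leq_c\|H(\gamma_{G_\delta^\lambda})^{-1}DG_\delta^\lambda\|_{2,q}\leq_c\|H\|_{2,q'}$, invoking the $k=2$ cases of the two lemmas; choosing the intermediate exponents $q,q'\in[p,p+\tfrac12]$ (which is possible because the weight factors are $\cO(1)$ in every $L^r$) gives $\sup_\lambda\|H_{2,\lambda}\|_p\leq_c\|H\|_{2,p+\frac12}$.

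The conceptual work has been front-loaded into Lemmas \ref{lemma:estimate:malliavin:derivative:G} and \ref{lemma:estimate:gammaG:-1:norm:i:p}, so the remaining difficulty is essentially bookkeeping: verifying the continuity of the divergence via Meyer's inequalities, applying the Leibniz rule and H\"older correctly in the Malliavin--Sobolev norms, and selecting the conjugate exponents so that only a $+\tfrac12$ loss appears on the integrability index of $H$. The point demanding the most care is the uniformity of all constants in $\Delta$: although $\gamma_{G_\delta^\lambda}$ degenerates as $\delta=\Delta^{\frac3\theta(d_1\wedge\frac{d_2}2)}\to0$, the bound on $\|(\gamma_{G_\delta^\lambda})^{-1}\|_{i,p}$ furnished by Lemma \ref{lemma:estimate:gammaG:-1:norm:i:p} stays $\cO(1)$ precisely because the proxy $I(0)$ is itself non-degenerate with parameters controlled by \eqref{eq:assu:limit:variance:proxy}; this is exactly what prevents the weights from blowing up and keeps the final estimate uniform in both $\lambda$ and $\Delta$.
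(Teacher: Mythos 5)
Your proposal is correct and follows essentially the same route as the paper: existence of the weights via \cite[Proposition 2.1.4]{nual:06} for the non-degenerate $G_{\delta}^{\lambda}$, and the $L^p$ bound via the Sobolev-norm estimates of Lemmas \ref{lemma:estimate:malliavin:derivative:G} and \ref{lemma:estimate:gammaG:-1:norm:i:p} combined with H\"older's inequality for the $\Vert\cdot\Vert_{k,p}$ norms. The only difference is that the paper cites the packaged estimate \cite[Inequality (2.32), p.102]{nual:06} directly, whereas you unpack it through Meyer's inequalities and the Leibniz rule — which amounts to the same argument.
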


\begin{proof}
Since $G_{\delta}^{\lambda}$ belongs to $\bD^{3,\infty}$ and is non-degenerate ($\gamma_{G_{\delta}^{\lambda}}>0$), the existence of
$H_{i,\lambda}$ follows from \cite[Proposition 2.1.4]{nual:06},
and its $L^{p}$-norm is controlled owing to \cite[Inequality (2.32), p.102]{nual:06}:
\begin{align*}
\sup_{\lambda\in\left[0,1\right]}\left\Vert H_{i,\lambda}\right\Vert _{p} & \leq_{c}\left\Vert H\right\Vert _{i,p+\frac{1}{2}}\sup_{\lambda\in\left[0,1\right]}\left\Vert \gamma_{G_{\delta}^{\lambda}}^{-1}DG_{\delta}^{\lambda}\right\Vert _{i,2^{i-1}p(2p+1)}^{i}.
\end{align*}
The second factor on the right-hand side is finite thanks to Lemmas \ref{lemma:estimate:malliavin:derivative:G}, \ref{lemma:estimate:gammaG:-1:norm:i:p}, and to  H\"{o}lder
inequality for the norms $\left\Vert \cdot\right\Vert _{k,p}$ \cite[Proposition 1.5.6]{nual:06}. 
\end{proof}
Recalling the representation \eqref{eq:error:E1:G} and applying Proposition
\ref{prop:ipp:malliavin} with $H=I^{(3)}(\ve)$, there exists
$H_{1,0}(\ve)$ such that 
\begin{align*}
E_{1}(\varphi_{\sqrt{2}\delta}^{\prime}) & =\int_{0}^{1}\frac{(1-\ve)^{2}}{2}\bE\left[\varphi_{\delta}^{\prime}\left(G_{\delta}^{0}\right)I^{(3)}(\ve)\right]\dd\ve=\int_{0}^{1}\frac{(1-\ve)^{2}}{2}\bE\left[\varphi_{\delta}\left(G_{\delta}^{0}\right)H_{1,0}(\ve)\right]\dd\ve \,.
\end{align*}
Therefore, applying \eqref{eq:estimate:I(n):norm:k:p} with $n=3$, we get
\begin{align}
\left|E_{1}(\varphi_{\sqrt{2}\delta}^{\prime})\right| & \leq_{c}\left\Vert \varphi_{\delta}\left(G_{\delta}^{0}\right)\right\Vert _{2}\sup_{\ve\in\left[0,1\right]}\left\Vert I^{(3)}(\ve)\right\Vert _{1,\frac{5}{2}}\leq_{c}\Delta^{3(d_{1}\wedge\frac{d_{2}}{2})}.\label{eq:estimate:E:1}
\end{align}
In the last step of \eqref{eq:estimate:E:1}, we have applied the easy estimate $\left\Vert \varphi_{\delta}(G_{\delta}^{\lambda})\right\Vert _{p}=\cO\left(1\right)$, uniformly in $\Delta$, which follows from \eqref{eq:estimate:phidelta:phi}
and the sub-linearity of the (H\"{o}lder continuous) function $\varphi$.

In order to estimate $E_{0}(\varphi_{\sqrt{2}\delta}^{\prime \prime})$, we start from \eqref{eq:error:E0:G} and apply Proposition
\ref{prop:ipp:malliavin} with $H=I^{(2)}(\ve_{1})I^{(2)}(\ve_{2})$, which yields
\begin{align}
\left|E_{0}(\varphi_{\sqrt{2}\delta}^{\prime \prime})\right| & \leq\int_{0}^{1}\int_{0}^{1}\int_{0}^{1}(1-\lambda)(1-\ve_{1})(1-\ve_{2})\left|\bE\left[\varphi_{\delta}\left(G_{\delta}^{\lambda}\right)H_{2,\lambda}(\ve_{1},\ve_{2})\right]\right|\dd\lambda \, \dd\ve_{1} \, \dd\ve_{2}\nonumber \\
 & \leq_{c}\sup_{\ve_{1}\in[0,1],\ve_{2}\in[0,1]}\left\Vert I^{(2)}(\ve_{1})I^{(2)}(\ve_{2})\right\Vert _{{2,\frac{5}{2}}}\leq_{c}\Delta^{4(d_{1}\wedge\frac{d_{2}}{2})} \,,\label{eq:estimate:E:0}
\end{align}
where the last inequality follows from \eqref{eq:estimate:I(n):norm:k:p} with $n=2$.
Now combining \eqref{eq:proof:expansion:error:en:delta} with
\eqref{eq:estimate:E:1} and \eqref{eq:estimate:E:0}, Theorem \ref{thm:expansion:plain:model}
is proved.

\subsection{Proof of Corollary \ref{cor:plain:model:call:put:future}}
From Proposition \ref{prop:proxies:characteristics} and Theorem \ref{thm:expansion:plain:model},
we have first 
\[
\Pcall_{0}=\bE\bigl[\varphi\bigl(\VIX_{T,{\rm P}}^{2}\bigr)\bigr]=\bE_{Z\sim\cN\left(0,1\right)}\bigl[\bigl(e^{\frac{\mup}{2}+\frac{\sigmap^{2}}{8}}e^{-\frac{\sigmap^{2}}{8}+\frac{\sigmap}{2}Z}-\kappa\bigr)_{+}\bigr]=C_{\BS}\bigl(e^{\frac{\mup}{2}+\frac{\sigmap^{2}}{8}},\kappa,\frac{\sigmap}{2}\bigr).
\]
Introducing the new function 
\[
\Psi:\ve\in\left(0,\infty\right)\mapsto\bE\bigl[\varphi\bigl(\VIX_{T,{\rm P}}^{2}e^{\ve}\bigr)\bigr]=C_{\BS}\bigl(e^{\frac{\ve}{2}+\frac{\mup}{2}+\frac{\sigmap^{2}}{8}},\kappa,\frac{\sigmap}{2}\bigr),
\]
standard computations yield 
\begin{align*}
P_{1}^{\call}=\Psi^{\prime}(0) & =\frac{1}{2}e^{\frac{\mup}{2}+\frac{\sigmap^{2}}{8}}\Delta_{\BS}\bigr(e^{\frac{\mup}{2}+\frac{\sigmap^{2}}{8}},\kappa,\frac{\sigmap}{2}\bigr),\\
P_{2}^{\call}=\Psi^{\prime \prime}(0) & =\frac{\Psi^{\prime}(0)}{2}+\frac{e^{\mup+\frac{\sigmap^{2}}{4}}}{4}\Gamma_{\BS}\bigr(e^{\frac{\mup}{2}+\frac{\sigmap^{2}}{8}},\kappa,\frac{\sigmap}{2}\bigr),\\
P_{3}^{\call}=\Psi^{(3)}(0) & =-\frac{\Psi^{\prime}(0)}{2}+\frac{3\Psi^{\prime \prime}(0)}{2}+\frac{1}{8}e^{\frac{3\mup}{2}+\frac{3\sigmap^{2}}{8}}{\rm Speed}_{\BS}\bigr(e^{\frac{\mup}{2}+\frac{\sigmap^{2}}{8}},\kappa,\frac{\sigmap}{2}\bigr).
\end{align*}
For $(P_{i}^{{\rm F}})_{i\in\{0,1,2,3\}}$, computations are the same as for $(P_{i}^{{\rm \call}})_{i\in\{0,1,2,3\}}$ but with $\kappa=0$. For put
options, use the Black--Scholes put--call parity for $P_{0}^{{\rm put}}$, the identity
$\Deltaput\left(x,y,\sigma\right)=-1+\Deltacall\left(x,y,\sigma\right)$
for $P_{1}^{{\rm put}}$, and the fact that $\Gamma_{\BS},{\rm Speed}_{\BS}$
are equal for call and put options with same characteristics. The put--call parity \eqref{eq:put:call:parity:expansion}
directly follows.

\subsection{Proof of Proposition \ref{prop:charact:onefactor:flat}}

\emph{Estimates \eqref{eq:assu:drift} and
\eqref{eq:assu:diffusion}. }Easy computations give $\nu\left(\Kdot(t)^{2}\right)=\frac{\omega^{2}}{2k\Delta}(e^{-2k(T-t)}-e^{-2k(T+\Delta-t)})=\frac{\omega^{2}}{2k\Delta}e^{-2k(T-t)}(1-e^{-2k\Delta})$
and 
\begin{equation}
\int_{0}^{T}\left[K^{u}(t)^{2}-\nu\left(\Kdot(t)^{2}\right)\right]\dd t=\frac{\omega^{2}}{4k^{2}\Delta}\left(1-e^{-2kT}\right)\left(2k\Delta e^{2k(T-u)}+e^{-2k\Delta}-1\right).\label{eq:int:0:T:K:squared:drift:onefactor}
\end{equation}
Introducing $v=\frac{u-T}{\Delta}$, we infer that 
\begin{align*}
\Gamma_{\Delta,T,p}=\frac{\omega^{2}\left(1-e^{-2kT}\right)}{4k^{2}}\Delta\left(\int_{0}^{1}\left|\frac{2k\Delta e^{-2k\Delta v}+e^{-2k\Delta}-1}{\Delta^{2}}\right|^{p}\dd v\right)^{\frac{1}{p}}.
\end{align*}
We easily check that 
\begin{align}
\lim_{\Delta\to0}\frac{(2k\Delta e^{-2k\Delta v}+e^{-2k\Delta}-1)}{\Delta^{2}}=2k^{2}(1-2v),\qquad\text{uniformly in }v\in[0,1].\label{eq:conf:uniform:integrande:bergomi}
\end{align}
Therefore, using $\int_{0}^{1}\left|1-2v\right|^{p}\dd v=\frac{1}{1+p}$,
we obtain the first asymptotics in \eqref{eq:expansion:asymptotic:bergomi}.\\
Similarly, easy computations give $\nu\left(\Kdot(t)\right)=\frac{\omega}{k\Delta}\big(e^{-k(T-t)}-e^{-k(T+\Delta-t)}\bigr)=\frac{\omega}{k\Delta}e^{-k(T-t)}\big(1-e^{-k\Delta}\bigr)$
and 
\begin{equation}
\int_{0}^{T}\left[K^{u}(t)-\nu\left(\Kdot(t)\right)\right]^{2}\dd t=\frac{\omega^{2}(1-e^{-2kT})}{2k^{3}\Delta^{2}}(e^{-k\Delta}-1+k\Delta e^{k(T-u)})^{2}.\label{eq:int:0:T:K:squared:diffusion:onefactor}
\end{equation}
The change of variables $v=\frac{u-T}{\Delta}$ leads to 
\begin{align*}
\Lambda_{\Delta,T,p}=\frac{\omega^{2}(1-e^{-2kT})}{2k^{3}}\Delta^{2}\left(\int_{0}^{1}\left|\frac{e^{-k\Delta}-1+k\Delta e^{-k\Delta v}}{\Delta^{2}}\right|^{2p}\dd v\right)^{\frac{1}{p}},
\end{align*}
and we still conclude using \eqref{eq:conf:uniform:integrande:bergomi}.
\\
\emph{Proxy's mean and variance. }From the definition \eqref{eq:mean:variance:proxy}
and the previous expression of $\nu\left(\Kdot(t)^{2}\right)$, we
easily get 
\begin{align*}
\mup & =X_{0}-\int_{0}^{T}\frac{\omega^{2}}{4k\Delta}e^{-2k(T-t)}(1-e^{-2k\Delta})\dd t
\end{align*}
in addition $\mup\xrightarrow[\Delta\to0]{}X_{0}-\frac{\omega^{2}}{4k}(1-e^{-2kT})<\infty,$
thus \eqref{eq:assu:limit:mean:proxy} holds.\\
 Similarly, for the variance, start from \eqref{eq:mean:variance:proxy}
and the previous expression of $\nu\left(\Kdot(t)\right)$, it gives
\[
\sigmap^{2}=\int_{0}^{T}\frac{\omega^{2}}{k^{2}\Delta^{2}}e^{-2k(T-t)}\big(1-e^{-k\Delta}\bigr)^{2}\dd t
\]
clearly 
 $\sigmap^{2}\xrightarrow[\Delta\to0]{}\frac{\omega^{2}}{2k}\big(1-e^{-2kT}\big)\in(0,+\infty)$,
i.e., \eqref{eq:assu:limit:variance:proxy} holds.\\
 \emph{Coefficient $\gamma_{1}$. }Write $\gamma_{1}=\gamma_{10}+\gamma_{11}$
where 
\begin{align*}
\gamma_{10} & :=\frac{1}{8\Delta}\int_{T}^{T+\Delta}\left(\int_{0}^{T}\left(K^{u}\left(t\right)^{2}-\nu_{0}\left(\Kdot\left(t\right)^{2}\right)\right)\dd t\right)^{2}\dd u,\\
\gamma_{11} & :=\frac{1}{2\Delta}\int_{T}^{T+\Delta}\left(\int_{0}^{T}\left|K^{u}\left(t\right)-\nu_{0}\left(\Kdot\left(t\right)\right)\right|^{2}\dd t\right)\dd u.
\end{align*}
From \eqref{eq:int:0:T:K:squared:drift:onefactor} and \eqref{eq:int:0:T:K:squared:diffusion:onefactor},
easy computations give 
\begin{align*}
\gamma_{10} & =\frac{\omega^{4}}{128k^{4}\Delta^{2}}\left(-1+k\Delta\frac{1+e^{-2k\Delta}}{1-e^{-2k\Delta}}\right)\left(1-e^{-2kT}\right)^{2}\left(1-e^{-2k\Delta}\right)^{2},\\
\gamma_{11} & =\frac{\omega^{4}}{8k^{3}\Delta^{2}}\left((2+k\Delta)e^{-k\Delta}-2+k\Delta\right)\left(1-e^{-2kT}\right)\left(1-e^{-k\Delta}\right).
\end{align*}
\emph{Coefficient $\gamma_{2}$. }Again, some easy computations give that
for every $u\in\cA,$ 
\[
\int_{0}^{T}\nu\left(\Kdot\left(t\right)\right)\left[K^{u}\left(t\right)-\nu\left(\Kdot\left(t\right)\right)\right]\dd t=\frac{\omega^{2}}{2k^{3}\Delta^{2}}\left(1-e^{-2kT}\right)\left(1-e^{-k\Delta}\right)\left(e^{-k\Delta}-1+k\Delta e^{k(T-u)}\right).
\]
Recalling \eqref{eq:int:0:T:K:squared:drift:onefactor} and integrating
in $u$, we then obtain $\gamma_{2}$ after some standard computations.
\emph{$\rhd$ Coefficient $\gamma_{3}$.}
Squaring the previous equality and reintegrating in $u$,
we obtain $\gamma_{3}$ after some easy computations. 

\subsection{Proof of Proposition \ref{prop:charact:rough:flat}}

\emph{Estimates \eqref{eq:assu:drift} and
\eqref{eq:assu:diffusion}. } Easy computations first give $\nu\big(\Kdot(t)^{2}\big)=\frac{\eta^{2}}{2H\Delta}\big[(T+\Delta-t)^{2H}-(T-t)^{2H}\big]$
and 
\[
\int_{0}^{T}[K^{u}(t)^{2}-\nu(\Kdot(t)^{2})]\dd t=\frac{\eta^{2}}{2H}\Bigl(u^{2H}-(u-T)^{2H}+\frac{T^{2H+1}+\Delta^{2H+1}-(T+\Delta)^{2H+1}}{\Delta(2H+1)}\Big).
\]
Introducing the new variable $y=\frac{u-T}{\Delta}$, we infer that
\[
\Gamma_{\Delta,T,p}=\frac{\eta^{2}}{2H}\Big(\int_{0}^{1}\left|\left(T+\Delta y\right)^{2H}-\Delta^{2H}y^{2H}+\frac{T^{2H+1}+\Delta^{2H+1}-(T+\Delta)^{2H+1}}{\Delta(2H+1)}\right|^{p}\dd y\Big)^{\frac{1}{p}}.
\]
Hence, as $\left(T+\Delta y\right)^{2H}-\Delta^{2H}y^{2H}+\frac{T^{2H+1}+\Delta^{2H+1}-(T+\Delta)^{2H+1}}{\Delta(2H+1)}=HT^{2H-1}\Delta\left(2y-1\right)+\Delta^{2H}\big(\frac{1}{2H+1}-y^{2H}\big)+\cO(\Delta^{2})$
with a remainder which is uniform on $y\in\left[0,1\right],$ we have
\[
\Gamma_{\Delta,T,p}\underset{\Delta\to0}{\sim}\1_{H>\frac{1}{2}}\frac{\eta^{2}T^{2H-1}}{2}\Delta\Bigl(\int_{0}^{1}\left|1-2y\right|^{p}\dd y\Bigr)^{\frac{1}{p}}+\1_{H<\frac{1}{2}}\frac{\eta^{2}}{2H}\Delta^{2H}\Bigl(\int_{0}^{1}\left|\frac{1}{2H+1}-y^{2H}\right|^{p}\dd y\Bigr)^{\frac{1}{p}}.
\]
Since  
$\int_{0}^{1}\left|1-2y\right|^{p}\dd y=\frac{1}{p+1}$, this completes
the proof of the asymptotics of $\Gamma_{\Delta,T,p}$.

Now we handle $\Lambda_{\Delta,T,p}$. Note that 
\begin{align*}
\Lambda_{\Delta,T,p} & =\Big(\frac{1}{\Delta}\int_{v=T}^{T+\Delta}\Bigl|\int_{s=0}^{T}\Bigl[\frac{1}{\Delta}\int_{\alpha=T}^{T+\Delta}\left\{ K^{v}(s)-K^{\alpha}(s)\right\} \dd\alpha\Bigr]^{2}\dd s\Bigr|^{p}\dd v\Big)^{\frac{1}{p}}=:\eta^{2}T^{2H}I(\frac{\Delta}{T},p),
\end{align*}
where we have introduced the new variables $\beta=\frac{\alpha-T}{\Delta},u=\frac{v-T}{\Delta},t=\frac{T-s}{T},$
and the function 
\[
I:(\Delta,p)\in(0,\infty)^{2}\mapsto\Big(\int_{u=0}^{1}\Bigl|\int_{t=0}^{1}\Bigl[\int_{\beta=0}^{1}\Bigl\{\left(\Delta u+t\right)^{H-\frac{1}{2}}-\left(\Delta\beta+t\right)^{H-\frac{1}{2}}\Bigr\}\dd\beta\Bigr]^{2}\dd t\Bigr|^{p}\dd u\Big)^{\frac{1}{p}}.
\]
Integrating in $\beta$ and defining the new variable $s=\frac{t}{\Delta},$
we have 
\begin{align*}
\frac{\left(H+\frac{1}{2}\right)^{2}I\left(\Delta,p\right)}{\Delta^{2H}} 
& =\Big(\int_{u=0}^{1}\Bigl|\int_{s=0}^{\frac{1}{\Delta}}\left((1+s)^{H+\frac{1}{2}}-s^{H+\frac{1}{2}}-(H+\frac{1}{2})(u+s)^{H-\frac{1}{2}}\right)^{2}\dd s\Bigr|^{p}\dd u\Big)^{\frac{1}{p}}\\
 & =\Big(\int_{u=0}^{1}\Bigl|\int_{s=0}^{\frac{1}{\Delta}}(f_{s}(1)-f_{s}(0)-f_{s}^{\prime}(u))^{2}\dd s\Bigr|^{p}\dd u\Big)^{\frac{1}{p}},
\end{align*}
where we have set $f_{s}:v\in(0,1)\mapsto(v+s)^{H+\frac{1}{2}}$ for
$s>0.$ When $\Delta\to0$, let us show that the double integral above
converges to the same integral replacing $\frac{1}{\Delta}$ by $+\infty$
and is finite. As the integral is increasing in $\Delta$, it suffices
to show that the square is bounded by a function $C(s,u)$ such that
$\int_{0}^{1}\Bigl|\int_{0}^{\infty}C(s,u)\dd s\Bigr|^{p}\dd u<\infty.$
For $s\in[0,1],$ we can take
\[
C(s,u):=3(2^{2H+1}+1+(H+\frac{1}{2})^{2}(s^{2H-1}\vee2^{2H-1})),
\]
while for $s\geq1,$ applying twice Taylor's theorem, we choose
\begin{align*}
\left|f_{s}(1)-f_{s}(0)-f_{s}^{\prime}(u)\right|^{2} & =\left|(1-u)^{2}\int_{0}^{1}f_{s}^{\prime \prime}(u+\lambda(1-u))(1-\lambda)\dd\lambda+u^{2}\int_{0}^{1}f_{s}^{\prime \prime}((1-\lambda)u)(1-\lambda)\dd\lambda\right|^{2}\\
 & \leq(H^{2}-\frac{1}{4})^{2}s^{2H-3}=:C(s,u).
\end{align*}
The function $C$ is integrable as required recalling that $H\in(0,1),$
and we conclude remembering that $\Lambda_{\Delta,T,p}=\eta^{2}T^{2H}I(\frac{\Delta}{T},p)$.
\\
\emph{Proxy's mean and variance.} From \eqref{eq:mean:variance:proxy}
and the previous formula for $\nu\big(\Kdot(t)^{2}\big)$, we have
for the mean 
\begin{align*}
\mup & =X_{0}-\frac{1}{2}\int_{0}^{T}\frac{\eta^{2}}{2H\Delta}\big[(T+\Delta-t)^{2H}-(T-t)^{2H}\big]\dd t
\end{align*}
Using that $\frac{1}{\Delta}[(T+\Delta)^{2H+1}-\Delta^{2H+1}-T^{2H+1}]\underset{\Delta\to0}{\sim}(2H+1)T^{2H}$,
we infer that $\mup\xrightarrow[\Delta\to0]{}X_{0}-\frac{\eta^{2}T^{2H}}{4H}<\infty.$
For the variance $\sigmap^{2}$, write 
\begin{align*}
\sigmap^{2} & =\frac{\eta^{2}}{\Delta^{2}\left(H+\frac{1}{2}\right)^{2}}\int_{0}^{T}\Big[\left(t+\Delta\right)^{2H+1}+t^{2H+1}-2\left(t+\Delta\right)^{H+\frac{1}{2}}t^{H+\frac{1}{2}}\Big]\dd t\\
 & =\frac{\eta^{2}}{\Delta^{2}\left(H+\frac{1}{2}\right)^{2}}\Big[\frac{\left(T+\Delta\right)^{2H+2}+T^{2H+2}-\Delta^{2H+2}}{2H+2}-2\int_{0}^{T}\left(t+\Delta\right)^{H+\frac{1}{2}}t^{H+\frac{1}{2}}\dd t\Big].
\end{align*}
We can write
\begin{align*}
\sigmap^{2} & =\frac{\eta^{2}}{\Delta^{2}(H+\frac{1}{2})^{2}}\int_{0}^{T}\big[(t+\Delta)^{H+\frac{1}{2}}-t^{H+\frac{1}{2}}\big]^{2}\dd t\\
 & =\eta^{2}\int_{0}^{T}\int_{0}^{1}\int_{0}^{1}(t+\Delta\lambda_{1})^{H-\frac{1}{2}}(t+\Delta\lambda_{2})^{H-\frac{1}{2}}\dd\lambda_{1}\dd\lambda_{2}\dd t.
\end{align*}
Since $(t+\Delta\lambda_{1})^{H-\frac{1}{2}}(t+\Delta\lambda_{2})^{H-\frac{1}{2}}$
converges to $t^{2H-1}$ as $\Delta\to0$ and is upper bounded by the integrable function $t^{2H-1}\1_{H<\frac{1}{2}}+(T+1)^{2H-1}\1_{H\geq\frac{1}{2}}$ for $\Delta$ small enough, from the dominated convergence theorem we have $\sigmap^{2}\xrightarrow[\Delta\to0]{}\frac{\eta^{2}T^{2H}}{2H}\in(0,+\infty).$
\qed

\subsection{Proof of Theorem \ref{thm:expansion:mixed:model}}
We proceed as in the proof of Theorem \ref{thm:expansion:plain:model}.
The connection between $\VIX_{T}^{2}$ and $\VIX_{T,{\rm P}}^{2}$
is made through the interpolation
\begin{equation}
	J\left(\ve\right):=\nu\left(\xidot0\right)\int_{\cA}\left[\lambda e^{\nu_{0}\left(Y_{T,1}^{\cdot}\right)+\ve\left(Y_{T,1}^{u}-\nu_{0}\left(Y_{T,1}^{\cdot}\right)\right)}+\left(1-\lambda\right)e^{\nu_{0}\left(Y_{T,2}^{\cdot}\right)+\ve\left(Y_{T,2}^{u}-\nu_{0}\left(Y_{T,2}^{\cdot}\right)\right)}\right]\nu_{0}\left(\dd u\right).\label{eq:def:J:eps}
\end{equation}
Hence $J\left(1\right)=\VIX_{T}^{2}$, $J\left(0\right)=\VIX_{T,{\rm P}}^{2}$, and the $n$th derivative is given by
\begin{align*}
	J^{\left(n\right)}\left(\ve\right) & =\nu\left(\xidot0\right)\int_{\cA}\Big[\lambda\left(Y_{T,1}^{u}-\nu_{0}\left(Y_{T,1}^{\cdot}\right)\right)^{n}e^{\nu_{0}\left(Y_{T,1}^{\cdot}\right)+\ve\left(Y_{T,1}^{u}-\nu_{0}\left(Y_{T,1}^{\cdot}\right)\right)}\\
	& \qquad\qquad\qquad+\left(1-\lambda\right)\left(Y_{T,2}^{u}-\nu_{0}\left(Y_{T,2}^{\cdot}\right)\right)^{n}e^{\nu_{0}\left(Y_{T,2}^{\cdot}\right)+\ve\left(Y_{T,2}^{u}-\nu_{0}\left(Y_{T,2}^{\cdot}\right)\right)}\Big]\nu_{0}\left(\dd u\right).
\end{align*}
Applying a second-order Taylor formula with integral remainder to
$\varphi$ at the points $\VIX_{T}^{2}$ and $\VIX_{T,{\rm P}}^{2}$, we get
\begin{equation}
\mathbb{E}[\varphi(\VIX_{T}^{2})]=\mathbb{E}[\varphi(\VIX_{T,{\rm P}}^{2})]+\mathbb{E}[\varphi^{\prime}(\VIX_{T,{\rm P}}^{2})(\VIX_{T}^{2}-\VIX_{T,{\rm P}}^{2})]+E_{0}(\varphi^{\prime \prime}),\label{eq:proof:mixed:first:taylor}
\end{equation}
where $E_{0}(\varphi^{\prime \prime})$ is an error term given by 
\[
E_{0}(\varphi^{\prime \prime}):=\int_{0}^{1}\left(1-\beta\right)\mathbb{E}[\varphi^{\prime \prime}(\beta\VIX_{T}^{2}+(1-\beta)\VIX_{T,{\rm P}}^{2})(\VIX_{T}^{2}-\VIX_{T,{\rm P}}^{2})^{2}] \dd \beta.
\]
The expression for $\mathbb{E}[\varphi(\VIX_{T,{\rm P}}^{2})]$ follows
from \eqref{eq:mixed:proxy:distribution}.
Noticing that $J'(0) = 0$, we can express $\VIX_{T}^{2}-\VIX_{T,{\rm P}}^{2} = J(1) - J(0)$ as
\begin{align*}
\frac{J^{\left(2\right)}(0)}{2}+\int_{0}^{1}\frac{(1-\ve)^{2}}{2}J^{\left(3\right)}(\ve)\dd\ve & =\lambda\VIX_{T,{\rm P},1}^{2}\int_{\cA}\frac{1}{2}(Y_{T,1}^{u}-\nu_{0}(Y_{T,1}^{\cdot}))^{2}\nu_{0}(\dd u)\\
 & \qquad+(1-\lambda)\VIX_{T,{\rm P},2}^{2}\int_{\cA}\frac{1}{2}(Y_{T,2}^{u}-\nu_{0}(Y_{T,2}^{\cdot}))^{2}\nu_{0}\left(\dd u\right)\\
 & \qquad+\int_{0}^{1}\frac{(1-\ve)^{2}}{2}J^{\left(3\right)}(\ve)\dd\ve \,,
\end{align*}
where the variables $\VIX^2_{T,{\rm P},j}$ have been defined in section \ref{sec:mixed_model}.
We introduce the error term 
\[
E_{1}(\varphi^{\prime})
:=\mathbb{E} \biggl[ \varphi^{\prime}(\VIX_{T,{\rm P}}^{2})\int_{0}^{1}\frac{(1-\ve)^{2}}{2}J^{\left(3\right)}(\ve)\dd\ve \biggr]
\]
and the functions 
\begin{align*}
\Psi_{1}\left(x\right) & :=\partial_{y}\varphi\Bigl(\nu(\xidot0)\bigl[\lambda e^{x+y}+\left(1-\lambda\right)e^{\frac{\eta_{2}}{2}\left(\eta_{1}-\eta_{2}\right)\int_{0}^{T}\nu_{0}\left(K_{0}^{\cdot}\left(t\right)^{2}\right)\dd t+\frac{\eta_{2}}{\eta_{1}}x}\bigr]\Bigr)\bigr|_{y=0},
\\
\Psi_{2}\left(x\right) & :=\partial_{y}\varphi\Bigl(\nu(\xidot0)\bigl[\lambda e^{\frac{\eta_{1}}{2}\left(\eta_{2}-\eta_{1}\right)\int_{0}^{T}\nu_{0}\left(K_{0}^{\cdot}\left(t\right)^{2}\right)\dd t+\frac{\eta_{1}}{\eta_{2}}x}+\left(1-\lambda\right)e^{x+y}\bigr]\Bigr)\bigr|_{y=0}.
\end{align*}
Observing that 
\begin{align*}
\nu_{0}(Y_{T,1}^{\cdot}) & =-\frac{\eta_{1}^{2}}{2}\int_{0}^{T}\nu_{0}(K_{0}^{\cdot}(t)^{2})\dd t+\eta_{1}\int_{0}^{T}\nu_{0}(K_{0}^{\cdot}(t))\dd W_{t}\\
 & =\frac{\eta_{1}}{2}(\eta_{2}-\eta_{1})\int_{0}^{T}\nu_{0}(K_{0}^{\cdot}(t)^{2})\dd t+\frac{\eta_{1}}{\eta_{2}}\nu_{0}(Y_{T,2}^{\cdot}),
\end{align*}
and
\[
\nu_{0}(Y_{T,2}^{\cdot})=\frac{\eta_{2}}{2}(\eta_{1}-\eta_{2})\int_{0}^{T}\nu_{0}(K_{0}^{\cdot}(t)^{2})\dd t+\frac{\eta_{2}}{\eta_{1}}\nu_{0}(Y_{T,1}^{\cdot}),
\]
we have that the second term at the right-hand side of \eqref{eq:proof:mixed:first:taylor}
can be rewritten as 
\begin{align*}
\mathbb{E}\Bigl[\varphi^{\prime}\Bigl(\VIX_{T,{\rm P}}^{2}\Bigr) & \Bigl(\VIX_{T}^{2}-\VIX_{T,{\rm P}}^{2}\Bigr)\Bigr]\\
 & =\mathbb{E}\Big[\varphi^{\prime}\Bigl(\lambda\VIX_{T,{\rm P,1}}^{2}+\left(1-\lambda\right)\VIX_{T,{\rm P,2}}^{2}\Bigr)\\
 & \qquad\times\Big(\lambda\VIX_{T,{\rm P,1}}^{2}\int_{\cA}\frac{1}{2}\left(Y_{T,1}^{u}-\nu_{0}\left(Y_{T,1}^{\cdot}\right)\right)^{2}\nu_{0}\left(\dd u\right)\\
 & \qquad+\left(1-\lambda\right)\VIX_{T,{\rm P,2}}^{2}\int_{\cA}\frac{1}{2}\left(Y_{T,2}^{u}-\nu_{0}\left(Y_{T,2}^{\cdot}\right)\right)^{2}\nu_{0}\left(\dd u\right)\Big)\Big]+E_{1}(\varphi^{\prime})\\
 & 
 =\mathbb{E}\Big[\Psi_{1}\left(\nu_{0}\left(Y_{T,1}^{\cdot}\right)\right) 
 \int_{\cA}\frac{1}{2}\left(Y_{T,1}^{u}-\nu_{0}\left(Y_{T,1}^{\cdot}\right)\right)^{2}\nu_{0}\left(\dd u\right)
\\
& \qquad+\Psi_{2}\left(\nu_{0}\left(Y_{T,2}^{\cdot}\right)\right)
 \int_{\cA}\frac{1}{2}\left(Y_{T,2}^{u}-\nu_{0}\left(Y_{T,2}^{\cdot}\right)\right)^{2}\nu_{0}\left(\dd u\right)\Big]+E_{1}(\varphi^{\prime}).
\end{align*}
This is similar to the setting of \eqref{eq:dphi:nu:moins:nuexp},
now with two different functions $\Psi_{1}$ and $\Psi_{2}$.
Applying twice Lemma \ref{lem:integration:by:parts} to 
$\mathbb{E}\Bigl[\Psi_{j}\bigl(\nu_{0}\bigl(Y_{T,j}^{\cdot}\bigr)\bigr) 
 \bigl(Y_{T,j}^{u}-\nu_{0}\bigl(Y_{T,j}^{\cdot}\bigr)\bigr)^{2}\Bigr]$ for $j=\{1,2\}$ and fixed $u$, with 
\begin{align*}
l(x) & =\Psi_{j}\left(x-\frac{1}{2}\int_{0}^{T}\nu_{0}\left(\Kdot_{j}\left(t\right)^{2}\right)\dd t\right),
\quad 
a_{t}=\nu_{0}\left(\Kdot_{j}\left(t\right)\right),
\\
f_{t}(u) & =h_{t}(u)=K_{j}^{u}\left(t\right)-\nu_{0}\left(\Kdot_{j}\left(t\right)\right),
\quad 
e_{t}(u)=g_{t}(u)=-\frac{1}{2}K_{j}^{u}\left(t\right)^{2}+\frac{1}{2}\nu_{0}\left(\Kdot_{j}\left(t\right)^{2}\right),
\end{align*}
we obtain the desired expansion.

Observing that $J^{(n)}(\ve) = \lambda \, I^{(n)}_1(\ve) + (1 - \lambda) I^{(n)}_2(\ve)$, where for each $j\in\{1,2\}$ the function
$I_j(\ve) = \nu\left(\xidot0\right)\int_{\cA} e^{\nu_{0}\left(Y_{T,j}^{\cdot}\right)+\ve\left(Y_{T,j}^{u}-\nu_{0}\left(Y_{T,j}^{\cdot}\right)\right)}\nu_{0}\left(\dd u\right)$ is defined precisely as in \eqref{eq:def:I:eps},
and conducting the same error analysis as in the proof of Theorem
\ref{thm:expansion:plain:model} (without the integration by parts of Malliavin calculus, since here $\varphi\in\mathcal{C}_{b}^{2}$ is smooth), we retrieve the error estimate $\cO\bigl ( \Delta^{3(d_{1}\wedge\frac{d_{2}}{2})}\bigr)$. 
\qed

\bibliographystyle{abbrvnat}
\bibliography{vix}

\end{document}